\definecolor[named]{urlblue}{cmyk}{1,0.58,0,0.21}
\tikzstyle{normalvertex}=[circle,fill=white,draw=black,minimum size=10pt,inner sep=0pt]
\tikzstyle{smallvertex}=[circle,fill=white,draw=black,minimum size=6pt,inner sep=0pt]
\newtheorem{theorem}{Theorem}[section]
\newtheorem{lemma}[theorem]{Lemma}
\newtheorem{corollary}[theorem]{Corollary}
\newtheorem{observation}[theorem]{Observation}
\theoremstyle{definition}
\newtheorem{definition}[theorem]{Definition}
\newtheorem{property}[theorem]{Property}
\theoremstyle{remark}
\newtheorem{claim}{Claim}[section]
\newenvironment{claimproof}{\begin{proof}}{\end{proof}}
\newenvironment{correctness}{\vspace{0.1cm}\noindent(\textbf{Correctness.})}{}
\newenvironment{runningtime}{\vspace{0.1cm}\noindent(\textbf{Running Time.})}{}
\newcommand{\defproblem}[3]{
  \vspace{5mm}
\noindent\fbox{
  \begin{minipage}{0.96\textwidth}
  \begin{tabular*}{\textwidth}{@{\extracolsep{\fill}}lr} \textsc{#1} \\ \end{tabular*}
  {\bf{Input:}} #2  \\
  {\bf{Question:}} #3
  \end{minipage}
  }
  \vspace{5mm}
}
\newcommand{\defproblemrestriction}[2]{
  \vspace{5mm}
\noindent\fbox{
  \begin{minipage}{0.96\textwidth}
  \begin{tabular*}{\textwidth}{@{\extracolsep{\fill}}lr} \textsc{#1} \\ \end{tabular*}
  {\bf{Definition:}} #2
  \end{minipage}
  }
  \vspace{5mm}
}
\newcommand{\CC}{\mathcal{C}}
\newcommand{\CY}{\mathcal{Y}}
\newcommand{\CZ}{\mathcal{Z}}
\newcommand{\ZZ}{\mathbb{Z}}
\newcommand{\ZZp}{\mathbb{Z}_{>0}}
\newcommand{\ZZnn}{\mathbb{Z}_{\geq 0}}
\renewcommand{\epsilon}{\varepsilon}
\DeclareMathOperator{\dist}{dist}
\DeclareMathOperator{\tw}{tw}
\DeclareMathOperator{\NZ}{NZ}
\newcommand{\SqCol}{\textnormal{\textsc{Square Coloring}}}
\newcommand{\qSqCol}{\textnormal{\textsc{Square-$q$-Coloring}}}
\newcommand{\red}{{\sf red}}
\newcommand{\blue}{{\sf blue}}
\newcommand{\green}{{\sf green}}
\newcommand{\yellow}{{\sf yellow}}
\newcommand{\orange}{{\sf orange}}
\newcommand{\violet}{{\sf violet}}
\newcommand{\cyan}{{\sf cyan}}
\newcommand{\ETH}{\textnormal{\textsf{ETH}}}
\newcommand{\NP}{\textnormal{\textsf{NP}}}
\newcommand{\eqRel}{\textsf{EQ}}
\newcommand{\ttw}{\textnormal{\textsf{tw}}}
\newcommand{\ccw}{\textnormal{\textsf{cw}}}
\newcommand{\echi}{\bar\chi}
\newcommand{\Vin}{V_{\textnormal{in}}}
\newcommand{\Vout}{V_{\textnormal{out}}}
\newcommand{\Vx}{V_x}
\newcommand{\Cin}{C_{\textnormal{in}}}
\newcommand{\CRel}{C_R}
\newcommand{\chiin}{\chi_{\textnormal{in}}}
\newcommand{\chiout}{\chi_{\textnormal{out}}}
\newcommand{\chix}{\chi_x}
\def\halftargetval{n^6}
\def\targetval{2n^6}
\newcommand{\SubsetGadget}{{\sf Subset}}
\newcommand{\CopyGadget}{{\sf ColorClassCopy}}
\newcommand{\ConstSocket}{{\sf ConstSocket}}
\newcommand{\SwitchSocket}{{\sf SwitchSocket}}
\newcommand{\EdgeSelection}{{\sf EdgeSelection}}
\newcommand{\VectorState}{{\sf VectorState}}
\newcommand{\OWSGadget}{{\sf OWS}}
\newcommand{\VectorSelection}{{\sf VectorSelection}}
\newcommand{\orcid}[1]{\href{https://orcid.org/#1}{\includegraphics[height=1.8ex]{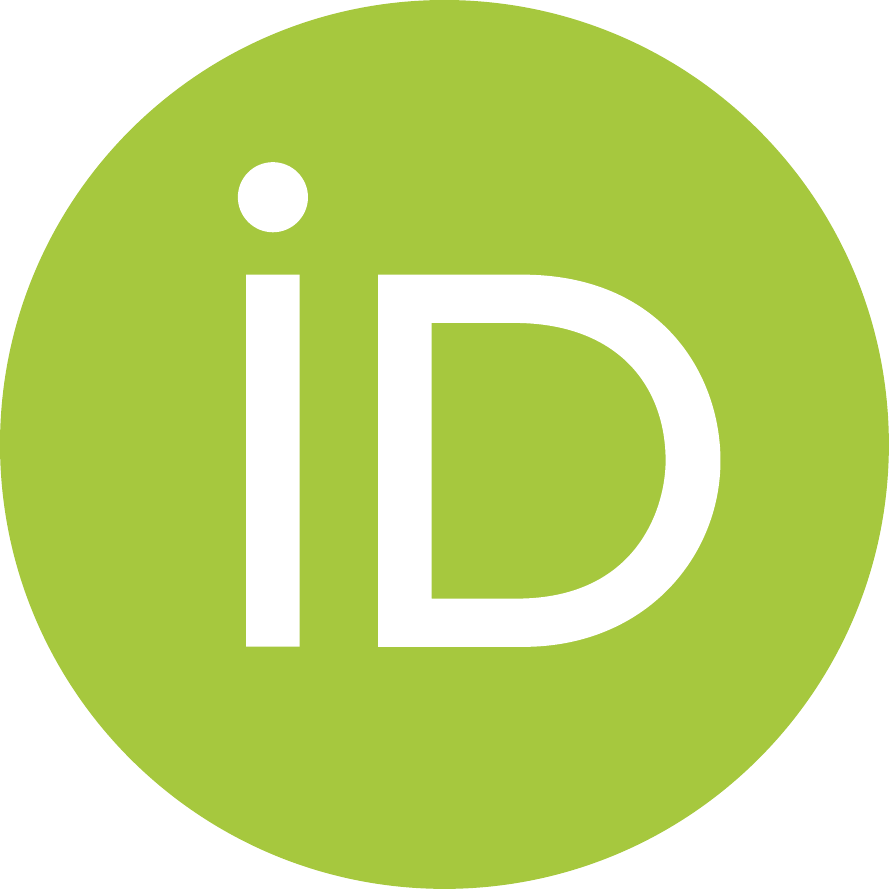}}}
\newcommand{\email}[1]{\href{mailto:#1}{\texttt{#1}}}
\title{Computing Square Colorings on\\Bounded-Treewidth and Planar Graphs}
\author{
Akanksha Agrawal \orcid{0000-0002-0656-7572}\\
Indian Institute of Technology Madras\\
\email{akanksha@cse.iitm.ac.in}
\and
D{\'{a}}niel Marx \orcid{0000-0002-5686-8314}\\
CISPA Helmholtz Center for Information Security\\
\email{marx@cispa.de}
\and
Daniel Neuen \orcid{0000-0002-4940-0318}\\
Simon Fraser University\\
\email{dneuen@sfu.ca}
\and
Jasper Slusallek\\
Saarland University\\
\email{s8jaslus@stud.uni-saarland.de}
}
\date{}
\begin{document}

\maketitle

\begin{abstract}
 A \emph{ square coloring} of a graph $G$ is a coloring of the square $G^2$ of $G$, that is, a coloring of the vertices of $G$ such that any two vertices that are at distance at most $2$ in $G$ receive different colors.
 We investigate the complexity of finding a square coloring with a given number of $q$ colors.
 We show that the problem is polynomial-time solvable on graphs of bounded treewidth by presenting an algorithm with running time $n^{2^{\ttw + 4}+O(1)}$ for graphs of treewidth at most $\ttw$.
 The somewhat unusual exponent $2^\ttw$ in the running time is essentially optimal: we show that for any $\epsilon>0$, there is no algorithm with running time $f(\ttw)n^{(2-\epsilon)^\ttw}$ unless the Exponential-Time Hypothesis (\ETH) fails.

 We also show that the square coloring problem is NP-hard on planar graphs for any fixed number $q \ge 4$ of colors.
 Our main algorithmic result is showing that the problem (when the number of colors $q$ is part of the input) can be solved in subexponential time $2^{O(n^{2/3}\log n)}$ on planar graphs.
 The result follows from the combination of two algorithms.
 If the number $q$ of colors is small ($\le n^{1/3}$), then we can exploit a treewidth bound on the square of the graph to solve the problem in time $2^{O(\sqrt{qn}\log n)}$.
 If the number of colors is large ($\ge n^{1/3}$), then an algorithm based on protrusion decompositions and building on our result for the bounded-treewidth case solves the problem in time $2^{O(n\log n/q)}$. 
\end{abstract}

\makeatletter{\renewcommand*{\@makefnmark}{}
 \footnotetext{Research supported by the European Research Council (ERC) consolidator grant No.~725978 SYSTEMATICGRAPH.}
\makeatother}

\section{Introduction}
\label{sec:introduction}
The \emph{square $G^2$} of a graph $G$ has the same vertex set as $G$ and two vertices in $G^2$ are adjacent if and only if they are at distance at most $2$ in $G$.
A \emph{square coloring} of $G$ is a proper vertex coloring of $G^2$, or equivalently, an assignment of colors to the vertices of $G$ such that not only adjacent vertices receive different colors, but this is also true for vertices at distance $2$.
Observe that if $G$ has maximum degree $\Delta$, then a square coloring of $G$ certainly needs at least $\Delta+1$ colors.

The notion of square coloring appeared in many different forms in the combinatorics and computer science literature.
Wegener \cite{Wegener77} conjectured that a planar graph can be square-colored with $\frac{3}{2}\Delta+1$ colors (for $\Delta\ge 7$).
This conjecture started a long and still active line of research, with the goal of obtaining upper bounds on the square-chromatic number of various graph classes \cite{BousquetMD22,DBLP:journals/jct/Thomassen18,MR2974301,DBLP:journals/ejc/DvorakKNS08,DBLP:journals/jct/MolloyS05,MR1953224,DBLP:journals/siamdm/AgnarssonH03,MR1888178}.
A {\em strong edge coloring} is the distance-2 version of an edge coloring, i.e, a strong edge coloring of $G$ is a square coloring of the line graph of $G$. Erd\H os and Ne{\v{s}}et{\v{r}}il \cite{MR999926} conjectured that $\frac{5}{4}\Delta^2$ colors are always sufficient for a strong edge coloring. While the conjecture is still open, there are many partial results towards this bound and on similar bounds for, e.g., planar graphs \cite{MR3873551,DBLP:journals/endm/BensmailBH15,DBLP:journals/dm/HudakLSS14,MR1438613,MR1412876}.
Some of these combinatorial upper bounds are algorithmic and give polynomial-time approximation algorithms for computing the square-chromatic number of, e.g., planar graphs (see also \cite{DBLP:conf/icci/LloydR92,DBLP:journals/siamdm/AgnarssonH03}).

In computer science, graph coloring and its variants are often used as a model for assignment problems where adjacent vertices are in conflict and hence cannot receive the same resource (time slot, frequency, processor, etc.) \cite{DBLP:journals/dam/FialaGKKK18,MR2002171,MR1653829,MR4078897,DBLP:conf/icalp/FialaGK05,MR2032294,DBLP:journals/siamdm/ChangK96,DBLP:journals/tcs/FialaGK11}.
Then the \SqCol\ problem (given a graph $G$ and a number $q$, determine whether there is a square coloring of $G$ with $q$ colors) is the natural extension where not only adjacent vertices, but even vertices at distance $2$ are in conflict. 
In an even more general setting, an $L(p,q)$-labeling models a frequency assignment problem: assuming that the colors are the integers, adjacent vertices have to receive colors that have difference at least $q$, while vertices at distance $2$ have to receive colors that have difference at least $p$. Now a square coloring is precisely a $L(1,1)$-labeling.
The \SqCol\ problem also received significant attention in the field of distributed computing (under the name distance-2 coloring or d2-coloring), as it appears naturally for example in unique naming and derandomization \cite{DBLP:conf/podc/HalldorssonKM20,DBLP:conf/sirocco/FraigniaudHN20,DBLP:conf/focs/GhaffariHK18,DBLP:journals/cj/LeeL14,DBLP:journals/tcs/BlairM12}.

How does the complexity of \SqCol\ differ from the usual \textsc{Vertex Coloring} problem?
The problem is known to be \NP-hard for every fixed $q \ge 4$ \cite{DBLP:journals/mp/McCormick83,DBLP:conf/icci/LloydR92,EricksonTB05}.
Note that if $q=3$, then every YES-instance has maximum degree $2$, hence \SqCol\ is polynomial-time solvable.
The goal of this paper is to look at the exact complexity of solving the \SqCol\ problem on certain important classes of graphs.
One can observe that the square of an interval graph is always an interval graph \cite{DBLP:journals/akcej/PaulPP16}, hence \SqCol\ is also polynomial-time solvable on interval graphs.
The square of a tree is not necessarily a tree, but it is chordal \cite{MR115937}, hence \SqCol\ on trees is also polynomial-time solvable.

Treewidth is a graph invariant that, roughly speaking, measures how close the graph is to having a treelike structure.
The combinatorial and algorithmic aspects of treewidth were extensively studied in the literature, both as a standalone goal and in applications to other graph classes (see, e.g., \cite{DBLP:conf/sirocco/Bodlaender07}).
The algorithmic importance of treewidth comes from the fact that many hard algorithmic problems are polynomial-time solvable on graphs of bounded treewidth. For example, for every fixed $\ttw$, the chromatic number problem is linear-time solvable for graphs of treewidth at most $\ttw$.
More precisely, \textsc{$q$-Coloring} can be solved in time $q^{O(\ttw)}\cdot n$, which, together with the fact that a graph of treewidth $\ttw$ has chromatic number at most $\ttw+1$, implies that a $2^{O(\ttw\log\ttw)}\cdot n$ algorithm for computing the chromatic number.

It was shown by Zhou, Kanari, and Nishizeki~\cite{ZhouKN00} that \SqCol\ is also polynomial-time solvable on bounded-treewidth graphs.
More precisely, given an $n$-vertex graph $G$ with treewidth $\ttw$, their algorithm decides in time $(q+1)^{2^{8\ttw+1}}\cdot n^{O(1)}$ if $G$ admits a square coloring with $q$ colors.
Our first result is improving the exponent of the running time from $256^{\ttw}$ to roughly $2^{\ttw}$.

\begin{theorem}
 \label{thm:tw-alg-intro}
 \SqCol\ can be solved in time $(q+1)^{2^{\ttw + 4}} \cdot n^{O(1)}$ on graphs of treewidth $\ttw$.
\end{theorem}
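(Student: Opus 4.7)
The plan is to design a dynamic program on a nice tree decomposition of $G$ of width $\ttw$, which can be computed in time $2^{O(\ttw)}\cdot n$. At each node $t$ the algorithm maintains a table indexed by a \emph{state} summarizing the valid partial colorings of the subgraph induced by the vertices $V_t$ introduced in the subtree rooted at $t$. Compared with the classical \textsc{Vertex Coloring} DP, the key difficulty is that square coloring imposes distance-$2$ constraints, so the state must record not only the coloring $\chi$ of the bag $X_t$ but also enough information about the colors already used on forgotten vertices to forbid those colors on future vertices at distance at most $2$.

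The central combinatorial observation driving the DP is the following: if $u \in V_t \setminus X_t$ has been forgotten, then, by the standard property of tree decompositions, every neighbor of $u$ in $G$ already lies in $V_t$. Hence any future vertex at distance at most $2$ from $u$ must reach $u$ through a vertex still present in the current bag $X_t$. Consequently, the only data about the forgotten part of $V_t$ that later steps of the DP ever need is, for each bag vertex $w \in X_t$, which colors occur on $N_G[w]\cap V_t$; this is precisely the set of colors that a future neighbor of $w$ must avoid. Recording this as a raw subset of $[q]$ per bag vertex would cost $(2^q)^{\ttw+1}$ states, which is exponential in $q$ and too expensive when $q$ is large. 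Instead, I plan to encode the summary by assigning, to each of at most $2^{\ttw+4}$ ``attachment slots'' (indexed by subsets of $X_t$ together with a small amount of auxiliary book-keeping capturing how distance-$2$ constraints combine), a value in $[q]\cup\{\bot\}$; this yields a state space of size $(q+1)^{2^{\ttw+4}}$.

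Given this state, the transitions at nice tree decomposition nodes are routine. At an \emph{introduce} node adding a vertex $v$ with $N_G(v)\cap X_t = T$, we try every color $c$ for $v$, reject it if some slot indexed by a subset intersecting $T$ already carries $c$ (which enforces the distance-$1$ constraints to $T$ and the distance-$2$ constraints through $T$), and otherwise update the slots to record $c$. At a \emph{forget} node removing $w$, we project the state by transferring the information attached to $w$ into the slots associated with $w$'s bag-neighbors, so that the distance-$2$ constraints originating at $w$ remain enforceable. At a \emph{join} node, two child states are merged by requiring agreement on $\chi$ and combining slot contents. Each transition runs in time polynomial in the state count, so the total running time is $(q+1)^{2^{\ttw+4}}\cdot n^{O(1)}$.

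The main obstacle is to design the slot structure so that it is simultaneously \emph{sufficient}, meaning that two partial solutions producing the same state admit exactly the same completions to a valid square coloring, and \emph{compact}, of size $2^{\ttw+4}$. Sufficiency requires careful analysis of how constraints propagate at join nodes, where forgotten vertices from the two children can interact through a shared bag vertex, and of distance-$2$ paths whose middle vertex is currently in the bag but will be forgotten later. Compactness forces us to avoid tracking a raw color subset per bag vertex and instead to exploit the fact that only the collective attachment pattern of forgotten vertices to the bag matters for future steps. Executing this trade-off carefully is what drives the exponent down from the $2^{8\ttw+1}$ bound of Zhou, Kanari, and Nishizeki to $2^{\ttw+4}$.
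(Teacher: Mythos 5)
Your high-level direction matches the paper's: you correctly observe that only the pattern of attachment of already-used colors to the current bag matters, and that storing a raw subset of $[q]$ per bag vertex is too expensive, so colors must be grouped by their attachment pattern. However, the proposal has a genuine gap at exactly the point you defer as ``the main obstacle''. First, the state is never actually defined: ``slots indexed by subsets of $X_t$ together with a small amount of auxiliary book-keeping'' carrying ``a value in $[q]\cup\{\bot\}$'' is ambiguous. What makes the compression work in the paper is that, for each subset $A\subseteq X_t$, the state records the \emph{number} of colors not used on the bag whose set of cone-adjacent bag vertices is exactly $A$ (together with the bag coloring $\chi$ and, for each bag vertex, the attachment set of its own color). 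If your slot value is instead the identity of a color, the encoding fails, since up to $q$ colors can share the same attachment set; if it is a count, you have essentially the paper's $\rho$, but you must say so and prove that this anonymous, count-only information is sufficient, which requires a color-renaming argument when gluing partial solutions.

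Second, and more seriously, the join node is not ``routine'' and is where the count-only encoding bites back. Since the state does not identify which colors lie in which class, merging two children requires deciding, for every pair of classes $A'$ (left child) and $A''$ (right child), how many free colors are common to both, subject to: such shared colors are forbidden whenever $A'\cap A''\neq\emptyset$ (otherwise a shared bag vertex would see the same color from both sides at distance $2$), the row and column sums must equal the children's counts, and the induced counts on the unions $A'\cup A''$ must produce the parent state. This is an integer feasibility problem with $2^{O(\ttw)}$ unknowns ranging up to $q$, and solving it naively would blow past the target running time; the paper handles it with a dedicated pseudo-polynomial dynamic program for such ILPs, keeping the join cost at $(q+1)^{O(2^{\ttw})}\cdot n^{O(1)}$. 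Your description of the join (``requiring agreement on $\chi$ and combining slot contents'') neither enforces these cross-child distance-$2$ constraints nor explains how the matching is computed within the time bound. A minor additional point: an exact tree decomposition of width $\ttw$ is not known to be computable in time $2^{O(\ttw)}\cdot n$; you should invoke Bodlaender's exact algorithm (time $2^{O(\ttw^3\log\ttw)}\cdot n$, which is still dominated by the target bound), since using a constant-factor approximation would multiply the exponent $2^{\ttw}$ by a constant power and ruin the tight form of the theorem (and of the matching lower bound).
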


The algorithm follows standard dynamic programming techniques on tree decompositions; however, we make some extra effort to ensure that $2^\ttw$ appears in the exponent and not $c^\ttw$ for some $c>2$. Note that this form of running time is somewhat unusual.
Typically, when considering a problem on graphs of treewidth $\ttw$, then either the problem is fixed-parameter tractable (FPT) parameterized by treewidth (that is, it can be solved in time $f(\ttw) \cdot n^{O(1)}$ for some function $f$) or W[1]-hard, but $n^{O(\ttw)}$ time is sufficient to solve it.
Fiala, Golovach, and Kratochv\'\i l~\cite{DBLP:journals/tcs/FialaGK11} showed that \SqCol\ is W[1]-hard parameterized by treewidth, hence $\ttw$ has to appear in the exponent of $n$ in the running time (observe that $n$ colors are always sufficient to obtain a square coloring of $G$, i.e., we can always assume $q \leq n$).
It may seem inefficient that the exponent of the running time depends on the treewidth in such a drastic way in our algorithm, but we show that the exponential dependence seems to be unavoidable. The lower bound assumes the Exponential-Time Hypothesis (\ETH) of Impagliazzo, Paturi, and Zane \cite{ImpagliazzoPZ01}.

\begin{theorem}
 \label{thm:tw-lb-intro}
 Assuming \ETH, for any $\epsilon>0$ and any function $f$, there is no $f(\ttw)n^{(2-\epsilon)^\ttw}$ time algorithm solving \SqCol\ on graphs of treewidth $\ttw$.
\end{theorem}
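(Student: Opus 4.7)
The plan is to reduce from $3$-SAT and invoke \ETH. By the Sparsification Lemma of Impagliazzo, Paturi, and Zane~\cite{ImpagliazzoPZ01}, under \ETH\ there is no $2^{o(N)}$-time algorithm for $3$-SAT on $N$ variables with $M = O(N)$ clauses. The goal is a polynomial-time reduction that, given such an instance $\varphi$, produces a graph $G_\varphi$ with $n = N^{O(1)}$ vertices and treewidth $\ttw = \lfloor \log_2 N \rfloor + O(1)$, together with an integer $q = N^{O(1)}$, such that $\varphi$ is satisfiable iff $G_\varphi$ admits a $q$-square-coloring. A hypothetical $f(\ttw)\cdot n^{(2-\epsilon)^{\ttw}}$-time algorithm for \SqCol\ would then solve $\varphi$ in time
\[
f(O(\log N))\cdot N^{O(1)\cdot N^{\log_2(2-\epsilon)}} \;=\; 2^{O(N^{1-\delta}\log N)},
\]
where $\delta = 1 - \log_2(2-\epsilon) > 0$. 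A standard padding argument, letting $\ttw(N)$ grow slowly with $N$, controls the $f(\cdot)$ factor so that the total time is $2^{o(N)}$, contradicting \ETH.

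To build the reduction, I would first pass to a convenient intermediate problem: a constraint satisfaction problem with $N/\log_2 N$ variables, each ranging over $[2^{\ttw}]$, and $O(N/\log_2 N)$ constraints of arity $O(1)$ arranged in a path-like constraint graph. This inherits the $2^{\Omega(N)}$ lower bound via a standard block encoding of $3$-SAT (one CSP variable per block of $\log_2 N$ original variables). The graph $G_\varphi$ is then assembled from ``register'' gadgets, one per CSP variable, each of treewidth $\ttw$ whose valid square-colorings --- restricted to a designated boundary bag $B$ --- are in bijection with the values in $[2^{\ttw}]$. Between consecutive registers I would insert a ``checker'' gadget gluing them along a shared bag and enforcing the corresponding CSP constraint on the two encoded values. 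Concatenating these gadgets along a path yields a graph of treewidth $\ttw + O(1)$.

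The decisive design choice is the register gadget. The DP of Theorem~\ref{thm:tw-alg-intro} tracks, for each of the $2^{\ttw}$ subsets $S \subseteq B$ of the bag, information of size roughly $\log(q+1)$ --- loosely, some notion of which color class interacts with $S$. To realize base exactly $2$ in the inner exponent, all $2^{\ttw}$ subsets of $B$ must serve as genuinely independent information slots; any correlation forced by the gadget would reduce the effective base below $2$ and weaken the lower bound to $(c-\epsilon)^{\ttw}$ for some $c < 2$. I would realize this by attaching to each subset $S \subseteq B$ a private subgraph --- for instance, fresh vertices adjacent exactly to $S$, combined with a dedicated color palette enforced via distance-$2$ constraints and calibrated pendants --- so that the color classes crossing $B$ correspond bijectively to the values in $[2^{\ttw}]$.

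The main obstacle I foresee is precisely this bijection: proving that each register admits exactly $2^{\ttw}$ valid boundary configurations --- not fewer (which would fail to encode the CSP variable) and not spuriously more (which would slacken the exponent base below $2$) --- while keeping the treewidth at $\ttw + O(1)$ despite attaching a subgadget for each of exponentially many subsets $S \subseteq B$. This forces all subgadgets to share the same bag $B$, and verifying that the resulting graph still admits a tree decomposition of the promised width, then propagating this tightness through the checker gadgets, is the technical heart of the argument.
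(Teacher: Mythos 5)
There are two genuine gaps, and each is fatal on its own. First, your intermediate problem cannot be hard. Block-encoding $3$-SAT into $N/\log_2 N$ variables over domain $[N]$ produces constraints between \emph{arbitrary} triples of blocks, so the constraint graph is essentially arbitrary (high treewidth), not path-like; conversely, any explicitly given CSP whose constraint graph is a path and whose domain has polynomial size is solvable in polynomial time by dynamic programming along the path, so a polynomial-time reduction from $3$-SAT to such a CSP would place $3$-SAT in P --- it certainly does not ``inherit the $2^{\Omega(N)}$ lower bound.'' The same bottleneck recurs in your graph construction: a serial chain of registers each exposing only $2^{\ttw}=N$ boundary configurations carries $\log N$ bits per cut, which is far too little. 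The reason the exponent $(2-\epsilon)^{\ttw}$ can be forced at all is that the relevant state of \SqCol\ across a size-$\ttw$ separator is the vector of counts of free colors per subset class, of size roughly $q^{2^{\ttw}}$; the paper's construction exploits exactly this by introducing $2m$ counting color classes distinguished by which $r$-subset $S_i$ of a set $S$ of size about $\log m$ they attach to, aggregating the counts at a central vertex $x$, and driving them with vector-selection gadgets that encode a \textsc{Restricted Vector $k$-Sum} instance (Theorems~\ref{thm:subiso-to-restricted-vector-sum} and~\ref{thm:vector-sum-to-distance-two-coloring}). Your proposal never engages with this counting mechanism, and attaching a private subgadget to each of the $2^{\ttw}$ subsets of a bag does not substitute for it.

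Second, the quantification over \emph{all} functions $f$ cannot be handled by reducing from bare (sparsified) $3$-SAT, and the ``padding argument'' does not repair this. For your reduction to be correct you need $\ttw \geq \log_2 N - O(\log\log N)$, since otherwise Theorem~\ref{thm:tw-alg-intro} would solve the produced instances in time $n^{O(2^{\ttw})} = 2^{o(N)}$ and ETH would already be violated by the reduction itself; so $\ttw$ is pinned to about $\log_2 N$ and $n=\mathrm{poly}(N)$ is bounded once $\ttw$ is fixed. But then an adversarial algorithm with, say, $f(t)=2^{2^{2t}}$ runs in time $2^{\Omega(N^2)}$ on every instance you produce, so no contradiction with ETH arises, even though such an algorithm is excluded by the theorem statement. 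Absorbing arbitrary $f(\ttw)$ requires a \emph{parameterized} hardness source in which the domain size $n$ grows independently of the parameter; this is precisely why the paper starts from Marx's lower bound for \textsc{Colored Subgraph Isomorphism} with cubic expander patterns (Theorem~\ref{thm:subiso-lower-bound}, ``no $f(m)\,n^{o(m/\log m)}$ for any $f$'') rather than from ETH directly, and only then transfers hardness to \SqCol\ instances of treewidth $(1+\varepsilon)\log m + O(1)$.
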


That is, our algorithm with $2^\ttw$ in the exponent is essentially optimal.
We are not aware of any other natural problem with a similar dependence on treewidth.
Interestingly, for the measure cliquewidth $\ccw$, it is known that the best possible running time for \textsc{Vertex Coloring} has $2^\ccw$ in the exponent (assuming \ETH) \cite{DBLP:conf/soda/GolovachL0Z18}.

Next, we turn our attention to planar graphs.
The \textsc{$q$-Coloring} problem is \NP-hard on planar graphs for $q=3$, but becomes polynomial-time solvable for every $q \ge 4$ (because of the Four Color Theorem). By contrast, we show that \SqCol\ is \NP-hard on planar graphs for every fixed $q \ge 4$.

\begin{theorem}
 \label{thm:planar-lb-intro}
 \qSqCol\ is \NP-hard on planar graphs for every fixed $q \ge 4$.
\end{theorem}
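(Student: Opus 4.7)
The plan is to reduce from \textsc{Planar 3-Coloring}, shown NP-hard by Garey, Johnson, and Stockmeyer, to \qSqCol{} for each fixed $q \ge 4$. Given a planar input graph $G$ drawn in the plane, I would construct a planar graph $G_q$ out of two families of constant-size planar gadgets, whose square-$q$-colorings correspond precisely to proper $3$-colorings of $G$.

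The construction uses a \emph{choice gadget} at each vertex $v \in V(G)$, carrying a distinguished port $p_v$, designed so that in any square-$q$-coloring of $G_q$ the color of $p_v$ must lie in a fixed $3$-element subset $C \subseteq \{1,\dots,q\}$. This is enforced by attaching to $p_v$ a small palette subgraph whose local structure forces the other $q-3$ colors to appear within distance $2$ of $p_v$. Each edge $uv \in E(G)$ is replaced by an \emph{inequality gadget} linking $p_u$ and $p_v$ through a short path or small subgraph whose square-$q$-colorings are exactly those with distinct colors at the two ports. Gluing gadgets in place following the planar embedding of $G$ keeps $G_q$ planar. Correctness then follows from two straightforward directions: a $3$-coloring of $G$ extends to $G_q$ by assigning each $p_v$ its color in $G$ and filling in the gadget interiors using the remaining $q-3$ colors; conversely, any square-$q$-coloring of $G_q$ restricts at the ports to a proper map $V(G) \to C$.

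The main obstacle is designing the gadgets so that they impose \emph{exactly} the intended constraints on their ports. Because square-coloring constraints propagate along paths of length $2$, unintended short paths between ports of adjacent gadgets could over-constrain the coloring, while insufficient internal connectivity could under-constrain it. I would address this by (i) subdividing the inter-gadget connections so that ports of adjacent gadgets lie at distance at least $3$ through every gadget-internal path, (ii) verifying by case analysis that each gadget realizes exactly the prescribed set of port colorings under any square-$q$-coloring, and (iii) embedding each gadget with all of its ports on a single face so that the gadgets can be composed consistently with the planar embedding of $G$. Extending gracefully across all $q \ge 4$ — rather than redesigning gadgets from scratch for every value of $q$ — will likely require a modular palette subgraph whose size scales linearly in $q$ but whose combinatorics stays uniform.
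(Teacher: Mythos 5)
There is a genuine gap at the heart of your plan: the claim that a local ``palette subgraph'' attached to each port $p_v$ can force $\chi(p_v)$ to lie in \emph{a fixed} $3$-element subset $C \subseteq \{1,\dots,q\}$. Colors are interchangeable, so no constant-size gadget attached locally to $p_v$ can pin down a globally fixed set $C$; the most it can enforce is that \emph{some} $q-3$ colors (namely, whichever ones happen to appear on the gadget's internal vertices) are excluded at $p_v$. Without a mechanism that synchronizes these excluded sets across all vertices, different vertices may use different ``candidate'' triples $C_u \neq C_v$, and then your backward direction collapses. Concretely, take $G = K_4$ and $q = 4$: with per-vertex candidate triples allowed to differ, the four ports can simply receive four distinct colors, each consistent with its own local palette, so the constructed instance is square-$4$-colorable even though $K_4$ is not $3$-colorable. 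The inequality gadgets only enforce $\chi(p_u) \neq \chi(p_v)$, which is far weaker than ``proper map into a single $3$-set''.

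This synchronization problem is exactly what the paper identifies as the main challenge, and its solution is the part your proposal is missing: one must distribute the identity of the $q-3$ ``auxiliary'' colors over the whole graph \emph{while staying planar}. The paper does this (for $q \geq 5$) via a combinatorial lemma stating that every plane graph of minimum degree $3$ admits a closed curve crossing each edge exactly twice; cliques of auxiliary-color vertices are threaded along this curve so that consecutive blocks are forced to carry the same $(q-3)$-set, which propagates one global auxiliary set (hence one global candidate set) past every edge of $G$. The case $q = 4$ needs a separate, more direct construction, since then only one auxiliary color must be distributed but the degree budget is tighter (your \texttt{r}-style hub vertices would already exceed it). Both constructions first work with auxiliary ``equality edges'' and then replace them by planar equality gadgets compatible with the remaining degree bound $q-1$. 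Your items (i)--(iii) address local over/under-constraining, but none of them addresses this global color-identification issue, so as written the reduction does not establish \NP-hardness.
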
  

Even though \textsc{$3$-Coloring} is \NP-hard on planar graphs, it can be still solved more efficiently than on general graphs.
It is known that an $n$-vertex planar graph has treewidth $O(\sqrt{n})$.
This combinatorial bound and the $2^{O(\ttw)}\cdot  n$ algorithm for \textsc{$3$-Coloring} immediately imply a subexponential $2^{O(\sqrt{n})}$ time algorithm.
On the other hand, for general graphs, a $2^{o(n)}$ time algorithm would violate \ETH.

Does \SqCol\ also admit a subexponential algorithm on planar graphs?
As there is no constant bound on the number of colors needed for square coloring planar graphs, we consider the version where the number $q$ of colors is part of the input. The treewidth-based approach does not seem to work.
First, even though treewidth is $O(\sqrt{n})$, Theorem~\ref{thm:tw-alg-intro} would give only a double exponential $n^{2^{O(\sqrt{n})}}$ time algorithm.
We can try to use the $\ttw^{O(\ttw)}\cdot n$ algorithm on the square of the planar graph.
However, the square of an $n$-vertex planar graph $G$ can have treewidth up to $n-1$ (for example, when $G$ is a star with $n-1$ leaves and hence $G^2$ is an $n$-clique).
Therefore, this approach would give only a $2^{O(n\log n)}$ time algorithm.
Nevertheless, our main algorithmic result is a positive answer to this question:

\begin{theorem}
 \label{thm:planar-alg-intro}
 \SqCol\ can be solved in time $2^{O(n^{2/3}\log n)}$ on planar graphs.
\end{theorem}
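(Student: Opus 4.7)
The plan is to prove the theorem by combining two algorithms that cover complementary regimes and dispatching on the value of $q$, balancing them at $q \approx n^{1/3}$ so that both give running time $2^{O(n^{2/3}\log n)}$.

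When $q$ is small (say $q \le n^{1/3}$), I would first observe that any YES-instance must have maximum degree $\Delta \le q-1$, since a vertex and its $\Delta$ neighbors must receive pairwise distinct colors; otherwise I reject immediately. With $\Delta\le q$ in hand, the square $G^2$ has a useful treewidth bound inherited from planarity. The naive argument, starting from a balanced planar separator of $G$ of size $O(\sqrt n)$ and closing under the $G$-neighborhood, yields a separator of $G^2$ of size $O(q\sqrt n)$; a sharper estimate via a weighted planar separator, where each vertex $v$ carries weight $1+\deg_G(v)$, should reduce this to $O(\sqrt{qn})$ and hence give a tree decomposition of $G^2$ of width $O(\sqrt{qn})$ up to logarithmic factors. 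Running the classical $q$-coloring dynamic program on $G^2$ along this decomposition in time $q^{O(\tw)}n^{O(1)}$ then solves \SqCol\ in time $2^{O(\sqrt{qn}\log n)}$, which is $2^{O(n^{2/3}\log n)}$ for $q\le n^{1/3}$.

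When $q$ is large (say $q \ge n^{1/3}$) the small-$q$ approach breaks down, since $\Delta$ can be close to $n$ and the treewidth bound on $G^2$ is useless. Instead I would use a protrusion decomposition of the planar graph, carving it into many ``protrusion'' subgraphs of bounded treewidth each attached to the rest through a small boundary, together with a small core. Tuning the decomposition so that the total boundary size is $O(n/q)$ while each protrusion has treewidth small enough for Theorem~\ref{thm:tw-alg-intro} to be cheap, the algorithm enumerates over the $q^{O(n/q)} = 2^{O(n\log n/q)}$ possible colorings of the global boundary, and for each one invokes Theorem~\ref{thm:tw-alg-intro} (adapted to respect prescribed boundary colors) separately inside each protrusion. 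A second-level dynamic program over the ``protrusion tree'' glues the pieces. The total running time is $2^{O(n\log n/q)}$, which is $2^{O(n^{2/3}\log n)}$ when $q\ge n^{1/3}$.

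The main obstacle will be the large-$q$ case, specifically handling the distance-$2$ constraints across protrusion boundaries: two vertices lying in different protrusions can still be at distance at most $2$ in $G$ through a boundary vertex, so the DP state at each boundary must record not only the coloring of the boundary vertices themselves but also enough second-neighborhood information to detect conflicts between interior and exterior. I would address this by enlarging each protrusion to contain a ``halo'' consisting of the $G$-neighborhood of its boundary, arguing that this enlarged protrusion still has small treewidth (so that Theorem~\ref{thm:tw-alg-intro} remains applicable), and extending that theorem to accept a prescribed partial coloring on the halo as extra input. The small-$q$ case is technically cleaner but relies on the sharper $O(\sqrt{qn})$ separator bound for $G^2$; the naive closed-neighborhood argument only gives $O(q\sqrt n)$, and the weighted separator amortization needed to save the square-root factor is the other delicate point.
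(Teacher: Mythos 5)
Your overall architecture matches the paper's: a treewidth bound $\tw(G^2)=O(\sqrt{n\Delta})$ plus the $q^{O(\ttw)}$ coloring DP for $q\le n^{1/3}$ (the paper proves the bound by a BFS-layering argument rather than your weighted-separator route, but both are viable), and a protrusion decomposition with boundary size $O(n/q)$ for $q\ge n^{1/3}$, obtained in the paper from a distance-$3$ dominating set of size $O(n/q)$ after discarding vertices with $|N_{G^2}[v]|\le q$. The balancing at $q\approx n^{1/3}$ is the same.

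However, there is a genuine gap in your large-$q$ case, exactly at the point you flag as the main obstacle. The cross-protrusion conflicts go through root-bag vertices, and the two conflicting endpoints are \emph{interior} vertices of two different protrusions. Your fix --- enlarging each protrusion by the halo $N_G(Y_i)$ and feeding Theorem~\ref{thm:tw-alg-intro} a ``prescribed partial coloring on the halo'' --- does not resolve this within the time budget: a single boundary vertex can have $\Theta(n)$ neighbors (both inside its own protrusion and inside other protrusions), so the halo cannot be enumerated ($q^{\Theta(n)}$ colorings), and it cannot be ``prescribed'' either, because the halo vertices lying in other protrusions' interiors only get their colors from those protrusions' own solutions. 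What each protrusion must communicate is, for every boundary vertex $u\in Y_i$, the \emph{set} of colors its interior exposes to $u$; with $q$ as large as $\Theta(n)$ this is a subset of $[q]$, so a ``second-level dynamic program over the protrusion tree'' whose state carries these sets (or the colors already used at each boundary vertex) has state space roughly $2^{q\cdot O(n/q)}$, which is useless. The paper's Lemma~\ref{la:dp-protrusion-decomposition} gets around this with two ingredients your sketch lacks: (i) protrusion ``types'' that record, per subset $A\subseteq Y_i$, only the \emph{number} of free colors exposed exactly to $A$ (so the total guess is $q^{O(n/q)}$), and (ii) a separate feasibility check that actual free colors can be assigned to these counted demands consistently across all protrusions --- the $\gamma(C,\varrho)$ dynamic program that adds one color at a time while tracking residual demand vectors. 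Without an argument of this kind (or an equivalent matching/flow-type consistency step), the gluing across protrusion boundaries is not established, and the claimed $n^{O(n/q)}$ bound for the large-$q$ algorithm does not follow. A secondary, smaller omission: you assert a protrusion decomposition with total boundary $O(n/q)$ by ``tuning,'' whereas one actually needs the reduction to $q$-irreducible graphs and the greedy distance-$3$ dominating set of size $O(n/q)$ before invoking the protrusion-decomposition machinery.
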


The slightly unusual exponent $2/3$ comes from a trade off between two algorithms with running time $2^{O(\sqrt{qn}\log n)}$ and $2^{O(n\log n/q)}$, respectively.
By using the former for $q\le n^{1/3}$ and the latter for $q\ge n^{1/3}$, the bound $O(n^{2/3}\log n)$ on the exponent follows.

\subsection{Our Techniques}

In this section, we briefly overview the techniques and main ideas used in the results of the paper.

\paragraph{Algorithm for bounded-treewidth graphs.}
Let us recall first the definition of tree decompositions (see Section~\ref{sec:preliminaries} for more details).
A tree decomposition of a graph $G$ consists of a rooted tree $T$ and a bag $\beta(t)\subseteq V(G)$ for every node $t$ of $T$ with the following properties: (i) every vertex $v$ of $G$ appears in at least one bag, (ii) for every vertex $v$ of $G$, the bags containing $v$ correspond to a connected subtree of $T$, (iii) if two vertices of $G$ are adjacent, then there is at least one bag containing both of them.
The width of a tree decomposition is the size of the largest bag minus one, and the treewidth of a graph is the smallest possible width of a decomposition.
For a node $t$ of $T$, let us denote by $V_t$ the union over all bags $\beta(t')$ where $t'$ is a descendant of $t$ (including $t$ itself). 

A standard way of designing algorithms on tree decompositions is to define some number of subproblems for each node $t$ of the decomposition, and then solve them in a bottom-up way.
Typically, these subproblems ask about the existence of partial solutions having a certain type. We classify the partial solutions into some number of types in such a way that if two partial solutions have the same type and one has an extension into a full solution, then the same extension would work for the other solution as well. The subproblems at node $t$ would correspond to finding which types of partial solutions are possible. Finally, we argue that if we have solved every subproblem for every child $t'$ of $t$, then the subproblems at $t$ can be solved efficiently. For example, for the \textsc{$q$-coloring} problem, the partial solutions at node $t$ are proper colorings of the graph $G[V_t]$. We define types by classifying the partial solutions according to how they color $\beta(t)$. As $|\beta(t)| \le \ttw+1$, this gives at most $q^{\ttw+1}$ types. Then easy recurrence relations show how to solve these subproblems if all the subproblems are already solved for every child of $t$.

Let us observe that we cannot define the types of partial solutions the same way in the case of the \SqCol\ problem.
It very well may be that two colorings of $G[V_t]$ agree on $\beta(t)$, but one has an extension to $G$, whereas the other one does not.
For example, let $\chi_t$ be a square coloring of $V_t$ and let $u$ be a vertex not in $V_t$.
Then, whether $\chi_t$ can be extended to a square coloring $\chi$ of $G$ where $\chi(u)$ is \red, depends not only on whether $u$ has a neighbor in $\beta(t)$ that is colored \red\ by $\chi_t$, but also on whether those vertices have a neighbor in $V_t\setminus \beta(t)$ that is colored \red\ by $\chi_t$ (see Figure~\ref{fig:twdp}).
Thus, we cannot define the types of partial solutions at node $t$ based only on how they color $\beta(t)$, but we need to take into account the colors on the neighbors of these vertices.
This suggests a more refined way of defining the type of a partial solution based on how each vertex $v\in \beta(t)$ is colored and also which subset of the $q$ colors appears already in the neighborhood of each $v\in\beta(t)$.
However, as $\beta(t)$ can be up to $\ttw+1$, this definition would result in up to $(q\cdot 2^q)^{\ttw+1}$ types. As $q$ can be of order $n$, this is clearly too many.

\begin{figure}
 \centering
 \begin{tikzpicture}
  
  \draw[fill=gray!30] (-0.5,0.02) arc (150:390:2.306cm and 3cm);
  \draw[fill=white] (1.5,0) ellipse (2cm and 0.4cm);
  
  \draw[fill=gray!30] (6.5,0.02) arc (150:390:2.306cm and 3cm);
  \draw[fill=white] (8.5,0) ellipse (2cm and 0.4cm);
  
  \node at (1.5,-5) {$\chi_t$};
  \node at (8.5,-5) {$\chi_t'$};

  \node[smallvertex,fill=blue!80] (v0-11) at ($(0,0) + (2,-1)$) {};
  \node[smallvertex,fill=red!80] (v7-11) at ($(7,0) + (2,-1)$) {};
  
  \foreach \i in {0,7}{
   \node[smallvertex,fill=cyan!80] (v\i-1) at ($(\i,0) + (0,0)$) {};
   \node[smallvertex,fill=orange!80] (v\i-2) at ($(\i,0) + (1,0)$) {};
   \node[smallvertex,fill=cyan!80] (v\i-3) at ($(\i,0) + (2,0)$) {};
   \node[smallvertex,fill=violet!80] (v\i-4) at ($(\i,0) + (3,0)$) {};
   \node[smallvertex,fill=yellow!80] (v\i-5) at ($(\i,0) + (0,1)$) {};
   \node[smallvertex,fill=Green!80] (v\i-6) at ($(\i,0) + (1,1)$) {};
   \node[smallvertex,fill=red!80, label={[label distance = -3pt]45:$u$}] (v\i-7) at ($(\i,0) + (2,1)$) {};
   \node[smallvertex,fill=Green!80] (v\i-8) at ($(\i,0) + (3,1)$) {};
   \node[smallvertex,fill=blue!80] (v\i-9) at ($(\i,0) + (2,2)$) {};
   \node[smallvertex,fill=violet!80] (v\i-10) at ($(\i,0) + (0,-1)$) {};
   \node[smallvertex,fill=red!80] (v\i-12) at ($(\i,0) + (0,-2)$) {};
   \node[smallvertex,fill=Green!80] (v\i-13) at ($(\i,0) + (1,-2)$) {};
   \node[smallvertex,fill=yellow!80] (v\i-14) at ($(\i,0) + (2,-2)$) {};
   \node[smallvertex,fill=yellow!80] (v\i-15) at ($(\i,0) + (0,-3)$) {};
   \node[smallvertex,fill=cyan!80] (v\i-16) at ($(\i,0) + (2,-3)$) {};
   \node[smallvertex,fill=red!80] (v\i-17) at ($(\i,0) + (-0.5,-1)$) {};
   \node[smallvertex,fill=yellow!80] (v\i-18) at ($(\i,0) + (-0.5,-2)$) {};
   
   \node at ($(\i,0) + (-1,0)$) {$\beta(t)$};
   \node at ($(\i,0) + (4,-3)$) {$V_t$};
   
   \foreach \v/\w in {1/5,2/6,2/7,3/7,4/7,5/6,6/9,7/8,7/9,1/10,1/17,2/11,3/11,4/11,10/12,10/13,10/18,11/13,11/14,13/14,12/15,14/16,15/16}{
    \draw[thick] (v\i-\v) edge (v\i-\w);
   }
  }
  
  \draw[line width=1.2pt,densely dotted,<->,bend right] (v7-7) edge node[label={[label distance = -6pt]180:$!$}] {} (v7-11);
 \end{tikzpicture}
 \caption{The figure shows two 7-colorings $\chi_t$ and $\chi'_t$ of $V_t$ that agree on $\beta(t)$, and hence have the same type for the (ordinary) \textsc{$7$-Coloring} problem.
  For example, both can be extended to a coloring of $G$ with the extension shown in the figure. However, if we consider the \SqCol\ problem, they are not of the same type:
  the extension shown in the figure extends $\chi_t$ to a proper square coloring, while it is not a valid extension of $\chi'_t$ (as it creates two \red\ vertices at distance $2$).}
 \label{fig:twdp}
\end{figure}
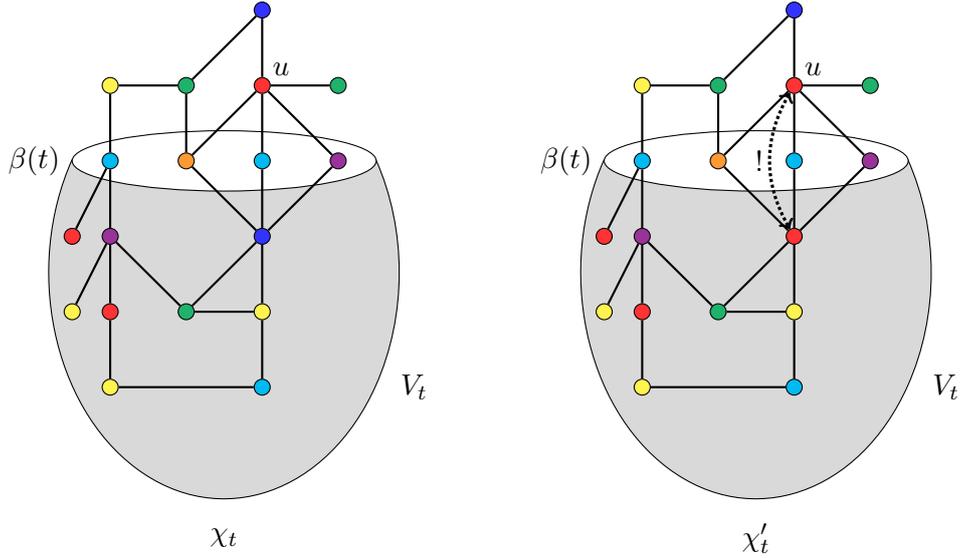

Our main insight is that instead of precisely describing which colors appear in the neighborhood of each vertex, we classify the colors according to where they appear and only consider the number of colors in each class. More precisely, the type of a partial solution $\chi_t$ of $V_t$ depends on the following.
\begin{enumerate}[label = (\Roman*)]
 \item\label{item:overview-dp-1} The coloring $\chi_t$ restricted to $\beta(t)$ (up to $q^{\ttw+1}$ possibilities).
 \item\label{item:overview-dp-2} For each color $c$ that appears on $\beta(t)$ in $\chi_t$, the subset of $S_c$ of $\beta(t)$ that has $c$ in its neighborhood (up to $(2^{\ttw+1})^{\ttw+1}$ possibilities).
 \item\label{item:overview-dp-3} For a subset $A \subseteq\beta(t)$, let $q_A$ be the number of colors $c$ that do not appear on $\beta(t)$ in $\chi_t$, but $A$ is exactly the subset of $\beta(t)$ that has $c$ in its neighborhood.
  Considering every $A \subseteq \beta(t)$, we have $(q+1)^{2^{\ttw+1}}$ possibilities for the values of $q_A$.
\end{enumerate}
All together, this gives roughly $(q+1)^{2^{\ttw+1}}$ different types of partial solutions, which matches the running time we would like to achieve.

The process of solving the subproblems in a bottom-up manner is fairly standard, but there is one more challenge if we want to achieve the claimed running time.
Typically, the main bottleneck in dynamic programming on a tree decomposition appears when handling the \emph{join nodes,} that is, nodes $t$ of the tree decomposition having exactly two children $t',t''$ and $\beta(t)=\beta(t')=\beta(t'')$ holds.
Consider a square coloring $\chi_t$ of $V_t$ and its restrictions $\chi_{t'}$ and $\chi_{t''}$ to $V_{t'}$ and $V_{t''}$, respectively.
Suppose that a color $c$ appears in the neighborhood of $A'\subseteq \beta(t)$ in $\chi_{t'}$, and on $A''\subseteq \beta(t)$ in $\chi_{t''}$.
Then, in the coloring $\chi_t$, color $c$ appears in the neighborhood of exactly $A'\cup A''$.
The main difficulty is that our description of types (as defined in the previous paragraph) does not define where a given color $c$ appears, it only specifies the number of colors that appear in the neighborhood of a given set.
This suggests that if we have a description of the type of $\chi_{t'}$ and $\chi_{t''}$, then we need to somehow match up the colors in $\chi_{t'}$ with the colors of $\chi_{t''}$ to determine what type of partial colorings of $V_t$ they can be combined into.
This can be formulated as an integer linear programming problem.
In order to solve this problem efficiently and have $2^{\ttw}$ in the exponent of $n$, we use another layer of dynamic programming.

\paragraph{Lower bound for bounded-treewidth graphs.}
As a starting point, we first prove a lower bound for a problem involving vectors, and then reduce that to \SqCol.
In the \textsc{Vector $k$-Sum} problem, we are given $k$ lists $A_1,\dots,A_k$, each list containing $n$ integer $m$-dimensional vectors, and a target vector $t$.
The task is to select exactly one vector from each list such that they sum up to exactly $t$.
Problems of similar flavor were considered before (see, e.g., \cite{AbboudLW14,DBLP:conf/stoc/Lin21}), but we need a lower bound with specific parameter settings and restrictions.
By a reduction from \textsc{Subgraph Isomorphism} for $3$-regular graphs and a known lower bound for this problem \cite{DBLP:journals/toc/Marx10}, we prove that if $k=O(2^{w})$, $m=O(2^{w})$, then there is no $f(w)n^{(2-\epsilon)^w}$ time algorithm for any $\epsilon>0$, unless \ETH\ fails.
Our reduction produces instances of \textsc{Vector $k$-Sum} with the following additional properties:
\begin{enumerate}
 \item For every $A_i$, there are at most $3$ coordinates where the vectors in $A_i$ can be non-zero.
 \item Every coordinate is non-zero in at most $2$ lists.
\end{enumerate}
As we shall see, these properties will be crucial for our lower bound.

Given the lower bound in the previous paragraph, our task is to reduce an instance of \textsc{Vector $k$-Sum} with $k=O(2^{w})$, $m=O(2^{w})$ to an instance of \SqCol\ with treewidth roughly $w$.
That is, the treewidth of the new instance should be at most \textit{logarithmic} in $k$ and $m$.
Then it follows that a $f(\ttw)n^{(2-\epsilon)^{\ttw}}$ time algorithm for any $\epsilon>0$ would violate \ETH.
While the reduction is highly technical with an elaborate construction of a sequence of gadgets, here we give a brief overview of the main ideas of the proof and in particular how the logarithmic bound for treewidth can be achieved.
It will be convenient to describe the reduction to a slight extension of the problem where some vertices are ``colorless'': they do not need to be colored in the solution, but they are relevant for computing the distances.
At the end of the proof, we show how the reduction can be modified if these vertices are also colored.

Figure~\ref{fig:intro-gadget-structure} shows the structure of the constructed instance.
On the left, we have $2m$ sets $X_1,\dots,X_m$, $Y_1,\dots,Y_m$ where $|X_i|=|Y_i|$ is exactly the $i$-th coordinate of the target vector $t$.
Let the number $q$ of colors be $\sum_{i=1}^m |X_i|$.
We want to ensure that in every square coloring with $q$ colors,
\begin{enumerate}
 \item disjoint sets of colors appear on $X_i$ and $X_j$ for $i\neq j$,
 \item the same set of colors appear on $X_i$ and $Y_i$.
\end{enumerate}
While there are many different ways of achieving this, the following construction ensures the logarithmic bound on the treewidth. We introduce a set $S$ of  $r\approx \log m$ colorless vertices. For every $i\in [m]$, we choose a distinct subset $S_i\subseteq S$ of size $r/2+1$ and connect every vertex of $X_i$ to every vertex of $S_i$, and every vertex of $Y_i$ to every vertex of $S\setminus S_i$. Note that as $\binom{r}{r/2+1}\approx 2^r \approx m$, we can choose distinct $S_i$'s. Observe that $X_i$ and $Y_i$ do not have common neighbors, allowing the use of the same colors on these two sets. As $S_i\cap S_j\neq\emptyset$ if $i\neq j$, there is a vertex of $S$ adjacent to every vertex of $X_i\cup X_j$, implying that a color cannot be used on both $X_i$ and $X_j$.
Similarly, $(S\setminus S_i)\cap S_j\neq\emptyset$ if $i\neq j$, implying that a color cannot be used on both $Y_i$ and $X_j$. Together with the bound on the number $q$ of colors, this implies that $Y_i$ has to use the same set of colors as $X_i$. Therefore, this gadget defines a partition into $m$ sets of colors, each set having the required size, and each set appearing on two sets of vertices, $X_i$ and $Y_i$.
    
\begin{figure}
 \begin{center}
  \includesvg[width=0.8\linewidth]{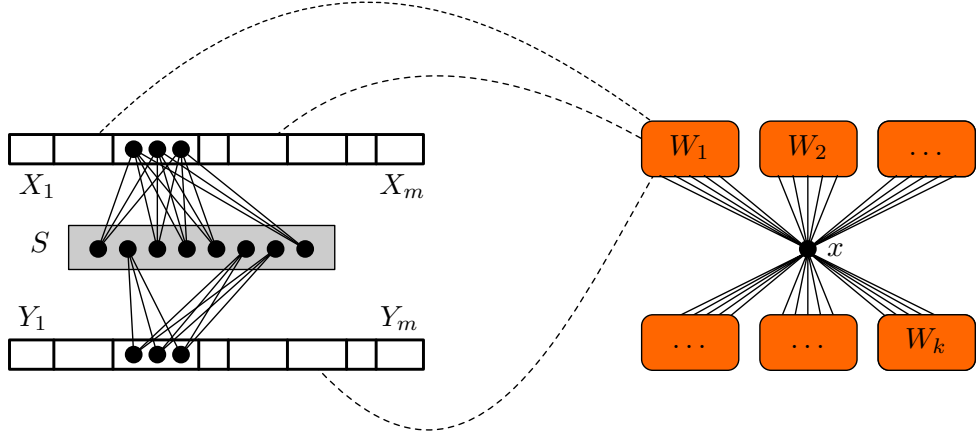}
 \end{center}
 \caption{The high-level structure of the constructed instance of \SqCol\ in the proof Theorem~\ref{thm:tw-lb-intro}.
  Each $W_i$ is connected to three color classes; here only the connections of $W_1$ are indicated.}
 \label{fig:intro-gadget-structure}
\end{figure}
    
Next, we introduce a new colorless vertex $x$ and $k$ \emph{vector selection gadgets} $W_1,\dots,W_k$, representing the lists $A_1,\dots,A_k$.
We design the gadgets in a way that ensures that their treewidth is some constant $\alpha\le 20$.
Suppose that vectors in $A_i$ are nonzero only in the three coordinates $i_1, i_2, i_3$.
Then gadget $W_i$ is attached to vertex $x$, and for every $\ell \in \{1,2,3\}$, to one of $X_{i_\ell}$ and $Y_{i_\ell}$.
As every coordinate is nonzero in the vectors of at most two lists, these attachments can be done in such a way that every $X_j$ or $Y_j$ is used only by one gadget.
Therefore, if we remove vertex $x$ and the set $S$, then the instance falls apart into disjoint gadgets.
As each gadget has treewidth at most $\alpha$, it follows that the constructed graph has treewidth at most $|S|+\alpha+1\approx \log m$, as required.

The role of the vector selection gadgets is the following.
Suppose that gadget $W_i$ is attached to $X_{i_1}\cup X_{i_2}\cup X_{i_3}$ and to vertex $x$.
We know that vertices in $X_{i_1}\cup X_{i_2}\cup X_{i_3}$ receive distinct colors and in every coloring $W_i$ ``exhibits'' some set $C_i$ of colors to $x$, that is, $C_i$ is the set of colors appearing on the neighbors of $x$ in the gadget $W_i$.
The possible square colorings of gadget $W_i$ can be classified into $n$ different ``states'', corresponding to the $n$ vectors in $A_i$.
If a vector $v_i \in A_i$ has value $a_\ell$ at coordinate $i_\ell$ for $\ell\in \{1,2,3\}$, then in the corresponding state exactly $a_\ell$ colors of $X_{i_\ell}$ are exhibited to $x$.
The colors on the neighbors of $x$ should be all different, which means that the gadgets together should exhibit at most $|X_{i_\ell}|$ colors of $X_{i_\ell}$.
In other words, if vector $v_i\in A_i$ corresponds to the state of gadget $W_i$, then $\sum_{i=1}^k v_i$ should be a vector whose $i_\ell$-th coordinate is at most $|X_{i_\ell}|$.
Observing this for every coordinate shows that $\sum_{i=1}^k v_i\le t$.
With additional arguments, this can be extended to show that there is actually equality.
Therefore, the possible combination of states of the vector selection gadgets in square colorings of the constructed graph correspond to the solutions of the \textsc{Vector $k$-Sum} instance.

We remark that the actual proof is somewhat different, for example, the sketch above ignores the fact that the gadget $W_i$ should always exhibit the same number of colors to $x$.
In the proof, we find it more convenient to define the \textsc{Vector $k$-Sum} problem such that $t=0$ and hence the vectors may have positive and negative integer values.
Then we represent each coordinate with two sets of colors and if a vector has value $a_i$ at the $i$-th coordinate, then the gadget exhibits $M-a_i$ and $M+a_i$ colors from these sets, respectively.
The proof idea described above goes through with appropriate modifications.

\paragraph{Algorithm for planar graphs.}
The subexponential algorithm in Theorem~\ref{thm:planar-alg-intro} is obtained as a combination of two algorithms (see Figure~\ref{fig:runtime}).
It is known that an $n$-vertex planar graph has treewidth $O(\sqrt{n})$, but this bound is obviously not true in general for the square of a planar graph.
However, we can obtain a useful bound if we take into account the maximum degree $\Delta$ of the graph as well.

\begin{lemma}\label{lem:twbound-intro}
 Let $G$ be a planar graph of maximum degree $\Delta$.
 Then
 \[\tw(G^2) = O\!\left(\sqrt{n\Delta}\right).\] 
\end{lemma}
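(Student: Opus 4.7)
The plan is to exhibit, for a planar graph $G$ with $n$ vertices and maximum degree $\Delta$, a balanced vertex separator of $G^2$ of size $O(\sqrt{n\Delta})$, and then to assemble these separators recursively into a tree decomposition of $G^2$.

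The workhorse would be a \emph{balanced edge separator} for $G$: an edge set $F \subseteq E(G)$ of size $O(\sqrt{n\Delta})$ together with a partition $V(G) = A \cup B$ satisfying $|A|, |B| \le 2n/3$ and $F = E_G(A, B)$. Such an edge separator exists for planar graphs by a standard refinement of the Lipton--Tarjan theorem: applying Miller's simple cycle separator theorem to the medial graph of a triangulation of $G$, whose maximum face degree is controlled by $\Delta$, yields a noose of length $O(\sqrt{n\Delta})$ in the planar embedding of $G$, and the edges it crosses form the desired $F$. With this in hand, I would set $S := V(F)$, so that $|S| \le 2|F| = O(\sqrt{n\Delta})$, and verify that $S$ is a balanced vertex separator of $G^2$: any $u \in A \setminus S$ has $N_G(u) \subseteq A$ (otherwise $u$ would be an endpoint of an edge in $F$), and symmetrically for $B$. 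Hence if $u \in A \setminus S$ and $v \in B \setminus S$ were at distance at most $2$ in $G$, then either $uv \in E(G)$ would lie in $F$ (forcing $u,v \in S$, a contradiction) or their common neighbour would have to lie simultaneously in $A$ and $B$.

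To convert this separator bound into a tree decomposition of $G^2$, I would recurse on the induced planar subgraphs $G[A \cup S]$ and $G[B \cup S]$, each having at most $\lceil 2n/3 \rceil + O(\sqrt{n\Delta})$ vertices and maximum degree at most $\Delta$, inductively building tree decompositions of $(G[A \cup S])^2$ and $(G[B \cup S])^2$. Adding $S$ to every bag of each and joining them at a fresh root bag equal to $S$ yields a tree decomposition of $G^2$. A short but necessary check is that the edges of $G^2[A \cup S]$ that are not present in $(G[A \cup S])^2$ --- namely those created by distance-$2$ paths in $G$ passing through a vertex of $B$ --- have both endpoints in $S$ (by the same neighbourhood argument as above), so that the common bag containing $S$ already covers them; the same holds for the $B$-side. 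The recurrence $T(n) \le T(2n/3 + O(\sqrt{n\Delta})) + O(\sqrt{n\Delta})$ then resolves to $T(n) = O(\sqrt{n\Delta})$.

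The principal difficulty is the first step: placing the factor $\sqrt{\Delta}$ \emph{inside} the square root, rather than accepting the naive $O(\sqrt{n})\cdot \Delta$ bound one would obtain by simply taking the closed neighbourhood of a Lipton--Tarjan vertex separator. This is precisely why the proof routes the planarity argument through an auxiliary planar graph (a triangulation's medial graph) whose maximum face degree scales with $\Delta$, so that Miller's cycle separator theorem delivers the desired dependence; that the bound is tight on dense examples such as stars and wheels (where $\Delta = \Theta(n)$ and $G^2$ is a clique) is a useful sanity check.
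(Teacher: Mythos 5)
Your argument is correct, but it takes a genuinely different route from the paper. You reduce the bound to the existence of balanced \emph{edge} separators of size $O\!\left(\sqrt{n\Delta}\right)$ (equivalently $O\bigl(\sqrt{\sum_v \deg(v)^2}\bigr)$, a known theorem of Gazit--Miller and of Diks--Djidjev--S\'ykora--Vr\v{t}o), observe the nice fact that the endpoint set $S=V(F)$ of such an edge cut is a balanced separator of $G^2$, and then run the standard Lipton--Tarjan-style recursion, correctly noting that the distance-$2$ adjacencies of $G^2[A\cup S]$ that use a middle vertex in $B\setminus S$ have both endpoints in $S$ and are therefore covered by the bags containing $S$; the recurrence analysis is the usual geometric sum (with the trivial bag ending the recursion once $\sqrt{n\Delta}=\Omega(n)$). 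The paper instead gives a self-contained, non-recursive construction: it takes BFS layers from a root, merges them pairwise and groups them into bands of roughly $\sqrt{n/\Delta}$ consecutive double-layers, chooses the sparsest residue class of layers $L_{j^*}$ (of size at most $\sqrt{n\Delta}$) to put into every bag, bounds the treewidth of each band by $O(\sqrt{n/\Delta})$ via Lemma~\ref{la:tw-planar-diameter} (planar plus a spanning tree of height $\ell$ gives treewidth at most $3\ell$, after contracting the part closer to the root), and then converts each band decomposition into one for the square by replacing every bag with the union of closed neighborhoods, paying a factor $\Delta+1$. Your approach is more modular and would generalize to any hereditary class with $O(\sqrt{\sum_v\deg(v)^2})$ edge separators, but it leans on a heavier external tool; the paper's argument needs nothing beyond Lemma~\ref{la:tw-planar-diameter} and is immediately algorithmic, which is what Lemma~\ref{la:tw-planar-square} also has to provide for the later algorithm. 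One caution on your sketch of the black box itself: deriving the edge separator by applying Miller's cycle separator theorem to ``the medial graph of a triangulation of $G$'' is delicate, because triangulating by adding edges can blow up vertex degrees and hence the face degrees of the medial graph; you should either cite the edge-separator theorem directly or be more careful about how the triangulation is performed, though this does not affect the correctness of the overall plan since the cited result is true and constructive.
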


We can assume that $\Delta$ is at most the number $q$ of colors, otherwise there is no solution.
Thus, we can assume that the treewidth of $G^2$ is $O(\sqrt{nq})$.
By using a $q^{O(\ttw)}\cdot n^{O(1)}$ algorithm for $q$-coloring a graph of treewidth $\ttw$, we obtain the first algorithm:

\begin{lemma}
 \label{lem:alg-planar-small-degree-intro}
 \qSqCol\ can be solved in time $q^{O(\sqrt{qn})}$ on planar graphs.
\end{lemma}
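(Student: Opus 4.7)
The plan is to reduce the problem to ordinary \textsc{$q$-Coloring} on the square graph $G^2$ and exploit the treewidth bound from Lemma~\ref{lem:twbound-intro}. The key observation is that, by definition, a square coloring of $G$ with $q$ colors is exactly a proper $q$-coloring of $G^2$, since the edges of $G^2$ encode precisely the pairs of vertices of $G$ that must receive different colors. Thus \qSqCol\ on $G$ is literally the same problem as \textsc{$q$-Coloring} on $G^2$.

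First I would do a degree-based preprocessing. If $G$ contains a vertex $v$ of degree at least $q$, then $\{v\}\cup N(v)$ is a clique of size at least $q+1$ in $G^2$, so no square $q$-coloring exists and the algorithm rejects immediately. Otherwise we may assume $\Delta(G) < q$, and Lemma~\ref{lem:twbound-intro} gives
\[
\tw(G^2) \;=\; O\!\left(\sqrt{n\Delta}\right) \;=\; O\!\left(\sqrt{nq}\right).
\]

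Next I would construct $G^2$ explicitly in polynomial time and compute a tree decomposition of $G^2$ of width $w = O(\sqrt{nq})$ using a constant-factor treewidth approximation algorithm, whose running time is $2^{O(w)}\cdot n^{O(1)} = 2^{O(\sqrt{nq})}\cdot n^{O(1)}$. On this decomposition I would then run the standard bottom-up dynamic program for \textsc{$q$-Coloring}, which classifies partial solutions by how they color the bag and takes $q^{O(w)}\cdot n^{O(1)}$ time. Substituting $w = O(\sqrt{nq})$ and absorbing the $2^{O(\sqrt{nq})}$ factor into $q^{O(\sqrt{nq})}$ (using $q \ge 2$) produces the claimed bound $q^{O(\sqrt{qn})}$.

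There is essentially no hard step here: the treewidth bound for $G^2$ is exactly the content of Lemma~\ref{lem:twbound-intro}, while the treewidth approximation and the $q^{O(w)}n^{O(1)}$ coloring DP are textbook tools. The only conceptual subtlety is the degree cut-off, which turns the generic bound $\tw(G^2) = O(\sqrt{n\Delta})$ into the problem-relevant bound $O(\sqrt{nq})$; without this case split, $\Delta$ could be as large as $n-1$ and the bound from Lemma~\ref{lem:twbound-intro} would be useless.
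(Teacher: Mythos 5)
Your proposal is correct and follows essentially the same route as the paper: reject if $\Delta \ge q$ (since $G^2$ then contains a $(q+1)$-clique), use the bound $\tw(G^2)=O(\sqrt{n\Delta})=O(\sqrt{nq})$, and run the standard $q^{O(\ttw)}\cdot n^{O(1)}$ coloring dynamic program on a tree decomposition of $G^2$. The only cosmetic difference is that you obtain the decomposition via a generic $2^{O(w)}\cdot n^{O(1)}$ treewidth approximation, whereas the paper's proof of the treewidth bound already constructs such a decomposition explicitly in polynomial time; this does not affect the claimed running time.
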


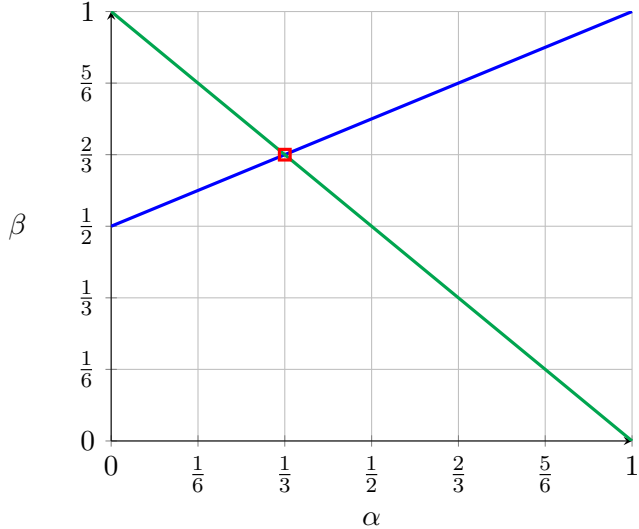
\begin{figure}
 \centering
 \begin{tikzpicture}
  \begin{axis}[axis lines = left, grid, xlabel = {$\alpha$}, ylabel = {$\beta$},
               x label style={at={(axis description cs:0.5,-0.05)}},
               xtick = {0,0.1667,0.3333,0.5,0.6667,0.8333,1},
               xticklabels={$0$, $\frac{1}{6}$, $\frac{1}{3}$, $\frac{1}{2}$, $\frac{2}{3}$, $\frac{5}{6}$, $1$},
               y label style={rotate=-90,at={(axis description cs:0,0.5)}},
               ytick = {0,0.1667,0.3333,0.5,0.6667,0.8333,1},
               yticklabels={$0$, $\frac{1}{6}$, $\frac{1}{3}$, $\frac{1}{2}$, $\frac{2}{3}$, $\frac{5}{6}$, $1$}]
   \addplot[line width = 1.2pt, domain = 0:1, color=blue]{0.5*(x+1)};
   \addplot[line width = 1.2pt, domain = 0:1, color=Green]{1-x};
   \addplot[line width = 1.2pt, color=red, mark=square] coordinates{(0.3333,0.66667)};
  \end{axis}
 \end{tikzpicture}
 \caption{The running time of the algorithms from Lemma \ref{lem:alg-planar-small-degree-intro} (blue) and \ref{lem:alg-planar-many-colors-intro} (green) as a function of the the number $q$ of colors. The $x$-axis shows the number of colors in logarithmic scale, i.e., $q = n^\alpha$.
  The $y$-axis shows the running time in double logarithmic scale, i.e., the algorithms run in time $n^{O(n^\beta)}$.}
 \label{fig:runtime}
\end{figure}  

Let us now discuss the second algorithm, which has running time $2^{O((n\log n)/q)}$.
Let $N_2[v]$ be the set of vertices at distance at most $2$ from $v$.
The initial observation is that if $|N_2[v]| \leq q$, then coloring $v$ does not present any difficulty: if the rest of the graph is colored, there is still at least one color $c$ that is unused in the distance-$2$ neighborhood of $v$ and hence we can assign $c$ to $v$.
Let $U=\{v \mid |N_2[v]|>q\}$.
Once we have a partial square coloring of the vertices of $U$, then it can be extended to a square coloring of $G$.
This implies that $G$ has a square coloring if and only if $G'=G[N[U]]$ has (note that we need to include the neighbors of $U$ into $G'$ to preserve the distance-2 paths between vertices of $U$).
It follows that we can assume that $G=G[N[U]]$, that is, $U$ is a dominating set of $G$.
Using a simple greedy selection argument, we can show that there is a subset $U'\subseteq U$ of size $O(n/q)$ that is a distance-$3$ dominating set of $G$, that is, every vertex of $G$ is at distance at most $3$ from $U'$.

Known results show that if a planar graph has a small distance-$3$ dominating set, then it can be decomposed into a smaller planar graph with small treelike attachments, called protrusions, connected to it.
Formally, an \emph{$(\alpha,\delta,k)$-protrusion decomposition} is a tree decomposition where the root bag has size at most $\alpha$, all the other bags have size at most $k$, and the root has at most $\delta$ children.
It follows from earlier work \cite{BodlaenderFLPST16} (see also \cite{FominLSZ19}) that if $G$ has a distance-$3$ dominating set of size $O(n/q)$, then it has an $(O(n/q),O(n/q),O(1))$-protrusion decomposition.

How to solve the \SqCol\ problem given such a protrusion decomposition $T$?
An obvious approach would be to try every possible coloring of the root bag ($q^{O(n/q)}$ possibilities, which is feasible in our target running time), and then somehow extend the coloring to all children of the root bag.
As the bags below the root have size $O(1)$, this should be very similar to square coloring graphs of treewidth $O(1)$, which is polynomial-time solvable by Theorem~\ref{thm:tw-alg-intro}.
However, this approach is flawed.
If $t',t''$ are two children of the root $r$, then two vertices $v'\in \beta(t')\setminus \beta(r)$ and $v''\in \beta(t'')\setminus \beta(r)$ can be at distance $2$ from each other (via a vertex in $\beta(r)$) and hence may need to receive distinct colors.
This means that even if the colors of the vertices of $\beta(r)$ are fixed, we cannot just extend the coloring to the subtree of each child independently, as such conflicts have to be avoided as well.

A natural extension to circumvent this problem is to not only guess the coloring of the root bag, but also guess the type of the coloring of each of the subtrees of the root as described in Items \ref{item:overview-dp-1} - \ref{item:overview-dp-3}.
To be more precise, let $r$ denote the root of the protrusion decomposition $T$ and let $t_1,\dots,t_\delta$ be the children of $r$.
Let us also denote by $V_i$, $i \in \{1,\dots,\delta\}$, the set of all vertices located in bags below $t_i$ (including $t_i$ itself).
Since $V_i \cap \beta(r)$ has size $O(1)$, the number of different types for each individual subtree is bounded by $q^{O(1)}$.
So in total, there are only $q^{O(\delta)} = q^{O(n/q)}$ possibilities for guessing all the types, which is still feasible in our target running time.
However, this does still not solve the problem pointed out above due to Item \ref{item:overview-dp-3}.
Let us say a color $c$ is \emph{$i$-free} if the color $c$ does not appear in $V_i \cap \beta(r)$ (given a fixed coloring of the root bag).
For each set $A \subseteq V_i \cap \beta(r)$, Item \ref{item:overview-dp-3} only provides the number $q_{i,A}$ of $i$-free colors $c$ so that $A$ is exactly the set that has $c$ in its neighborhood (only considering vertices from $V_i \setminus \beta(r)$).
Hence, we need to identify a suitable set of $i$-free colors $C_{i,A}$ of size $q_{i,A}$ which contains precisely those colors $c$.
The challenge is to do this in a consistent way.
Indeed, as already indicated above, we need to assign colors in such a way that $C_{i,A} \cap C_{i',A'} = \emptyset$ for all distinct $(i,A),(i',A')$ such that $A \cap A' \neq \emptyset$, as otherwise a vertex $v \in A \cap A'$ has two neighbors that are assigned the same color, one in the subtree rooted at $t_i$, and one in the subtree rooted at $t_{i'}$.

To solve this problem, we yet again rely on dynamic programming.
Let $\CZ$ be the set of all pairs $(i,A)$ where $i \in \{1,\dots,\delta\}$ and $A \subseteq V_i \cap \beta(r)$.
Note that $|\CZ| = O(n/q)$.
For each individual color $c \in \{1,\dots,q\}$ and every subset $\CZ' \subseteq \CZ$, we can check whether assigning $c$ to exactly those sets $C_{i,A}$ for which $(i,A) \in \CZ'$ leads to any color conflict.
Note that the number of subsets $\CZ' \subseteq \CZ$ is bounded by $2^{O(n/q)}$, so this is indeed feasible in the given time frame.
Now, checking whether $q_{i,A}$ many $i$-free colors can be assigned to $C_{i,A}$ for every $(i,A) \in \CZ$ in a consistent way can be done by a dynamic program that iteratively increases the number of available colors $q' \leq q$ and checks which ``demands'' on $i$-free colors can be met by only using colors from $\{1,\dots,q'\}$.
Once we arrive at $q' = q$ colors, we can deduce whether all guesses made so far (i.e., the coloring of the root bag and all the types of partial coloring of $V_i$) are consistent, which completes the algorithm.
Overall, we obtain the second algorithm.

\begin{lemma}
 \label{lem:alg-planar-many-colors-intro}
 \qSqCol\ can be solved in time $n^{O(n/q)}$ on planar graphs.
\end{lemma}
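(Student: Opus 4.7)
The plan is to execute the strategy outlined in the overview in three stages: a reduction to the case where high-distance-$2$-degree vertices dominate, the construction of a protrusion decomposition from the resulting small distance-$3$ dominating set, and an outer-plus-inner dynamic program that enumerates a ``skeleton'' coloring and checks whether it can be realized.

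First, define $U = \{v \in V(G) : |N_2[v]| > q\}$ and $G' = G[N[U]]$. Any square $q$-coloring of $G'$ extends to $G$, since every $v \notin N[U]$ has $|N_2[v]| \leq q$ and hence admits a free color in any greedy completion; so $G$ is a YES-instance iff $G'$ is, and we may assume $U$ dominates $G$. Greedily pick a maximal $U' \subseteq U$ whose pairwise $G$-distance exceeds $2$. The $N_2$-balls around vertices of $U'$ are pairwise disjoint and each has size greater than $q$, so $|U'| \leq n/q$; by maximality every vertex of $U$ lies within distance $2$ of $U'$, hence every vertex of $G$ lies within distance $3$ of $U'$. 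Applying the construction of \cite{BodlaenderFLPST16,FominLSZ19} for planar graphs with a small distance-$3$ dominating set then yields, in polynomial time, an $(O(n/q), O(n/q), O(1))$-protrusion decomposition $T$ of $G$.

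Let $r$ be the root of $T$ and $t_1,\dots,t_\delta$ its children with $\delta = O(n/q)$; write $V_i$ for the vertices in the subtree rooted at $t_i$ and $B_i = \beta(r) \cap V_i$, so $|B_i| = O(1)$ and $G[V_i]$ has treewidth $O(1)$. The outer enumeration first guesses the restriction $\chi_r$ of the sought coloring to $\beta(r)$ ($q^{|\beta(r)|} = q^{O(n/q)}$ options), and then, for each $i$, guesses a \emph{type} of partial square coloring of $V_i$ in the sense of Items~\ref{item:overview-dp-1}--\ref{item:overview-dp-3} relative to $B_i$: the values of $\chi_r$ on $B_i$ (already fixed), for each color of $\chi_r(B_i)$ the subset of $B_i$ in whose $V_i$-neighborhood it appears, and, for each $A \subseteq B_i$, the number $q_{i,A}$ of $i$-free colors (not used on $B_i$) whose $V_i$-neighborhood is exactly $A$. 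Since $|B_i| = O(1)$ there are $q^{O(1)}$ types per subtree and $q^{O(\delta)} = q^{O(n/q)}$ tuples in total. For each guessed subtree type, the dynamic program underlying Theorem~\ref{thm:tw-alg-intro}, applied to $G[V_i]$ with $\chi_r|_{B_i}$ imposed on the boundary, decides in polynomial time whether the type is realized by some square coloring of $G[V_i]$.

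The crux is verifying global consistency of a tuple of realizable types: we must choose pairwise disjoint sets $C_{i,A}$ of $i$-free colors with $|C_{i,A}| = q_{i,A}$ such that $C_{i,A} \cap C_{i',A'} = \emptyset$ whenever $(i,A) \neq (i',A')$ and $A \cap A' \neq \emptyset$, since otherwise two vertices in $A \cap A'$ would see the same color on two neighbors at distance $2$ through $\beta(r)$. The main obstacle is carrying out this assignment within budget, and I would resolve it by an inner dynamic program that sweeps colors $c = 1, 2, \dots, q$ and maintains as its state the vector of residual demands $(q_{i,A} - |C_{i,A} \cap \{1,\dots,c\}|)_{(i,A) \in \CZ}$, where $\CZ = \{(i,A) : i \in [\delta],\ A \subseteq B_i\}$. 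Since $|\CZ| = O(n/q)$ and each coordinate lies in $\{0,\dots,q\}$, there are at most $q^{O(n/q)}$ states. The transition at color $c$ enumerates the subset $\CZ_c \subseteq \CZ$ of pairs receiving $c$ (at most $2^{|\CZ|} = 2^{O(n/q)}$ candidates), accepting only those for which $c$ is $i$-free for every $(i,A) \in \CZ_c$ and for which $A \cap A' = \emptyset$ for every pair of distinct elements $(i,A), (i',A') \in \CZ_c$. The instance is feasible iff the zero state is reached at $c=q$, giving inner running time $q \cdot q^{O(n/q)} \cdot 2^{O(n/q)} = n^{O(n/q)}$; multiplying by the outer $q^{O(n/q)}$ choices yields the claimed total time $n^{O(n/q)}$. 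Correctness follows because, given $\chi_r$, a family of realizable subtree types, and a valid color assignment, the per-subtree colorings can be glued into a square coloring of $G$: any distance-$2$ conflict either lies inside a single $V_i$ (excluded by realizability of its type) or passes through $\beta(r)$ (excluded by the disjointness condition on the $C_{i,A}$'s together with $\chi_r$ itself being a proper square coloring of $G[\beta(r)]$, which is checked as part of the outer enumeration).
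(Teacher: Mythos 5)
Your overall architecture is the same as the paper's (reduce to the case where the high-$|N_{G^2}[\cdot]|$ vertices dominate, extract a small constant-radius dominating set, build an $(O(n/q),O(n/q),O(1))$-protrusion decomposition, guess the root coloring and a type per subtree in the sense of Items~\ref{item:overview-dp-1}--\ref{item:overview-dp-3}, check realizability of each type with the bounded-treewidth DP, and verify global consistency by a color-by-color dynamic program over the set $\CZ$). However, the consistency check you run is genuinely too weak, and this is where the real difficulty of the lemma sits. Your inner DP only enforces that a color assigned at step $c$ is $i$-free and that the classes in $\CZ_c$ have pairwise disjoint boundary sets, and your outer check only requires $\chi_r$ to be a square coloring of $G[\beta(r)]$. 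This misses several families of genuine distance-$2$ conflicts that "pass through $\beta(r)$": (a) a free color $c \in C_{i,A}$ may coincide with $\chi_r(v)$ for a root-bag vertex $v \notin B_i$ adjacent to some $u \in A$ ($i$-freeness only excludes colors used on $B_i$, not on the rest of $\beta(r)$); (b) a free color of subtree $i$ may collide with a \emph{non-free} color of another subtree $i'$, i.e.\ a color $\chi_r(w')$, $w' \in B_{i'}$, that appears deep in $V_{i'}$ adjacent to a shared vertex $u \in A$ — this requires coupling the $C_{i,A}$'s with the $\xi$-data of the other subtrees; (c) two non-free colors can collide across subtrees ($\chi_r(v)=\chi_r(w)$ with $v\in B_i$, $w\in B_{i'}$ far apart in $G[\beta(r)]$ but with intersecting $\xi$-sets), and similarly a non-free color deep in $V_i$ can collide with a root-bag vertex adjacent to a vertex of its $\xi$-set; none of these is excluded by realizability or by $\chi_r$ being proper on $G[\beta(r)]$. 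These are exactly the conditions the paper adds as (PL.2), (PL.3) and conditions \ref{item:dp-free-colors-5}, \ref{item:dp-free-colors-6} in Lemma~\ref{la:dp-protrusion-decomposition}. In addition, your relaxation that allows $C_{i,A}$ and $C_{i,A'}$ to share colors when $A \cap A' = \emptyset$ \emph{within the same subtree} breaks the gluing step: a realization of the type assigns each $i$-free color a unique equivalence class, so the sets $\{C_{i,A}\}_{A \subseteq B_i}$ must partition the $i$-free colors (the paper's condition \ref{item:dp-free-colors-3}); with overlaps permitted, the renaming of free colors inside $V_i$ that your soundness argument implicitly needs does not exist, and one can cook up demand patterns where your relaxed system is feasible while no coloring realizes the guessed types. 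All of these constraints fit into the same DP framework (they are checkable per color and per subset $\CZ_c$, given $\chi_r$ and the guessed $\xi$'s), so the fix is to enrich the transition test, not to change the approach — but as written the correctness argument has a hole precisely at the gluing step.

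A smaller but real slip is in the dominating-set bound: vertices at pairwise distance greater than $2$ do \emph{not} have disjoint $N_2$-balls (two vertices at distance $3$ share a ball vertex), so the inequality $|U'| \le n/q$ does not follow as stated. Either demand pairwise distance greater than $4$ (then the $N_2$-balls are disjoint and maximality gives distance-$5$ domination, which is still fine since Lemma~\ref{la:protrusion-decomposition} works for any constant radius), or argue as the paper does via disjoint closed $1$-neighborhoods and edge counting, using planarity to bound $|E(G)| = O(n)$.
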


Finally, Theorem \ref{thm:planar-alg-intro} follows from applying the algorithm from Lemma \ref{lem:alg-planar-small-degree-intro} for $q \leq n^{1/3}$, and the algorithm from Lemma \ref{lem:alg-planar-many-colors-intro} for $q \geq n^{1/3}$ (see also Figure \ref{fig:runtime}).

\paragraph{NP-hardness for planar graphs.}
The \NP-hardness of \SqCol\ on planar graphs was shown by Lloyd and Ramanathan \cite{DBLP:conf/icci/LloydR92}.
However, they only show hardness for $q = 7$ colors.
We extend this to any fixed number $q \ge 4$ of colors by reducing from \textsc{$3$-Coloring} restricted to planar input graphs.
The \NP-hardness of this problem has been proven in \cite{Stockmeyer73}.
Let $G$ denote a planar graph for which we wish to test $3$-colorability.
In order to create an equivalent instance of \SqCol\ (with $q$ colors) we follow the natural strategy of splitting the set of $q$ colors into a set of $3$ ``candidate'' colors and $q-3$ ``auxiliary'' colors.
Now, the basic idea is to extend $G$ by certain gadgets that ensure that every original vertex of $G$ needs to be colored by one of the ``candidate'' colors, and adjacent vertices in the original graph $G$ need to receive different colors.
The main challenge of this approach is to devise a method to distribute the information of which colors are the ``candidate'' colors over the entire graph.

Here, our main insight is that, for every plane graph $G$ (i.e., a planar graph together with its embedding in the plane) of minimum degree $3$, one can draw a circle on the plane that crosses every edge of $G$ exactly twice (and does not intersect any vertex of $G$).
An example is given in Figure \ref{fig:planar-hardness-cycle}.
Given such a circle, we can distribute the information of which colors are the ``candidate'' colors and which are the ``auxiliary'' colors along the circle while preserving planarity.
Suppose $q \geq 5$.
Figure \ref{fig:planar-hardness-edge} shows the constructed instance of \SqCol\ when zooming in on a single edge $uv \in E(G)$ of the original graph $G$.
In the example, the colors \red, \blue\ and \green\ are the ``candidate'' colors and \yellow, \orange, \violet\ and \cyan\ are the ``auxiliary'' colors.
This information is forwarded along the blue circle by the construction shown the Figure \ref{fig:planar-hardness-edge}.
To be more precise, we introduce sets of vertices $C_i$ and $A_i$ for all $i \in [2|E(G)|]$ where each $C_i$ contains $3$ vertices and each $A_i$ contains $q-3 \geq 2$ vertices.
The sets $C_i$ and $A_i$ are placed alternately on the constructed circle where each $C_i$ is associated with a crossing between the constructed circle and an edge of $G$ (see also Figure \ref{fig:planar-hardness}).
For every $i \in [2|E(G)|]$ we introduce edges between $A_i$ and $C_i$ as well as $C_i$ and $A_{i+1}$ as indicated in Figure \ref{fig:planar-hardness-edge}.
We obtain that, in the square graph, both $A_i \cup C_i$ and $C_i \cup A_{i+1}$ form cliques, which means that all $q$ colors have to be used on both sets.
In particular, this implies that all sets $A_i$ and $A_{i+1}$ have to be colored by the same set of colors which are declared to be the ``auxiliary'' colors.
Overall, we obtain that all sets $A_i$ are colored by the ``auxiliary'' colors and all sets $C_i$ are colored by the ``candidate'' colors.

Looking at Figure \ref{fig:planar-hardness-edge}, it can also be checked that $u$ and $v$ need to be assigned distinct ``candidate'' colors.
Indeed, the only way to color $C_i$ and $C_j$ in a consistent way is to color them in the same way from top to bottom.
So overall, we can thus ensure that all vertices of $G$ are colored by one of the ``candidate'' colors and adjacent vertices need to receive distinct colors.
This almost completes the reduction.
As the last remaining step, we only need to ensure that gadgets living on two edges incident to the same vertex do not interfere with one another.
However, this can easily be assured by some simple gadgets that are introduced at every vertex of $G$.

\begin{figure}
 \centering
 \begin{subfigure}[t]{0.5\linewidth}
  \centering
  \begin{tikzpicture}
   \path[use as bounding box] (90:4.5) -- (210:4.5) -- (330:4.5) -- cycle;
   
   \node[normalvertex] (1) at (0,0) {};
   \node[normalvertex] (2) at (90:3) {};
   \node[normalvertex] (3) at (210:3) {};
   \node[normalvertex] (4) at (330:3) {};
   
   \foreach \i/\j in {1/2,1/3,1/4,2/3,2/4,3/4}{
    \draw[thick] (\i) edge (\j);
   }
   
   \node (12) at (90:1) {};
   \node (13) at (210:1) {};
   \node (14) at (330:1) {};
   \node (21) at (90:2) {};
   \node (31) at (210:2) {};
   \node (41) at (330:2) {};
   
   \node (23) at ($(90:3) + (240:1.732)$) {};
   \node (32) at ($(90:3) + (240:3.464)$) {};
   \node (24) at ($(90:3) + (300:1.732)$) {};
   \node (42) at ($(90:3) + (300:3.464)$) {};
   \node (34) at ($(210:3) + (0:1.732)$) {};
   \node (43) at ($(210:3) + (0:3.464)$) {};
   
   \draw[line width = 4pt, blue, opacity=0.5, rounded corners = 0.3cm]
    (21.center) -- (41.center) -- (31.center) ..controls ($(210:1.5) + (120:0.3)$)..
    (13.center) -- (14.center) -- (12.center) --
    (32.center) ..controls (195:4.5) and (225:4.5).. (34.center) ..controls (265:1.2) and (275:1.2)..
    (43.center) ..controls (315:4.5) and (345:4.5).. (42.center) ..controls (25:1.2) and (35:1.2)..
    (24.center) ..controls (75:4.5) and (105:4.5).. (23.center) -- cycle;
  \end{tikzpicture}
  \caption{The planar graph $K_4$ together with a circle that crosses every edge of $K_4$ exactly twice.}
  \label{fig:planar-hardness-cycle}
 \end{subfigure}
 \begin{subfigure}[t]{0.45\linewidth}
  \centering
  \begin{tikzpicture}
   \draw[blue!25,fill=blue!25] (-2.6,4.4) rectangle (4.6,6);
   \draw[blue!25,fill=blue!25] (-2.6,1) rectangle (4.6,2.6);
   
   \node[normalvertex,fill=blue!80,label={above:$u$}] (22) at ($(1,5)+(0,1.3)$) {};
   \node[normalvertex,fill=red!80] (23) at ($(1,5)+(0,0.2)$) {};
   \node[normalvertex,fill=Green!80] (24) at ($(1,5)+(0,-0.9)$) {};
   
   \node[normalvertex,fill=blue!80] (32) at ($(1,2)+(0,0.9)$) {};
   \node[normalvertex,fill=red!80] (33) at ($(1,2)+(0,-0.2)$) {};
   \node[normalvertex,fill=Green!80,label={below:$v$}] (34) at ($(1,2)+(0,-1.3)$) {};
   
   \foreach[count=\i] \x/\y in {-0.8/5.2,2.8/5.2}{
    \node[normalvertex,fill=yellow!80] (e\i1) at ($(\x,\y)+(0.6,0)$) {};
    \node[normalvertex,fill=cyan!80] (e\i2) at ($(\x,\y)+(-0.6,0)$) {};
    \node[normalvertex,fill=orange!80] (e\i3) at ($(\x,\y)+(0,0.6)$) {};
    \node[normalvertex,fill=violet!80] (e\i4) at ($(\x,\y)+(0,-0.6)$) {};
    \foreach \v/\w in {1/2,1/3,1/4,2/3,2/4}{
     \draw[thick] (e\i\v) edge (e\i\w);
    }
   }
   \foreach \i/\x/\y in {3/-0.8/1.8,4/2.8/1.8}{
    \node[normalvertex,fill=cyan!80] (e\i1) at ($(\x,\y)+(0.6,0)$) {};
    \node[normalvertex,fill=yellow!80] (e\i2) at ($(\x,\y)+(-0.6,0)$) {};
    \node[normalvertex,fill=orange!80] (e\i3) at ($(\x,\y)+(0,0.6)$) {};
    \node[normalvertex,fill=violet!80] (e\i4) at ($(\x,\y)+(0,-0.6)$) {};
    \foreach \v/\w in {1/2,1/3,1/4,2/3,2/4}{
     \draw[thick] (e\i\v) edge (e\i\w);
    }
   }
   
   \draw[thick] (e12) edge ($(e12.center)+(150:0.6)$);
   \draw[thick] (e12) edge ($(e12.center)+(180:0.6)$);
   \draw[thick] (e12) edge ($(e12.center)+(210:0.6)$);
   \draw[thick] (e32) edge ($(e32.center)+(150:0.6)$);
   \draw[thick] (e32) edge ($(e32.center)+(180:0.6)$);
   \draw[thick] (e32) edge ($(e32.center)+(210:0.6)$);
   \draw[thick] (e21) edge ($(e21.center)+(30:0.6)$);
   \draw[thick] (e21) edge ($(e21.center)+(0:0.6)$);
   \draw[thick] (e21) edge ($(e21.center)+(-30:0.6)$);
   \draw[thick] (e41) edge ($(e41.center)+(30:0.6)$);
   \draw[thick] (e41) edge ($(e41.center)+(0:0.6)$);
   \draw[thick] (e41) edge ($(e41.center)+(-30:0.6)$);
   
   \foreach \i/\j in {22/23,23/24,24/32,32/33,33/34,e11/22,e11/23,e11/24,e22/22,e22/23,e22/24,e31/32,e31/33,e31/34,e42/32,e42/33,e42/34}{
    \draw[thick] (\i) edge (\j);
   }
   
   \draw[dashed, line width = 3pt, rounded corners, gray!80] (0.6,0.1) rectangle (1.4,3.3);
   \draw[dashed, line width = 3pt, rounded corners, gray!80] (0.6,3.7) rectangle (1.4,6.9);
   
   \draw[dashed, line width = 3pt, rounded corners, gray!80] (-1.7,0.9) rectangle (0.1,2.7);
   \draw[dashed, line width = 3pt, rounded corners, gray!80] (1.9,0.9) rectangle (3.7,2.7);
   \draw[dashed, line width = 3pt, rounded corners, gray!80] (-1.7,4.3) rectangle (0.1,6.1);
   \draw[dashed, line width = 3pt, rounded corners, gray!80] (1.9,4.3) rectangle (3.7,6.1);
   
   \node at (1,-0.3) {$C_j$};
   \node at (1,7.3) {$C_i$};
   
   \node at (2.8,6.5) {$A_i$};
   \node at (-0.8,6.5) {$A_{i+1}$};
   \node at (-0.8,0.5) {$A_j$};
   \node at (2.8,0.5) {$A_{j+1}$};
  \end{tikzpicture}
  \caption{The constructed instance of \SqCol\ for $q = 7$ when zooming in on an edge $uv$ of the original graph $G$.}
  \label{fig:planar-hardness-edge}
 \end{subfigure}
 \caption{Visualization of the reduction from \textsc{$3$-Coloring} to \SqCol\ on planar graphs.}
 \label{fig:planar-hardness}
\end{figure}
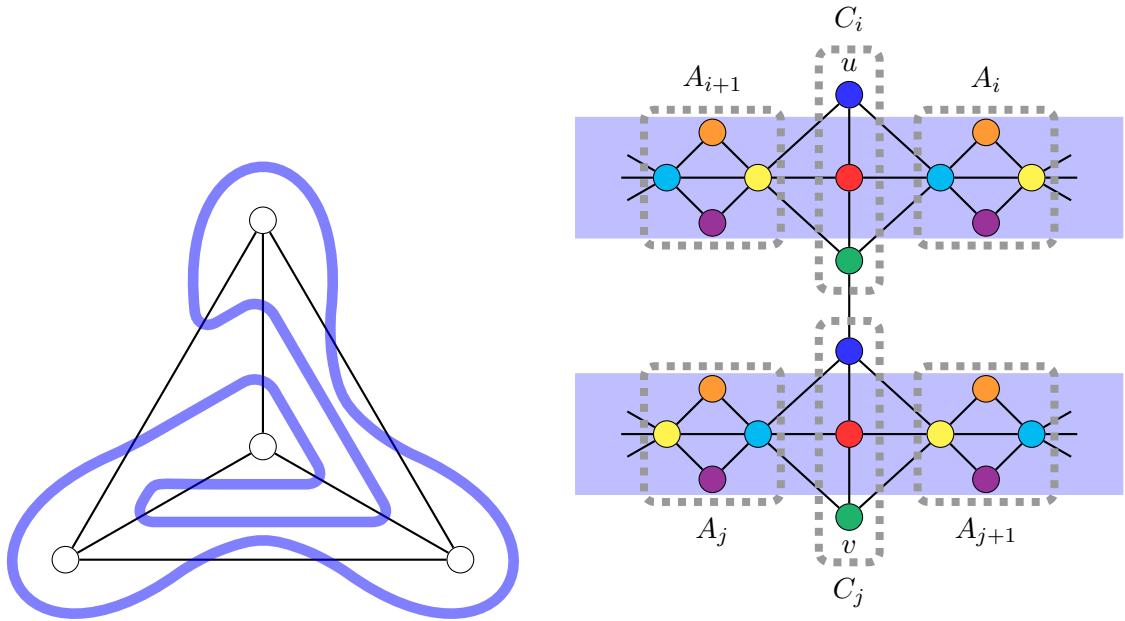

Observe that the above construction requires $q \geq 5$ since the \red\ vertices in Figure \ref{fig:planar-hardness-edge} always have degree $4$ (independent of the number of colors $q$).
For $q = 4$, we design a specialized reduction which is slightly easier than the one described above.
Here, the main insight is that we only need to distribute one ``auxiliary'' color which can be done in a more direct way while preserving planarity.

Finally, let us also point out that the constructed instance of \SqCol\ has $O(qn)$ many vertices (where $n$ denotes the number of vertices of $G$).
As a result, building on known lower bounds for \textsc{$3$-Coloring} on planar graphs (see, e.g., \cite[Theorem 14.9]{CyganFKLMPPS15}), we also obtain the following conditional lower bound for \SqCol. 

\begin{theorem}\label{thm:planar-eth-lb-intro}
 Assuming \ETH, there is no $2^{o(\sqrt{n})}$ time algorithm solving \SqCol\ on planar graphs.
\end{theorem}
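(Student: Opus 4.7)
The plan is to derive Theorem \ref{thm:planar-eth-lb-intro} as a direct consequence of the reduction used to prove Theorem \ref{thm:planar-lb-intro}, fixing a constant value of $q$ and invoking the standard ETH lower bound for \textsc{$3$-Coloring} on planar graphs. Recall that \textsc{$3$-Coloring} on planar graphs admits no $2^{o(\sqrt{n})}$ time algorithm unless \ETH\ fails (see \cite[Theorem 14.9]{CyganFKLMPPS15}); this follows from the fact that the standard reduction from \textsc{$3$-SAT} to planar \textsc{$3$-Coloring} is linear, combined with the classical sparsification-based ETH bound ruling out $2^{o(n+m)}$ for \textsc{$3$-SAT}. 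This is the result I would reduce from.

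First, I would fix $q=4$ and invoke the reduction sketched in the overview of Theorem~\ref{thm:planar-lb-intro}, which transforms a planar instance $G$ of \textsc{$3$-Coloring} on $n$ vertices into an equivalent planar instance $G'$ of \qSqCol[q]. The key quantitative observation, already highlighted in the excerpt, is that $|V(G')|=O(qn)$, and with $q$ fixed to a constant this becomes $|V(G')|=O(n)$. Correctness of the reduction (i.e., $G$ is $3$-colorable iff $G'$ admits a square coloring with $4$ colors) is exactly what Theorem~\ref{thm:planar-lb-intro} already establishes, so I would simply quote it.

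Next, I would argue by contradiction: suppose there were an algorithm solving \SqCol\ on planar graphs in time $2^{o(\sqrt{N})}$ where $N$ denotes the number of vertices of the input graph. Composing this algorithm with the above reduction produces an algorithm for planar \textsc{$3$-Coloring} running in time $2^{o(\sqrt{|V(G')|})}=2^{o(\sqrt{n})}$, contradicting the ETH lower bound for planar \textsc{$3$-Coloring}. Here it is essential that $q$ is treated as a fixed constant so that the blow-up of the reduction is purely linear; if instead $q$ were allowed to grow with $n$, the $\sqrt{q}$ factor in $\sqrt{|V(G')|}$ would weaken the transferred bound.

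I do not anticipate any real obstacle: the only thing to verify carefully is that the reduction from Theorem~\ref{thm:planar-lb-intro} is indeed linear in $n$ for fixed $q$, which follows from the construction description (each edge of $G$ contributes a constant-sized gadget, and each vertex of $G$ contributes one more constant-sized gadget). The mildly subtle point, worth spelling out in the write-up, is that \SqCol\ here is the problem where $q$ is part of the input, so an algorithm solving it in time $2^{o(\sqrt{N})}$ in particular handles the subclass of instances produced by the reduction with $q=4$, which is exactly what is needed for the contradiction.
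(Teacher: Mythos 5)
Your proposal is correct and follows essentially the same route as the paper: the paper also derives this bound by composing the NP-hardness reduction from planar \textsc{$3$-Coloring} (which produces a planar instance with $O(qn)$ vertices, hence $O(n)$ for fixed $q \geq 4$) with the known ETH lower bound of $2^{o(\sqrt{n})}$ for planar \textsc{$3$-Coloring} from \cite[Theorem 14.9]{CyganFKLMPPS15}. Your observation that an algorithm for \SqCol\ with $q$ part of the input in particular handles the $q=4$ instances is exactly the (implicit) step the paper uses as well.
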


In particular, assuming \ETH, the algorithm from Lemma \ref{lem:alg-planar-small-degree-intro} is essentially optimal for constant number of colors $q$.

\subsection{Structure of the Paper}

After providing the necessary preliminaries in the next section, we prove Theorem \ref{thm:tw-alg-intro} in Section \ref{sec:tw-alg}.
Afterward, we show a matching lower bound (Theorem \ref{thm:tw-lb-intro}) in Section \ref{sec:tw-lower-bound}.
In Section \ref{sec:planar-alg}, we prove Lemmas \ref{lem:alg-planar-small-degree-intro} and \ref{lem:alg-planar-many-colors-intro} which together imply Theorem \ref{thm:planar-alg-intro}.
Finally, we provide the hardness results for \SqCol\ on planar graphs in Section \ref{sec:planar-lower-bound} where we show Theorems \ref{thm:planar-lb-intro} and \ref{thm:planar-eth-lb-intro}.

\section{Preliminaries}
\label{sec:preliminaries}
\paragraph{Basics.}

For an integer $k \geq 1$ we denote $[k] \coloneqq \{1,2,\dots,k\}$ and $[k]_0 \coloneqq \{0,1,\dots,k\}$.
We define $\ZZp$ to be the set of positive integers, and $\ZZnn$ to be the set of non-negative integers.
For a finite set $X$ we denote by $2^{X}$ the powerset of $X$, i.e., the set of all subsets of $X$, and $\binom{X}{k}$ denotes the set of all $k$-element subsets of $X$.
If not otherwise stated, $\log$ denotes the logarithm with base $2$.

We use standard notation for graphs.
A \emph{graph} is pair $G = (V(G),E(G))$ with finite vertex set $V(G)$ and edge set $E(G) \subseteq \binom{V(G)}{2}$.
Unless stated otherwise, all graphs considered in this paper are simple (i.e., there are no loops or mulitedges) and undirected.
We use $uv$ as a shorthand for edges $\{u,v\} \in E(G)$.
The \emph{(open) neighborhood} of a vertex $v \in V(G)$ is denoted by $N_G(v) \coloneqq \{w \in V(G) \mid vw \in E(G)\}$.
The \emph{degree} of $v$ is the size of its neighborhood, i.e., $\deg_G(v) \coloneqq |N_G(v)|$.
The \emph{closed neighborhood} is $N_G[v] \coloneqq N_G(v) \cup \{v\}$.
More generally, for a set $X \subseteq V(G)$, we define $N_G(X) \coloneqq (\bigcup_{v \in X} N_G(v)) \setminus X$ and $N_G[X] \coloneqq N_G(X) \cup X$.
We usually omit the index $G$ if it is clear from context.

A \emph{path of length $k$} between two vertices $v,w \in V(G)$ is a sequence of pairwise distinct vertices $v = u_0,\dots,u_k = w$ such that $u_{i-1}u_i \in E(G)$ for all $i \in [k]$.
The \emph{distance} between $v$ and $w$, denoted by $\dist_G(v,w)$, is the length of the shortest path between $v$ and $w$.
For $X \subseteq V(G)$ we write $G[X]$ to denote the \emph{induced subgraph} on vertex set $X$, and $G - X \coloneqq G[V(G) \setminus X]$ denotes the induced subgraph on the complement of $X$.

\paragraph{Colorings.}

Let $G$ be a graph.
For $q \geq 1$ a \emph{(proper) $q$-coloring} of $G$ is a mapping $\chi\colon V(G) \to [q]$ such that $\chi(v) \neq \chi(w)$ for all edges $vw \in E(G)$.
A \emph{square $q$-coloring} of $G$ is a mapping $\chi\colon V(G) \to [q]$ such that $\chi(v) \neq \chi(w)$ for all distinct $v,w \in V(G)$ such that $\dist_G(v,w) \leq 2$.
Observe that a mapping $\chi\colon V(G) \rightarrow [q]$ is a square $q$-coloring of $G$ if and only if $\chi$ is a $q$-coloring of the \emph{square graph} $G^{2}$ defined by $V(G^{2}) \coloneqq V(G)$ and
\[E(G^{2}) \coloneqq \{vw \mid v \neq w, \dist_G(v,w) \leq 2\}.\]
A \emph{coloring} of $G$ is a $q$-coloring of $G$ for some number $q \geq 1$.
Similarly, a \emph{square coloring} of $G$ is a square $q$-coloring of $G$ for some number $q \geq 1$.
We consider the following computational problem.

\defproblem{Square Coloring}{A graph $G$, and an integer $q$.}{Is there a square $q$-coloring of $G$?}

Also, we call \qSqCol\ the variant of the problem where the number of colors is fixed to $q$.

Given a graph $G$ and an arbitrary mapping $\chi\colon V(G) \to [q]$, we define a \emph{(color) conflict} to be a pair of vertices violating the square coloring constraint,
that is, a (color) conflict is pair of distinct vertices $v,w \in V(G)$ such that $\dist_G(v,w) \leq 2$ and $\chi(v) = \chi(w)$.

\paragraph{Tree decompositions.}

Next, we define tree decompositions and state some basic properties.
For a more thorough introduction to treewidth and its many applications, we refer the reader to~\cite[Chapter 7]{CyganFKLMPPS15}.

Let $G$ be a graph.
A \emph{tree decomposition} of $G$ is a pair $(T,\beta)$ consisting of a rooted tree $T$ and a mapping $\beta\colon V(T) \to 2^{V(G)}$ such that
\begin{enumerate}[label = (T.\arabic*)]
 \item $\bigcup_{t \in V(T)} \beta(t) = V(G)$,
 \item for every edge $vw \in E(G)$ there is some node $t \in V(T)$ such that $\{u,v\} \subseteq \beta(t)$, and
 \item for every $v \in V(G)$ the set $\{t \in V(T) \mid v \in \beta(t)\}$ induces a connected subtree of $T$.
\end{enumerate}
The \emph{width} of a tree decomposition $(T,\beta)$ is defined as $\max_{t \in V(T)}|\beta(t)|-1$.
The \emph{treewidth} of a graph $G$, denoted by $\tw(G)$, is the minimum width of a tree decomposition of $G$.

The following fact on tree decompositions is well-known.

\begin{observation}
 \label{obs:td-connected-subtree-from-connected-subgraph}
 Let $G$ be a graph and let $(T,\beta)$ be a tree decomposition of $G$.
 Also let $S \subseteq V(G)$ such that $G[S]$ is connected.
 Then
 \[T_S \coloneqq \{t \in V(T) \mid \beta(t) \cap S \neq \emptyset\}\]
 induces a connected subtree of $T$.
\end{observation}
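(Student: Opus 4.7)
The plan is to prove this by contradiction, using the connectedness of $G[S]$ to bridge any hypothetical gap between two components of $T_S$ via the standard tree-decomposition properties.

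First, I would suppose for contradiction that $T_S$ does not induce a connected subtree of $T$. Then there exist two nodes $t_1, t_2 \in T_S$ that lie in different connected components of the subgraph of $T$ induced by $T_S$. Pick witnesses $v_1 \in \beta(t_1) \cap S$ and $v_2 \in \beta(t_2) \cap S$. Since $G[S]$ is connected by assumption, there is a path $v_1 = u_0, u_1, \dots, u_\ell = v_2$ in $G[S]$, with every $u_i \in S$ and every edge $u_{i-1}u_i \in E(G)$.

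Next, I would argue that this path forces a connection inside $T_S$. By property (T.2) of a tree decomposition, for each edge $u_{i-1}u_i$ there exists a node $s_i \in V(T)$ with $\{u_{i-1}, u_i\} \subseteq \beta(s_i)$; in particular $s_i \in T_S$. Together with property (T.3), the set of nodes whose bag contains $u_i$ induces a connected subtree of $T$, and this subtree contains both $s_i$ and $s_{i+1}$ (and it is contained in $T_S$, since every bag in it contains $u_i \in S$). Chaining these subtrees for $i = 1, \dots, \ell-1$, together with the subtrees corresponding to $v_1$ (containing $t_1$ and $s_1$) and $v_2$ (containing $s_\ell$ and $t_2$), yields a walk in $T_S$ from $t_1$ to $t_2$, contradicting the assumption that they lie in different components of $T[T_S]$.

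The argument is almost entirely bookkeeping on the three tree-decomposition axioms, so I do not expect any real obstacle; the only point worth being careful about is making sure every intermediate node used in the bridging walk does belong to $T_S$, which is guaranteed because each such bag contains some vertex $u_i \in S$. This establishes connectivity of $T[T_S]$, and since an induced connected subgraph of a tree is a subtree, the observation follows.
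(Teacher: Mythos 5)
Your proof is correct. The paper itself gives no argument for this observation---it simply states it as well-known---and your proof is the standard one: route a path in $G[S]$ between witnesses $v_1,v_2$, use property (T.2) to get a node $s_i$ for each edge of the path, and use property (T.3) to glue the per-vertex subtrees (all of which lie inside $T_S$) into a connected subgraph of $T[T_S]$ joining $t_1$ to $t_2$. The bookkeeping is sound, including the key point that every node used along the way has a bag meeting $S$.
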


When designing algorithms on graphs of bounded treewidth, one usually builds on nice tree decompositions.
Let $(T,\beta)$ be a tree decomposition and denote $X_t \coloneqq \beta(t)$ for $t \in V(T)$.
The decomposition is \emph{nice} if $X_r = \emptyset$ where $r$ denotes the root of $T$, $X_\ell = \emptyset$ for all leaves $\ell \in V(T)$, and every internal node $t \in V(T)$ has one of the following types:
\begin{description}
 \item[Introduce:] $t$ has exactly one child $t'$ and $X_t = X_{t'} \cup \{v\}$ for some $v \notin X_{t'}$; the vertex $v$ \emph{is introduced at $t$},
 \item[Forget:] $t$ has exactly one child $t'$ and $X_t = X_{t'} \setminus \{v\}$ for some $v \in X_{t'}$; the vertex $v$ \emph{is forgotten at $t$}, or
 \item[Join:] $t$ has exactly two children $t_1,t_2$ and $X_t = X_{t_1} = X_{t_2}$.
\end{description}

It is well-known that every tree decomposition $(T,\beta)$ of $G$ of width $\ttw$ can be turned into a nice tree decomposition of the same width $\ttw$ of size $O(\ttw \cdot |V(T)|)$ in time $O(\ttw^{2} \cdot \max(|V(G)|,|V(T)|))$ (see, e.g., \cite[Lemma 7.4]{CyganFKLMPPS15}).

When dealing with tree decompositions of graphs, we adopt the standard notion of using the word ``node'' for nodes of $T$ and the word ``vertex'' for vertices of $G$.

For upper-bounding the treewidth of a graph, there is a useful characterization via a graph-theoretic game called \textit{cops and robbers}.
The game is parameterized by a number $k$ and a graph $G$, the playing field.
It has two teams, one with $k$ cops and one with a single robber.
The rules are as follows.
The $k$ cops select their starting vertices in the graph.
Then the robber may choose their starting vertex.
The cops can always see the robber and adapt their strategy accordingly.
Similarly, the robber can see the cops.
The game now proceeds in rounds, where in each round, one of the cops chooses an arbitrary destination vertex and takes off via helicopter in the direction of that vertex.
While the cop is traveling, the robber sees where they will land and may now move arbitrarily along edges of the graph, as long as they do not pass through stationary cops.
When the robber has finished moving, the cop lands.
The cops win if they catch the robber after a finite number of moves.
Otherwise, the robber wins.
We say that \emph{$k$ cops can catch a robber on $G$} if the $k$ cops have a winning strategy in this game.

\begin{theorem}[\cite{SeymourT93}]
 \label{thm:cops-and-robbers-characterization-of-treewidth}
 A graph $G$ has treewidth at most $\ttw$ if and only if $\ttw+1$ cops can catch a robber on $G$.
\end{theorem}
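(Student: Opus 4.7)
The plan is to prove both directions of the equivalence separately. The forward direction, that $\tw(G) \leq \ttw$ gives a winning strategy for $\ttw+1$ cops, is the more elementary of the two, while the converse---that a winning strategy for $\ttw+1$ cops on $G$ forces $\tw(G)\leq\ttw$---is the substantive half and is where the real difficulty lies.

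For the forward direction, I would take a \emph{nice} tree decomposition $(T,\beta)$ of $G$ of width at most $\ttw$ and describe an explicit strategy for $\ttw+1$ cops whose invariant is: after each completed move, the cops occupy exactly the vertices of some bag $\beta(t)$, and the robber sits inside a single connected component of $G-\beta(t)$ contained entirely in the vertices appearing in the subtree rooted at some child of $t$. Starting at the root (whose bag is empty in a nice decomposition), the cops repeatedly identify the child $t'$ of the current node $t$ whose subtree contains the robber and migrate from $\beta(t)$ to $\beta(t')$. Since in a nice decomposition $\beta(t)$ and $\beta(t')$ differ by at most one vertex, this requires at most a single helicopter move per step; because bags are separators, the robber cannot escape during that move. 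The sequence of subtrees is strictly descending, so the procedure terminates with the robber caught.

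For the converse, I would follow the standard Seymour--Thomas route via \emph{havens}. A haven of order $\ttw+2$ is a function $h$ assigning to every vertex set $X\subseteq V(G)$ with $|X|\leq\ttw+1$ a connected component $h(X)$ of $G-X$, subject to the monotonicity condition that $h(Y)\subseteq h(X)$ whenever $X\subseteq Y$. The two key claims are: (a) if $G$ admits a haven of order $\ttw+2$, then the robber defeats $\ttw+1$ cops by always staying inside $h(X)$ where $X$ is the current cop set---whenever the cops announce their next move, the old and new cop sets share an intermediate stationary set $X'$, and monotonicity forces both $h(X_{\text{old}})$ and $h(X_{\text{new}})$ to lie in the same component of $G-X'$, so the robber can reroute freely; and (b) if $G$ has no haven of order $\ttw+2$, then $\tw(G)\leq\ttw$, obtained by recursively constructing a tree decomposition that uses a small obstruction to the haven property as its root separator at each step. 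Taking the contrapositive of (a)--(b) yields the required implication: if $\ttw+1$ cops win, then no haven of order $\ttw+2$ exists, hence $\tw(G)\leq\ttw$.

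The main obstacle will be proving part~(b): converting the non-existence of a haven into an actual tree decomposition of width at most $\ttw$. The idea is to pick at each recursive step a set $X$ of size at most $\ttw+1$ for which no valid monotone extension of a partial haven exists, use $X$ as the current bag, and recurse on each component of $G-X$ together with a boundary of size at most $\ttw+1$ carried over from $X$. Showing that this recursion terminates, that the glued-together bags satisfy the three tree-decomposition axioms (covering every vertex, containing every edge, and inducing a connected subtree for each vertex), and that no bag ever exceeds size $\ttw+1$ is the delicate combinatorial argument at the heart of Seymour and Thomas's proof; once it is in place, the remainder is essentially bookkeeping.
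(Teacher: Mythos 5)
The paper itself gives no proof of this statement---it is quoted directly from Seymour and Thomas---so the only meaningful comparison is with the standard proof in that reference, and your outline follows exactly that route: a bag-by-bag cop strategy on a tree decomposition for the easy direction, and havens for the converse. The forward direction and your claim~(a) are essentially right as sketched. For (a), the stationary set during a move is contained in both the old and the new cop positions, so monotonicity of the haven gives $h(X_{\mathrm{old}})\subseteq h(X')$ and $h(X_{\mathrm{new}})\subseteq h(X')$ for the intermediate set $X'$, and the robber can reroute inside the component $h(X')$ of $G-X'$ forever. For the cop-strategy direction you should still justify why lifting a cop at an introduce node cannot free the robber (the introduced vertex has no neighbours in the part of the graph strictly below its bag, by connectedness of the traces), but that is routine.

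The genuine gap is claim~(b): ``if $G$ has no haven of order $\ttw+2$ then $\tw(G)\le\ttw$'' is not bookkeeping but the entire content of the hard direction, and your sketch of it---pick a set $X$ witnessing failure of the haven property, use it as a bag, recurse on components---is not yet an argument. It is unclear what ``a set $X$ for which no valid monotone extension of a partial haven exists'' means concretely, why such a set of size at most $\ttw+1$ can be found for every piece one recurses into, why the boundary carried into each recursive call stays within the size bound, and why the recursion terminates with the three tree-decomposition axioms satisfied. Seymour and Thomas handle precisely this by a careful induction over the flaps of $G-X$ (equivalently, via the bramble--treewidth duality), and you explicitly defer that induction. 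So the approach is the correct and standard one, matching the cited source, but as written the converse direction remains unproven: the heart of the argument is named rather than supplied.
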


\paragraph{Planar graphs.}
A \emph{plane graph} is a pair consisting of a graph $G$ and an embedding of $G$ in the plane (without any edge crossings).
A graph $G$ is \emph{planar} if it can be embedded in the plane.
Let $G$ be a planar graph and fix some embedding of $G$ in the plane.
For each $v \in V(G)$ we can cyclically order the incident edges $E_G(v) \coloneqq \{e \in E(G) \mid v \in e\}$ clockwise according to the fixed embedding.
In our constructions of planar, we occasionally require the existence of planar embeddings where this cyclic ordering of incident edges satisfies certain properties.
Let us remark at this point that fixing such a cyclic ordering for every $v \in V(G)$ uniquely describes an embedding of $G$ in the plane (if it exists) up to homeomorphisms.
Hence, it usually suffices to give such a cyclic ordering for every vertex of $G$ to describe the embedding.
This is usually referred to as a \emph{combinatorial embedding of $G$}.
It is well-known that, given a planar graph $G$, a combinatorial embedding of $G$ can be computed in polynomial time (see, e.g., \cite{MoharT01}).

\section{Dynamic Programming for Graphs of Bounded Treewidth}
\label{sec:tw-alg}
In this section, we show that \SqCol\ can be solved in polynomial time over graphs of bounded treewidth.
More precisely, the main result of this section is the following theorem.

\begin{theorem}[Theorem \ref{thm:tw-alg-intro} restated]
 \label{thm:tw-dp}
 There is an algorithm solving \SqCol\ in time $(q+1)^{2^{(\ttw+4)}}\cdot n^{O(1)}$ on input graphs of treewidth $\ttw$.
\end{theorem}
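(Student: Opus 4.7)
The plan is to design a dynamic programming algorithm on a nice tree decomposition $(T,\beta)$ of $G$ of width $\ttw$, which is computable in standard time. For a node $t$, let $V_t$ denote the union of the bags at $t$ and its descendants; a \emph{partial solution} at $t$ is a square coloring $\chi_t$ of $G[V_t]$. Following the outline in the introduction, the central definition is the \emph{type} of $\chi_t$, consisting of (i)~the restriction $\chi_t|_{\beta(t)}$, (ii)~for each color $c$ appearing on $\beta(t)$, the set $S_c=\{v\in\beta(t): c\in\chi_t(\{v\}\cup(N_G(v)\cap V_t))\}$, and (iii)~for each $A\subseteq\beta(t)$, the count $q_A$ of colors $c\notin\chi_t(\beta(t))$ with $\{v\in\beta(t): c\in\chi_t(N_G(v)\cap V_t)\}=A$. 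An exchange argument shows that two partial solutions of the same type extend to the same set of square colorings of $G$ (up to permutation of the unnamed colors): the tree-decomposition axioms imply that $\beta(t)$ separates $V_t\setminus\beta(t)$ from $V(G)\setminus V_t$, so for any $u\notin V_t$ the colors that may be given to $u$ depend only on which colors are blocked at each $v\in\beta(t)\cap N_G(u)$, and this information is encoded by the type. The DP maintains, for each node $t$, the set $\Phi(t)$ of realized types and outputs ``yes'' iff the empty type lies in $\Phi(\text{root})$.

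A direct count gives $|\Phi(t)|\le q^{\ttw+1}\cdot 2^{(\ttw+1)^2}\cdot(q+1)^{2^{\ttw+1}}\le(q+1)^{2^{\ttw+2}}$. Leaf, introduce, and forget nodes can be handled by direct updates. At an introduce node adding $v$, enumerate $\chi(v)\in[q]$ and check/update (i)--(iii) against the child's type; the tree-decomposition axiom forces every neighbor of $v$ in $V_{t'}$ to lie in $\beta(t')$, so the update is local. At a forget node dropping $v$, the color $c=\chi_{t'}(v)$ either remains named (and $S_c$ loses $v$) or becomes unnamed, in which case it contributes to the count $q_{S_c\setminus\{v\}}$ in (iii); afterwards all signatures are restricted to $\beta(t)=\beta(t')\setminus\{v\}$. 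Both transitions run in time polynomial in $|\Phi(t)|$.

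The main obstacle is the \emph{join} node, where $\beta(t)=\beta(t_1)=\beta(t_2)$ and $V_{t_1}\cap V_{t_2}=\beta(t)$. Given a pair of child types $(\tau_1,\tau_2)$ agreeing on (i), the joint set $S_c$ is $S_c^{(1)}\cup S_c^{(2)}$, while each unnamed color $c$ must have its two children's neighborhood signatures $A^{(1)}_c,A^{(2)}_c$ \emph{disjoint} (otherwise the join creates two distinct $c$-colored vertices at distance two) and its joint signature becomes $A^{(1)}_c\cup A^{(2)}_c$. Since the types record only counts and not color identities, deciding which joint vectors $(q_A)$ are achievable amounts to asking whether there exist non-negative integers $x_{A_1,A_2}$, supported on pairs with $A_1\cap A_2=\emptyset$, satisfying
\[
\sum_{A_2}x_{A_1,A_2}=q^{(1)}_{A_1},\quad\sum_{A_1}x_{A_1,A_2}=q^{(2)}_{A_2},\quad\sum_{A_1\cup A_2=A}x_{A_1,A_2}=q_A\text{ for every }A\subseteq\beta(t).
\]

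We answer this for all targets simultaneously by a secondary DP that processes the $2^{\ttw+1}$ columns $A_2$ in turn, with state recording, for each row $A_1$, the leftover row sum $q^{(1)}_{A_1}-\sum_{A_2\text{ seen}}x_{A_1,A_2}$, and for each class $A$, the accumulated contribution to $q_A$. Both families have $2^{\ttw+1}$ coordinates, each in $\{0,\dots,q\}$, so the state space has size at most $(q+1)^{2\cdot 2^{\ttw+1}}$, and each column transition enumerates at most $(q+1)^{2^{\ttw+1}}$ row distributions summing to $q^{(2)}_{A_2}$, giving a secondary DP runtime of $(q+1)^{3\cdot 2^{\ttw+1}}\cdot\poly(\ttw)$ per pair of child types. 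Multiplying by the $(q+1)^{2^{\ttw+3}}$ pairs and by the $O(n)$ nodes of the decomposition, the total running time is $(q+1)^{2^{\ttw+3}+3\cdot 2^{\ttw+1}}\cdot n^{O(1)}\le(q+1)^{2^{\ttw+4}}\cdot n^{O(1)}$, as required. The hardest step is the design of this secondary DP: a naive enumeration of the matrix $x_{A_1,A_2}$ would cost $(q+1)^{4^{\ttw+1}}$, doubly exponential in $\ttw$, and would destroy the target bound; it is precisely the reduction to transportation-style feasibility with a careful column-by-column DP that keeps the exponent proportional to $2^{\ttw}$.
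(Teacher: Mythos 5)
Your proposal is essentially the paper's own algorithm: the type of a partial solution is exactly the paper's triple $(\chi,\xi,\rho)$ (coloring of the bag, cone-adjacency set of each bag color, and the count of free colors per bag-adjacency class), and your join step reduces to the same transportation-style feasibility system that the paper writes as the ILP with unknowns $\eta(A',A'')$, constraints forcing $\eta(A',A'')=0$ when $A'\cap A''\neq\emptyset$, row/column sums $\rho',\rho''$, and class sums $\rho$; the paper solves it with a generic ILP-feasibility DP over the variables, while you process columns $A_2$ one at a time with leftover row sums and accumulated class counts as the state --- the same idea, with essentially the same $(q+1)^{O(2^{\ttw})}$ cost, so this is not a genuinely different route.

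One substantive omission: at a join node you impose the disjointness requirement only on the \emph{unnamed} (free) colors, but it is equally needed for colors appearing on the bag. If $c=\chi(u)$ for some $u\in\beta(t)$ and some $v\in\beta(t)$ lies in both $S_c^{(1)}$ and $S_c^{(2)}$, then there are two distinct $c$-colored vertices, one in each cone, both adjacent to $v$, hence at distance $2$; the paper excludes this via its condition $\xi'(u)\cap\xi''(u)=\emptyset$ for all $u\in X_t$. Without this check your join would accept invalid combinations, so it must be added (it is a local test on the pair of child types and does not affect the running-time analysis). Also, at introduce nodes, when the new vertex receives a color that is free in the child, one must additionally choose the adjacency class $A_v$ with $\rho'(A_v)>0$ that this color is taken from and decrement that count --- your ``check/update (i)--(iii)'' presumably intends this, but it should be made explicit, as should the local validity checks (no conflict within the bag and no conflict with the $\xi$-sets of the new vertex's bag neighbors).
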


We prove this by describing a dynamic programming algorithm on the tree decomposition.

First observe that, using the algorithm from \cite{Bodlaender96}, we can compute a tree decomposition of $G$ of width $\ttw$ in time $2^{O(\ttw^{3} \log \ttw)} \cdot n = (q+1)^{2^{(\ttw+4)}}\cdot n^{O(1)}$.
This can be converted to a nice tree decomposition of the same width with $O(\ttw \cdot n)$ bags in time $O(\ttw^2 \cdot n)$~\cite[Lemma 7.4]{CyganFKLMPPS15}.
This means that if we can show the existence of an algorithm with running time $(q+1)^{2^{(\ttw+4)}}\cdot n^{O(1)}$ for the same problem where an optimal nice tree decomposition is already given, we are done, since that running time dominates both aforementioned running times.

Hence, it suffices to show that such an algorithm exists.
For the remainder of this section let us fix an input graph $G$, a number of colors $q \leq n$, and a nice tree decomposition $(T,\beta)$ of width $\ttw$ such that $|V(T)| = O(\ttw \cdot n)$.
For $t \in V(T)$ we denote $X_t \coloneqq \beta(t)$ and $V_t$ denotes the union of all bags located in the subtree rooted at node $t$ (including $t$ itself).

We use dynamic programming (DP) over the nice tree decomposition.
The obvious approach would be to have a DP table where a state of a bag encodes for each vertex of the bag what its color is, as well as a list of what colors are within distance one.
This way we can always check whether a coloring of a vertex is valid by asserting that none of its neighbors have the same color and that none of its neighbors have the color in their list.
However, this introduces a factor of $2^{\tw \cdot q}$ in the running time, where $q$ is the number of colors.
Since the necessary number of colors may be as large as $n$, this can be exponential in $n$, which we do not want.

Instead, we track equivalence classes of colors.
For a fixed node $t \in V(T)$ and a coloring $\echi: V_t \to [q]$, we say that a vertex $v \in X_t$ is \emph{cone-adjacent} to a color $c \in [q]$ if there exists a vertex $u \in N(v) \cap (V_t \setminus X_t)$ such that~$\echi(u) = c$.
For a fixed color $c \in [q]$, we call the set of vertices $v \in X_t$ which are cone-adjacent to $c$ the \emph{bag-adjacent} set of $c$, i.e.,
\[A_t(\echi,c) \coloneqq \{v \in X_t \mid v \text{ is cone-adjacent to } c\}.\]
Now, for a specific $t \in V(T)$ and a coloring $\echi: V_t \to [q]$, we say that two colors $c_1,c_2 \in [q]$ are \emph{equivalent} if $A_t(\echi,c_1) = A_t(\echi,c_2)$.
We naturally identify an equivalence class by the bag-adjacent set $A \in 2^{X_t}$ that is shared by all colors the equivalence class contains.

The main insight in our dynamic programming algorithm is that apart from the coloring of the vertices in a bag $X_t$, the only relevant information that gets transmitted from one side of the bag to the other is the size of each of the equivalence classes.
Specifically, we store the number of \emph{free colors} in each equivalence class, i.e., colors that do not also appear in the bag.
  
Now let us describe the dynamic programming algorithm in detail.
We begin by specifying the structure and content of the dynamic programming table.

\begin{definition}[Entries of the DP table]
 \label{def:dp-table}
 For each node $t \in V(T)$, each $\chi: X_t \to [q]$, each $\xi: X_t \to 2^{X_t}$ and each mapping $\rho: 2^{X_t}\to [q]_0$, there is an entry $D[t][\chi][\xi][\rho]$ which is set to true ($\top$) if there exists a square $q$-coloring $\echi$ of $G[V_t]$ such that
  \begin{enumerate}[label=(DP.\arabic*)]
  \item\label{item:dp-chi} $\chi(v) = \echi(v)$ for all $v \in X_t$,
  \item\label{item:dp-xi} $\xi(v) = A_t(\echi,\chi(v))$ for all $v \in X_t$, i.e., the color of $v$ is contained in the equivalence class $\xi(v)$, and
  \item\label{item:dp-rho} \[\rho(A) = |\{c \in [q] \mid A_t(\echi,c) = A\} \setminus \{\chi(v) \mid v \in X_t\}|\] for all $A \in 2^{X_t}$, i.e., $\rho(A)$ gives the number of free colors (i.e., colors in $[q] \setminus \chi(X_t)$) that are in the equivalence class $A$.
  \end{enumerate}
\end{definition}

Note that for every $t \in V(T)$ and every coloring $\echi$ of $V_t$ there exist corresponding induced mappings $\chi$, $\xi$ and $\rho$.
In particular, for every square $q$-coloring of $G[V_t]$, there is a corresponding entry in the partial table $D[t]$ that is set to true.

Also observe that the DP table defined above has
\begin{equation}
 \label{eq:size-dp-table}
 |D[t]| \leq q^{\ttw+1}\cdot (2^{\ttw+1})^{\ttw+1} \cdot (q+1)^{2^{(\ttw+1)}} \leq (q+1)^{\ttw+1 + (\ttw+1)^2 + 2^{\ttw+1}}
\end{equation}
many entries for each node $t \in V(T)$.

The following lemma is the main technical result of this section.

\begin{lemma}
 \label{la:dp-table}
 There is an algorithm that, given a graph $G$, a number $q$ and a nice tree decomposition $(T,\beta)$ of $G$ of width $\ttw$, computes all table entries $D[t][\chi][\xi][\rho]$ in time $|V(T)| \cdot (q+1)^{2^{(\ttw+4)}} \cdot n^{O(1)}$.
\end{lemma}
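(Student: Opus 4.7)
The plan is to compute the DP table entries bottom-up on the nice tree decomposition, handling the four node types in the standard way, with a delicate sub-routine for join nodes that matches up free colors between the two children. Feasibility of a candidate entry $D[t][\chi][\xi][\rho]=\top$ will be certified by combining already-computed true entries at the child(ren) in a manner consistent with Definition~\ref{def:dp-table}.

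\textbf{Easy cases.} At a leaf $\ell$, the only entry is $D[\ell][\emptyset][\emptyset][\rho_0]$ with $\rho_0(\emptyset)=0$, trivially true. At an introduce node $t$ with child $t'$ and new vertex $v$, I would iterate over every true $D[t'][\chi'][\xi'][\rho']$ and every candidate color $c$ for $v$. Because $v$ has no neighbors in $V_t\setminus X_t$ (it is fresh in the subtree), cone-adjacencies outside $X_t$ are unchanged, and feasibility of $c$ at $v$ reduces to three local checks readable from $\chi',\xi',\rho'$: (i) no bag-neighbor of $v$ has color $c$, (ii) no bag-vertex at distance $2$ from $v$ via $X_{t'}$ has color $c$, and (iii) no bag-neighbor of $v$ is cone-adjacent to $c$. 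On success, $\chi,\xi$ at $t$ inherit from $t'$ with $\chi(v)=c$ and $\xi(v)$ equal to the equivalence class of $c$, while $\rho=\rho'$ except that one free color is consumed if $c$ was free at $t'$. At a forget node $t$ with child $t'$ and forgotten vertex $v$, letting $c=\chi'(v)$, the cone-adjacency class of $c$ becomes $A^\star=(\xi'(v)\setminus\{v\})\cup(N(v)\cap X_t)$; I restrict $\chi'$, substitute $A^\star$ for $\xi'(u)$ whenever $\chi'(u)=c$, strip $v$ from all other $\xi$-values, and, if $c$ disappears from $\chi(X_t)$, increment $\rho(A^\star)$. Each of these transitions takes time polynomial in $q+\ttw$ per child entry.

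\textbf{Join nodes, the main obstacle.} For $t$ with children $t_1,t_2$, a coloring $\echi$ of $V_t$ restricts to colorings $\echi_1,\echi_2$ that agree on $X_t=X_{t_1}=X_{t_2}$. Since $V_{t_1}\setminus X_{t_1}$ and $V_{t_2}\setminus X_{t_2}$ are disjoint, $\xi(u)=\xi_1(u)\cup\xi_2(u)$. The delicate part is combining $\rho_1,\rho_2$: each free color $c$ has cone-adjacency $A_1(c)$ at $t_1$ and $A_2(c)$ at $t_2$, and contributes to $\rho(A_1(c)\cup A_2(c))$ at $t$. Aggregating, combinations are parametrized by a non-negative integer matrix $(n_{A_1,A_2})_{A_1,A_2\subseteq X_t}$ of size $r\times r$ with $r=2^{|X_t|}\le 2^{\ttw+1}$, having row sums $\rho_1$, column sums $\rho_2$, and producing $\rho(A)=\sum_{A_1\cup A_2=A}n_{A_1,A_2}$. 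Enumerating such matrices outright would be too expensive, so I would run an \emph{inner} DP: fix an ordering of the $r$ subsets of $X_t$ and process the matrix row by row, with state $(\tau,\sigma)$ recording the remaining column budget $\tau$ and the partial $\rho$-vector $\sigma$ built so far. The inner state space has size $(q+1)^{O(2^\ttw)}$, and the DP outputs, for each pair $(\rho_1,\rho_2)$, the set of all achievable $\rho$'s. Iterating over matched tuples $(\chi,\xi_1,\xi_2,\rho_1,\rho_2)$ of true child entries and each achievable $\rho$ then sets $D[t][\chi][\xi_1\cup\xi_2][\rho]=\top$.

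\textbf{Running time.} For leaf, introduce, and forget nodes, the per-node cost is dominated by the number of DP entries at the child, at most $(q+1)^{O(2^\ttw)}$ by~\eqref{eq:size-dp-table}, times a polynomial factor. For join nodes, the inner DP adds at most a polynomial blow-up on top of the same bound. In all cases the exponent stays within $2^{\ttw+4}$ with room to spare, and summing over the $O(\ttw\cdot n)$ nodes of $T$ yields the claimed bound $|V(T)|\cdot(q+1)^{2^{\ttw+4}}\cdot n^{O(1)}$.
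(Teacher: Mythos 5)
Your overall plan coincides with the paper's: bottom-up DP over the nice tree decomposition, with the join handled by matching up free colors via a nonnegative integer ``transport'' matrix whose row/column sums are $\rho_1,\rho_2$, solved by an inner dynamic program (the paper phrases this as an ILP feasibility check, Lemma~\ref{la:ilp-solver}). However, there are two concrete gaps. First, your leaf base case is wrong: by Definition~\ref{def:dp-table}, at a leaf $V_\ell=\emptyset$, every color $c\in[q]$ is free and has bag-adjacent set $A_\ell(\echi,c)=\emptyset$, so the unique true entry has $\rho(\emptyset)=q$, not $\rho_0(\emptyset)=0$. With your base case no free color would ever be available to ``consume'' at introduce nodes, so the table computed would not match the definition and the whole recursion breaks (the paper's root test also relies on $\rho(\emptyset)=q$ surviving to the root).

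Second, and more substantively, your join step omits the conflict constraints that make the combination sound. For free colors, a matrix entry $n_{A_1,A_2}$ with $A_1\cap A_2\neq\emptyset$ must be forced to $0$: if a free color $c$ is cone-adjacent to a common bag vertex $u\in A_1\cap A_2$ from both subtrees, then $u$ has a neighbor colored $c$ in $V_{t_1}\setminus X_t$ and another in $V_{t_2}\setminus X_t$, i.e., two equal-colored vertices at distance $2$ through $u$ (this is the paper's condition \ref{item:join-eta-1}). Analogously, for non-free (bag) colors you must require $\xi_1(u)\cap\xi_2(u)=\emptyset$ for every $u\in X_t$ (condition \ref{item:join-xi-conflict}); merely setting $\xi(u)=\xi_1(u)\cup\xi_2(u)$ does not detect this conflict. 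Without these two checks your join would certify entries that correspond to no square coloring. Two smaller imprecisions: at an introduce node, when the new vertex $v$ gets a free color $c$, the table does not tell you ``the equivalence class of $c$,'' so you must branch over all $A_v\subseteq X_{t'}$ with $\rho'(A_v)>0$ (and your check (iii) depends on that guess); and at a forget node the classes $A$ and $A\cup\{v\}$ must be conflated in $\rho$, i.e., $\rho(A)=\rho'(A)+\rho'(A\cup\{v\})$ (plus one if the forgotten vertex's color becomes free), which your description of $\rho$ does not cover.
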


Before diving into the proof of Lemma \ref{la:dp-table}, let us first derive Theorem \ref{thm:tw-dp} assuming the lemma holds true.

\begin{proof}[Proof of Theorem \ref{thm:tw-dp}]
 Let $G$ denote the input graph and $q$ the number of colors.
 The algorithm first computes a tree decomposition of $G$ with $O(n)$ bags and width $\ttw = \tw(G)$ in time $2^{O(\ttw^{3} \log \ttw)} \cdot n = (q+1)^{2^{(\ttw+4)}}\cdot n^{O(1)}$ using the algorithm from \cite{Bodlaender96}.
 This decomposition can be turned into a nice tree decomposition with $O(\ttw \cdot n)$ bags and width $\ttw$ in time $O(\ttw^2 \cdot n)$.
 
 The algorithm computes all entries $D[t][\chi][\xi][\rho]$ of the DP table in time $(q+1)^{2^{(\ttw+4)}}\cdot n^{O(1)}$.
 After having calculated all entries, we take the root $r$ of $T$ (which, by definition of a nice tree decomposition, satisfies $X_r = \emptyset$) and output YES if and only if $D[r][\varepsilon_1][\varepsilon_2][\rho] = \top$ for the two empty functions $\varepsilon_1: \emptyset \to [q], \varepsilon_2: \emptyset \to \{\emptyset\}$ and $\rho: \{\emptyset\} \to [q]_0$ with $\rho(\emptyset) = q$.

 The correctness of this algorithm follows from Definition~\ref{def:dp-table}.
 Indeed, $D[r][\varepsilon_1][\varepsilon_2][\rho] = \top$ if and only if there exists a coloring of $V_r = V(G)$ such that for the (empty) bag of the root, we have that \ref{item:dp-chi} the coloring coincides with the empty coloring for this bag, \ref{item:dp-xi} for each vertex of the empty bag, its color is contained in the equivalence class defined by $\xi$, and \ref{item:dp-rho} for each $S \in 2^\emptyset$, we have that $\rho(S)$ gives the number of free colors that are in the equivalence class $S$ -- in this case, we check that all $q$ colors are indeed in the equivalence class of the empty set.
 In other words, $D[r][\varepsilon_1][\varepsilon_2][\rho] = \top$ if and only if there exists a square coloring of $G$.
 
 To complete the proof, observe that the overall running time of the algorithm is bounded by $(q+1)^{2^{(\ttw+4)}}\cdot n^{O(1)}$.
\end{proof}

We now turn to the proof of Lemma \ref{la:dp-table}.
Here, the following definition turns out to be useful which formulates basic, local conditions for a table entry to evaluate to true. 

\begin{definition}[Local validity]
 \label{def:local-validity}
 We say that a 4-tuple $(t, \chi, \xi, \rho)$ is \textit{locally invalid} if any of the following conditions are met:
 \begin{enumerate}[label=(L.\arabic*)]
  \item\label{item:locally-invalid-1} there exist two adjacent vertices $u,v \in X_t$ such that~$\chi(u) = \chi(v)$ (i.e., the coloring in the bag is invalid due to two vertices with distance one having the same color),
  \item\label{item:locally-invalid-2} there exist three vertices $u,v,w \in X_t$ such that~$uv, vw \in E(G)$ and $\chi(u) = \chi(w)$ (i.e., the coloring in the bag is invalid due to two vertices with distance two having the same color), or
  \item\label{item:locally-invalid-3} there exist two adjacent vertices $u,v \in X_t$ such that~$v \in \xi(u)$ (i.e., a vertex $u$ has a color which occurs in the set of cone-adjacent colors of one of its neighbors $v$ and is thus within distance two).
 \end{enumerate}
\end{definition}

It is easy to check that $D[t][\chi][\xi][\rho]$ is false for every locally invalid tuple $(t, \chi, \xi, \rho)$.

Also, we shall use another dynamic programming algorithm as a subroutine to solve certain integer linear programs efficiently.
More precisely, we rely on the following subroutine.

\begin{lemma}
 \label{la:ilp-solver}
 Given an ILP instance of the form $\{Ax = b; \forall i: 0 \leq x_i \leq \|b\|_{\infty};\ x \text{ integer}\}$ where $A$ and $b$ have non-negative entries, we can determine feasibility of that instance in time $O(V \cdot C \cdot (\|b\|_{\infty}+1)^{C+1})$, where $V$ is the number of variables and $C$ is the number of constraints (i.e., $A$ is a $C\times V$ matrix). 
\end{lemma}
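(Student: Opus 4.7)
My plan is to solve this via a straightforward dynamic program that processes the variables $x_1,\dots,x_V$ one at a time and tracks the running partial sum $\sum_{j\le i} A_j x_j$, where $A_j\in\ZZnn^C$ denotes the $j$-th column of $A$. Write $M\coloneqq\|b\|_\infty$. The key observation that keeps the state space small is that, since $A$ and $b$ have non-negative entries and we require $Ax=b$, every partial sum of a feasible solution must satisfy $0\le \sum_{j=1}^i A_j x_j \le b$ coordinate-wise, and in particular lies in $\{0,1,\dots,M\}^C$. Any partial sum that leaves this box can be safely pruned.

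Accordingly, I would define a boolean table $f(i,v)$ for $i\in\{0,1,\dots,V\}$ and $v\in\{0,1,\dots,M\}^C$ which is $\top$ if and only if there exist integers $x_1,\dots,x_i\in\{0,1,\dots,M\}$ with $\sum_{j=1}^i A_j x_j = v$. The base case is $f(0,\mathbf{0})=\top$ and $f(0,v)=\bot$ for $v\neq\mathbf{0}$, and for $i\ge 1$ the recurrence is
\[
 f(i,v) \;=\; \bigvee_{\substack{x\in\{0,\dots,M\}\\ A_i x\,\le\, v}} f\bigl(i-1,\; v-A_i x\bigr).
\]
The input ILP is feasible if and only if $f(V,b)=\top$; correctness is immediate from the definition of $f$ by induction on $i$.

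The running time analysis is direct. There are $(V+1)(M+1)^C$ table entries, and computing one entry requires iterating over the at most $M+1$ choices of $x$; for each choice, checking the constraint $A_i x\le v$ and computing the index $v-A_i x$ are coordinate-wise operations in $C$ dimensions and thus cost $O(C)$ time. This yields a total bound of $O\bigl(V\cdot (M+1)^C\cdot M\cdot C\bigr)=O\bigl(V\cdot C\cdot (M+1)^{C+1}\bigr)$, matching the claim.

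There is no real obstacle in this argument; the only point worth emphasizing is that the hypothesis that $A$ and $b$ have non-negative entries, together with the per-variable upper bound $x_i\le M$, is precisely what confines the reachable partial sums to the finite box $\{0,\dots,M\}^C$. Without this non-negativity the intermediate states could be unbounded and the DP would not terminate in the stated time, but under the hypotheses of the lemma the bound is immediate.
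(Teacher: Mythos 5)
Your proposal is correct and follows essentially the same route as the paper: both define a boolean DP table indexed by the number of processed variables and the partial-sum vector confined to $\{0,\dots,\|b\|_\infty\}^C$, and both obtain the bound $O(V\cdot C\cdot(\|b\|_\infty+1)^{C+1})$ by iterating over the at most $\|b\|_\infty+1$ values of each variable with an $O(C)$ cost per update. The only cosmetic difference is that you state the recurrence in ``pull'' form while the paper propagates entries forward, and you spell out explicitly why non-negativity confines the reachable states to the box, which the paper leaves implicit.
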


\begin{proof}
 Let $x = (x_1, \ldots, x_V)$.
 We show that there exists a dynamic programming algorithm that achieves this running time.
 We have a DP table $D$ structured as follows.
 For each $0 \leq i \leq V$ and each $(d_1, \dots, d_C) \in \{0, \dots, \|b\|_{\infty}\}^C$, we have that $D[i][d_1][d_2]\ldots[d_C] = \top$ (here, $\top$ refer to true) if there is an assignment to the variables $x_1, \dots, x_V$ such that
 \begin{enumerate}
  \item $\forall j > i: x_j = 0$ and
  \item $\forall j \in [C]: A_j\cdot x = d_j$ (where $A_j$ is the $j$-th row in $A$)
 \end{enumerate}
 and $\perp$ otherwise.
 Intuitively, the entry reflects whether there exists a partial assignment to the first $i$ variables such that for each $j$, the left-hand side of the $j$-th equality constraint has value $d_j$.
 Clearly, the solution to the entire instance is then stored in $D[V][b_1][b_2]\ldots[b_C]$.
 
 We compute these values bottom-up.
 The base case $D[0][0][0]\ldots[0] = \top$ is obvious.
 Now for each value of $i$ from $1$ to $V$, we do the following.
 For each choice of $x_i \leftarrow v_i$ where $0\leq v_i \leq \|b\|_{\infty}$, we compute the product $w \coloneqq A \cdot v$, where $v$ is a vector with $v_i$ in the $i$-th position and zeros everywhere else.
 Then, for each value of $d = (d_1,\dots, d_C) \in \{0, \dots, \|b\|_{\infty}\}^C$, we test whether $D[i-1][d_1][d_2]\ldots[d_C]$ is true.
 If so, we also set $D[i][d_1+w_1][d_2+w_2]\ldots[d_C+w_C]$ to true (assuming $d_j + w_j \leq \|b\|_{\infty}$ for all $j \in [V]$, otherwise we just continue with the next $d$).
 
 It is easy too see that the algorithm is correct.
 For running time, we do $V+1$ iterations in the outermost loop over values of $i$, $\|b\|_{\infty}+1$ iterations in the second loop over values of $v_i$ and $(\|b\|_{\infty}+1)^C$ iterations in the innermost loop over values of $d$. In each iteration of the innermost loop we use $O(C)$ many steps to update a single DP table entry.
 Hence, the running time is $O(V \cdot C \cdot (\|b\|_{\infty}+1)^{C+1})$.
\end{proof}

Now, we are ready to give the proof of Lemma \ref{la:dp-table}.

\begin{proof}[Proof of Lemma \ref{la:dp-table}]
 We calculate the entries of the DP table $D$ bottom-up starting with the leaf nodes.
 Before executing the main algorithm, we initialize all entries of the table to false.
 
 \paragraph{Leaf:}
 Suppose $t \in V(T)$ is a leaf node.
 Since $(T,\beta)$ is a nice tree decomposition, it holds that $X_t = \emptyset$.
 Thus, the only possible values for $\chi$ and $\xi$ are the empty functions $\varepsilon_1, \varepsilon_2$, and $\rho$ only has the empty set in its domain.
 For each $\rho\colon \{\emptyset\}\to[q]_0$, we define
 \begin{align*}
  D[t][\varepsilon_1][\varepsilon_2][\rho] = \begin{cases} \top & \text{if $\rho(\emptyset) = q$}\\
                                                           \bot & \text{otherwise}
                                             \end{cases}.
 \end{align*}
 The correctness of this base cases is obvious.
 For the running time, note that, for a fixed leaf node $t \in V(T)$, we only need to compute $q+1$ entries of the DP table which can be done in time $n^{O(1)}$.
 
 \paragraph{Introduce:}
 Suppose $t \in V(T)$ is an introduce node with the unique child $t' \in V(T)$.
 Hence, there is some $v \notin X_{t'}$ such that  $X_t = X_{t'} \cup \{v\}$.

 We iterate over all triples $(\chi',\xi',\rho')$ such that $D[t'][\chi'][\xi'][\rho']$ is true.
 For each possible extension $\chi$ of $\chi'$ (i.e., for all $\chi\colon X_t \to [q]$ such that $\chi|_{X_{t'}} = \chi'$, or equivalently all colorings of $v$), we do the following.
 First, we define an extension $\xi$ of $\xi'$.
 If $\chi(v)$ was not a free color in $\chi'$, i.e., there is some $v' \in X_{t'}$ such that $\chi(v') = \chi(v)$, we define
 \begin{align*}
  \xi(u) := \begin{cases}
             \xi'(u)  &\text{if } u \neq v \\
             \xi'(v') &\text{if } u=v
            \end{cases}
 \end{align*}
 for all $u \in X_t$.
 Note that this is well-defined since $\xi'(v') = \xi(v'')$ for all $v',v'' \in X_{t'}$ such that $\chi(v') = \chi(v'')$ by Condition \ref{item:dp-xi}.
 
 In the other case $\chi(v)$ is a free color, and we also iterate over all possible choices of the equivalence class $A_v \subseteq X_{t'}$ such that $\rho'(A_v) > 0$ and define
 \begin{align*}
   \xi(u) := \begin{cases}
              \xi'(u)   &\text{if } u \neq v \\
              A_v       &\text{if } u=v
             \end{cases}
 \end{align*}
 for all $u \in X_t$.
 Finally, we define $\rho$ by setting
 \begin{align*}
  \rho(A) \coloneqq \begin{cases}
                     0          &\text{if } v \in A \\
                     \rho'(A)-1 &\text{if $\chi(v)$ is a free color in $\chi'$ and $A = A_v$} \\
                     \rho'(A)   &\text{otherwise}
                    \end{cases}
 \end{align*}
 for all $A \in 2^{X_{t}}$. 
 Now, for each such $(\chi,\xi,\rho)$ such that $(t, \chi, \xi, \rho)$ is locally valid, we set $D[t][\chi][\xi][\rho]$ to true.
 
 \begin{correctness}
  We need to show that we correctly compute the all entries of $D[t]$ according to Definition \ref{def:dp-table}. 
  
  First, suppose the algorithm sets $D[t][\chi][\xi][\rho]$ to true in iteration $(\chi',\xi',\rho')$.
  In particular, this means that $D[t'][\chi'][\xi'][\rho']$ is true.
  Hence, by induction, there exists a square $q$-coloring $\chi_{V_{t'}}$ of $G[V_{t'}]$ satisfying the conditions of Definition~\ref{def:dp-table}.
  
  We first argue that we may assume without loss of generality that
  \begin{equation}
   \label{eq:free-color-introduce}
   A_{t'}(\chi_{V_{t'}},\chi(v)) = \xi(v).
  \end{equation}
  Indeed, if there exists some $v' \in X_{t'}$ such that $\chi(v) = \chi(v') = \chi'(v')$ then $A_{t'}(\chi_{V_{t'}},\chi(v)) = A_{t'}(\chi_{V_{t'}},\chi'(v')) = \xi'(v') = \xi(v)$
  by Condition \ref{item:dp-xi} and the definition of $\xi$.
  Otherwise, $\xi(v) = A_v$ for some $A_v \subseteq X_{t'}$ such that $\rho'(A_v) > 0$.
  Since $\rho'(A_v) > 0$, there is some color $c \in [q] \setminus \{\chi'(u) \mid u \in X_{t'}\}$ such that $A_v = A_{t'}(\chi_{V_{t'}},c)$.
  By renaming colors, we may assume without loss of generality that $c = \chi(v)$ which implies that $A_{t'}(\chi_{V_{t'}},\chi(v)) = A_v = \xi(v)$.
  
  Now, we extend $\chi_{V_{t'}}$ to a coloring $\chi_{V_{t}}$ of $V_t$ by setting
  \begin{align*}
   \chi_{V_t}(u) = \begin{cases}
                    \chi(u)          &\text{if } u = v \\
                    \chi_{V_{t'}}(u) &\text{otherwise }
                   \end{cases}.
  \end{align*}
  We first argue that $\chi_{V_t}$ is a square coloring of $G[V_t]$.
  Recall that we only set $D[t][\chi][\xi][\rho]$ to true if $(t,\chi,\xi,\rho)$ is locally valid, i.e., if Conditions \ref{item:locally-invalid-1} - \ref{item:locally-invalid-3} are satisfied.
  Since $\chi_{V_{t'}}$ is a square coloring of $G[V_{t'}]$, any potential color conflict has to involve the vertex $v$ introduced at $t$, i.e., either there is a vertex $u \in V_{t'}$ such that $\chi_{V_t}(u) = \chi_{V_t}(v)$ and $\dist_{G[V_t]}(u,v) \leq 2$, or there are $u,w \in N_{G[V_t]}(v)$ such that $\chi_{V_t}(u) = \chi_{V_t}(w)$.
  Note that by the definition of nice tree decompositions $v$ is not adjacent to any vertices in $V_t \setminus X_t$.
  Hence there are only four cases we need to consider:
  \begin{enumerate}
   \item There is some $u \in X_t \cap N(v)$ such that $\chi(u) = \chi_{V_t}(u) = \chi_{V_t}(v) = \chi(v)$. However, this is prevented by \ref{item:locally-invalid-1}.
   \item There are two vertices $u,w \in X_t \cap N(v)$ such that $\chi(u) = \chi_{V_t}(u) = \chi_{V_t}(w) = \chi(w)$. However, this is prevented by \ref{item:locally-invalid-2}.
   \item There are vertices $w \in X_t \cap N(v)$ and $v \neq u \in X_t \cap N(w)$ such that $\chi(u) = \chi_{V_t}(u) = \chi_{V_t}(v) = \chi(v)$.
    However, this is also prevented by \ref{item:locally-invalid-2}.
   \item There are vertices $w \in X_t \cap N(v)$ and $u \in N(w) \cap (V_t \setminus X_t)$ such that $\chi_{V_{t'}}(u) = \chi_{V_t}(u) = \chi_{V_t}(v) = \chi(v)$.
    This means that $w \in A_{t'}(\chi_{V_{t'}},\chi(v)) = \xi(v)$ using Equation \eqref{eq:free-color-introduce}.
    However, this is prevented by \ref{item:locally-invalid-3}.
  \end{enumerate}

  Next, we show that $\chi_{V_t}$ satisfies Conditions \ref{item:dp-chi} - \ref{item:dp-rho}.
  Clearly, \ref{item:dp-chi} is satisfied since $\chi|_{X_{t'}} = \chi'$.
  For \ref{item:dp-xi}, observe again that $v$ is not adjacent to any vertices in $V_t \setminus X_t$.
  This implies that $\xi(u) = \xi'(u) = A_{t'}(\chi_{V_{t}'},\chi'(u)) = A_t(\chi_{V_t},\chi(u))$ for all $u \in X_{t'}$.
  Also, $\xi(v) = A_{t'}(\chi_{V_{t}'},\chi(v)) = A_t(\chi_{V_t},\chi(u))$ by Equation \eqref{eq:free-color-introduce}.
  
  To show \ref{item:dp-rho}, let $A \in 2^{X_t}$.
  If $v \in A$ then $A \neq A_t(\chi_{V_t},c)$ for all $c \in [q]$ since $v$ is not adjacent to any vertices in $V_t \setminus X_t$.
  So $\rho(A) = 0 = |\{c \in [q] \mid A_t(\chi_{V_t},c) = A\} \setminus \{\chi(v) \mid v \in X_t\}|$ as desired.
  Otherwise, $A \in 2^{X_{t'}}$ and, similar as above, we have that $\{c \in [q] \mid A_t(\chi_{V_t},c) = A\} = \{c \in [q] \mid A_{t'}(\chi_{V_{t'}},c) = A\}$.
  Since $A_{t'}(\chi_{V_{t'}},\chi(v)) = \xi(v) = A_v$ in the case where $\chi(v)$ is a free color, Condition \ref{item:dp-rho} follows.
  
  \medskip
  
  Conversely, suppose there exists a square $q$-coloring $\chi_{V_t}$ of $G[V_t]$ satisfying \ref{item:dp-chi} - \ref{item:dp-rho}.
  We show that the algorithm sets $D[t][\chi][\xi][\rho]$ to true.
  Certainly, we can restrict $\chi_{V_t}$ to a square coloring $\chi_{V_{t'}} := \chi_{V_t}|_{V_{t'}}$ of $G[V_{t'}]$, and indeed it is easy to see (analogous to the arguments for \ref{item:dp-chi} - \ref{item:dp-rho} above) that this coloring satisfies the conditions of Definition~\ref{def:dp-table} for the corresponding tuple $(t',\chi', \xi',\rho')$.
  Hence, by induction, the algorithm sets $D[t'][\chi'][\xi'][\rho']$ to true.
  So to prove that the algorithm sets $D[t][\chi][\xi][\rho]$ to true, it suffices to show that $(t,\chi,\xi,\rho)$ is locally valid.
  
  However, this is easy to see via the following argument.
  Invalidity conditions \ref{item:locally-invalid-1} and \ref{item:locally-invalid-2} cannot be fulfilled, because $\chi_{V_t}$ is a valid square coloring and $\chi = \chi_{V_t}|_{X_t}$.
  Also, \ref{item:locally-invalid-3} is also not fulfilled by Condition \ref{item:dp-xi}.
  So $(t,\chi,\xi,\rho)$ is locally valid which concludes the proof.
 \end{correctness}
 
 \begin{runningtime}
  Let us analyze the running time of computing all DP table entries for $t$.
  For each possible choice of $(\chi',\xi',\rho')$, we iterate over $q \leq n$ choices of the color of $v$ and at most $2^{\ttw+1}$ many possible choices for the set $A_v$ (in the case that $\chi(v)$ is a free color).
  The time for computing $\chi$, $\xi$ and $\rho$, and then indexing the correct $D[t][\chi][\xi][\rho]$ can be naively bounded by $2^{\ttw}\cdot n^{O(1)}$.
  
  So overall, the running time can be bounded by
  \[|D[t']| \cdot 2^{2\ttw} \cdot n^{O(1)} = (q+1)^{\ttw+1 + (\ttw+1)^2 + 2^{\ttw+1}} \cdot 2^{2\ttw} \cdot n^{O(1)} = (q+1)^{2^{\ttw+2}} \cdot n^{O(1)}\]
  using Equation \eqref{eq:size-dp-table}.
 \end{runningtime}
 
 \paragraph{Forget:}
 Next, suppose $t \in V(T)$ is a forget node with unique child $t' \in V(T)$.
 So there is some $v \in X_{t}$ such that  $X_t = X_{t'} \setminus \{v\}$.
 We iterate over all triples $(\chi',\xi',\rho')$ such that $D[t'][\chi'][\xi'][\rho']$ is true.
 
 We define $\chi := \chi'|_{X_t}$, and also define $\xi$ via
 \[\xi(u) := \begin{cases}
              \xi'(u) \setminus \{v\}                      &\text{if } \chi'(u) \neq \chi'(v)\\
              \xi'(u) \setminus \{v\} \cup (N(v) \cap X_t) &\text{if } \chi'(u) = \chi'(v)
             \end{cases}\]
 for all $u \in X_t$. 
 Finally, we define $\rho: 2^{X_{t}} \to [q]_0$ via
 \[\rho(A) := \begin{cases}
               \rho'(A) + \rho'(A\cup \{v\}) + 1 & \text{if $\chi'(v) \notin \chi'(X_t)$ and $\xi(v) = A$}\\
               \rho'(A) + \rho'(A\cup \{v\})     & \text{otherwise }
              \end{cases}\]
 for all $A \in 2^{X_{t}}$. 
 We set $D[t][\chi][\xi][\rho]$ to true.
 
 \begin{correctness}
  First suppose the algorithm sets $D[t][\chi][\xi][\rho]$ to true in iteration $(\chi',\xi',\rho')$.
  In particular, this means that $D[t'][\chi'][\xi'][\rho']$ is true.
  Hence, by induction, there exists a square $q$-coloring $\chi_{V_{t'}}$ of $G[V_{t'}]$ satisfying the conditions of Definition~\ref{def:dp-table} (with respect to $(t',\chi',\xi',\rho')$).
  
  We show that choosing $\chi_{V_t} \coloneqq \chi_{V_{t'}}$ as a coloring for $V_t$ suffices, i.e., that it also satisfies the conditions of Definition~\ref{def:dp-table} with respect to $(t,\chi,\xi,\rho)$.
  Since $V_t = V_{t'}$ it immediately follows that $\chi_{V_t}$ is a square coloring of $G[V_t]$.
  
  Condition \ref{item:dp-chi} is clearly satisfied since $\chi = \chi'|_{X_t}$.
  So consider \ref{item:dp-xi} and let $u \in X_t$.
  If $\chi'(u) \neq \chi'(v)$ then
  \[\xi(u) = \xi'(u) \setminus \{v\} = A_{t'}(\chi_{V_{t'}},\chi'(u)) \setminus \{v\} = A_{t'}(\chi_{V_t},\chi(u)) \setminus \{v\} = A_t(\chi_{V_t},\chi(u))\]
  as desired.
  Otherwise, $\chi'(u) = \chi'(v)$ and
  \begin{align*}
   \xi(u) &= \xi'(u) \setminus \{v\} \cup (N(v) \cap X_t)\\
          &= A_{t'}(\chi_{V_{t'}},\chi'(u)) \setminus \{v\} \cup (N(v) \cap X_t)\\
          &= A_{t'}(\chi_{V_t},\chi(u)) \setminus \{v\} \cup (N(v) \cap X_t)\\
          &= A_t(\chi_{V_t},\chi(u)).
  \end{align*}
  Finally, to see that \ref{item:dp-rho} is satisfied, consider that for any $A \in 2^{X_t}$, the equivalence classes $A$ and $A\cup \{v\}$ are conflated.
  The off-by-one in the case $\chi'(v) \notin \chi(X_t) \land \xi(v) = A$ is due to the color $\chi'(v)$ becoming a free color again (and in compliance with Definition~\ref{def:dp-table} we only count free colors in $\rho$).
  
  \medskip
  
  Conversely, suppose that there exists a square $q$-coloring $\chi_{V_t}$ of $G[V_t]$ satisfying the conditions in Definition~\ref{def:dp-table}.
  We show that the algorithm sets $D[t][\chi][\xi][\rho]$ to true.
  Certainly, $\chi_{V_{t'}} = \chi_{V_t}$ is a valid square coloring of $G[V_{t'}]$, and, since $\xi$ and $\rho$ are correctly updated (as argued in the last paragraph), it satisfies the conditions of Definition~\ref{def:dp-table}.
  Hence, by induction, $D[t'][\chi'][\xi'][\rho']$ is set to true.
  So the algorithm sets $D[t][\chi][\xi][\rho]$ to true.
 \end{correctness}
 
 \begin{runningtime}
  Let us again analyze the running time of computing all DP table entries for $t$.
  For each possible choice of $(\chi',\xi',\rho')$, we calculate $(\chi,\xi,\rho)$ in time $2^{\ttw}n^{O(1)}$ and set the corresponding entry $D[t][\chi][\xi][\rho]$ to true.
  So overall, we get a running time that is bounded
  \[|D[t']| \cdot 2^{\ttw} \cdot n^{O(1)} = (q+1)^{\ttw+1 + (\ttw+1)^2 + 2^{\ttw+1}} \cdot 2^{\ttw} \cdot n^{O(1)} = (q+1)^{2^{\ttw+2}} \cdot n^{O(1)}\]
  using Equation \eqref{eq:size-dp-table}.
 \end{runningtime}
 
 \paragraph{Join:}
 Finally, assume $t \in V(T)$ is a join node, i.e., $t$ has exactly two children $t',t'' \in V(T)$ and $X_t = X_{t'} = X_{t''}$.
 Consider an entry $D[t][\chi][\xi][\rho]$ of the DP table.
 To determine whether $D[t][\chi][\xi][\rho]$ is true we iterate over all triples $(\chi',\xi',\rho')$ and $(\chi'',\xi'',\rho'')$ such that $D[t'][\chi'][\xi'][\rho']$ and $D[t''][\chi''][\xi''][\rho'']$ are both true.
 We set $D[t][\chi][\xi][\rho]$ to true if
 \begin{enumerate}[label = (\roman*)]
  \item\label{item:join-chi} $\chi = \chi' = \chi''$,
  \item\label{item:join-xi-conflict} $\xi'(u) \cap \xi''(u) = \emptyset$ for all $u \in X_t$,
  \item\label{item:join-xi} $\xi(u) = \xi'(u) \cup \xi''(u)$ for all $u \in X_t$, and
  \item\label{item:join-rho} there exists a function $\eta: 2^{X_{t'}} \times 2^{X_{t''}} \to [q]_0$ such that
   \begin{enumerate}[label=(J.\arabic*)]
    \item\label{item:join-eta-1} \[\forall A' \in 2^{X_{t'}}, A'' \in 2^{X_{t''}}\colon A' \cap A'' \neq \emptyset \implies \eta(A', A'') = 0,\]
    \item\label{item:join-eta-2} \begin{align*}
                                  \forall A' \in 2^{X_{t'}}\colon   &\sum_{A'' \in 2^{X_{t''}}} \eta(A',A'') = \rho'(A') \text{\ \ \ and\ \ \ }\\
                                  \forall A'' \in 2^{X_{t''}}\colon &\sum_{A' \in 2^{X_{t'}}} \eta(A',S'') = \rho''(A''),
                                 \end{align*}
    \item\label{item:join-eta-3} \[\forall A \in 2^{X_t}: \rho(A) = \sum_{\substack{A',A'' \in 2^{X_t}\\A'\cup A'' = A}} \eta(A',A'').\]
   \end{enumerate}
 \end{enumerate}
 
 Let us provide some intuition on these conditions.
 Condition \ref{item:join-chi} is self-explanatory.
 Condition \ref{item:join-xi-conflict} ensures that there are no conflicts when merging partial square-colorings for the two subcones $V_{t'}$ and $V_{t''}$ for non-free colors (i.e., colors from $\{\chi(v) \mid v \in X_t\}$).
 Indeed, if $v \in \xi'(u) \cap \xi''(u)$, this means that $v$ is adjacent to some $v' \in V_{t'} \setminus X_t$ of color $\chi(u)$, and $v$ is also adjacent to some $v'' \in V_{t''} \setminus X_t$ of the same color $\chi(u)$.
 Condition \ref{item:join-xi} ensures that $\xi$ is updated correctly: if the color of $u$ is cone-adjacent to the vertices in $\xi'(u)$ in the first subcone, and to the vertices in $\xi''(u)$ in the second subcone, it is cone-adjacent exactly to the vertices in $\xi'(u) \cup \xi''(u)$ in the entire cone $V_t$.
 Finally, for Condition \ref{item:join-rho}, we need to check that $\rho$ is correctly obtained from $\rho'$ and $\rho''$ which is achieved by an auxiliary mapping $\eta$ describing the number of shared free colors between two equivalence classes from the two different subcones.
 More precisely, for two equivalence classes $A' \in 2^{X_{t'}}, A'' \in 2^{X_{t''}}$, we interpret the value $\eta(A',A'')$ as the number of free colors that occur in both the equivalence classes identified with $A'$ and $A''$.
 
 With this interpretation in mind, \ref{item:join-eta-1} checks that no color is shared between equivalence classes that have a non-empty shared neighborhood, i.e., no color used in one of the subcones is within distance two of that same color in the other subcone (this is the counterpart of Condition \ref{item:join-xi-conflict} for free colors).
 Condition \ref{item:join-eta-2} checks that all numbers add up in the correct way.
 And finally, knowing the number shared colors between two equivalence classes, we can deduce $\rho$ using \ref{item:join-eta-3}.
 Indeed, any fixed free color $c$ occurs in some equivalence class $A'$ in the first subcone and an equivalence class $A''$ in the second subcone (note that the free colors in both subcones are identical since $\chi' = \chi''$).
 Hence, $c$ is a shared color of $A'$ and $A''$ and is counted in (and only in) $\eta(A',A'')$.
 The color $c$ is in the equivalence class corresponding to $A' \cup A'' = A$ in the entire subcone $V_t$.
 
 Since $\chi,\xi,\rho,\chi',\xi',\rho',\chi'',\xi'',\rho''$ are fixed in each iteration, Conditions \ref{item:join-chi} and \ref{item:join-xi} can trivially be checked in polynomial time.
 For Condition \ref{item:join-rho}, we view Conditions \ref{item:join-eta-1} - \ref{item:join-eta-3} as an integer linear program (ILP) with unknowns $\eta(A', A'')$, $A \in 2^{X_{t'}}$, $A'' \in 2^{X_{t''}}$.
 Now, we can check the feasibility of this ILP using Lemma \ref{la:ilp-solver} and set $D[t][\chi][\xi][\rho]$ accordingly.
 
 \begin{correctness}
  First suppose the algorithm sets $D[t][\chi][\xi][\rho]$ to true via triples $(\chi',\xi',\rho')$ and $(\chi'',\xi'',\rho'')$ and the mapping $\eta$ satisfies \ref{item:join-eta-1} - \ref{item:join-eta-3}.
  Since $D[t'][\chi'][\xi'][\rho']$ and $D[t''][\chi''][\xi''][\rho'']$ are true, we know by induction that there exist square $q$-colorings $\chi_{V_{t'}}$ of $G[V_{t'}]$ and $\chi_{V_{t''}}$ of $G[V_{t''}]$ which satisfy the conditions in Definition~\ref{def:dp-table}.
  We show that if the algorithm sets $D[t][\chi][\xi][\rho]$ to true, then we can combine $\chi_{V_{t'}}$ and $\chi_{V_{t''}}$ into a square coloring $\chi_{V_t}$ for $G[V_t]$ using the function $\eta$.
  Towards this end, we start by showing the following claim.
  
  \begin{claim}
   \label{claim:join-match-colors}
   There is a square $q$-colorings $\lambda_{V_{t''}}$ of $G[V_{t''}]$ such that
   \begin{enumerate}
    \item $\lambda_{V_{t''}}$ satisfies \ref{item:dp-chi} - \ref{item:dp-rho} with respect to $(t'',\chi'',\xi'',\rho'')$, and
    \item for every $A' \in 2^{X_{t'}}$ and $A'' \in 2^{X_{t''}}$ it holds that
    \[\eta(A',A'') = |\{c \in [q] \mid A_{t'}(\chi_{V_{t'}},c) = A' \wedge A_{t''}(\lambda_{V_{t''}},c) = A''\} \setminus \{\chi(v) \mid v \in X_t\}|.\]
   \end{enumerate}
  \end{claim}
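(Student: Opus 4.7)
The plan is to obtain $\lambda_{V_{t''}}$ from $\chi_{V_{t''}}$ by permuting the free colors among themselves; the function $\eta$ tells us exactly how this permutation has to match free colors on the $t'$-side with free colors on the $t''$-side.

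\medskip

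First, set $F \coloneqq [q] \setminus \chi(X_t)$ and, for each $A' \in 2^{X_{t'}}$ and $A'' \in 2^{X_{t''}}$, define the sets of free colors
\[
F'_{A'} \coloneqq \{c \in F \mid A_{t'}(\chi_{V_{t'}},c) = A'\}, \qquad F''_{A''} \coloneqq \{c \in F \mid A_{t''}(\chi_{V_{t''}},c) = A''\}.
\]
By Condition~\ref{item:dp-rho} applied to $\chi_{V_{t'}}$ and $\chi_{V_{t''}}$, we have $|F'_{A'}| = \rho'(A')$ and $|F''_{A''}| = \rho''(A'')$. Condition~\ref{item:join-eta-2} says that $\sum_{A''} \eta(A',A'') = \rho'(A')$ and $\sum_{A'} \eta(A',A'') = \rho''(A'')$, so for each fixed $A'$ we can partition $F'_{A'} = \bigsqcup_{A''} P_{A',A''}$ with $|P_{A',A''}| = \eta(A',A'')$, and for each fixed $A''$ we can partition $F''_{A''} = \bigsqcup_{A'} Q_{A',A''}$ with $|Q_{A',A''}| = \eta(A',A'')$. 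Since $|P_{A',A''}| = |Q_{A',A''}|$, choose an arbitrary bijection $\pi_{A',A''}\colon Q_{A',A''} \to P_{A',A''}$ and let $\pi\colon [q] \to [q]$ be the bijection that agrees with every $\pi_{A',A''}$ on the respective $Q_{A',A''}$ and is the identity on $\chi(X_t)$.

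\medskip

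Define $\lambda_{V_{t''}}(v) \coloneqq \pi(\chi_{V_{t''}}(v))$ for all $v \in V_{t''}$. Because $\pi$ is a bijection, $\lambda_{V_{t''}}$ is a square $q$-coloring of $G[V_{t''}]$. Moreover, $\pi$ fixes every color in $\chi(X_t) = \chi''(X_{t''})$, so $\lambda_{V_{t''}}(v) = \chi_{V_{t''}}(v) = \chi''(v)$ for all $v \in X_{t''}$, which verifies~\ref{item:dp-chi}. For any color $c \in [q]$, one has $A_{t''}(\lambda_{V_{t''}},c) = A_{t''}(\chi_{V_{t''}},\pi^{-1}(c))$ directly from the definition; in particular, taking $c = \chi''(v) = \pi(\chi''(v))$ for $v \in X_{t''}$ gives $A_{t''}(\lambda_{V_{t''}},\chi''(v)) = A_{t''}(\chi_{V_{t''}},\chi''(v)) = \xi''(v)$, verifying~\ref{item:dp-xi}. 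For~\ref{item:dp-rho}, since $\pi$ restricts to a bijection on $F$ and preserves the partition of $F$ into $\{F''_{A''}\}_{A''}$ (each $Q_{A',A''}$ is mapped out of $F''_{A''}$, but the class of $\lambda_{V_{t''}}$-color $c$ equals the $\chi_{V_{t''}}$-class of $\pi^{-1}(c)$), the number of free colors with $\lambda_{V_{t''}}$-class equal to $A''$ equals $|F''_{A''}| = \rho''(A'')$.

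\medskip

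Finally, the second item of the claim is immediate from the construction: for any pair $(A',A'')$, a free color $c$ satisfies $A_{t'}(\chi_{V_{t'}},c) = A'$ and $A_{t''}(\lambda_{V_{t''}},c) = A''$ iff $c \in F'_{A'}$ and $\pi^{-1}(c) \in F''_{A''}$, i.e.\ iff $c \in P_{A',A''}$, and there are exactly $\eta(A',A'')$ such colors. The only genuinely nontrivial step is the existence of the compatible partitions $\{P_{A',A''}\}$ and $\{Q_{A',A''}\}$, and this is where Condition~\ref{item:join-eta-2} is used in an essential way; everything else is routine bookkeeping ensuring that renaming free colors neither creates color conflicts nor alters the equivalence-class structure recorded by $\xi''$ and $\rho''$.
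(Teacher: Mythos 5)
Your proof is correct and follows essentially the same route as the paper: both construct $\lambda_{V_{t''}} = \pi \circ \chi_{V_{t''}}$ for a bijection $\pi$ that fixes the bag colors and matches free colors across equivalence classes according to $\eta$, with Condition \ref{item:join-eta-2} guaranteeing that the matching exists. Your explicit partitions $P_{A',A''}$, $Q_{A',A''}$ simply spell out in detail what the paper compresses into "arbitrarily match $\eta(A',A'')$ free colors \dots possible by \ref{item:join-eta-2}."
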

  \begin{claimproof}
   We obtain $\lambda_{V_{t''}}$ from $\chi_{V_{t''}}$ by renaming colors.
   More precisely, we define a bijection $\pi \colon [q] \to [q]$ and define $\lambda_{V_{t''}}(u) \coloneqq \pi(\chi_{V_{t''}}(u))$ for all $u \in V_{t''}$.
   
   First, we choose the identity for all colors in the bag $X_t$, i.e., $\pi(c) = c$ for all $c \in \{\chi''(v) \mid v \in X_t\}$.
  
   In the second step, we use $\eta$ to permute free colors.
   For each pair of equivalence classes $A' \in 2^{X_{t'}}, A'' \in 2^{X_{t''}}$, we arbitrarily match $\eta(A',A'')$ free colors from the equivalence class $A'$ to free colors from the equivalence class $A''$, i.e., we define $\pi$ in such a way that
   \[\eta(A',A'') = |\{c \in [q] \mid A_{t'}(\chi_{V_{t'}},c) = A' \wedge A_{t''}(\chi_{V_{t'}},\pi^{-1}(c)) = A''\} \setminus \{\chi(v) \mid v \in X_t\}|\]
   Note that this is always possible, and indeed that it achieves a full matching, by condition \ref{item:join-eta-2}.
   Also, $\lambda_{V_{t''}}$ still satisfies \ref{item:dp-chi} - \ref{item:dp-rho} with respect to $(t'',\chi'',\xi'',\rho'')$ since only free colors are renamed.
  \end{claimproof}
  
  Using the last claim, we may assume without loss of generality that $\chi_{V_{t''}} = \lambda_{V_{t''}}$.
  We define a coloring $\chi_{V_t}$ of $V_t$ via 
  \[\chi_{V_t}(v) = \begin{cases}
                     \chi_{V_{t'}}(v)  & \text{if } v \in V_{t'} \\
                     \chi_{V_{t''}}(v) & \text{if } v \in V_{t''}
                    \end{cases}.\]
  First observe that $\chi_{V_t}$ is well-defined by Condition \ref{item:join-chi}.
  
  We show that $\chi_{V_t}$ is a square coloring of $G[V_t]$.
  Let $u,w \in V_t$ be distinct vertices such that $\dist_{G[V_t]}(u,w) \leq 2$.
  If $uw \in E(G)$ then $u,w \in V_{t'}$ or $u,w \in V_{t''}$, and the respective square coloring $\chi_{V_{t'}}$ or $\chi_{v_{t''}}$ prohibits that $u$ and $v$ have the same color.
  
  Otherwise, there is some $v \in V_t$ such that~$uv,vw \in E(G)$.
  As before, if $u,v,w \in V_{t'}$ or $u,v,w \in V_{t''}$, the respective square coloring $\chi_{V_{t'}}$ or $\chi_{v_{t''}}$ prohibits that $u$ and $v$ have the same color.
  Since $u,v$ and $v,w$ are adjacent, each pair of vertices must be in the same subcone.
  So without loss of generality the only other case we need to consider is $u \in V_{t'}\setminus X_t$, $v \in X_t$ and $w \in V_{t''}\setminus X_t$.
  Suppose without loss of generality that $\chi_{V_t}(u) = \chi_{V_t}(w)$.
  First suppose there is some $u' \in X_t$ such that $\chi_{V_t}(u) = \chi(u')$.
  Then $v \in \xi'(u)$ and $v \in \xi''(w)$ which contradicts Condition \ref{item:join-xi-conflict}.
  
  Otherwise, let $A_u \coloneqq A_t(\chi_{V_t},\chi(u))$ and $A_w \coloneqq A_t(\chi_{V_t},\chi(w))$.
  We have $v \in A_u$ and $v \in A_w$ by definition and hence, $A_u \cap A_w \neq \emptyset$.
  So $\eta(A_u,A_w) = 0$ by condition \ref{item:join-eta-1}.
  This means that
  \[\{c \in [q] \mid A_{t'}(\chi_{V_{t'}},c) = A_u \wedge A_{t''}(\chi_{V_{t''}},c) = A_w\} \setminus \{\chi(v) \mid v \in X_t\} = \emptyset\]
  by Claim \ref{claim:join-match-colors}.
  But $c \coloneqq \chi_{V_t}(u) = \chi_{V_t}(w)$ is contained in the first set, but not the second one.
  So this gives again a contradiction.
  Overall, we get that $\chi_{V_t}$ is a square coloring of $G[V_t]$.
  
  It remains to show that $\chi_{V_t}$ satisfies the conditions in Definition~\ref{def:dp-table}.
  Condition \ref{item:dp-chi} immediately follows from \ref{item:join-chi}.
  
  For $u \in X_t$ we have that
  \begin{align*}
   \xi(u) &= \xi'(u) \cup \xi''(u)\\
          &= A_{t'}(\chi_{V_{t'}},\chi'(u)) \cup A_{t''}(\chi_{V_{t''}},\chi''(u))\\
          &= A_{t'}(\chi_{V_{t'}},\chi(u)) \cup A_{t''}(\chi_{V_{t''}},\chi(u))\\
          &= \Big\{v \in X_t \;\Big|\;         \Big(\exists w \in V_{t'} \setminus X_{t'} \colon vw \in E(G) \wedge \chi_{V_{t'}}(w) = \chi(u)\Big)\\
          &\quad\quad\quad\quad\quad\quad \vee \Big(\exists w \in V_{t''} \setminus X_{t''} \colon vw \in E(G) \wedge \chi_{V_{t''}}(w) = \chi(u)\Big)\Big\}\\
          &= A_{t}(\chi_{V_{t}},\chi(u))
  \end{align*}
  which implies Condition \ref{item:dp-xi}.
  
  For Condition \ref{item:dp-rho}, we first observe that
  \[A_t(\chi_{V_t},c) = A_{t'}(\chi_{V_{t'}},c) \cup A_{t''}(\chi_{V_{t''}},c)\]
  for all colors $c \in [q]$ using the same arguments as above.
  Hence, Condition \ref{item:join-eta-3} and Claim \ref{claim:join-match-colors} imply that
  \[\rho(A) = \sum_{\substack{A',A'' \in 2^{X_t}\\A'\cup A'' = A}} \eta(A',A'') = |\{c \in [q] \mid A_t(\chi_{V_t},c) = A\} \setminus \{\chi(v) \mid v \in X_t\}|\]
  for all $A \in 2^{X_t}$ as desired.
  
  \medskip
  
  Conversely, suppose there exists a square $q$-coloring $\chi_{V_t}$ of $G[V_t]$ satisfying the conditions in Definition~\ref{def:dp-table}.
  We show that the algorithm sets $D[t][\chi][\xi][\rho]$ to true.
  
  To do this, we consider $\chi_{V_t}|_{V_{t'}}$ and $\chi_{V_t}|_{V_{t''}}$.
  Certainly, these colorings are square colorings of $G[V_{t'}]$ and $G[V_{t''}]$, respectively.
  They immediately induce triples $(\chi',\xi',\rho')$ and $(\chi'',\xi'',\rho'')$ by choosing these mappings in such a way that $\chi_{V_t}|_{V_{t'}}$ witnesses that $D[t',\chi',\xi',\rho']$ is true.
  The colorings $\chi_{V_t}|_{V_{t'}}$ and $\chi_{V_t}|_{V_{t''}}$ also define $\eta$ by setting
  \[\eta(A',A'') \coloneqq |\{c \in [q] \mid A_{t'}(\chi_{V_t},c) = A' \wedge A_{t''}(\chi_{V_t},c) = A''\} \setminus \{\chi(v) \mid v \in X_t\}|.\]
  Now, it is easy to check that Conditions \ref{item:join-chi} - \ref{item:join-rho} are satisfied.
  So the algorithm sets $D[t][\chi][\xi][\rho]$ to true.
 \end{correctness}
 
 \begin{runningtime}
  We analyze the running time of computing all DP table entries for $t$.
  
  Iterating over all choices of $\chi, \xi, \rho, \chi', \xi', \rho', \chi'', \xi'', \rho''$ takes
  \[|D[t]| \cdot |D[t']| \cdot |D[t'']| \leq (q+1)^{3\cdot(\ttw+1 + (\ttw+1)^2 + 2^{\ttw+1})}\]
  many iterations by Equation \eqref{eq:size-dp-table}. 
  For each iteration, we only require polynomial time to check Conditions \ref{item:join-chi} - \ref{item:join-xi}.
  For the ILP used to check Condition \ref{item:join-rho}, the number of unknowns is at most $2^{2(\ttw+1)}$.
  Note that \ref{item:join-eta-1} just says that certain unknowns are zero.
  We can hence ignore them and delete them from all equations.
  The number of remaining linear equations from \ref{item:join-eta-2} and \ref{item:join-eta-3} is at most $(2^{\ttw + 1} + 2^{\ttw + 1}) + 2^{\ttw + 1} = 3\cdot 2^{\ttw+1}$.
  The unknowns are constrained by $0 \leq \eta(S',S'') \leq q$. 
  Hence, the ILP can be solved in time in time $O(2^{2\cdot(\ttw+1)} \cdot 2^{\ttw + 1} \cdot (q+1)^{3\cdot 2^{\ttw+1}+1})$ by Lemma \ref{la:ilp-solver}.
  
  So overall, the running time can be bounded by
  \begin{align*}
       &(q+1)^{3\cdot(\ttw+1 + (\ttw+1)^2 + 2^{\ttw+1})} \cdot 2^{3\cdot(\ttw+1)} \cdot (q+1)^{3\cdot 2^{\ttw+1}+1} \cdot n^{O(1)}\\
   =\; &(q+1)^{6\cdot(\ttw+1) + 3\cdot(\ttw+1)^2 + 6\cdot2^{\ttw+1} + 1}\cdot n^{O(1)}\\
   =\; &(q+1)^{2^{\ttw+4}}\cdot n^{O(1)}.\\
  \end{align*}
 \end{runningtime}
 We complete the proof by observing that, for every node $t$, all entries of $D[t]$ can be computed in time $(q+1)^{2^{\ttw+4}}\cdot n^{O(1)}$.
 So overall, the running time is bounded by $|V(T)| \cdot (q+1)^{2^{\ttw+4}}\cdot n^{O(1)}$.
\end{proof}

\section{Lower Bound for Graphs of Bounded Treewidth}
\label{sec:tw-lower-bound}
We have seen an algorithm solving \SqCol\ in time $(q+1)^{2^{\ttw+4}}\cdot n^{O(1)}$.
We now show that assuming the Exponential Time Hypothesis (\ETH), this running time is essentially optimal.
More specifically, we show:

\begin{theorem}[Theorem \ref{thm:tw-lb} restated]
 \label{thm:tw-lb}
 Assuming \ETH, for any $\epsilon>0$ and any function $f$, there is no $f(\ttw)n^{(2-\epsilon)^\ttw}$ time algorithm solving \SqCol\ on graphs of treewidth $\ttw$.
\end{theorem}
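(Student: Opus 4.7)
The plan is to reduce from a structured version of \textsc{Vector $k$-Sum}, defined so that instances have $k$ lists of $n$ integer $m$-vectors summing to a target vector, and to carry out the reduction so that the treewidth of the output \SqCol\ instance is roughly $\log m$ rather than polynomial in $m$. First, I would establish an ETH-based lower bound for \textsc{Vector $k$-Sum}: by a reduction from \textsc{Subgraph Isomorphism} on $3$-regular pattern graphs (which by Marx's theorem admits no $f(\ttw)n^{o(\ttw/\log \ttw)}$ algorithm under \ETH), I produce \textsc{Vector $k$-Sum} instances with $k = O(2^w)$, $m = O(2^w)$, and the additional structural restrictions that (i) every list has at most $3$ coordinates in which its vectors can be nonzero, and (ii) every coordinate is nonzero in at most $2$ lists. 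The $3$-regularity of the pattern is what enables (i); by carefully distributing the pattern's edge-coordinates across lists (for instance along a tree decomposition of the pattern), one achieves (ii). The Subgraph Isomorphism lower bound translates into: no algorithm runs in time $f(w)n^{(2-\epsilon)^w}$ for such \textsc{Vector $k$-Sum} instances unless \ETH\ fails.

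Next, I would reduce this restricted \textsc{Vector $k$-Sum} to \SqCol\ so that instances with $k,m = O(2^w)$ yield graphs of treewidth $w + O(1)$. Following the sketch in the introduction, the construction has three layers. After shifting the target to $\mathbf{0}$ and representing each coordinate by two sets of sizes $M\pm a$, I introduce for each coordinate $i\in[m]$ two vertex sets $X_i,Y_i$ whose sizes encode the target, taking the number $q$ of colors to be $\sum_i|X_i|$. I then add a ``selector'' set $S$ of $r \approx \log m$ colorless vertices and choose distinct subsets $S_i \subseteq S$ of size $\lceil r/2\rceil+1$, joining all of $X_i$ to $S_i$ and all of $Y_i$ to $S\setminus S_i$. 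The inequalities $|S_i\cap S_j|\geq 1$ and $|(S\setminus S_i)\cap S_j|\geq 1$ then force $X_i,Y_i$ to use the same color set while distinct pairs $(X_i,Y_i),(X_j,Y_j)$ use disjoint color sets. Finally, for each list $A_\ell$, I attach a constant-size \emph{vector-selection gadget} $W_\ell$ to a common vertex $x$ and to (some of) $X_{i_1(\ell)}, X_{i_2(\ell)}, X_{i_3(\ell)}$ and their $Y$-counterparts, designed so that its valid square colorings are in bijection with vectors of $A_\ell$ and each such coloring ``exhibits'' to $x$ exactly the $v$-prescribed number of colors from each attached class. Restriction (ii) lets me arrange attachments so that each $X_j$ or $Y_j$ is owned by a single gadget.

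For correctness, the square-coloring constraint at $x$ forces the colors exhibited by different gadgets to be pairwise distinct, so the gadget states sum to a vector bounded coordinatewise by the target; equality follows from the fixed total number of colors exhibited, yielding a valid \textsc{Vector $k$-Sum} solution. Conversely, any solution lifts back to a square coloring. For the treewidth bound, observe that deleting the $O(|S|)$-sized separator $\{x\}\cup S$ splits the graph into $k$ disjoint pieces, each consisting of a gadget $W_\ell$ together with its uniquely-owned color classes, which has treewidth $O(1)$ by construction. Hence $\ttw(G) \leq |S| + 1 + O(1) = \log m + O(1)$, and tuning constants gives $\ttw \leq w$. A hypothetical $f(\ttw)n^{(2-\epsilon)^\ttw}$ algorithm for \SqCol\ would thus yield a $g(w)n^{(2-\epsilon)^w}$ algorithm for the restricted \textsc{Vector $k$-Sum}, contradicting the ETH-based lower bound.

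The main obstacles I anticipate are threefold. The hardest is designing the vector-selection gadget $W_\ell$: it must have constant treewidth yet realize $n$ distinct ``exhibited-count'' patterns, so one needs a sequence of subgadgets (selection, copy, subset-exhibition) that can encode arbitrary integer values $a_j$ in binary while pushing colors out to the appropriate $X_{i_j(\ell)}$; this matches the long list of custom gadgets hinted at in the paper. A second obstacle is eliminating colorless vertices: after the construction I would replace $S$ and the internal colorless vertices of $W_\ell$ with small gadgets producing a forced coloring that both respects the distance-$2$ constraints and does not inflate the treewidth. The third, more combinatorial, obstacle is proving the restricted \textsc{Vector $k$-Sum} lower bound with both structural properties (i) and (ii) simultaneously, which requires an encoding of Subgraph Isomorphism where each list represents a local neighborhood choice whose interaction with other lists is confined to $O(1)$ shared coordinates controlled by the tree decomposition of the $3$-regular pattern.
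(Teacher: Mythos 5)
Your proposal follows essentially the same route as the paper: a two-step reduction from \textsc{Colored Subgraph Isomorphism} on cubic pattern graphs, through a restricted \textsc{Vector $k$-Sum} problem with exactly the two structural properties you state, and then to \SqCol\ via the logarithmic-size set $S$ with distinct subsets $S_i$, constant-treewidth vector-selection gadgets attached to a central vertex $x$, and a final step replacing colorless vertices. Two minor remarks: property (ii) comes for free because each coordinate corresponds to an edge of the cubic pattern and is nonzero precisely in the two lists of its endpoints (no distribution along a tree decomposition is needed), and the selection gadgets are constant-\emph{treewidth} but far from constant-\emph{size} (in the paper they are long chains of vector-state and one-way-switch subgadgets), as your obstacles paragraph correctly anticipates.
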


\subsection{Problems Used in the Reduction}

In our proof of Theorem~\ref{thm:tw-lb}, we use two other problems, which we define below. The first is the starting
point of our reduction, the second is an intermediate problem.

\subsubsection{The (Colored) Subgraph Isomorphism Problem}

The starting point for the hardness result is the well-known \textsc{Subgraph Isomorphism} problem:

\defproblem{Subgraph Isomorphism}{A pattern graph $H$ and a host graph $G$}
{Does there exist a subgraph of $G$ that is isomorphic to $H$?}

We actually use a colored variant of this, in which the vertices
of $H$ are colored using $|V(H)|$ unique colors. The graph $G$ is also
colored arbitrarily using these $|V(H)|$ colors. The problem now has the
additional constraint that the isomorphism must preserve colors. We
define a convenient version of this as follows:

\defproblem{Colored Subgraph Isomorphism}{
 A pattern graph $H$ with $k$ vertices, a host graph $G$ with $k \cdot n$ vertices,
 and a function $f: V(G) \to V(H)$ such that $|f^{-1}(h)| = n$ for all $h \in V(H)$.}{
 Is there a subgraph $G'$ of $G$ such that $f|_{V(G')}$ is an isomorphism from $G'$ to $H$?}

\subsubsection{The (Restricted) Vector $k$-Sum Problem}

We first define the basic \textsc{Vector $k$-Sum} problem, before stating a restricted version that we use in our reduction.

\defproblem{Vector $k$-Sum}{
 $k$ lists $A_1, \ldots, A_k$ of $m$-dimensional integer vectors, each of size $n$, and an $m$-dimensional target vector $t \in \ZZ^m$.}{
 Are there $a_1 \in A_1, \ldots, a_k \in A_k$ such that $\sum_{i=1}^{k} a_i = t$?}

 For a vector $a \in \ZZ^m$ and $j \in [m]$ we denote by $a[j]$ the $j$-th entry of $a$.
 
\begin{definition}\label{defn:node-representing-vector-group}
  Suppose $m,n \in \ZZp$.
  We say a list of $m$-dimensional vectors $A$ is a \emph{node-representing vector list (with parameters $m,n$)} if there are $D^+, D^- \subseteq [m]$ with $|D^+ \cup D^-| = |D^+| + |D^-| = 3$ such that
  \begin{align*}
    a[j] \in \begin{cases}
               [1,n^2]   & \text{if } j \in D^+\\
               [-n^2,-1] & \text{if } j \in D^-\\
               \{0\}     & \text{otherwise}
             \end{cases}
  \end{align*}
  for all $a \in A$ and all $j \in [m]$.
  We call the elements of $D^+$ the \emph{positive dimensions},
  the elements of $D^-$ the \emph{negative dimensions} and
  the elements of $D^+\cup D^-$ the \emph{non-zero dimensions} of $A$.
\end{definition}

\defproblemrestriction{Restricted Vector $k$-Sum}{
 Let parameters $m,k,n \in \ZZp$ be arbitrary.
 We define a \textsc{Restricted Vector $k$-Sum} (with parameters $m,k,n$) instance to be a \textsc{Vector $k$-Sum} instance that satisfies all of the following conditions:
 \begin{enumerate}[nolistsep,label = (\arabic*)]
  \item\label{item:restricted-vector-list-1} The target vector of the instance is $(0,...,0) \in \ZZ^m$.
  \item\label{item:restricted-vector-list-2} There are $k$ vector lists, each of size exactly $n^4$.
  \item\label{item:restricted-vector-list-3} All vector lists are node-representing vector lists with parameters $m,n$.
  \item\label{item:restricted-vector-list-4} Each $j\in[m]$ is a non-zero dimension in exactly two of the $n^4$ vector lists.
 \end{enumerate}
}

\subsection{The Reduction Chain}

To prove Theorem~\ref{thm:tw-lb}, we rely on the
hardness of certain instances of \textsc{Colored Subgraph Isomorphism}
under \ETH, and construct a two-step reduction from those instances
of \textsc{Colored Subgraph Isomorphism} to instances
of \SqCol.

For the former, we use a result that follows directly from \cite{DBLP:journals/toc/Marx10}.
It shows the hardness of \textsc{Colored Subgraph Isomorphism} with cubic pattern graphs parameterized by its number of edges.
Recall that a graph $H$ is \emph{cubic} if every vertex has degree exactly $3$. 

\begin{theorem}\label{thm:subiso-lower-bound}
  Let $m_0>0$ be arbitrary. The \textsc{Colored Subgraph Isomorphism} problem, restricted to cubic pattern graphs with $m \geq m_0$ edges,
  cannot be solved in time $f(m)n^{o(m/\log m)}$ for any function $f$, unless \ETH\ fails.
\end{theorem}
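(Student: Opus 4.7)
The plan is to derive this from the main lower bound of \cite{DBLP:journals/toc/Marx10}, which establishes the analogous statement for \textsc{Colored Subgraph Isomorphism} with arbitrary pattern graphs: assuming \ETH, no algorithm decides it in time $f(m)n^{o(m/\log m)}$. I would reduce the general case to the cubic case by a standard degree-reduction gadget that blows up the edge count by at most a constant factor. Then a hypothetical $f(m')n^{o(m'/\log m')}$ algorithm for cubic instances of size $m' = O(m)$ would decide the original instance in time $\tilde f(m)n^{o(m/\log m)}$, contradicting Marx's bound.

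The gadget is as follows. Given $(H,G,f)$ with $|E(H)| = m$, first preprocess $H$ to have minimum degree at least $3$: vertices of degree $\leq 2$ can be eliminated by local enumeration/contraction without affecting the running-time form. For each $v \in V(H)$ fix a cyclic ordering $e_1(v), \ldots, e_{d(v)}(v)$ of its incident edges, and replace $v$ in the new pattern $H'$ by a cycle $C_v$ on $d(v)$ fresh vertices $v^1, \ldots, v^{d(v)}$ joined by cycle edges. For every edge $uv \in E(H)$ with $uv = e_i(u) = e_j(v)$, add the external edge $u^i v^j$ to $H'$. Every vertex of $H'$ then has two cycle neighbors plus one external neighbor, so $H'$ is cubic, and $|E(H')| = \sum_v d(v) + m = 3m$. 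Assign a fresh unique color to every vertex of $H'$. The host $G'$ is built analogously: for every $u \in V(G)$ with $f(u) = v$, introduce a cluster of $d(v)$ host vertices forming a cycle colored exactly like $C_v$, and every edge $uw \in E(G)$ with $f(u)f(w) = e_i(f(u)) = e_j(f(w)) \in E(H)$ contributes the corresponding external edge between the two clusters; each new color class has exactly $n$ host vertices, as required by the problem definition.

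The correspondence is exact: because $H'$ is uniquely colored, any colored isomorphism $\varphi'\colon V(H') \to V(G')$ must send each $C_v$ into a single cluster of the matching color pattern (with forced orientation), which picks out a preimage vertex $u_v \in V(G)$; the external edges of $H'$ then force $v \mapsto u_v$ to be a colored isomorphism from $H$ into $G$, and the converse lifting is immediate from the cluster construction. The requirement $m \geq m_0$ is enforced by appending a disjoint constant-size cubic padding gadget (e.g., a fresh $K_4$ on both sides with a unique matching), which only perturbs the edge count by $O(1)$ and preserves cubicity. The main delicate point is handling the degree-$\leq 2$ preprocessing and the external-edge gluing so that the isomorphism correspondence is preserved exactly; once that is verified, the reduction is polynomial in size, increases $m$ by a factor of at most $3$, so $m/\log m$ is preserved up to constants, and the claim follows directly from the Marx lower bound.
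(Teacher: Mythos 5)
There is a genuine gap at the very first step: your starting point is not what Marx's result actually says. The lower bound in \cite{DBLP:journals/toc/Marx10} (Corollary~6.1) for \textsc{Colored/Partitioned Subgraph Isomorphism} is parameterized by the \emph{treewidth} of the pattern, not by its number of edges: for any recursively enumerable pattern class of unbounded treewidth, no algorithm runs in time $f(H)\,n^{o(\tw(H)/\log \tw(H))}$ unless \ETH\ fails. The statement you invoke --- ``no $f(m)n^{o(m/\log m)}$ algorithm for arbitrary patterns with $m$ edges'' --- does not follow from this without an extra ingredient, because an arbitrary pattern with $m$ edges may have treewidth far smaller than $m$ (even $O(1)$), in which case an $n^{o(m/\log m)}$ or even polynomial algorithm for such patterns is perfectly consistent with \ETH. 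To get an edge-parameterized bound one must exhibit a pattern family in which the treewidth is linear in the number of edges, i.e.\ a family of sparse, high-treewidth graphs; this is exactly what the paper does by taking cubic expanders (which have $\tw = \Omega(m)$) as the class $\mathcal{G}$ in Marx's corollary. Your proposal silently assumes this bridge has already been built, which is circular: the edge-parameterized ``general-pattern'' bound is not an easier off-the-shelf statement but essentially the same assertion as the theorem you are trying to prove.

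Moreover, once the missing step is supplied, your degree-reduction gadget becomes unnecessary: the witnessing family (cubic expanders) is already cubic, so the theorem follows directly from Marx's corollary, which is the paper's one-line proof. If instead you try to repair your argument using only the treewidth-based bound plus your cycle-replacement reduction (which does preserve treewidth up to constants, since $H$ is a minor of $H'$), you still cannot conclude: the reduction controls $m' = 3m$ but a runtime of $f(m')n^{o(m'/\log m')}$ on the cubic instance gives no contradiction with an $n^{\Omega(\tw/\log\tw)}$ lower bound unless $m = O(\tw(H))$ to begin with --- again forcing you back to sparse expanders. The gadget itself (triangle/cycle clusters with unique colors, external edges between clusters) is broadly sound, and the padding for $m \geq m_0$ and the degree-$\le 2$ preprocessing could be made rigorous with some care, but these are secondary; the argument as written does not establish the theorem because its foundational premise is not available.
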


This follows from \cite[Corollary 6.1]{DBLP:journals/toc/Marx10} when using as a
graph family $\mathcal{G}$ a family of cubic expander graphs with at
least $m_0$ edges. It is well-known that cubic expander graphs have treewidth
$\Omega(m)$ (where $m$ denotes the number of edges) \cite{GroheM09}.

Now, our two-step reduction consists of a reduction
from \textsc{Colored Subgraph Isomorphism} with cubic pattern graphs
to \textsc{Restricted Vector $k$-Sum} instances, and then from those
instances to \SqCol\ instances.
More precisely, we prove the following theorems:

\begin{restatable}{theorem}{thmSubisoToRestrictedVectorSum}\label{thm:subiso-to-restricted-vector-sum}
  There exists a polynomial-time algorithm $\mathcal{A}$ that, given
  a \textsc{Colored Subgraph Isomorphism} instance with a $k$-vertex
  cubic pattern graph $H$ (which has $m=3k/2$ edges) and a
  $(k\cdot n)$-vertex host graph $G$, outputs an
  equivalent \textsc{Restricted Vector $k$-Sum} instance with
  parameters $m,k,n$.
\end{restatable}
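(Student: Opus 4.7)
The plan is a direct encoding: define one list $A_u$ per vertex $u\in V(H)$, identify the $m=3k/2$ coordinates with the edges of $H$, and use one vector per combination of a choice of image $g_{u,i}\in f^{-1}(u)$ with a guess $j_\ell\in[n]$ of the image of each of $u$'s three $H$-neighbors $v_1,v_2,v_3$. Since $u$ has exactly three incident $H$-edges, the three nonzero coordinates of every vector in $A_u$ are exactly those edges, giving property (3); property (4) is automatic because every $H$-edge has exactly two endpoints. I first fix an arbitrary linear order on $V(H)$; for $e=\{u,v\}$ with $u$ preceding $v$ in this order, I declare $u$ the ``positive'' side (so $e\in D_u^+$) and $v$ the ``negative'' side (so $e\in D_v^-$).

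For each valid tuple $(i,j_1,j_2,j_3)\in [n]^4$, by which I mean one for which $g_{u,i}g_{v_\ell,j_\ell}\in E(G)$ for every $\ell\in\{1,2,3\}$, I put a vector into $A_u$ whose value on the coordinate of $e_\ell=\{u,v_\ell\}$ equals $(i-1)n+j_\ell$ when $u$ is the positive side and $-((j_\ell-1)n+i)$ when $u$ is the negative side; all other coordinates are $0$. These values lie in $[1,n^2]\cup[-n^2,-1]$. Because for each $u$ there are at most $n\cdot n^3=n^4$ valid tuples, I pad $A_u$ up to size exactly $n^4$ with duplicates of vectors already present. If for some $u$ no valid tuple exists then the CSI instance is necessarily negative; in that case I emit a canonical ``no''-instance of Restricted Vector $k$-Sum instead (e.g., each list consisting of $n^4$ copies of a vector with every positive entry equal to $1$ and every negative entry equal to $-2$, so that each coordinate sums to $-1$; this is valid whenever $n\ge 2$, and $n=1$ can be decided by brute force in polynomial time).

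Correctness of the reduction boils down to the arithmetic identity
\[
\bigl((i-1)n+j\bigr)-\bigl((j'-1)n+i'\bigr)=(i-j')n+(j-i'),
\]
which, since $|j-i'|<n$, vanishes if and only if $i=j'$ and $j=i'$. Applied to each edge $e=\{u,v\}$ with $u$ positive and $v$ negative, this forces the vectors chosen from $A_u$ and $A_v$ to agree in the sense that $u$'s chosen index equals $v$'s guess of $u$, and vice versa. Because only tuples that witness genuine $G$-edges were placed into $A_u$, any zero-sum selection yields a map $u\mapsto g_{u,i_u}$ that is injective (the $g_{u,i_u}$ live in distinct colour classes) and preserves every $H$-edge, i.e., a colored subgraph isomorphism; conversely, any isomorphism $\phi$ gives tuples $(\phi(u),\phi(v_1),\phi(v_2),\phi(v_3))$ present in the respective $A_u$ whose vectors sum to zero. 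The construction is clearly polynomial-time. The main subtlety I anticipate is designing the asymmetric $n$-adic encoding so that it simultaneously (a) stays inside $[1,n^2]\cup[-n^2,-1]$, (b) still uniquely identifies both the chosen index and the guess, and (c) cancels on both sides of every edge; the carry-free inequality $|j-i'|<n$ is what makes (b) and (c) compatible, and that is the place where care is required.
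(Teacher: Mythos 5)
Your construction is correct and takes essentially the same approach as the paper: one list per vertex of $H$, coordinates indexed by the edges of $H$, signs fixed by a linear order on $V(H)$, valid adjacent tuples encoded injectively into $[1,n^2]$, padding each list to size $n^4$ with duplicates, and a trivial NO-instance when some vertex admits no valid tuple. The only cosmetic difference is that you instantiate the edge encoding explicitly as $(i-1)n+j$, whereas the paper uses an arbitrary injection $b_{hh'}\colon E(G)_{hh'}\to[n^2]$ shared by both endpoints of each $H$-edge; the cancellation argument is identical.
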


\begin{restatable}{theorem}{thmVectorSumToDistanceTwoColoring}\label{thm:vector-sum-to-distance-two-coloring}
  There exists an algorithm $\mathcal{B}$ such that for every
  $\varepsilon > 0$ there exists an
  $m_{\varepsilon} \in \ZZp$ such that for all
  $m>m_{\varepsilon}$, the following holds. Let
  $k \in \ZZp$ be arbitrary. Given a \textsc{Restricted
  Vector $k$-Sum} instance with parameters $m,k,n$, the algorithm
  $\mathcal{B}$ produces an equivalent \SqCol\
  instance with treewidth at most $(1+\varepsilon)\log(m)+O(1)$ and
  with $(k + m + n)^{O(1)}$ vertices.

  Moreover, the algorithm runs in time $(k + m + n)^{O(1)}$.
\end{restatable}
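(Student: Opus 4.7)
The plan is to implement the construction sketched in Section~1.1, with the parameter $\varepsilon$ entering only through the size of the selector set. I would first build a variant with some ``colorless'' vertices (vertices not required to be colored but still counted for distances), verify correctness and treewidth bounds there, and then attach a constant-sized private ``color-socket'' to each colorless vertex to force it into a dedicated color outside the palette without affecting the treewidth bound by more than an additive constant.

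Using condition~\ref{item:restricted-vector-list-1} (the target is $0$), I would represent each coordinate $j \in [m]$ by two palette sets $X_j^+$ and $X_j^-$ of $M$ vertices each, where $M > n^2$, giving $q := 2mM$ colors in total. To force the $2m$ palettes to use pairwise disjoint color classes, I would introduce a colorless selector set $S$ of size $r := \lceil (1+\varepsilon)\log m\rceil$ and assign each $(j,\sigma) \in [m]\times\{+,-\}$ a distinct $(\lceil r/2\rceil+1)$-subset $S_{j,\sigma} \subseteq S$; this is possible for $m$ at least some $m_\varepsilon$ because $\binom{r}{\lceil r/2\rceil+1} = 2^{r}/\poly(r) \geq 2m$. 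Joining every vertex of $X_j^\sigma$ to every vertex of $S_{j,\sigma}$ guarantees via pairwise intersection that the $2m$ palettes get disjoint color classes, and pigeonholing $q$ colors then forces each $X_j^\sigma$ to use exactly its own $M$-sized class. For each list $A_\ell$ I would build a vector-selection gadget $W_\ell$ of constant treewidth $\alpha$ attached to a single universal colorless vertex $x$ and to the three pairs of palettes $X_j^\pm$ for $j$ in the three non-zero dimensions of $A_\ell$; condition~\ref{item:restricted-vector-list-4} (each coordinate is non-zero in at most two lists) lets us split the two copies $X_j^+, X_j^-$ between the (at most two) gadgets claiming coordinate $j$, so the gadgets are pairwise vertex-disjoint outside of $S \cup \{x\}$. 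The key combinatorial property to engineer into $W_\ell$ is that its valid square-colorings are in bijection with the $n^4$ vectors $a \in A_\ell$, and in the state corresponding to $a$ the gadget exhibits to $x$ exactly $M - a[j]$ colors from $X_j^+$ for each $j \in D^+$ and $M + a[j]$ colors from $X_j^-$ for each $j \in D^-$ (so that every gadget exhibits exactly $3M$ colors in total, independent of the state chosen).

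Correctness follows by counting: the neighbours of $x$ across all $W_\ell$ are pairwise at distance at most~$2$ and must receive distinct colors, so for every $(j,\sigma)$ the total number of exhibited colors $\sum_\ell (M \mp a_\ell[j])$ is at most $M$; summing over $\sigma \in \{+,-\}$ this forces equality in each term and hence $\sum_\ell a_\ell[j] = 0$ for all $j$, which is the vector-sum condition, and the converse direction is immediate. The treewidth bound is cleanest via Theorem~\ref{thm:cops-and-robbers-characterization-of-treewidth}: cops permanently station on $S \cup \{x\}$ (using $r+1$ cops), which disconnects the remaining graph into the pieces $\{W_\ell\} \cup \{X_j^\pm\}$, each of treewidth at most $\alpha$, so an extra $\alpha+1$ cops suffice, giving $\ttw \leq r + \alpha + 2 = (1+\varepsilon)\log m + O(1)$. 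The colorless-vertex elimination in the last step costs only $O(1)$ in treewidth because each socket is a constant-sized gadget attached at one vertex, and the total vertex count is clearly $\poly(k+m+n)$, as is the construction time. The main technical obstacle I expect is the explicit design and verification of $W_\ell$: one needs a constant-treewidth gadget that simultaneously (i) enumerates exactly $n^4$ ``states'', (ii) in state $a$ fixes the \emph{number} of colors exhibited at $x$ from each of six palettes to the prescribed $M \pm a[j]$ values, and (iii) admits no spurious square-colorings that cheat on these counts; this is what the sub-gadgets \SubsetGadget, \CopyGadget, \ConstSocket, \SwitchSocket, \EdgeSelection, \VectorState, \OWSGadget, and \VectorSelection\ in the preamble are there to provide, and the bulk of the section will be dedicated to constructing them and establishing their local correctness lemmas.
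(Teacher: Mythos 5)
Your high-level plan matches the paper's (colorless vertices removed at the end, counting-based correctness at the central vertex $x$, cops-and-robbers for the treewidth bound, constant-treewidth selection gadgets built from sockets/switches), but there is a genuine gap in how you wire the palettes to the vector selection gadgets, and it is exactly the point where the paper needs an extra idea you have dropped. You propose to ``split the two copies $X_j^+, X_j^-$ between the (at most two) gadgets claiming coordinate $j$, so the gadgets are pairwise vertex-disjoint outside of $S \cup \{x\}$.'' But $X_j^+$ and $X_j^-$ are not two copies of one color class; they are two \emph{different} classes, and your own correctness argument needs, for each fixed $(j,\sigma)$, that \emph{both} gadgets in which coordinate $j$ is non-zero exhibit colors from the \emph{same} class $X_j^\sigma$ (only then does ``$\sum_\ell(M\mp a_\ell[j])\le M$'' for both signs force $\sum_\ell a_\ell[j]=0$; if gadget $1$ only touches $X_j^+$ and gadget $2$ only touches $X_j^-$, you merely get $a_1[j]\ge 0$ and $a_2[j]\le 0$). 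So each gadget must be attached to both the $+$ and the $-$ class of each of its three coordinates, and with your construction the two gadgets sharing coordinate $j$ then share the physical vertex sets $X_j^\pm$. After deleting $S\cup\{x\}$ the gadgets are no longer disjoint: they are glued along shared palettes according to the edge structure of the cubic pattern graph, whose treewidth is $\Omega(m)$ for expanders, so the stationing-cops argument no longer yields $\ttw \le (1+\varepsilon)\log m + O(1)$.

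The missing ingredient is the paper's \emph{color class copy gadget}: every color class gets \emph{two} physical representatives $X_i$ and $Y_i$ (so the doubling per coordinate happens twice — once for the $\pm$ classes, once for the copies), each vector selection gadget attaches to only one representative of each class it needs, and a dedicated low-treewidth gadget forces $\chi(X_i)=\chi(Y_i)$ for all $i$. Concretely, the paper joins $X_i$ to a subset $S_i$ of the colorless selector set and $Y_i$ to the complementary set $S\setminus S_i$ (so $X_i$ and $Y_i$ have no common neighbors and may share colors, while any cross pair $X_i,X_j$ or $Y_i,X_j$ with $i\ne j$ is forced apart), and adds one global equality gadget between $\bigcup X_i$ and $\bigcup Y_i$; a pigeonhole argument then pins $\chi(Y_i)=\chi(X_i)$. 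This keeps an $O(\log m)$-size separator between the two representatives, so removing $S$, $x$, and $O(1)$ further vertices really does shatter the graph into constant-treewidth pieces. Without this (or some equivalent mechanism for transporting a color class to two gadgets through a logarithmic-size cut), your construction cannot simultaneously satisfy the counting correctness and the claimed treewidth bound. The remainder of your outline — including the final elimination of colorless vertices, which the paper handles by adding one neutral color per colorless vertex and slightly modifying the subset gadget — is consistent with the paper's proof.
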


Assuming both theorems hold, we can already prove
Theorem~\ref{thm:tw-lb}, i.e., assuming \ETH, \SqCol\ cannot be solved in time
$f(\ttw)n^{(2-\varepsilon)^{\ttw}}$ for any $\varepsilon > 0$.

\begin{proof}[Proof of Theorem \ref{thm:tw-lb}]
  Without loss of generality let $0 < \varepsilon \leq 1$ (otherwise
  it is trivial). Choose $0 < \widetilde{\varepsilon} \leq 1$ such
  that $1-\widetilde{\varepsilon} = \log(2-\varepsilon)$; note that
  this is always possible. Now assume that there
  exists an algorithm $\mathcal{C}$ solving \SqCol\ in
  time $f(\ttw)N^{(2-\varepsilon)^{\ttw}} =
  f(\ttw)N^{2^{(1-\widetilde{\varepsilon})\ttw}}$ on graphs of treewidth at most $\ttw$.
  Without loss of generality, we may assume that $f$ is monotonically increasing.

  First, we use Theorem~\ref{thm:vector-sum-to-distance-two-coloring}
  with $\varepsilon' = \widetilde{\varepsilon}$, obtaining an
  algorithm $\mathcal{B}$ and a constant $m_{\varepsilon'} > 0$.  Our
  goal is to design an algorithm with running time $h(m)n^{o(m/\log
  m)}$ for \textsc{Colored Subgraph Isomorphism}, where $h$ is some
  arbitrary function and where the pattern graphs are restricted to cubic graphs with $m>m_{\varepsilon'}$ edges.
  By Theorem~\ref{thm:subiso-lower-bound}, this implies that \ETH\ is false.

  Hence let such a \textsc{Colored Subgraph Isomorphism} instance be
  given. We use algorithm $\mathcal{A}$ from
  Theorem~\ref{thm:subiso-to-restricted-vector-sum} to convert this
  into an equivalent \textsc{Restricted Vector $k$-Sum} instance with
  parameters $m,k,n$. This conversion runs in time $(n + m)^{O(1)}$.

  For all $m>m_{\varepsilon'}$, the algorithm
  $\mathcal{B}$ from above converts a \textsc{Restricted Vector
  $k$-Sum} instance with parameters $m,k,n$ into
  a \SqCol\ instance with treewidth $\ttw \leq (1+\widetilde{\varepsilon})\log(m) + \alpha$, where $\alpha$ denotes a suitable absolute constant,
  and with $N = (k + m + n)^{O(1)}$ vertices.
  Note that $m = 3k/2$, so $N = (m + n)^{O(1)}$.
  This conversion runs in time $(k + m + n)^{O(1)} = (m + n)^{O(1)}$.

  Now, we use algorithm $\mathcal{C}$ on this newly constructed \SqCol\ instance.
  Let us define $g(m) := f((1+\widetilde{\varepsilon})\log(m)+\alpha)$.
  Observe that $f(\ttw) \leq g(m)$.
  Then algorithm $\mathcal{C}$ runs in time
  \[f(\ttw)N^{2^{(1-\widetilde{\varepsilon})\ttw}} = g(m)\left((m + n)^{O(1)}\right)^{2^{(1-\widetilde{\varepsilon})((1+\widetilde{\varepsilon})\log(m)+\alpha)}}.\]
  We rewrite this as
  \[g(m)(m + n)^{O(2^\xi)}\]
  where
  \[\xi := (1-\widetilde{\varepsilon})((1+\widetilde{\varepsilon})\log(m)+\alpha).\]
  Setting $\delta := \widetilde{\varepsilon}^2 > 0$, we get that $\xi = (1-\delta)\log(m)+O(1)$.
  Hence $2^\xi = O(m^{(1-\delta)})$.
  So the running time can be written as
  \[g(m)(m + n)^{O(2^\xi)} = g(m)m^{O(m^{(1-\delta)})}n^{O(m^{(1-\delta)})} = h(m)n^{o(m/\log m)}\]
  for some suitable $h(m) = g(m) \cdot m^{O(m^{(1-\delta)})}$.

  Note that we have an additional running time of $(m + n)^{O(1)}$
  for both of the two reduction steps, but this is dominated by the
  above running time. Hence we have an algorithm
  for \textsc{Colored Subgraph Isomorphism} running in total time $h(m)n^{o(m/\log m)}$ as desired.
\end{proof}

In the next two sections, we construct the reductions that prove Theorems~\ref{thm:subiso-to-restricted-vector-sum} and~\ref{thm:vector-sum-to-distance-two-coloring}.

\subsection{From Subgraph Isomorphism to Restricted Vector Sum}

First we prove Theorem~\ref{thm:subiso-to-restricted-vector-sum}, which provides the reduction from \textsc{Colored Subgraph Isomorphism} to \textsc{Restricted Vector $k$-Sum}.
We restate it here for the readers convenience.

\thmSubisoToRestrictedVectorSum*

\begin{proof}
 Suppose $(H,G,f)$ is a \textsc{Colored Subgraph Isomorphism} instance with of a $k$-vertex
 $m$-edge cubic pattern graph $H$, and a $(k \cdot n)$-vertex host graph $G$.
 We construct an equivalent \textsc{Restricted Vector $k$-Sum} instance as follows.
 
 First, we deal with simple NO-instances.
 Let $h_0 \in V(H)$ and suppose $N_H(h_0) = \{h_1, h_2, h_3\}$.
 If there are no vertices $v_i \in f^{-1}(h_i)$, $i \in \{0,1,2,3\}$, such that $v_0v_1,v_0v_2,v_0v_3 \in E(G)$ then $(H,G,f)$ is clearly a NO-instance,
 and we return a trivial NO-instance of \textsc{Restricted Vector $k$-Sum} with parameters $m,k,n$.

 Otherwise, we set the target vector to be $(0, \ldots, 0) \in \ZZ^m$.
 For each $h \in V(H)$ there is a vector list $A_h$ associated with vertex $h$.
 Let us fix an arbitrary linear order $\prec$ on $V(H)$ as well as an arbitrary bijection $b_E: E(H) \to [m]$ that associates each edge of $H$ with some dimension (all vectors have dimension $m$).
 
 Moreover, for each edge $hh' \in E(H)$, we denote by $E(G)_{hh'}$ the set of edges in $G$ that connect a vertex in $f^{-1}(h)$ to a vertex in $f^{-1}(h')$.
 We also fix an arbitrary injective mapping $b_{hh'}\colon E(G)_{hh'} \to [n^2]$ that associates a distinct non-negative integer with each edge from $E(G)_{hh'}$.
 
 Let $h_0 \in V(H)$ and suppose $N_H(h_0) = \{h_1, h_2, h_3\}$.
 For every $(v_0,v_1,v_2,v_3) \in f^{-1}(h_0) \times f^{-1}(h_1) \times f^{-1}(h_2) \times f^{-1}(h_3)$ such that $v_0v_1,v_0v_2,v_0v_3 \in E(G)$ we add a vector $a_{v_0,\{v_1,v_2,v_3\}} \in \ZZ^m$ to the list $A_{h_0}$ which is defined as
 \begin{align*}
  a_{v_0,\{v_1, v_2, v_3\}}[j] :=
  \begin{cases}
   b_{h_0h_i}(v_0v_i)  & \text{if } \exists i \in [3]\colon j = b_E(h_0h_i) \land h_0 \prec h_i\\
   -b_{h_0h_i}(v_0v_i) & \text{if } \exists i \in [3]\colon j = b_E(h_0h_i) \land h_0 \succ h_i\\
   0                   & \text{otherwise}
  \end{cases}
 \end{align*}
 for all $j \in [m]$.
 
 This almost completes the construction.
 We still have to guarantee that each vector list has exactly $n^4$ vectors.
 So far, each vector list has at most $n^4$ vectors.
 Recall that we already dealt with simple NO-instances in the beginning.
 Hence, each vector list contains at least one vector.
 So we can simply fill each vector list by duplicating any of its vectors until the list has size exactly $n^4$.
 
 \begin{correctness}
  We first argue that the vectors have the correct format, i.e., Conditions \ref{item:restricted-vector-list-1} - \ref{item:restricted-vector-list-4} are satified.
  Clearly, Conditions \ref{item:restricted-vector-list-1} and \ref{item:restricted-vector-list-2} are satisfied.

  For Condition \ref{item:restricted-vector-list-3}, fix a node $h_0 \in V(H)$ and suppose $N_H(h_0) = \{h_1, h_2, h_3\}$.
  Observe that the list $A_{h_0}$ is a node-representing vector list if it has exactly $3$ non-zero dimensions.
  However, this is immediately clear from the definition since each vector $a_{v_0,\{v_1, v_2, v_3\}} \in A_{h_0}$ has non-zero dimensions $b_E(h_0h_1)$, $b_E(h_0h_2)$ and $b_E(h_0h_3)$.
  
  For Condition \ref{item:restricted-vector-list-4}, let us fix some $j \in [m]$.
  Since $b_E$ is a bijection, there are unique $h,h' \in V(H)$ such that $hh' \in E(H)$ and $b_E(hh') = j$.
  By construction, $j$ is a non-zero dimension exactly in the vector lists $A_h$ and $A_{h'}$.
  
  \medskip
  
  Next, we prove that the constructed instance of \textsc{Restricted Vector $k$-Sum} is equivalent to the input instance.
  
  First suppose that $(H,G,f)$ is a YES-instance.
  This means there is a mapping $\sigma\colon V(H) \to V(G)$ such that $\sigma(h) \in f^{-1}(h)$ for every $h \in V(H)$, and $\sigma(h)\sigma(h)' \in E(G)$ for all $hh' \in E(H)$.
  For every $h_0 \in V(H)$ with neighbors $N_H(h_0) = \{h_1, h_2, h_3\}$, we choose from $A_{h_0}$ the vector $a_{\sigma(h_0),\{\sigma(h_1), \sigma(h_2), \sigma(h_3)\}}$.
  Note that this vector exists since $\sigma(h_0)\sigma(h_1), \sigma(h_0)\sigma(h_2), \sigma(h_0)\sigma(h_3) \in E(G)$.
  
  We need to show that all selected vectors sum up to the target vector $(0, \ldots, 0)$.
  Towards this end, fix a dimension $j \in [m]$ and let $hh' \in E(H)$ such that $b_E(hh') = j$.
  As we already observed above, $j$ is a non-zero dimension exactly in the vector lists $A_h$ and $A_{h'}$.
  Without loss of generality suppose that $h \prec h'$.
  Also let $h_1,h_2$ denote the other two neighbors of $h$, and let similarly $h_1',h_2'$ denote the other two neighbors of $h'$.
  Then $a_{\sigma(h),\{\sigma(h'),\sigma(h_1),\sigma(h_2)\}}[j] = b_{hh'}(\sigma(h)\sigma(h'))$ and
  $a_{\sigma(h'),\{\sigma(h),\sigma(h_1'),\sigma(h_2')\}}[j] = - b_{hh'}(\sigma(h)\sigma(h'))$.
  So their sum is $0$ as desired.
  Since this is true for all dimensions, it follows that the chosen vectors form a solution to the \textsc{Restricted Vector $k$-Sum} instance.

  Conversely, suppose there exists a choice of vectors that is a solution to the \textsc{Restricted Vector $k$-Sum} instance.
  We define $\sigma\colon V(H) \rightarrow V(G)$ in such a way that, for all $h \in V(H)$, we have $\sigma(h) = v$ where $a_{v,\{v_1, v_2, v_3\}}$ is the vector chosen from the list $A_h$.
  Clearly, $\sigma(h) \in f^{-1}(h)$ for every $h \in V(H)$.
  
  So it remains to show that $\sigma(h)\sigma(h)' \in E(G)$ for all $hh' \in E(H)$.
  Fix an edge $hh' \in E(H)$ and let $j \coloneqq b_E(hh')$ denote the corresponding dimension.
  Without loss of generality suppose that $h \prec h'$.
  Recall that $j$ is a non-zero dimension exactly in the vector lists $A_h$ and $A_{h'}$.
  
  Suppose $a_{v,\{v', v_1, v_2\}}$ is the vector chosen from list $A_h$ where $v' \in f^{-1}(h')$, and $a_{w,\{w', w_1, w_2\}}$ is the vector chosen from list $A_{h'}$ where $w' \in f^{-1}(h)$.
  Then $a_{v,\{v', v_1, v_2\}}[j] = b_{hh'}(vv')$ and $a_{w,\{w', w_1, w_2\}}[j] = -b_{hh'}(ww')$.
  Since both entries sum up to $0$ it follows that $b_{hh'}(vv') = b_{hh'}(ww')$ which implies that $vv' = ww'$.
  So $v = w'$ and $w = v'$.
  Since $vv' \in E(G)$ by definition of the set $A_h$ it follows that $\sigma(h)\sigma(h') = vw \in E(G)$ as desired.
 \end{correctness}
 
 \begin{runningtime}
  In the construction, for each vertex $h_0 \in V(H)$ with neighbors $h_1,h_2,h_3$, we iterate over all 4-tuples
  $(v_0, v_1, v_2, v_3) \in f^{-1}(h_0) \times f^{-1}(h_1) \times f^{-1}(h_2) \times f^{-1}(h_3)$ as well as all dimensions $j \in [m]$.
  Since we do a constant number of operations for each such choice, the construction takes time $O(kn^4m)$.
  We have $m = 3k/2$, so this can be rewritten as $O(m^2n^4)$.
  This is clearly the dominating term in the running time.
 \end{runningtime}
\end{proof}

\subsection{From Restricted Vector Sum to Square Coloring}

We now show Theorem~\ref{thm:vector-sum-to-distance-two-coloring},
which provides a way to convert a \textsc{Restricted Vector $k$-Sum}
instance to an equivalent \SqCol\ instance of low
treewidth. We restate it here for the readers convenience.

\thmVectorSumToDistanceTwoColoring*

Our proof of this theorem is very long and highly technical. It spans the rest of this section.

Let $\varepsilon > 0$ be given.
We choose $m_{\varepsilon}$ in such a way that
\begin{equation}
 \label{eq:m-epsilon-def}
 m^{\varepsilon} > 2(1+\varepsilon)\log m + 6
\end{equation}
for all $m > m_{\varepsilon}$ which is certainly possible.
We explain this choice later.
Continuing, let $m>m_{\varepsilon}$ also be given.
Finally, let a \textsc{Restricted Vector $k$-Sum} instance with parameters $m,k,n$ be given, for $k,n \in \ZZp$.
Let $A_1, \ldots, A_k$ denote the lists of vectors of the instance.

\subsubsection{Preparation}

Before we describe the reduction algorithm, we note a few key definitions and details used in the reduction.

\paragraph{Colorless and forced-color vertices.}

We use colorless vertices in our construction.
These are normal vertices that are not assigned a color.
Note that neighboring vertices of colorless vertices must still have pairwise distinct colors.
At the end of the reduction, we describe how to get rid of these colorless vertices again.

Similarly, we use vertices with a predetermined color, e.g., when we say that we create a ``\red\ vertex''.
These are normal vertices that we are connected in such a way that they are forced to assume the color we assign (\red\, in this case).
Of course, \red\ is not a fixed color, but is identified as being the color that is used to color a certain vertex somewhere in the graph.

\paragraph{Colors used in reduction.}

We now define our target number of colors $q$.
That is, the result of the reduction is a YES-instance if and only if it is colorable using $q$ colors.

The colors we use are divided into three categories:

\begin{description}
 \item[Counting colors:] There are a total of $2m \cdot \targetval$ so-called \emph{counting colors}, which are grouped into color classes.
  There are $2m$ of these color classes, one pair of color classes for each dimension of the vectors from the \textsc{Restricted Vector $k$-Sum} instance.
  We number the color classes $1,\dots,2m$ and say that dimension $i \in [m]$ corresponds to color classes $2i-1$ and $2i$.
  Each color class has $\targetval$ colors.
\item[Logic colors:] We have an additional three colors which we name \red, \green\ and \blue.
 They are used for local logic computations in some of the gadgets.
\item[Neutral colors:] Finally, we have some number $q_\text{colorless}$ of colors which are used in the final step of the reduction,
where we replace the colorless vertices with colored vertices.
We later define $q_{\text{colorless}}$ to be the total number of colorless vertices in our constructed graph, but for now leave it as a variable.
\end{description}

Hence in total, we have $q = 2m \cdot \targetval + 3 + q_\text{colorless}$ many colors.

\paragraph{The central vertex $x$.}

The output graph of our reduction has a special colorless vertex $x$.
It is special because most important conditions are encoded via its neighborhood.
Accordingly, for most gadgets we construct for our reduction, an essential part of their ``output'' is how they control the coloring of neighbors of $x$.

\paragraph{Relations and gadgets.}

To describe the reduction, we define a specific type of gadget that is used throughout.
Each gadget has a set of input vertices $\Vin$, a set of output vertices $\Vout$, and a special set $\Vx$.
These three sets are always disjoint.
All vertices contained in $\Vx$ are adjacent to the central vertex $x$, and colors appearing on those vertices are used to encode vectors.
We thus refer to the set $\Vx$ also as the \emph{vector output} of the gadget.
In the input, we distinguish between single vertices and sets of vertices which we call \emph{color class inputs}.
Each gadget has some number $d$ of color class inputs which are always denoted by $I_1,\dots,I_d$.
Each set $I_j$ contains $2n^6$ vertices and the reader should think of each set being colored by one color class of the counting colors.

The behaviour of a gadget is defined by a set $\Cin$ and a relation $\CRel$.
The set $\Cin$ contains the set of colorings $\chiin\colon\Vin \to [q]$ of the input vertices that may occur (i.e., the gadget may behave arbitrarily if the input vertices are colored by some coloring not contained in $\Cin$).
The relation $\CRel$ defines what combinations of output colorings are allowed for what inputs.
Formally, the relation $\CRel$ is a set of tuples $(\chiin, \chiout, v_x)$ where $\chiin\colon\Vin \to [q]$ is a coloring from $\Cin$, and $\chiout\colon \Vout \to [q]$, and $v_x \in \ZZnn^d$ (where $d$ denotes the number of color class inputs).
For each $j \in [d]$, the $j$-th entry of $v_x$ corresponds to the $j$-th input color class.
Intuitively speaking, it contains the number of output vertices in $\Vx$ that are colored using colors from $\chiin(I_j)$.
More formally, each gadget has a corresponding function $p: (\Vx \to [q]) \to \ZZnn^d$ which maps colorings of $\Vx$ to output vectors.
For a given coloring $\chiin: \Vin \to [q]$ (which is always clear from context) and a coloring $\chix: \Vx \to [q]$, the $j$-th entry of $p(\chi_x)$ is
\begin{equation*}
  (p(\chix))[j] := |\chiin(I_j) \cap \chix(\Vx)|
\end{equation*}

Our definitions of $\CRel$ always ensure that $\sum_{j=1}^{d}v_x[j] = |\Vx|$, i.e., that all colors on the $x$-adjacent vertices are distinct.

We define each gadget in a \emph{specification table} containing the parameters that the construction and behaviour of the gadget depends on,
a description of the sets of interface vertices $\Vin, \Vout$ and $\Vx$, as well as a description of the sets $\Cin$ and $\CRel$.

For each gadget we use, we provide a construction and then prove that this construction does indeed conform to the behaviour described.
We define the latter formally as follows.

\begin{definition}
 We say that a gadget $G$ \emph{behaves according to $(\Cin,\CRel)$} if it satisfies the following properties:
 \begin{description}
  \item[Output guarantee:] 
   For every square $q$-coloring $\chi$ of $G$ such that $\chi|_{\Vin} \in \Cin$,
   it holds that
   \[(\chi|_{\Vin}, \chi|_{\Vout},p(\chi|_{\Vx})) \in \CRel.\]
  \item[Existence guarantee:]
   For every $(\chiin, \chiout, v_x) \in \CRel$ and every vector $S = (S_1, \ldots, S_d)$ with $S_1 \subseteq \chiin(I_1),S_2 \subseteq \chiin(I_2), \dots, S_d \subseteq \chiin(I_d)$
   such that $|S_j| = v_x[j]$ for all $j \in [d]$, there exists a square $q$-coloring $\chi$ of $G$ such that
   $\chi|_{\Vin} = \chiin$, $\chi|_{\Vout} = \chiout$ and $\chi(\Vx) \cap \chiin(I_j) = S_j$ for all $j \in [d]$.
\end{description}
Furthermore, we say that a gadget $G$ \emph{satisfies a specification table} if $G$ behaves according to the relation defined by the table entries specifying $\Cin$ and $\CRel$.
\end{definition}

We are now ready to describe our reduction.

\subsubsection{The Top-Level Structure}

In this section, we describe the reduction on a very high level assuming the existence of certain gadgets and prove its correctness.
More precisely, we assume the existence of three gadgets and use them to construct the reduction output.
We prove that this output is equivalent to the \textsc{Restricted Vector $k$-Sum} input instance.
The construction of the gadgets we use -- which is very involved -- is then given in the next section.

The gadgets we use here are called the \emph{subset gadget}, the \emph{color class copy gadget} and the \emph{vector selection gadget}.

\paragraph{The subset gadget.}
As mentioned above, we specify the inputs, outputs and behaviour of gadgets via a gadget specification table.
Table~\ref{tbl:summary-subset} does this for the subset gadget.

\begin{table}[h]
\centering
\begin{tabular}{ |p{2cm}|p{12cm}| }
  \hline

  \multicolumn{2}{| p{14cm} |}{Subset Gadget}\\
  
  \hline
  
  parameters & $\alpha, \beta \in \ZZp$ with $\alpha \geq \beta$\\
  
  \hline
  
  $\Vin$ & $\alpha$ vertices \\

  \hline
  
  $\Vout$ & $\beta$ vertices \\
  
  \hline
  
  $\Vx$ & $\emptyset$ \\
  
  \hline

  $\Cin$ & $|\chiin(\Vin)| = |\Vin|$ \\

  \hline
  
  $\CRel$ & $\chiout(\Vout) \subseteq \chiin(\Vin)$ and $|\chiout(\Vout)| = |\Vout|$ \\

  \hline
\end{tabular}
\caption{The specification table of the subset gadget}
\label{tbl:summary-subset}
\end{table}

To build intuition on gadget specification tables, let us explicitly state what the table expresses.
The first row tells us that the construction and behaviour of the subset gadget depend on two parameters $\alpha, \beta \in \ZZp$ with $\alpha \geq \beta$.
The next two rows specify that the gadget has $\alpha$ input vertices and $\beta$ output vertices.
Recall that in all our gadgets, the sets of input and output vertices are always disjoint.
The fourth row tells us that none of the vertices of the subset gadget are adjacent to the central vertex $x$ (as defined in the section on preparation).
The fifth row, which specifies the set $\Cin$, tells us that we only care about the behaviour of the subset gadget if the input coloring colors the input vertices with pairwise distinct colors.
Finally, the last row specifies the relation $C_R$, namely that the output coloring uses pairwise distinct colors that are a subset of the colors used in the input coloring.
Stated more intuitively, the output colors are a subset of the input colors.
Note that if $\alpha = \beta$, the output coloring uses exactly the input colors.
In this case, we also call the gadget an \emph{equality gadget}.

The subset gadget has to satisfy both the output guarantee and the existence guarantee.
In this case, the output guarantee states that as long as the input consists of pairwise distinct colors, the output has to be a subset of those colors.
The existence guarantee states that for any input coloring such that its colors are pairwise distinct and any output coloring such that its colors are a subset of the input colors, there exists a square $q$-coloring of the subset gadget that expands on those two colorings.

We use the subset gadget a lot in our reduction, mainly as equality gadget for copying around colors or groups of colors.

\paragraph{The color class copy gadget.}

Next, we describe the color class copy gadget.
It groups the counting colors into the color classes and creates a copy of each color class.
The gadget is formally specified by Table~\ref{tbl:summary-color-class-copy}.

\begin{table}[h]
\centering
\begin{tabular}{ |p{2cm}|p{12cm}| }
  \hline
  
  \multicolumn{2}{| p{14cm} |}{Color Class Copy Gadget}\\
  
  \hline
  
  parameters & $m,n \in \ZZp$ \\
  
  \hline
  
  $\Vin$ & vertex groups $X_1, \ldots, X_{2m}$, each of size $\targetval$; define $X := \bigcup_{i=1}^{2m} X_i$ \\
  
  \hline
  
  $\Vout$ & vertex groups $Y_1, \ldots, Y_{2m}$, each of size $\targetval$; define $Y := \bigcup_{i=1}^{2m} Y_i$ \\

  \hline
  
  $\Vx$ & $\emptyset$ \\

  \hline

  $\Cin$ & $\chiin(u) \neq \chiin(v)$ for all distinct $u,v \in X$ \\
  
  \hline

  $\CRel$ & $\chiin(X_i) = \chiout(Y_i)$ for all $i \in [2m]$ \\

  \hline
\end{tabular}
\caption{The specification table of the color class copy gadget}
\label{tbl:summary-color-class-copy}
\end{table}

Basically, this gadget takes as input a grouping of distinct colors on $X_1, \ldots, X_{2m}$, and guarantees that the same color groups appear on the out vertex groups $Y_1, \ldots, Y_{2m}$.
The reader may note that this gadget could be constructed by simply connecting $X_i$ to $Y_i$ via a subset gadget.
However, to bound the treewidth of the graph constructed by the entire reduction, we require that there is a separator of small size between $X$ and $Y$.
Towards this end, we need to design a specialized gadget.

\paragraph{The vector selection gadget.}

Finally, we come to the vector selection gadget.
Our reduction contains $k$ of these gadgets, one for each of the vector groups $A_1, \ldots, A_k$.
Its construction and that of its subgadgets is very involved, and turns out to be the longest part in the description of the reduction.

The purpose of this gadget is to simulate the selection of a vector from a node-representing vector list $A$ as defined in Definition~\ref{defn:node-representing-vector-group}.
Recall that this means there exist positive dimensions $D^+$ and negative
dimensions $D^-$ such that $|D^+ \cup D^-| = |D^+| + |D^-| = 3$, and that
we call $D^+ \cup D^-$ the non-zero dimensions. We define $D^+ \cup
D^- =: \{z_1, z_2, z_3\}$.

We remove zero-entries from each $a \in A$ and denote the result
\[\NZ(a) \coloneqq (a[z_1], a[z_2], a[z_3]).\]
The vector selection gadget has six color class inputs, and as such we have $d = 6$ for the dimension $d$ of the vector output.
We call a vector output $v_x \in \ZZ^6$ of this gadget a
\emph{vector-generated output of $A$} if there is some $a \in A$ such that
\[v_x =
(\halftargetval+\NZ(a)[1], \halftargetval-\NZ(a)[1],
\halftargetval+\NZ(a)[2], \halftargetval-\NZ(a)[2],
\halftargetval+\NZ(a)[3], \halftargetval-\NZ(a)[3]).\]
In this case we say that \emph{$v_x$ is generated by $a$ (or $\NZ(a)$)}.
The behaviour of the gadget is then specified by Table~\ref{tbl:summary-vector-selection}.

\begin{table}[h]
\centering
\begin{tabular}{ |p{2cm}|p{12cm}| }
  \hline \multicolumn{2}{| p{14cm} |}{Vector Selection Gadget}

  \\ \hline
  
  parameters & a node-representing vector list $A$ with parameters
  $m,n \in \ZZp$ \\

  \hline
  
  $\Vin$ &
  \vspace{-0.2cm}
  \begin{itemize}[noitemsep, nolistsep]
  \item six groups of $\targetval$ vertices $I_1, I_2, I_3, I_4, I_5, I_6$ (the color class inputs)
  \item three ``logic'' input vertices $r,g,b$
  \end{itemize}
  \vspace{-0.4cm}

  \\ \hline
  
  $\Vout$ & three ``logic'' output vertices $r',g',b'$

  \vspace{-0.4cm}

  \\ \hline

  $\Vx$ & a total of $6\halftargetval$ vertices\\

  \hline
  
  $\Cin$ &

  \vspace{-0.2cm}
  \begin{itemize}[noitemsep, nolistsep]
  \item $|\chiin(I_{1} \cup I_{2} \cup I_{3} \cup I_{4} \cup I_{5}
    \cup I_{6} \cup \{r,g,b\})| = $

    $|I_{1} \cup I_{2} \cup I_{3} \cup I_{4} \cup I_{5} \cup I_{6}
    \cup \{r,g,b\}|$
  \item $\chiin(r) = \red, \chiin(g) = \green, \chiin(b) = \blue$
  \end{itemize}
  \vspace{-0.4cm}
  \\ \hline
  
  $C_R$ &
  \vspace{-0.2cm}
  \begin{itemize}[noitemsep, nolistsep]
  \item 
      $\chiout(r') = \chiin(r)$,
      $\chiout(g') = \chiin(g)$ and
      $\chiout(b') = \chiin(b)$
  \item $v_x$ is a vector-generated output of $A$
  \end{itemize}
  \vspace{-0.4cm}

  \\ \hline

\end{tabular}
\caption{The specification table of the vector selection gadget}
\label{tbl:summary-vector-selection}
\end{table}

This specification is slightly more involved, so let us go through it
one by one. The first row simply tells us that the construction and
behaviour of the gadget depends on a node-representing list $A$ -- this is
the vector list from which we simulate the selection of a
vector. The gadget has six color class inputs and three inputs for the logic colors \red, \green\
and \blue. Furthermore, it has three logic color outputs. If we peek at
the first condition of $\CRel$ in the last row of the table, the gadget actually
guarantees that it outputs its logic color inputs
unchanged, meaning the logic color outputs $r',g',b'$ receive the same
colors as the logic color inputs $r,g,b$, respectively. So the
interesting behaviour does not happen in the coloring of the output,
but rather in the vector output of the gadget, i.e., in the coloring
of the neighbors of $x$. For this, we look at the fourth row of the
table, which specifies that $\Vx$ consists of $6n^6$ vertices.
Looking at the second condition of $\CRel$, we see that the vector
output encoded on these vertices is a vector-generated
output. Being vector-generated means that it encodes a certain vector
from the vector list $A$, but in a slightly modified form as
specified in the definition. Finally, the fifth row says
that we only care about the behaviour of the gadget if all inputs are
colored with pairwise different colors and if $r,g,b$ are colored with
the logic colors \red, \green\ and \blue.

\paragraph{Connecting the gadgets.}

We provide a sketch of the reduction in Figure~\ref{fig:overview}.
Before we describe how we fit the gadgets described above together, let us briefly discuss the intuition behind this construction.

\begin{figure}
 \centering
 \includesvg[width=0.95\linewidth]{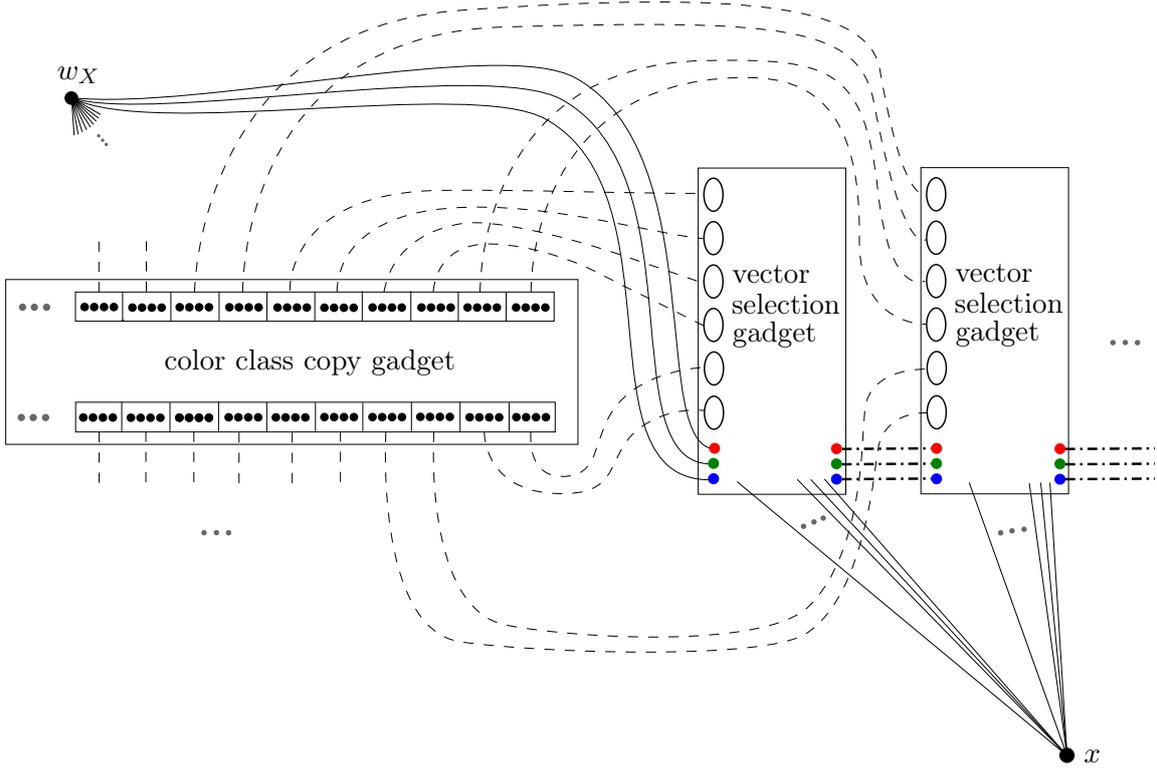}
 \caption{Overview of the construction. Dashed edges represent connections via equality gadgets.}
 \label{fig:overview}
\end{figure}

To reduce the \textsc{Restricted Vector $k$-Sum} instance to
a \SqCol\ instance, we need a concept of numbers,
vectors, summation and equality constraints in the context of square
$q$-colorings. The vector selection gadget already gives us a concept
of vectors and numbers via its vector output by coloring neighbors of
$x$ with a certain number of colors from each of its input color
classes. If two vector selection gadgets get the same color class as
one of their color class inputs, and they both expose a certain number
of colors from this class to $x$, this can be viewed as summation:
the total number of colors from that color class exposed to $x$ is the sum
of colors from that color class that are exposed to $x$ by each of the two vector selection gadgets.
Note that the sets of colors that are exposed have to be pairwise different, since all neighbors of $x$ are within distance $2$.
This gives us a concept of lesser-than-or-equal constraints:
the sum of the number of colors from a certain color class that are exposed by vector selection gadgets has to be less than the total number of colors from that color class.
Finally, we can encode an equality constraints $x = y$ by encoding it as the two lesser-than-or-equal constraints $x \leq y$ and $y \leq x$.

With this in mind, let us connect the gadgets we have so far.
First, we create the central vertex $x$ and a color class copy gadget.
We denote the latter by $G^{(\text{cpy})}$.
Next, we create $k$ vector selection gadgets, which we denote by $G^{(\text{sel},1)},\dots, G^{(\text{sel},k)}$.
For each $i \in [k]$, we define the parameter of $G^{(\text{sel},i)}$ to be the vector list $A_i$.

We identify any of the elements (e.g., sets, vertices, conditions) of $G^{(\text{sel},i)}$ or $G^{(\text{cpy})}$ by superscripting them with $(\text{sel},i)$ or $(\text{cpy})$, respectively.
For example, the input vertex set $\Vin$ of $G^{(\text{sel},1)}$ and the set $X_1$ of $G^{(\text{cpy})}$ become $\Vin^{(\text{sel},1)}$ and $X_1^{(\text{cpy})}$, respectively.

For each $i \in [k]$, we connect all vertices in $\Vx^{(\text{sel},i)}$ to $x$ via an edge.
In fact, these are the only vertices in the entire graph connected to $x$.

We create a colorless vertex $w_X$ and connect it to all vertices in $\Vin^{(\text{cpy})}$, as well as to the vertices $r^{(\text{sel},1)}, g^{(\text{sel},1)}, b^{(\text{sel},1)}$.
This ensures that all colors used in the color classes, as well as the logic colors, are pairwise distinct.
We also ensure that the logic colors are passed to each vector selection gadget by using one equality gadget each to connect, for each $i \in [k-1]$, the vertex
$(r')^{(\text{sel},i)}$ with $r^{(\text{sel},i+1)}$, $(g')^{(\text{sel},i)}$ with $g^{(\text{sel},i+1)}$ and $(b')^{(\text{sel},i)}$ with $b^{(\text{sel},i+1)}$.
Here, connecting two vertices via an equality gadget simply means that one of the vertices is the input and the other is the output of the equality gadget.

All that remains is to route the color classes from the color class copy gadget to the vector selection gadgets.
Recall that in a \textsc{Restricted Vector $k$-Sum} instance, each dimension $j \in [m]$ is a non-zero dimension in exactly two of the vector lists.
Also, each vector list is a node-representing vector list and thus has exactly three non-zero dimensions.
Finally, recall that dimension $j \in [m]$ corresponds to color classes $2j-1$ and $2j$.

For each $i \in [k]$, let vector list $A_i$ have non-zero dimensions $D^+ \cup D^- =: \{z^{(i)}_1, z^{(i)}_2, z^{(i)}_3\}$.
The corresponding color class indices are the elements of the ordered set
\[L^{(i)} \coloneqq (\ell^{(i)}_1,\dots,\ell^{(i)}_6) = (2z^{(i)}_1-1, 2z^{(i)}_1,2z^{(i)}_2-1, 2z^{(i)}_2, 2z^{(i)}_3-1, 2z^{(i)}_3).\]
We wish to connect the color classes with indices from $L^{(i)}$ to the inputs of $G^{(\text{sel},i)}$ in some way.
Note that by the fact that each dimension is a non-zero dimension in at most two vector lists, this means that each color class is an input to at most two vector selection gadgets.
In other words, $|\{i \mid \ell \in L^{(i)}\}| = 2$ for all $\ell \in [2m]$.

Hence, we can go through all $i \in [k]$ and $\ell^{(i)}_j \in L^{(i)}$.
Let us denote $\ell \coloneqq \ell^{(i)}_j$ for simplicity.
If $\ell \notin \bigcup_{i' \in [i-1]}L^{(i')}$, we connect $I^{(\text{sel},i)}_j$ to $X^{(\text{cpy})}_{\ell}$ via a subset gadget.
Otherwise, we connect $I^{(\text{sel},i)}_j$ to $Y^{(\text{cpy})}_{\ell}$ via a subset gadget.
Again, connecting two vertex groups via a subset gadget simply means that one vertex group is the input of the subset gadget and the other is the output.

This concludes the top-level construction of the reduction output instance.
Let us denote the resulting graph by $G^*$.

\paragraph{Correctness.}

We start by proving that the constructed \SqCol\ instance is equivalent to the given \textsc{Restricted Vector $k$-Sum} instance (assuming the described gadgets exist).

\begin{lemma}
 \label{lem:top-level-correctness}
 There are $a_1 \in A_1, \dots, a_k \in A_k$ such that $\sum_{i=1}^k a_i = 0$ if and only if $G^*$ has a square $q$-coloring.
\end{lemma}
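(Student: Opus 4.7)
The plan is to prove both directions by exploiting the output and existence guarantees of the three gadget types, combined with the fact that every vertex in $\bigcup_i \Vx^{(\text{sel},i)}$ is a neighbor of the central vertex $x$ and hence all these vertices must receive pairwise distinct colors. The arithmetic bridge between the two problems is the computation: if dimension $j\in[m]$ is a non-zero dimension in exactly the two lists $A_{i_1},A_{i_2}$, then the color class $2j-1$ (of size $2n^6$) is shared as input to $G^{(\text{sel},i_1)}$ and $G^{(\text{sel},i_2)}$, and a vector-generated output demands $n^6+a_{i_1}[j]$ and $n^6+a_{i_2}[j]$ colors from this class respectively; these demands can be realised by disjoint subsets of the class iff $(n^6+a_{i_1}[j])+(n^6+a_{i_2}[j])\le 2n^6$, that is, iff $a_{i_1}[j]+a_{i_2}[j]\le 0$. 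The analogous inequality for class $2j$ yields $a_{i_1}[j]+a_{i_2}[j]\ge 0$, so disjointness on both paired classes is equivalent to $\sum_i a_i[j]=0$.

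\textbf{Forward direction.} Assume $a_1\in A_1,\dots,a_k\in A_k$ with $\sum_i a_i=0$. I build $\chi$ top-down. First, colour $\Vin^{(\text{cpy})}$ so that each $X_\ell$ receives exactly the $2n^6$ colours of counting class $\ell$, and colour $r^{(\text{sel},1)},g^{(\text{sel},1)},b^{(\text{sel},1)}$ with \red,\green,\blue; since $w_X$ is colourless this is a consistent choice. Apply the existence guarantee of $G^{(\text{cpy})}$ to extend to $Y$ with $\chi(Y_\ell)=\chi(X_\ell)$. Use the equality/subset gadgets to propagate the logic colours and the color classes to the inputs of every $G^{(\text{sel},i)}$. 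For each shared class, partition it into two disjoint subsets of sizes matching the demands of the two selection gadgets using it; by the arithmetic above (with the sum-to-zero hypothesis), this partition exists. Finally, apply the existence guarantee of each $G^{(\text{sel},i)}$ with its prescribed vector-generated output $v_x$ and the chosen subsets $S_1^{(i)},\dots,S_6^{(i)}$. All constraints inside each gadget are satisfied by the gadgets themselves; the only ``global'' constraint is that the neighbours of $x$ receive distinct colours, which is exactly the disjointness we just secured.

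\textbf{Backward direction.} Assume $\chi$ is a square $q$-colouring of $G^*$. The colourless vertex $w_X$ is adjacent to every vertex of $\Vin^{(\text{cpy})}$ and to $r^{(\text{sel},1)},g^{(\text{sel},1)},b^{(\text{sel},1)}$, so all these vertices receive pairwise distinct colours; after relabelling colours I may assume $\chi(r^{(\text{sel},1)})=\red$ etc. This places $\chi|_{\Vin^{(\text{cpy})}}$ inside $\Cin^{(\text{cpy})}$, so the output guarantee of $G^{(\text{cpy})}$ gives $\chi(Y_\ell)=\chi(X_\ell)$. The chains of equality gadgets and the subset gadgets translate this into $\chi|_{\Vin^{(\text{sel},i)}}\in \Cin^{(\text{sel},i)}$ for every $i$. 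Then the output guarantee of each $G^{(\text{sel},i)}$ yields a vector $a_i\in A_i$ generating the vector output $v_x^{(i)}$. Because $x$ is adjacent to every vertex of $\bigcup_i \Vx^{(\text{sel},i)}$, for each counting class the subsets chosen by the (at most two) selection gadgets sharing it must be disjoint. Applying this on classes $2j-1$ and $2j$ for every dimension $j$ and invoking the arithmetic above in the reverse direction forces $\sum_i a_i[j]=0$ for all $j$, hence $\sum_i a_i=0$.

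\textbf{Main obstacle.} The substantive content is the bookkeeping that links the cardinalities prescribed by a vector-generated output to the disjointness constraint induced by $x$. Everything else is a mechanical combination of the existence and output guarantees of the three gadgets, taking care that each application is invoked on an input that lies in the relevant $\Cin$; the gadgets themselves (whose construction is deferred) do the hard local work.
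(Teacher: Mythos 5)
Your proposal is correct and follows essentially the same route as the paper's proof: forward, color the classes, propagate via the copy/equality/subset gadgets, pick the vector-generated outputs for the solution vectors and split each shared counting class into disjoint subsets (possible exactly because $a_{i_1}[j]+a_{i_2}[j]=0$), then invoke the existence guarantees; backward, use $w_X$ and the output guarantees to recover $a_i\in A_i$, and the distinctness of colors at $x$ on the paired classes $2j-1,2j$ to get the two inequalities forcing $\sum_i a_i[j]=0$. The one point you gloss over is ruling out color conflicts between inner vertices of \emph{different} gadgets when stitching the existence-guarantee colorings together; the paper discharges this via Property~\ref{prop:only-colorless-vertices-adjacent-to-portals}, which your assertion that the only global constraint lives at $x$ (and $w_X$) implicitly relies on.
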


For the proof, we need the following property, which is easy to verify for each of our gadget constructions.

\begin{property}
 \label{prop:only-colorless-vertices-adjacent-to-portals}
 For any gadget $G$ in our reduction except for the color class copy gadget, and for any vertex $v \in V(G)$ that is adjacent to a vertex from $\Vin \cup \Vout \cup \Vx$ of $G$, it holds that $v$ is colorless.
\end{property}

This property holds even for the lower-level gadgets which we
specify later. Stated more intuitively, its says that for all but one
gadget, any outwardly exposed vertices are only connected to colorless
vertices within the gadget. This makes our correctness proofs easier
because we do not need to worry about color conflicts between two
inner vertices (i.e., non-input and non-output vertices) that are from
different gadgets.

\begin{proof}[Proof of Lemma \ref{lem:top-level-correctness}]
 For the forward direction, assume the \textsc{Restricted Vector $k$-Sum} instance has a solution $a_1 \in A_1, \dots, a_k \in A_k$ such that $\sum_{i=1}^k a_i = 0$.
 We construct a square coloring $\chi$ of $G^*$ as follows.
 First, we color $\Vin^{(\text{cpy})}$ using all of the $2m\cdot \targetval$ counting colors.
 We also color $r^{(\text{sel},1)}, g^{(\text{sel},1)}, b^{(\text{sel},1)}$ using the logic colors \red, \green\ and \blue, respectively.
 Note that we have now already used every available color once.
 
 We can now arbitrarily color $\Vout^{(\text{cpy})}$ such that the coloring satisfies $\CRel^{(\text{cpy})}$, i.e., the color groupings are preserved.
 We may then use the existence guarantee of $G^{(\text{cpy})}$ to find a coloring for all vertices of $G^{(\text{cpy})}$.
 Now recall that each color class input of a vector selection gadget is connected via a subset gadget to either $X_i^{(\text{cpy})}$ for some $i \in [2m]$ or a $Y_i^{(\text{cpy})}$ for some $i \in [2m]$.
 We color it such that the condition $\CRel$ of the subset gadget is fulfilled, i.e., such that the color class input gets assigned all colors of that color class.
 We can then use the existence guarantee of each subset gadget to find a coloring for the vertices of the gadget.
 We also color, for every $i \in [k]$, the vertices $r^{(\text{sel},i)}, g^{(\text{sel},i)}, b^{(\text{sel},i)}$ with \red, \green\ and \blue, respectively.
 We do the same with $(r')^{(\text{sel},i)}, (g')^{(\text{sel},i)}, (b')^{(\text{sel},i)}$ for each $i \in [k]$.
 We can now use the existence guarantees for the subset gadgets that connect these logic inputs and logic outputs.
 
 All that remains is to expand the coloring $\chi$ we constructed so far to the vertices of the vector selection gadgets.
 Note that for each $i \in [k]$, the input and output vertices of vector selection gadget $G^{(\text{sel}, i)}$ are already colored.
 In particular, for each $\ell := \ell_j^{(i)} \in L^{(i)}$ (with $j \in [6]$),
 we know that $I^{(\text{sel},i)}_j$ is connected to either $X^{(\text{cpy})}_\ell$ or $Y^{(\text{cpy})}_\ell$ via a subset gadget,
 and is hence colored using the color class with index $\ell$.
 To use the existence guarantees, it suffices to specify an output vector $v_x^{(\text{sel}, i)}$ and a vector $S^{(\text{sel},i)}$ that satisfy the conditions of the existence guarantees.
 In other words, we need that $(\chi|_{\Vin^{(\text{sel},i)}}, \chi|_{\Vout^{(\text{sel}, i)}}, v_x^{(\text{sel},i)}) \in \CRel^{(\text{sel}, i)}$
 and $S^{(\text{sel}, i)} = (S_1^{(\text{sel}, i)}, \dots, S_6^{(\text{sel}, i)})$ with $S_1^{(\text{sel}, i)} \subseteq \chi(I_1^{(\text{sel}, i)}), \dots, S_6^{(\text{sel}, i)} \subseteq \chi(I_6^{(\text{sel}, i)})$
 such that $|S_j^{(\text{sel}, i)}| = (v_x^{(\text{sel},i)})[j]$ for all $j \in [6]$.
 The former means that the tuple must satisfy the conditions specified by $\Cin^{(\text{sel}, i)}$ and $C_R^{(\text{sel}, i)}$.
 The conditions in $\Cin^{(\text{sel}, i)}$ and the first condition of $C_R^{(\text{sel}, i)}$ are obviously satisfied.
 Hence it suffices to choose $v_x^{(\text{sel}, i)}$ such that it is a vector-generated output of $A_i$.
 We choose it to be the vector-generated output generated by $a_i$ from the solution of the \textsc{Restricted Vector $k$-Sum} instance, i.e., we set
 \begin{align*}
  v_x^{(\text{sel}, i)} \coloneqq (&\halftargetval+\NZ(a_i)[1], \halftargetval-\NZ(a_i)[1],\\
                                   &\halftargetval+\NZ(a_i)[2], \halftargetval-\NZ(a_i)[2],\\
                                   &\halftargetval+\NZ(a_i)[3], \halftargetval-\NZ(a_i)[3]).
 \end{align*}
 
 Next, we specify $S^{(\text{sel},i)}$.
 We need all neighbors of the central vertex $x$ to have distinct colors which means that $S^{(\text{sel},i)}_j \cap S^{(\text{sel},i')}_{j'} = \emptyset$ for all $i,i' \in [k]$ and $j,j' \in [6]$ such that $(i,j) \neq (i',j')$.
 Consider some $d \in [m]$.
 Let $i,i' \in [k]$ with $i < i'$ be the indices of the two vector lists in which $d$ is a non-zero dimension, i.e., $d = z^{(i)}_j$ and $d = z^{(i')}_{j'}$ for some $j,j' \in [3]$.
 We know that $X_{2d-1}^{(\text{cpy})}$ is connected via subset gadget to $I^{(\text{sel},i)}_{2j-1}$ and $X_{2d}^{(\text{cpy})}$ is connected to $I^{(\text{sel},i)}_{2j}$.
 Furthermore $Y_{2d-1}^{(\text{cpy})}$ is connected to $I^{(\text{sel},i')}_{2j'-1}$ and $Y_{2d}^{(\text{cpy})}$ is connected to $I^{(\text{sel},i')}_{2j'}$.
 
 We have $(v_x^{(\text{sel},i)})[2j-1] = \halftargetval+\NZ(a_i)[d]$ and
 $(v_x^{(\text{sel},i)})[2j] = \halftargetval-\NZ(a_i)[d]$ and
 $(v_x^{(\text{sel},i')})[2j-1] = \halftargetval+\NZ(a_{i'})[d]$ and
 $(v_x^{(\text{sel},i')})[2j] = \halftargetval-\NZ(a_{i'})[d]$.
 Since $i,i'$ are the only indices of vector lists where $d$ is a non-zero dimension, we have $a_i[d] a_{i'}[d] = 0$.
 Thus, $(v_x^{(\text{sel},i)})_{2j-1} + (v_x^{(\text{sel},i')})_{2j-1} = \targetval$ and $(v_x^{(\text{sel},i)})_{2j} + (v_x^{(\text{sel},i')})_{2j} = \targetval$.
 
 Note that each color class has exactly $\targetval$ colors.
 Hence, we can choose $S^{(\text{sel},i)}_{2j-1}$ as any arbitrary $(v_x^{(\text{sel},i)})_{2j-1}$ colors from $\chi(I^{(\text{sel},i)}_{2j-1}) = \chi(X_{2d-1}^{(\text{cpy})})$.
 We use the rest of the colors from $\chi(X_{2d-1}^{(\text{cpy})}) = \chi(I^{(\text{sel},i')}_{2j-1})$ for $S^{(\text{sel},i')}_{2j-1}$.
 Similarly, we choose $S^{(\text{sel},i)}_{2j}$ as any arbitrary $(v_x^{(\text{sel},i)})_{2j}$ colors from $\chi(I^{(\text{sel},i)}_{2j}) = \chi(X_{2d}^{(\text{cpy})})$ and choose $S^{(\text{sel},i')}_{2j-1}$ as the rest of the colors from $\chi(X_{2d}^{(\text{cpy})}) = \chi(I^{(\text{sel},i')}_{2j})$.
 
 Note that apart from guaranteeing that the neighbors of $x$ are colored using distinct colors, it is easy to see that actually, for each counting color, there exists exactly one neighbor of $x$ that is colored using that color.
 
 We can now finally use the existence guarantees of the vector selection gadgets to find colorings for them.
 Hence $\chi$ is now a complete coloring of our graph.

 Clearly, there are no color conflicts between neighbors of $w_X$.
 We have already argued above that there are no color conflicts between neighbors of $x$.
 Furthermore, by the existence guarantees of the gadgets we used, no color conflicts can exist between two vertices from the same gadget.
 Finally, color conflicts between two vertices from different gadgets are prevented by Property \ref{prop:only-colorless-vertices-adjacent-to-portals}.
 So $\chi$ is a valid square $q$-coloring.
 
 \medskip
 
 For the backward direction suppose $G^*$ has a square $q$-coloring $\chi$.
 We need to show that there are $a_1 \in A_1, \dots, a_k \in A_k$ such that $\sum_{i=1}^k a_i = 0$.
 
 The neighbors of $w_X$ need to be colored using pairwise different colors, meaning the vertices in $\Vin^{(\text{cpy})}$ and $r^{(\text{sel},1)}, g^{(\text{sel},1)}, b^{(\text{sel},1)}$ all have different colors. 
 Hence, we can use the output guarantee of $G^{(\text{cpy})}$ to conclude that $\chi(X_i^{(\text{cpy})}) = \chi(Y_i^{(\text{cpy})})$ for all $i \in [2m]$.
 We also use the output guarantees of all subset gadgets connecting vertex groups of the color class copy gadgets to color class inputs of vector selection gadgets.
 As a result, for every $i \in [k]$ and every $\ell := \ell^{(i)}_j \in L^{(i)}$, we have that $\chi(I_j^{(\text{sel},i)}) = \chi(X_\ell^{(\text{cpy})})$.
 
 Let us call $\chi(r^{(\text{sel},1)}), \chi(g^{(\text{sel},1)}), \chi(b^{(\text{sel},1)})$ \red, \green\ and \blue, respectively.
 For every $i \in [k]$, we have that $\chi((r')^{(\text{sel},i)}) = \chi(r^{(\text{sel},i)})$, $\chi((g')^{(\text{sel},i)}) = \chi(g^{(\text{sel},i)})$ and $\chi((b')^{(\text{sel},i)})) = \chi(b^{(\text{sel},i)})$ using the output guarantee of $G^{(\text{sel},i)}$.
 Furthermore, using the output guarantees of the subset gadgets connecting the logic color outputs of $G^{(\text{sel},i)}$ to the logic color inputs of $G^{(\text{sel},i+1)}$, it holds that
 $\chi(r^{(\text{sel},i+1)}) = \chi((r')^{(\text{sel},i)})$,
 $\chi(g^{(\text{sel},i+1)}) = \chi((g')^{(\text{sel},i)})$ and
 $\chi(b^{(\text{sel},i+1)})) = \chi((b')^{(\text{sel},i)})$ for all $i \in [k-1]$.
 Hence, by induction, all these vertices are colored \red, \green\ and \blue, respectively.
 
 Let $i \in [k]$.
 We use the output guarantees of $G^{(\text{sel},i)}$ again to conclude that
 \[v_x^{(\text{sel},i)} \coloneqq p(\chi|_{\Vx^{(\text{sel},i)}})\]
 is a vector-generated output of $A_i$, i.e., there is some $a_i \in A_i$ such that
 \begin{align*}
  v_x^{(\text{sel}, i)} \coloneqq (&\halftargetval+\NZ(a_i)[1], \halftargetval-\NZ(a_i)[1],\\
                                   &\halftargetval+\NZ(a_i)[2], \halftargetval-\NZ(a_i)[2],\\
                                   &\halftargetval+\NZ(a_i)[3], \halftargetval-\NZ(a_i)[3]).
 \end{align*}
 We claim that $\sum_{i=1}^k a_i = 0$.
 To show this, let us fix some dimension $d \in [m]$ and let $i,i' \in [k]$ with $i < i'$ be the indices of the two vector lists in which $d$ is a non-zero dimension, i.e., $d = z^{(i)}_j$ and $d = z^{(i')}_{j'}$ for some $j,j' \in [3]$.
 
 We know that $\chi(I^{(\text{sel},i)}_{2j-1}) = \chi(I^{(\text{sel},i')}_{2j'-1}) = \chi(X_{2d-1}^{(\text{cpy})})$ and $\chi(I^{(\text{sel},i)}_{2j}) = \chi(I^{(\text{sel},i')}_{2j'}) = \chi(X_{2d}^{(\text{cpy})})$.
 Hence, in the graph $G^{(\text{sel},i)}$, a total of $\halftargetval+\NZ(a_i)[j]$ neighbors of $x$ are colored using colors from $\chi(X_{2d-1}^{(\text{cpy})})$ and a total of $\halftargetval-\NZ(a_i)[j]$ neighbors of $x$ are colored using colors from $\chi(X_{2d}^{(\text{cpy})})$.
 Similarly, in the graph $G^{(\text{sel},i')}$, a total of $\halftargetval+\NZ(a_{i'})[j']$ neighbors of $x$ are colored using colors from $\chi(X_{2d-1}^{(\text{cpy})})$ and a total of $\halftargetval-\NZ(a_{i'})[j']$ neighbors of $x$ are colored using colors from $\chi(X_{2d}^{(\text{cpy})})$.
 So overall, $(\halftargetval + \NZ(a_i)[j]) + (\halftargetval + \NZ(a_{i'})[j'])$ neighbors of $x$ are colored using colors from $\chi(X_{2d-1}^{(\text{cpy})})$,
 and $(\halftargetval - \NZ(a_i)[j]) + (\halftargetval - \NZ(a_{i'})[j'])$ neighbors are colored using colors from $\chi(X_{2d}^{(\text{cpy})})$.
 
 Recall that $|X_r^{(\text{cpy})}| = \targetval$ for all $r \in [2m]$.
 Since $\chi$ has to color neighbors of $x$ using pairwise distinct colors, it follows that
 \[(\halftargetval + \NZ(a_i)[j]) + (\halftargetval + \NZ(a_{i'})[j']) \leq \targetval\] and
 \[(\halftargetval - \NZ(a_i)[j]) + (\halftargetval - \NZ(a_{i'})[j']) \leq \targetval.\]
 Rewriting, we get $\NZ(a_i)[j] + \NZ(a_{i'})[j'] \leq 0$ and $\NZ(a_i)[j] + \NZ(a_{i'})[j'] \geq 0$.
 In combination, $\NZ(a_i)[j] + \NZ(a_{i'})[j'] = 0$ which implies that $a_i[d] + a_{i'}[d] = 0$ as desired.
\end{proof}

\subsubsection{Constructing the Gadgets}

Next, we construct the three gadgets used in the construction described above.
The construction of the subset gadget and color class copy gadget are fairly simple.
However, the construction of the vector selection gadget is more involved and requires us to first build a long sequence of subgadgets.

\paragraph{The subset gadget.}

We start with the construction of the subset gadget.
Let $\alpha \geq \beta \geq 1$.
Recall that an $(\alpha,\beta)$-subset gadget has $\alpha$ input vertices and $\beta$ output vertices, and it ensures that the colors used to color the output vertices are a subset of the colors used to color the input vertices (see Table \ref{tbl:summary-subset}).
A \emph{$\gamma$-equality gadget} is a $(\gamma,\gamma)$-subset gadget.

We construct the gadget $\SubsetGadget(\alpha,\beta)$ as follows.
It has $\alpha$ input vertices $\Vin$ that are all connected to a colorless vertex $a$.
The vertex $a$ is then connected to $q-\alpha$ fresh vertices $C$ to which we refer as the \emph{complement vertices}.
The complement vertices $C$ are all connected to another colorless vertex $b$, which is also connected to all output vertices $\Vout$.

Observe that the subset gadget $\SubsetGadget(\alpha,\beta)$ satisfies Property \ref{prop:only-colorless-vertices-adjacent-to-portals}.

\begin{lemma}
  The gadget $\SubsetGadget(\alpha,\beta)$ satisfies the specification table of the subset gadget (Table~\ref{tbl:summary-subset}).
\end{lemma}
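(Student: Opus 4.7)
The plan is to verify the two guarantees separately, leveraging the two colorless vertices $a$ and $b$ as ``distinctness enforcers'' for their respective neighborhoods. The key counting identity is $|\Vin| + |C| = \alpha + (q-\alpha) = q$, which forces every color in $[q]$ to appear in $\Vin \cup C$. Everything else follows from this.

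\textbf{Output guarantee.} Let $\chi$ be a square $q$-coloring of $\SubsetGadget(\alpha,\beta)$ with $\chi|_{\Vin} \in \Cin$. Every pair of vertices in $\Vin \cup C$ is at distance at most $2$ via the colorless vertex $a$, so $\chi$ assigns pairwise distinct colors to $\Vin \cup C$. Since $|\Vin \cup C| = q$, the set $\chi(\Vin \cup C)$ equals $[q]$, and in particular $\chi(C) = [q] \setminus \chiin(\Vin)$. Analogously, every pair of vertices in $C \cup \Vout$ is at distance at most $2$ via $b$, so $\chi$ assigns pairwise distinct colors to $C \cup \Vout$; hence $\chi(\Vout) \cap \chi(C) = \emptyset$. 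Combining the two facts yields $\chiout(\Vout) \subseteq \chiin(\Vin)$ and $|\chiout(\Vout)| = |\Vout|$, which is exactly the condition specified by $\CRel$.

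\textbf{Existence guarantee.} Given $(\chiin, \chiout, \emptyset) \in \CRel$ (recall $\Vx = \emptyset$, so the vector component is trivial), define a coloring $\chi$ on $V(\SubsetGadget(\alpha,\beta)) \setminus \{a,b\}$ by extending $\chiin \cup \chiout$ via any bijection from $C$ to the $q - \alpha$ colors in $[q] \setminus \chiin(\Vin)$. I need to check that $\chi$ is a valid square $q$-coloring of the gadget. The only distance-at-most-$2$ pairs among colored vertices are those through $a$ (pairs within $\Vin \cup C$) and those through $b$ (pairs within $C \cup \Vout$), together with the trivially-satisfied adjacency relations (no colored vertex is adjacent to another colored vertex). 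For pairs through $a$, distinctness holds because $\chiin$ is injective on $\Vin$ by $\Cin$, the extension on $C$ is injective, and the two color sets are disjoint by construction. For pairs through $b$, distinctness holds because $\chiout$ is injective on $\Vout$ by $\CRel$, $\chi$ is injective on $C$, and $\chi(C) = [q] \setminus \chiin(\Vin)$ is disjoint from $\chiout(\Vout) \subseteq \chiin(\Vin)$.

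\textbf{Obstacle assessment.} There is essentially no obstacle here; the construction is engineered so that the two colorless vertices act as independent ``clique-forcers'' on their respective neighborhoods, and the arithmetic $\alpha + (q - \alpha) = q$ does all the work. The only subtle point worth being explicit about is that $a$ and $b$ are not adjacent (so their neighborhoods interact only through $C$), and that since $C$ has no other neighbors besides $a$ and $b$, no additional distance-$2$ constraints arise between $\Vin$ and $\Vout$ internally to the gadget. The final line of the proof should also record that Property~\ref{prop:only-colorless-vertices-adjacent-to-portals} holds: the only vertices adjacent to $\Vin \cup \Vout$ inside the gadget are $a$ and $b$, both of which are colorless.
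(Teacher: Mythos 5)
Your proof is correct and follows essentially the same route as the paper's: the square of the gadget makes $\Vin \cup C$ and $C \cup \Vout$ cliques, so the counting identity $|\Vin|+|C|=q$ forces $\chi(C)=[q]\setminus\chiin(\Vin)$ and hence $\chiout(\Vout)\subseteq\chiin(\Vin)$ with $|\chiout(\Vout)|=|\Vout|$, and the same coloring of $C$ yields the existence guarantee. Your write-up is in fact slightly more explicit than the paper's (e.g., noting that $\Vin$ and $\Vout$ are at distance $4$ and that Property~\ref{prop:only-colorless-vertices-adjacent-to-portals} holds), but there is no substantive difference.
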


\begin{proof}
 We first show the output guarantee.
 Let $\chi$ a square $q$-coloring of $\SubsetGadget(\alpha,\beta)$ such that $\chi|_{\Vin} \in \Cin$.
 Since the set $\Vin \cup C$ forms a clique in the square of $\SubsetGadget(\alpha,\beta)$, the coloring $\chi$ is injective on $\Vin \cup C$.
 Because there are $\alpha$ input vertices and $q-\alpha$ complement vertices, it follows that the complement vertices are colored using all colors which do not appear on the input vertices.
 Similarly, the output vertices are colored using pairwise distinct colors which do not appear on the complement vertices.
 This imlies that $\chi(\Vout) \subseteq \chi(\Vin)$.
 Also, $|\chi(\Vout)| = |\Vout|$ since all vertices from $\Vout$ are adjacent to $b$.
 So overall, $(\chi|_{\Vin}, \chi|_{\Vout},p(\chi|_{\Vx})) \in \CRel$ as desired.
 
 For the existence guarantee let $(\chiin, \chiout, ()) \in \CRel$.
 We extend $\chiin$ and $\chiout$ to a coloring $\chi$ by coloring the complement vertices $C$ using the $q-\alpha$ colors not appearing on the input.
 It is easy to see this is a valid square $q$-coloring of $\SubsetGadget(\alpha,\beta)$.
\end{proof}

\paragraph{The color class copy gadget.}
Next, we construct the color class copy gadget (see Table~\ref{tbl:summary-color-class-copy}).
As we already mentioned above, there is a very simple way to construct this gadget by connecting $X_i$ to $Y_i$ using a $(\targetval)$-equality gadget for all $i \in [2m]$.
However, this construction is not sufficient for our purposes since the treewidth of the overall graph $G^*$ is too large when using this simple construction.
Instead, we need a gadget for which there is a separator of size $\log m + O(1)$ between $X$ and $Y$, i.e., we can remove $\log m + O(1)$ vertices from the gadget to disconnect $X$ from $Y$.

Towards this end, we construct the color class copy gadget $\CopyGadget$ as follows.
First, we ensure that the colors used to color the vertices in $\Vin = X$ are the same as the colors used to color the vertices in $\Vout = Y$.
This is achieved by connecting the all vertices in $X$ to the vertices in $Y$ via a $(2m \cdot \targetval)$-equality gadget.

Now let
\begin{equation}
 \label{eq:def-r}
 r \coloneqq \left\lceil\frac{(1+\varepsilon)\log m}{2}\right\rceil
\end{equation}
(recall that $\varepsilon$ is part of the input of Theorem \ref{thm:vector-sum-to-distance-two-coloring}).
Then
\begin{equation}
 \label{eq:separator-subsets}
 {2r \choose r} \geq \frac{2^{2r}}{2r+1} \geq \frac{m^{(1+\varepsilon)}}{(1+\varepsilon)\log m + 3} > 2m
\end{equation}
where the last inequality is equivalent to Equation \eqref{eq:m-epsilon-def} which holds by our choice of $m_{\varepsilon}$.

We introduce $2r$ fresh colorless vertices $S$.
For each $0 \leq i \leq 2m$ we choose a distinct subset $S_i \subseteq S$ of size $r$ which is possible by Equation \eqref{eq:separator-subsets}.
We connect every vertex of $X_i$ to all vertices in $S_i$ (i.e., $X_i$ and $S_i$ form a biclique).
Similarly, we connect every vertex of $Y_i$ to all vertices in $S \setminus S_i$ (i.e., $Y_i$ and $S \setminus S_i$ form a biclique).
This completes the construction of the gadget $\CopyGadget$.

\begin{lemma}
 The gadget $\CopyGadget$ satisfies the specification table of the color class copy gadget (Table~\ref{tbl:summary-color-class-copy}).
\end{lemma}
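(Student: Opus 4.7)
My plan is to prove the output and existence guarantees separately, with most of the work in the output direction.

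For the output guarantee, I would fix any square $q$-coloring $\chi$ of $\CopyGadget$ with $\chi|_{\Vin} \in \Cin$. First apply the output guarantee of the underlying $(2m\cdot \targetval)$-equality gadget: together with injectivity of $\chi|_X$ (from $\Cin$), this yields $\chi(X)=\chi(Y)$ and injectivity of $\chi|_Y$. The main step is then to use the separator $S$: whenever $i \neq j$, the $r$-subsets $S_i$ and $S_j$ of the $2r$-set $S$ are distinct, so $S_i \setminus S_j \neq \emptyset$, and any $s$ in this difference is adjacent to every vertex of $X_i$ and every vertex of $Y_j$, placing them at distance two. Hence $\chi(X_i) \cap \chi(Y_j) = \emptyset$ whenever $i \neq j$. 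Combined with $\chi(X) = \chi(Y)$ and the partition of $Y$ into $Y_1,\ldots,Y_{2m}$ under the injective $\chi|_Y$, this forces $\chi(X_i) \subseteq \chi(Y_i)$; a symmetric argument using $s' \in S_j \setminus S_i$ yields the reverse inclusion.

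For the existence guarantee, let $(\chiin, \chiout, ())$ satisfy $\CRel$. Since $\chiin|_X$ and $\chiout|_Y$ are injective with $\chiin(X)=\chiout(Y)$, the equality gadget's existence guarantee produces a coloring of its internal vertices that extends $\chiin \cup \chiout$. The separator vertices in $S$ are colorless and need no color assignment. It remains to verify that no distance-two color conflict arises. Conflicts internal to the equality gadget are prevented by its own existence guarantee, and no vertex of $S$ is adjacent to any internal vertex of the equality gadget, so any remaining potential conflict passes through some $s \in S$ and thus involves two vertices of $X \cup Y$. The cases $v,w \in X$ and $v,w \in Y$ reduce to injectivity of $\chiin|_X$ and $\chiout|_Y$, and a cross-case $v \in X_i$, $w \in Y_j$ sharing a common neighbor $s$ forces $i \neq j$ because $s \in S_i$ and $s \notin S_j$, whence $\chiin(v) \in \chiin(X_i)$ and $\chiout(w) \in \chiout(Y_j) = \chiin(X_j)$ are distinct by injectivity of $\chiin$.

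The main obstacle is the combinatorial setup underpinning the separator argument: Equation~\eqref{eq:separator-subsets} supplies $\binom{2r}{r} > 2m$ distinct $r$-subsets of $S$, which is exactly what lets the $S_i$ be chosen pairwise distinct, and the crucial property $S_i \setminus S_j \neq \emptyset$ whenever $S_i \neq S_j$ then follows automatically from $|S_i|=|S_j|=r$. Beyond this, the proof reduces to a short case analysis over the distance-two path types in the gadget, each handled by the equality gadget's guarantees, injectivity from $\Cin$, or the structural asymmetry between $S_i$ and its complement $S \setminus S_i$.
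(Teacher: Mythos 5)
Your proof is correct and follows essentially the same route as the paper: the equality gadget yields $\chi(X)=\chi(Y)$, the distinct $r$-subsets $S_i$ of the $2r$-set $S$ give the cross-class disjointness $\chi(X_i)\cap\chi(Y_j)=\emptyset$ for $i\neq j$ (the paper phrases it as $S_j\cap(S\setminus S_i)\neq\emptyset$ and closes with a cardinality argument rather than your two inclusions), and the existence direction uses the subset gadget's guarantee together with the colorlessness of $S$ and the fact that $X_i$ and $Y_i$ have no common $S$-neighbor. The differences are only in bookkeeping, not in substance.
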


\begin{proof}
 We first show the output guarantee.
 Let $\chi$ be a square $q$-coloring of $\CopyGadget$ such that $\chi|_{\Vin} \in \Cin$, i.e., $\chi(x) \neq \chi(x')$ for all distinct $x,x' \in X$.
 
 The $(2m\cdot\targetval)$-equality gadget ensures that $\chi(Y) = \chi(X)$.
 Now consider some $i \in [2m]$.
 First note that the colors of the vertices in $Y_i$ are pairwise distinct, since they are all adjacent to all vertices in $S\setminus S_i$ and thus within distance $2$ of each other.
 
 Now consider some $i \neq j \in [2m]$.
 Then $S_j \cap (S \setminus S_i) \neq \emptyset$ which implies that all vertices from $Y_i$ are within distance at most $2$ from all vertices in $X_j$.
 So $\chi(Y_i) \cap \chi(X_j) = \emptyset$.
 Since $\chi(Y_i) \subseteq \chi(Y) = \chi(X)$ it follows that $\chi(Y_i) \subseteq \chi(X_i)$.
 Since $|X_i| = |Y_i| = \targetval$ and colors of the vertices in $Y_i$ are pairwise distinct, we conclude that $\chi(Y_i) = \chi(X_i)$.
 In other words, $(\chi|_{\Vin}, \chi|_{\Vout}, ()) \in \CRel$ as desired.
 
 \medskip 
  
 For the existence guarantee let $(\chiin,\chiout, ()) \in \CRel$.
 Clearly, the input and output color sets of the subset gadget connecting $X$ and $Y$ are the same.
 So we can use the existence guarantee for the subset gadget to derive a coloring for it.
 Combining this coloring with $\chiin$ and $\chiout$ gives a $q$-coloring $\chi$ (recall that all vertices in $S$ are colorless).
 
 It remains to show that this $\chi$ is a valid square $q$-coloring.
 Using Property \ref{prop:only-colorless-vertices-adjacent-to-portals} it is easy to see that no color conflict arises from the inner vertices of the subset gadget.
 The only remaining possible conflicts are between vertices of $X_i$ and $Y_i$ for some $i \in [2m]$, since these are the only vertices that share a color.
 However, $N(X_i) \cap N(Y_i) = \emptyset$ for all $i \in [2m]$ by definition.
 It follows that $\chi$ is a square $q$-coloring of $\CopyGadget$.
 \end{proof}

\paragraph{The socket gadget.}
We now come to the series of subgadgets that are needed for the vector selection gadget.
We start with the most low-level ones and use them to build more complex gadgets.

The first fundamental building block is the socket gadget.
There are two types of socket gadgets: constant sockets and switch gadgets.
Both types of gadgets have a control input which is either \red\ or \blue.
The constant sockets always expose a color from one color class $I_i$ (for some $i \in \{1,2\}$) to $x$, irregardless of the color of the control input (recall that exposing a color to $x$ means that a neighbor of $x$ is assigned that color).
We call constant sockets that always expose a color from the coloring of $I_1$ \emph{constant sockets of type $1$}, and similarly constant sockets that always expose a color from the coloring of $I_2$ \emph{constant sockets of type $2$}.
On the other hand, switch sockets output either a color from $I_1$ or $I_2$ depending on the control input.
We describe switch sockets by an abbreviation of their control-to-exposed-color mapping.
Specifically, switch gadgets that show a color from $I_1$ for a \red\ control input and a color from $I_2$ for a \blue\ control input are called \emph{switch sockets of type r1b2}, while the other type where the outputs are switched is called a \emph{switch socket of type r2b1}.

We specify all the different types of socket gadgets in Table~\ref{tbl:summary-socket}.
We use the Kronecker delta defined as $\delta_{i,j} \coloneqq 1$ if $i = j$ and $\delta_{i,j} \coloneqq 0$ otherwise.

\begin{table}[h]
 \centering
 \begin{tabular}{|p{2cm}|p{12cm}|}
  \hline \multicolumn{2}{|p{14cm}|}{Socket Gadget}\\
  
  \hline parameters & $n \in \ZZp$ and the type (constant / switch) and subtype (1 / 2 / r1b2 / r2b1) of the gadget\\
  
  \hline $\Vin$ &
  \vspace{-0.2cm}
  \begin{itemize}[noitemsep, nolistsep]
   \item two groups of $\targetval$ vertices $I_1$ and $I_2$ (the color class inputs)
   \item three ``logic'' input vertices $r,g,b$
   \item a ``control'' input vertex $s$
  \end{itemize}
  \vspace{-0.4cm}
  \\

  \hline

  $\Vout$ &
  \vspace{-0.2cm}
  \begin{itemize}[noitemsep, nolistsep]
   \item two groups of $\targetval$ vertices $I_1', I_2'$
   \item three ``logic'' output vertices $r',g',b'$
   \item a ``control'' output vertex $s'$
  \end{itemize}
  \vspace{-0.4cm}
  \\

  \hline

  $\Vx$ & a single vertex $z$ \\
  
  \hline

  $\Cin$ &
  \vspace{-0.2cm}
  \begin{itemize}[noitemsep, nolistsep]
   \item $|\chiin(I_1 \cup I_2 \cup \{r,g,b\})| = |I_1 \cup I_2 \cup \{r,g,b\}|$
   \item $\chiin(r) = \red$, $\chiin(g) = \green$, $\chiin(b) = \blue$
   \item $\chiin(s) \in \{\red, \blue\}$
  \end{itemize}
  \vspace{-0.4cm}
  \\

  \hline

  $\CRel$ & We divide $\CRel$ into output and vector conditions:
  \begin{enumerate}[noitemsep, nolistsep, label = (\arabic*)]
   \item\label{item:socket-output-condition}
    \begin{itemize}[noitemsep,nolistsep]
     \itemsep0em 
     \item $\chiout(r') = \chiin(r)$, $\chiout(g') = \chiin(g)$, $\chiout(b') = \chiin(b)$
     \item $\chiout(s') = \chiin(s)$
     \item $\chiout(I'_i) = \chiin(I_i)$ for both $i \in \{1,2\}$
    \end{itemize}
   
   \item\label{item:socket-vector-condition}
    \begin{description}[itemsep=0pt,parsep=0pt,topsep=0pt,partopsep=0pt]
     \item[For constant sockets of type $\tau$ ({$\tau \in \{1,2\}$}):]$\hphantom{-}$\newline $v_x = (\delta_{\tau,1},\delta_{\tau,2})$
     \item[For switch sockets of type r$\tau$b$(3-\tau)$ ({$\tau \in \{1,2\}$}):]
      $\hphantom{-}$\newline $v_x = \begin{cases}
                                     (\delta_{\tau,1}, \delta_{\tau,2}) & \text{if } \chiin(s) = \red\\
                                     (\delta_{\tau,2}, \delta_{\tau,1}) & \text{if } \chiin(s) = \blue
                                    \end{cases}$

    \end{description}
  \end{enumerate}
  \\

  \hline
 \end{tabular}
 \caption{The specification table of the socket gadget.}
 \label{tbl:summary-socket}
\end{table}

We first construct the constant socket $\ConstSocket_\tau$ of type $\tau \in \{1,2\}$.
This construction is simple.
First, we copy the inputs to the corresponding outputs via appropriate equality gadgets.
Moreover, the input color class $I_\tau$ is connected to $z$ (the vertex from $\Vx$) via a $(\targetval,1)$-subset gadget.

Next, we construct the switch sockets. 
We describe an r2b1 switch socket $\SwitchSocket_2$ (see Figure~\ref{fig:socket}).
The switch socket of type r1b2 $\SwitchSocket_1$ is constructed analogously by swapping the two color class inputs.

\begin{figure}
 \centering
 \includesvg[width=0.7\linewidth]{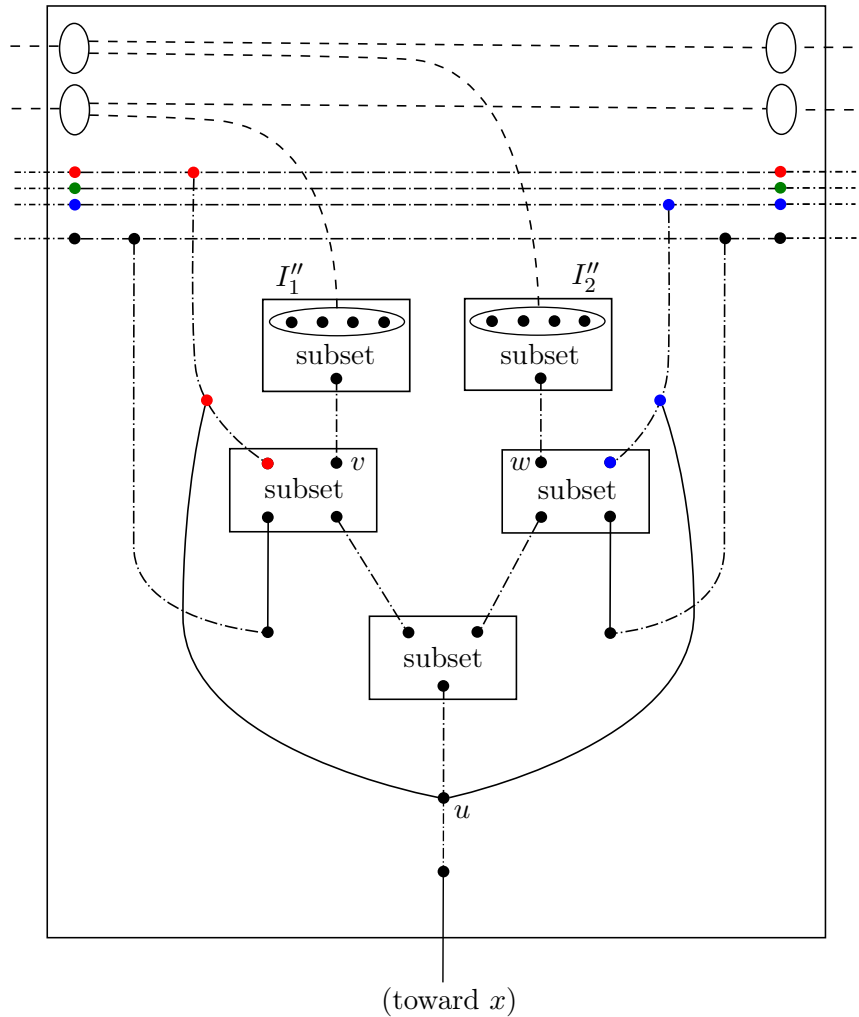}
 \caption{Illustration of a switch socket gadget.
  The lower color class input is $I_1$ in this figure, and hence this socket is of type r2b1.
  Dashed edges represent equality gadgets with more than one input, while dash-dotted edges represent $1$-equality gadgets.}
 \label{fig:socket}
\end{figure}

As before, we first copy the inputs to the corresponding outputs via appropriate equality gadgets.
We create a copy $I_1''$ of the color class input $I_1$ and a copy $I_2''$ of the color class input $I_2$ using $(\targetval)$-equality gadgets. 
The vertices in the copy $I_1''$ serve as input to a $(\targetval,1)$-subset gadget whose output is copied to a fresh vertex $v$ (via a $1$-equality gadget).
Similarly, the vertices in $I_2''$ are the input to another $(\targetval,1)$-subset gadget whose output is copied to a fresh vertex $w$.

The vertex $v$ as well as a \red\ vertex (we create a fresh \red\ vertex using a $1$-equality gadget connected to the input vertex $r$) are the inputs to a $(2,2)$-equality gadget.
We refer to this gadget as the \emph{left equality gadget}.
One of the outputs of the left equality gadget is connected by a normal edge to a vertex that copies the control input.
The other vertex is copied to a $(2,1)$-subset gadget which we refer to as the \emph{final subset gadget}.
On the other side, the vertex $w$ as well as a \blue\ vertex are the inputs to another $(2,2)$-equality gadget which we call the \emph{right equality gadget}.
One of the outputs of the right equality gadget is again connected to a vertex that copies the control input, while the other one is copied to the second input of the final subset gadget.

The output of the final subset gadget is copied to a vertex $u$.
We connect $u$ to a \red\ and a \blue\ vertex (as before, we create a fresh \red\ vertex using a $1$-equality gadget connected to the input vertex $r$; the \blue\ vertex is created analogously).
Finally, we connect $u$ to the vertex $z$ via a $1$-equality gadget.

Observe that all socket gadgets satisfy Property \ref{prop:only-colorless-vertices-adjacent-to-portals}.

\begin{lemma}\label{lem:socket}
  The gadgets $\ConstSocket_\tau$ and $\SwitchSocket_\tau$, $\tau \in \{1,2\}$, satisfy the specification table of the socket gadget (Table~\ref{tbl:summary-socket}).
\end{lemma}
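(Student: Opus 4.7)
The plan is to treat the two gadget families separately. In both cases the ``copy'' part of $\CRel$ (Condition~\ref{item:socket-output-condition}, saying that the logic inputs, the control input, and the color-class inputs are faithfully copied to the corresponding outputs) follows immediately from the defining behavior of the equality subgadgets used to implement those copies, so the real work lies in verifying the vector condition (Condition~\ref{item:socket-vector-condition}) for each of the four socket types.

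For $\ConstSocket_\tau$ the argument is a one-liner: the single $\Vx$-vertex $z$ is attached to $I_\tau$ through a $(2n^6,1)$-subset gadget, so by the subset-gadget guarantee $\chi(z)\in\chiin(I_\tau)$, giving $p(\chi|_{\Vx})=(\delta_{\tau,1},\delta_{\tau,2})$; the existence direction is dual and follows from the existence guarantees of the equality and subset subgadgets.

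For $\SwitchSocket_\tau$ the crux is a forcing argument. The vertex $u$ (linked to $z$ by a $1$-equality gadget) is adjacent both to a forced-\red\ vertex and a forced-\blue\ vertex, so $\chi(u)\notin\{\red,\blue\}$; since $u$ receives the color produced by the final $(2,1)$-subset gadget, that subset gadget's output must also avoid $\{\red,\blue\}$. Specialize to $\SwitchSocket_2$ (type r2b1) with $\chiin(s)=\red$: the vertex copying $s$ on the left is then \red, so of the two outputs of the left $(2,2)$-equality gadget --- which collectively carry $\{\red,\chi(v)\}$ with $\chi(v)\in\chiin(I_1)$ --- the one adjacent to the $s$-copy cannot be \red, forcing it to be $\chi(v)$ and the other (feeding the final subset gadget) to be \red. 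The final subset gadget thus has \red\ as one input and must pick the other, which in turn must not be \blue; this forces the right $(2,2)$-equality gadget to route $\chi(w)\in\chiin(I_2)$ rather than \blue\ into the final subset gadget. Hence $\chi(z)\in\chiin(I_2)$ and $p(\chi|_{\Vx})=(0,1)=(\delta_{2,1},\delta_{2,2})$, as the table prescribes. The case $\chiin(s)=\blue$ is symmetric, and $\SwitchSocket_1$ is handled identically with the roles of $I_1$ and $I_2$ swapped by construction.

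The existence guarantee is then obtained by reversing the above forcing: given $(\chiin,\chiout,v_x)\in\CRel$ together with a compatible choice of vector-output colors, the derivation above prescribes exactly which color goes on every output vertex of every constituent equality and subset gadget, and invoking the existence guarantees of those subgadgets (with Property~\ref{prop:only-colorless-vertices-adjacent-to-portals} ensuring that no cross-subgadget color conflicts arise via non-portal vertices) produces a full square $q$-coloring of the socket. The main obstacle is pinning down that the case analysis really is forced --- in particular that no valid coloring can slip \red\ or \blue\ past the final subset gadget --- which is precisely the role of the two forced-color vertices attached to $u$; once that point is secured, everything else reduces mechanically to the already-established properties of equality and subset gadgets.
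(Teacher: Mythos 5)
Your proposal is correct and follows essentially the same route as the paper: constant sockets handled directly via the $(\targetval,1)$-subset gadget guarantee, and switch sockets via exactly the paper's forcing argument (control copy blocks one output of the left/right equality gadget, the forced-\red/\blue\ neighbors of $u$ prevent \red\ or \blue\ from passing the final subset gadget), with the existence direction built from the subgadgets' existence guarantees and Property~\ref{prop:only-colorless-vertices-adjacent-to-portals}. No substantive differences or gaps.
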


\begin{proof}
 For the constant sockets, it is easy to verify that both the output guarantee and the existence guarantee are satisfied.
 So consider the switch socket $\SwitchSocket_\tau$ and suppose without loss generality that $\tau = 2$.
 
 We first show the output guarantee.
 Let $\chi$ be a square $q$-coloring of $\SwitchSocket_2$ such that $\chi|_{\Vin} \in \Cin$.
 It is easy to verify that $\chi$ satisfies Condition \ref{item:socket-output-condition} of $\CRel$.
 So consider Condition \ref{item:socket-vector-condition}.
 We need to show that $\chi(z) \in \chi(I_2)$ if $\chi(s) = \red$ and $\chi(z) \in \chi(I_1)$ if $\chi(s) = \blue$.
 
 The output guarantee of the first two subset gadgets implies that $\chi(v) \in \chi(I_1)$ and $\chi(w) \in \chi(I_2)$.
 The left equality gadget gets as input $v$ and a \red\ vertex, and one of its outputs is adjacent to a vertex which copies the control input.
 Hence, if the control input is \red, that output cannot be \red, and the left input $v'$ to the final subset gadget is colored \red.
 Similarly, if the control input is \blue, the right input $w'$ to the final subset gadget is colored \blue.
  
 Since the output of the final subset gadget is copied to $u$ and $u$ is adjacent to a \red\ and a \blue\ vertex, it follows that one of the inputs to the final subset gadget is colored neither \red\ nor \blue.
 As argued above, if the control input is \red, the left equality gadget copies \red to the final subset gadget.
 So the left equality gadget needs to copy $\chi(w) \in \chi(I_2)$ to the final equality gadget and we get that $\chi(z) = \chi(u) = \chi(w) \in \chi(I_2)$.
 Similarly, if the control input is \blue, we get that $\chi(z) = \chi(u) = \chi(v) \in \chi(I_1)$.
 
 \medskip
 
 For the existence guarantees let $(\chiin,\chiout, v_x) \in \CRel$ and suppose $S = (S_1,S_2)$ where $S_1 \subseteq \chiin(I_1)$, $S_2 \subseteq \chiin(I_2)$ and $|S_j| = v_x[j]$ for both $j \in \{1,2\}$, i.e., $|S_1| + |S_2| = 1$.
 Without loss of generality suppose $\chiin(s) = \red$ (the other case is analogous).
 This means $v_x = (0,1)$ and hence, $S_1 = \emptyset$ and $S_2 = \{c\}$ for some color $c \in \chiin(I_2)$.
 
 We extend $\chiin$ and $\chiout$ to a square $q$-coloring $\chi$ of $\SwitchSocket_2$ as follows.
 We set
 \begin{itemize}
  \item $\chi(I_1'') \coloneqq \chiin(I_1)$ and $\chi(I_2'') \coloneqq \chiin(I_2)$,
  \item $\chi(v) \coloneqq d$ for some arbitrary $d \in \chi(I_1)$,
  \item $\chi(w) \coloneqq c$
  \item $\chi(v') \coloneqq \red$ where $v'$ is the left input of the final subset gadget,
  \item $\chi(w') \coloneqq c$ where $w'$ is the right input of the final subset gadget, and
  \item $\chi(u) = \chi(z) \coloneqq c$.
 \end{itemize}
 It is easy to check that this coloring extends in a unique way to the remaining vertices taking all equality gadgets into account without creating any color conflicts.
 Finally, we use the existence guarantee of the subset gadgets to extend to the coloring to the inner vertices of all subset gadgets.
 Observe that there are no color conflicts involving inner vertices of subset gadgets by Property~\ref{prop:only-colorless-vertices-adjacent-to-portals}.
\end{proof}

\paragraph{The edge selection gadget.}
We use the low-level socket gadget to build the \emph{edge selection gadget}.
Intuitively, it uses subset gadgets and socket gadgets to simulate the selection of an edge between two groups of vertices.
Its behaviour is specified in Table~\ref{tbl:summary-edge-selection}.

\begin{table}[h]
 \centering
 \begin{tabular}{|p{2cm}|p{12cm}|}
  \hline \multicolumn{2}{| p{14cm} |}{Edge Selection Gadget}\\ \hline
  parameters & $n \in \ZZp$ and $\alpha \in [-n^2, n^2]$\\ \hline
  
  $\Vin$ &
  \vspace{-0.2cm}
  \begin{itemize}[noitemsep, nolistsep]
   \item two groups of $\targetval$ vertices $I_1$ and $I_2$ (the color class inputs)
   \item three ``logic'' input vertices $r,g,b$
   \item a ``control'' input vertex $s$ 
  \end{itemize}
  \vspace{-0.4cm}
  \\
    
  \hline $\Vout$ &
  \vspace{-0.2cm}
  \begin{itemize}[noitemsep, nolistsep]
  \item two groups of $\targetval$ vertices $I'_1, I'_2$
  \item three ``logic'' output vertices $r',g',b'$
  \item a ``control'' output vertex $s'$
  \end{itemize}
  \vspace{-0.4cm}
  \\
  
  \hline $\Vx$ & a total of $2n^2$ vertices \\

  \hline $\Cin$ &
  \vspace{-0.2cm}
  \begin{itemize}[noitemsep, nolistsep]
   \item $|\chiin(I_1 \cup I_2 \cup \{r,g,b\})| = |I_1 \cup I_2 \cup \{r,g,b\}|$
   \item $\chiin(r) = \red, \chiin(g) = \green, \chiin(b) = \blue$
   \item $\chiin(s) \in \{\red,\blue\}$
  \end{itemize}
  \vspace{-0.4cm}
  \\

  \hline $\CRel$ & We divide $\CRel$ into output and vector conditions:
  \begin{enumerate}[noitemsep, nolistsep, label = (\arabic*)]
   \item\label{item:edge-selection-output-condition}
    \begin{itemize}[noitemsep, nolistsep]
     \item
      $\chiout(r') = \chiin(r)$,
      $\chiout(g') = \chiin(g)$ and
      $\chiout(b') = \chiin(b)$
     \item $\chiout(s') = \chiin(s)$
     \item $\chiout(I'_i) = \chiin(I_i)$ for both $i \in \{1,2\}$
    \end{itemize}

  \item\label{item:edge-selection-vector-condition}
   \begin{itemize}[noitemsep, nolistsep]
    \item $v_x = (n^2, n^2)$ if $\chiin(s) = \red$
    \item $v_x = (n^2+\alpha, n^2-\alpha)$ if $\chiin(s) = \blue$
   \end{itemize}
    
  \end{enumerate}
  \vspace{-0.4cm}
  \\ \hline
 \end{tabular}
 \caption{The specification table of the edge selectiongadget}
 \label{tbl:summary-edge-selection}
\end{table}

Let $\alpha \in [-n^2, n^2]$.
We construct the edge selection gadget $\EdgeSelection(\alpha)$ as follows (see Figure~\ref{fig:edge-selection}).
It contains a chain of $2n^2$ socket gadgets which we denote by $G^{(1)},\dots,G^{(2n^2)}$.
For every $j \in [2n^2]$, we identify the elements (e.g., sets and vertices) of $G^{(j)}$ by superscripting them with $(j)$.
For example, the input vertex set $\Vin$ of $G^{(1)}$ and the vertex $z$ of $G^{(2)}$ are denoted by $\Vin^{(1)}$ and $z^{(2)}$, respectively.

\begin{figure}
 \centering
 \includesvg[width=0.95\linewidth]{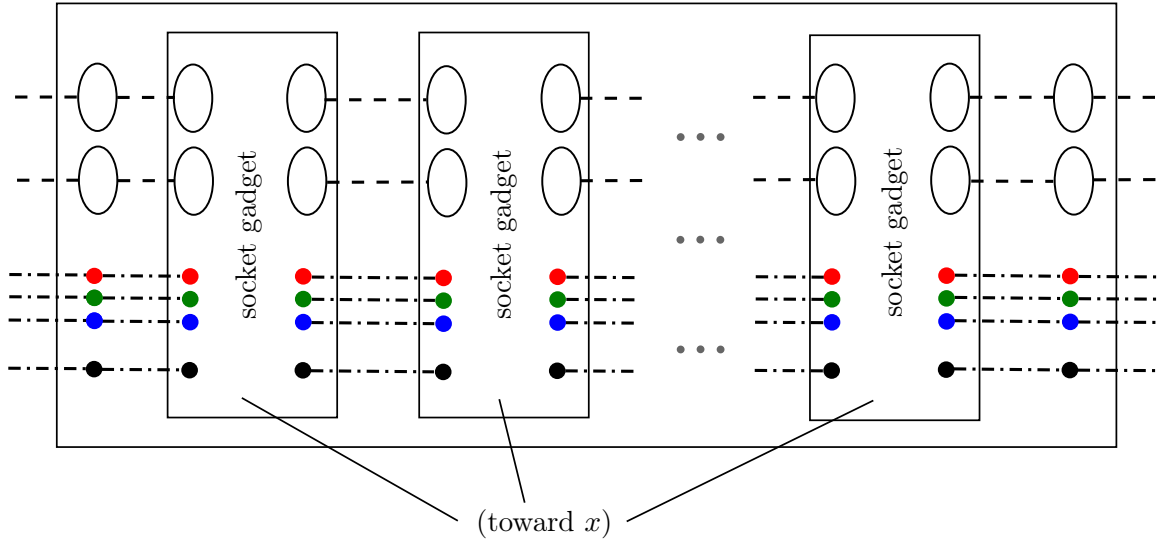}
 \caption{Illustration of the edge selection gadget.}
 \label{fig:edge-selection}
\end{figure}

The inputs $\Vin$ of the edge selection gadget are copied via equality gadgets to the corresponding inputs in the set $\Vin^{(1)}$.
Also, for every $j \in [2n^2-1]$, the outputs $\Vout^{(j)}$ are copied via equality gadgets to the corresponding inputs in $\Vin^{(j+1)}$.
Finally, the outputs of $G^{(2n^2)}$ are copied to the outputs $\Vout$ of the edge selection gadget.

It remains to specify the types of socket gadgets that are used in the chain.
The gadgets $G^{(1)}$ to $G^{(n_1)}$, where $n_1 \coloneqq \min\{n^2, n^2+\alpha\}$, are constant sockets of type $1$.
After that, the gadgets $G^{(n_1+1)}$ to $G^{(n_1+n_2)}$, where $n_2 \coloneqq \min\{n^2,n^2-\alpha\}$, are constant sockets of type $2$.
The remaining gadgets $G^{(n_1+n_2+1)}$ to $G^{(n_1+n_2+n_3)}$, where $n_3 \coloneqq |\alpha|$, are switch sockets.
If $\alpha \geq 0$, they are switch sockets of type r2b1.
If $\alpha < 0$, they are switch sockets of type r1b2.

Finally, we set $\Vx \coloneqq \bigcup_{j \in [2n^2]}\Vx^{(j)} = \{z^{(j)} \mid j \in [2n^2]\}$.

Observe that the edge selection gadget satisfies Property \ref{prop:only-colorless-vertices-adjacent-to-portals}.

\begin{lemma}
 The gadget $\EdgeSelection(\alpha)$ satisfies the specification table of the edge selection gadget (Table~\ref{tbl:summary-edge-selection}).
\end{lemma}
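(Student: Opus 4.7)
The plan is to verify both guarantees by propagating the corresponding guarantees of the socket gadgets (Lemma \ref{lem:socket}) through the chain, and then doing a small case analysis on $\mathrm{sign}(\alpha)$ and $\chiin(s)$ to check that the arithmetic matches condition \ref{item:edge-selection-vector-condition}.

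For the output guarantee, fix a square $q$-coloring $\chi$ with $\chi|_{\Vin} \in \Cin$. The equality gadgets connecting $\Vin$ to $\Vin^{(1)}$ and, for each $j$, $\Vout^{(j)}$ to $\Vin^{(j+1)}$, combined with condition \ref{item:socket-output-condition} of Table \ref{tbl:summary-socket}, propagate the coloring of the logic, control, and color class portals unchanged along the chain. In particular, $\chi|_{\Vin^{(j)}} \in \Cin^{(j)}$ for every $j$, so the socket specification applies, and the coloring of $\Vout^{(2n^2)}$ is copied onto $\Vout$, giving condition \ref{item:edge-selection-output-condition}. For condition \ref{item:edge-selection-vector-condition}, I case-split. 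Assume first $\alpha \geq 0$, so the chain consists of $n_1 = n^2$ type-$1$ constant sockets, $n_2 = n^2 - \alpha$ type-$2$ constant sockets, and $n_3 = \alpha$ r2b1 switch sockets. Each constant socket contributes the indicator vector of its type to $v_x$, and by Lemma \ref{lem:socket} each r2b1 switch contributes $(0,1)$ when $\chiin(s) = \red$ and $(1,0)$ when $\chiin(s) = \blue$. Summing yields $v_x = (n^2, \, n^2 - \alpha + \alpha) = (n^2, n^2)$ in the red case and $v_x = (n^2 + \alpha, \, n^2 - \alpha)$ in the blue case, as required. The case $\alpha < 0$ is symmetric: $n_1 = n^2 + \alpha$, $n_2 = n^2$, $n_3 = -\alpha$, switches are of type r1b2, and the same summation gives the desired $v_x$.

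For the existence guarantee, fix $(\chiin, \chiout, v_x) \in \CRel$ and sets $S_1 \subseteq \chiin(I_1)$, $S_2 \subseteq \chiin(I_2)$ with $|S_i| = v_x[i]$. I would distribute the colors of $S_1 \cup S_2$ bijectively onto the vertices $z^{(j)} \in \Vx^{(j)}$ so that each $z^{(j)}$ is assigned a color of the color class dictated by the socket's type and the control input: for constant sockets of type $\tau$, a color from $S_\tau$; for switch sockets of the appropriate type, a color from $S_1$ or $S_2$ according to $\chiin(s)$. By the vector arithmetic above, the number of sockets demanding a color from $S_1$ equals $|S_1|$ and likewise for $S_2$, so this distribution is realizable. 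I then invoke the existence guarantee of each $G^{(j)}$ with the corresponding singleton vector $S^{(j)}$, and the existence guarantees of all interface equality gadgets, to extend to a full $q$-coloring. Property \ref{prop:only-colorless-vertices-adjacent-to-portals} ensures that inner vertices of distinct subgadgets cannot cause distance-$2$ conflicts.

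The only genuine work is the bookkeeping in the case analysis, i.e., checking that the four combinations of $\mathrm{sign}(\alpha)$ and $\chiin(s) \in \{\red,\blue\}$ produce the two vectors $(n^2,n^2)$ and $(n^2+\alpha,n^2-\alpha)$ exactly; no new idea beyond Lemma \ref{lem:socket} and Property \ref{prop:only-colorless-vertices-adjacent-to-portals} is required.
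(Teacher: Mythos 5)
Your proposal is correct and follows essentially the same route as the paper's proof: propagate the socket and equality gadget output guarantees along the chain, sum the per-socket vector outputs $v_x^{(j)}$ with a case analysis on $\mathrm{sign}(\alpha)$ and $\chiin(s)$, and for the existence guarantee assign the colors of $S_1,S_2$ bijectively to the $z^{(j)}$ according to each socket's demanded class before invoking the subgadget existence guarantees and Property~\ref{prop:only-colorless-vertices-adjacent-to-portals}. The only difference is cosmetic (you organize the four cases by sign of $\alpha$ first, the paper by the pair of conditions), so no gap.
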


\begin{proof}
 We first show the output guarantee.
 Let $\chi$ be a square $q$-coloring of $\EdgeSelection(\alpha)$ such that $\chi|_{\Vin} \in \Cin$.
 
 Using the output guarantee of the equality gadget and the socket gadget, we first conclude that
 \begin{equation}
  \label{eq:edge-selection-gadget-chain}
  \chi(I_i) = \chi(I_i^{(j)}) = \chi((I_i')^{(j)}) = \chi(I_i')
 \end{equation}
 for all $i \in \{1,2\}$ and $j \in [2n^2]$.
 Similarly, it follows that $\chi(r') = \chi(r)$, $\chi(g') = \chi(g)$, $\chi(b') = \chi(b)$ and $\chi(s') = \chi(s)$.
 In particular, Condition \ref{item:edge-selection-output-condition} is satisfied.
 
 Using Equation \eqref{eq:edge-selection-gadget-chain} and $\Vx = \bigcup_{j \in [2n^2]}\Vx^{(j)}$, we conclude that the vector output of the edge selection gadget is $v_x = \sum_{j=1}^{2n^2}v_x^{(j)}$.
 To explicitly calculate the sum, let us consider the following four cases:
 \begin{enumerate}
  \item\label{item:case-rp} $\chiin(s) = \red$ and $\alpha > 0$
  \item\label{item:case-bp} $\chiin(s) = \red$ and $\alpha < 0$
  \item\label{item:case-rn} $\chiin(s) = \blue$ and $\alpha > 0$
  \item\label{item:case-bn} $\chiin(s) = \blue$ and $\alpha < 0$
 \end{enumerate}
 We have
 \begin{align*}
   n_1 = \min\{n^2, n^2+\alpha\} &= \begin{cases}
                                     n^2        &\text{in case \ref{item:case-rp} or \ref{item:case-rn}}\\
                                     n^2+\alpha &\text{in case \ref{item:case-bp} or \ref{item:case-bn}}
                                    \end{cases}\\
   n_2 = \min\{n^2,n^2-\alpha\}  &= \begin{cases}
                                     n^2        &\text{in case \ref{item:case-bp} or \ref{item:case-bn}}\\
                                     n^2-\alpha &\text{in case \ref{item:case-rp} or \ref{item:case-rn}}
                                    \end{cases}\\
   n_3 = |\alpha|                &= \begin{cases}
                                     \alpha  &\text{in case \ref{item:case-rp} or \ref{item:case-rn}}\\
                                     -\alpha &\text{in case \ref{item:case-bp} or \ref{item:case-bn}}
                                    \end{cases}
 \end{align*}
 Using the output guarantee for the socket gadgets, we get that $v_x^{(1)} = \dots = v_x^{(n_1)} = (1,0)$ and $v_x^{(n_1+1)} = \dots = v_x^{(n_1+n_2)} = (0,1)$.
 Also, for the last $n_3$ gadgets, we have that
 \begin{align*}
  v_x^{(n_1+n_2+1)} = \dots = v_x^{(n_1+n_2+n_3)} =
  \begin{cases}
   (1,0) &\text{in case \ref{item:case-bp} or \ref{item:case-rn}}\\
   (0,1) &\text{in case \ref{item:case-rp} or \ref{item:case-bn}}
  \end{cases}
 \end{align*}
 So overall
 \begin{align*}
  \sum_{j=1}^{2n^2}v_x^{(j)} &=
   \begin{cases}
     n_1(1,0) + n_2(0,1) + n_3(0,1) = (n^2, n^2-\alpha+\alpha) = (n^2,n^2) &\text{in case \ref{item:case-rp}}\\
     n_1(1,0) + n_2(0,1) + n_3(1,0) = (n^2+\alpha-\alpha, n^2) = (n^2,n^2) &\text{in case \ref{item:case-bp}}\\
     n_1(1,0) + n_2(0,1) + n_3(1,0) = (n^2+\alpha, n^2-\alpha)             &\text{in case \ref{item:case-rn}}\\
     n_1(1,0) + n_2(0,1) + n_3(0,1) = (n^2+\alpha, n^2-\alpha)             &\text{in case \ref{item:case-bn}}
   \end{cases}\\
   &= \begin{cases}
       (n^2,n^2)                &\text{if } \chiin(s) = \red\\
       (n^2+\alpha, n^2-\alpha) &\text{if } \chiin(s) = \blue
      \end{cases}
 \end{align*}
 as desired. 
 
 \medskip
 
 For the existence guarantee suppose that $(\chiin, \chiout, v_x) \in \CRel$ and $S = (S_1,S_2)$ with $S_1 \subseteq \chiin(I_1),  S_2 \subseteq \chiin(I_2)$ and $(|S_1|, |S_2|) = v_x$.
 We construct a square $q$-coloring $\chi$ by extending $\chiin$ and $\chiout$.
 First, for each of the socket gadgets $G^{(j)}$, we color both their inputs $\Vin^{(j)}$ and their outputs $\Vout^{(j)}$ using the corresponding colors of $\chiin$.

 We then extend the coloring to the inner vertices of the equality gadgets connecting these inputs and outputs using the existence guarantees for the equality gadgets.
 
 The vector output $v_x^{(j)} \in \{(1,0), (0,1)\}$ of each socket gadget $G^{(j)}$, $j \in [2n^2]$, is determined by the coloring of its input vertices which we already specified.
 Let $U_1 \coloneqq \{ j \in [2n^2] \mid v_x^{(j)} = (1,0)\}$ and $U_2 \coloneqq \{ j \in [2n^2] \mid v_x^{(j)} = (0,1)\}$.
 Using the same calculations as above, we get that $v_x = (|U_1|,|U_2|)$.

 Recall that $v_x = (|S_1|, |S_2|)$.
 Hence, we can choose two arbitrary bijections $f_1\colon U_1 \rightarrow S_1$ and $f_2\colon U_2 \rightarrow S_2$.
 We set $\chi(z^{(j)}) \coloneqq f_1(j) \in \chi(I_1)$ if $j \in U_1$ and $\chi(z^{(j)}) \coloneqq f_2(j) \in \chi(I_2)$ if $j \in U_2$.
 Then $\chi(\Vx) \cap \chiin(I_i) = S_i$ for both $i \in \{1,2\}$.
 We complete the coloring $\chi$ by using the existence guarantee of the socket gadgets to extend $\chi$ to the inner vertices of the socket gadgets.
 
 Note that no colors conflict can arise between inner vertices of different subgadgets by Property~\ref{prop:only-colorless-vertices-adjacent-to-portals}.
\end{proof}

\paragraph{The vector state gadget.}
Next, we construct the \emph{vector state gadget} which combines three edge selection gadgets that act on different color classes, but are controlled by the same control input.
It is described in Table~\ref{tbl:summary-vector-state}.

\begin{table}[h]
 \centering
 \begin{tabular}{ |p{2cm}|p{12cm}| }
  \hline \multicolumn{2}{| p{14cm} |}{Vector State Gadget}\\ \hline
  
  parameters & $n \in \ZZp$ and a vector $y = (y_1, y_2, y_3) \in [-n^2, n^2]^3$\\
  
  \hline
  
  $\Vin$ &
  \vspace{-0.2cm}
  \begin{itemize}[noitemsep, nolistsep]
   \item six groups of $\targetval$ vertices $I_1,I_2,I_3,I_4,I_5,I_6$ (the color class inputs)
   \item three ``logic'' input vertices $r,g,b$
   \item and a ``control'' input vertex $s$
  \end{itemize}
  \vspace{-0.4cm}
  
  \\
  \hline
  
  $\Vout$ &
  \vspace{-0.2cm}
  \begin{itemize}[noitemsep, nolistsep]
   \item six groups of $\targetval$ vertices $I'_{1}, I'_{2}, I'_{3}, I'_{4}, I'_{5},I'_{6}$
   \item three ``logic'' output vertices $r',g',b'$
   \item a ``control'' output vertex $s'$
  \end{itemize}
  \vspace{-0.4cm}
  
  \\ \hline
  
  $\Vx$ & a total of $6n^2$ vertices\\
  
  \hline

  $\Cin$ &
  \vspace{-0.2cm}
  \begin{itemize}[noitemsep, nolistsep]
   \item
    $|\chiin(I_{1} \cup I_{2} \cup I_{3} \cup I_{4} \cup I_{5} \cup I_{6} \cup \{r,g,b\})| = $

    $|I_{1} \cup I_{2} \cup I_{3} \cup I_{4} \cup I_{5} \cup I_{6} \cup \{r,g,b\}|$
   \item $\chiin(r) = \red, \chiin(g) = \green, \chiin(b) = \blue$
   \item $\chiin(s) \in \{\red, \blue\}$
  \end{itemize}
  \vspace{-0.4cm}

  \\ \hline
  
  $\CRel$ & We divide $\CRel$ into output and vector conditions:
  \begin{enumerate}[noitemsep, nolistsep, label = (\arabic*)]
  \item\label{item:vector-state-output-condition}
   \begin{itemize}[noitemsep, nolistsep]
    \item
      $\chiout(r') = \chiin(r)$,
      $\chiout(g') = \chiin(g)$,
      $\chiout(b') = \chiin(b)$
    \item $\chiout(s') = \chiin(s)$
    \item $\chiout(I'_i) = \chiin(I_i)$ for all $i \in [6]$
   \end{itemize}
  \item\label{item:vector-state-vector-condition}
   \begin{itemize}[noitemsep, nolistsep]
    \item $v_x = (n^2, n^2, n^2, n^2, n^2, n^2)$ if $\chiin(s) = \red$
    \item $v_x = (n^2+y_1, n^2-y_1, n^2+y_2, n^2-y_2, n^2+y_3, n^2-y_3)$ if $\chiin(s) = \blue$
   \end{itemize}
  \end{enumerate}
  \vspace{-0.4cm}

  \\ \hline
  
 \end{tabular}
 \caption{The specification table of the vector state gadget}
 \label{tbl:summary-vector-state}
\end{table}

The construction of the vector state gadget $\VectorState(y)$ is depicted in Figure~\ref{fig:vector-state}.
The gadget consists of three edge selection gadgets $G^{(1)}, G^{(2)}, G^{(3)}$.
The parameter of $G^{(j)}$ is $\alpha_j \coloneqq y_j$ for each $j \in [3]$.
As before, we identify the elements (e.g., sets and vertices) of $G^{(j)}$ by superscripting them with $(j)$.

\begin{figure}
 \centering
 \includesvg[width=0.95\linewidth]{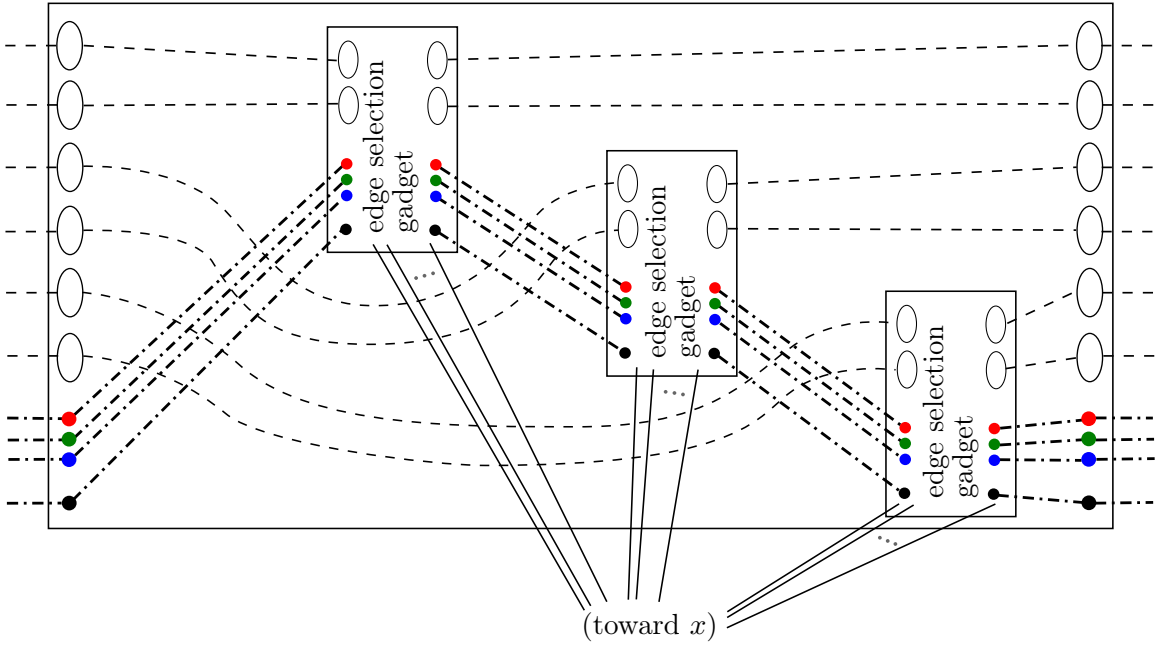}
 \caption{Illustration of a vector state gadget.}
 \label{fig:vector-state}
\end{figure}

The overall wiring of the vector state gadget is simple.
For every $j \in [3]$ we connect $I_{2j-1}$ to $I_1^{(j)}$ and $I_{2j}$ to $I_2^{(j)}$ using $(\targetval)$-equality gadgets.
Similarly, we connect $(I_1')^{(j)}$ to $I_{2j-1}'$ and $(I_2')^{(j)}$ to $I_{2j}'$.
The three logic colors and the control input are copied from the inputs of $\VectorState(y)$ to the inputs of $G^{(1)}$,
then from the outputs of $G^{(1)}$ to the inputs of $G^{(2)}$,
from the outputs of $G^{(2)}$ to the inputs of $G^{(3)}$, and finally from the outputs of $G^{(3)}$ to the outputs of $\VectorState(y)$.

Finally, we set $\Vx = \bigcup_{j \in [2n^2]}\Vx^{(j)}$.
Observe that the vector state gadget satisfies Property \ref{prop:only-colorless-vertices-adjacent-to-portals} since the subset gadget and the edge selection gadget satisfy Property \ref{prop:only-colorless-vertices-adjacent-to-portals}.

\begin{lemma}
 The gadget $\VectorState(y)$ satisfies the specification table of the vector state gadget (Table~\ref{tbl:summary-vector-state}).
\end{lemma}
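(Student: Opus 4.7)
The plan is to follow the same blueprint as the proof of the edge selection gadget, exploiting that $\VectorState(y)$ is essentially three edge selection gadgets $G^{(1)}, G^{(2)}, G^{(3)}$ wired in parallel on disjoint pairs of color classes but chained sequentially through the logic and control lines.

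For the output guarantee, I would start with an arbitrary square $q$-coloring $\chi$ of $\VectorState(y)$ with $\chi|_{\Vin} \in \Cin$. First, I would apply the output guarantee of the connecting equality gadgets to obtain $\chi(I_{2j-1}) = \chi(I_1^{(j)})$ and $\chi(I_{2j}) = \chi(I_2^{(j)})$ for each $j \in [3]$, and then the output guarantee of each edge selection gadget $G^{(j)}$ to get $\chi((I_1')^{(j)}) = \chi(I_1^{(j)})$ and analogously for the second color class. Chaining these equalities with the equality gadgets on the output side yields $\chi(I'_i) = \chi(I_i)$ for all $i \in [6]$. The same chain argument, applied to $r,g,b,s$ and their primed counterparts through each $G^{(j)}$, gives the logic and control equalities. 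This establishes Condition~\ref{item:vector-state-output-condition}.

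For Condition~\ref{item:vector-state-vector-condition}, I would note that $\Vx$ is the disjoint union of $\Vx^{(1)}, \Vx^{(2)}, \Vx^{(3)}$, which sit on disjoint pairs of color classes, so the vector output splits as a concatenation of the three edge selection vector outputs. Since the control input is propagated unchanged to all three gadgets, their output guarantees yield $v_x^{(j)} = (n^2, n^2)$ when $\chiin(s) = \red$ and $v_x^{(j)} = (n^2 + y_j, n^2 - y_j)$ when $\chiin(s) = \blue$, and concatenating gives exactly the two vectors listed in the specification.

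For the existence guarantee, given $(\chiin, \chiout, v_x) \in \CRel$ and $S = (S_1,\dots,S_6)$ with $S_i \subseteq \chiin(I_i)$ and $|S_i| = v_x[i]$, I would split $S$ into $S^{(j)} := (S_{2j-1}, S_{2j})$ and $v_x^{(j)} := (v_x[2j-1], v_x[2j])$ for $j \in [3]$. By the relation defining $\CRel$ for the edge selection gadget, $v_x^{(j)}$ is precisely the vector forced by $\chiin(s)$ and $\alpha_j = y_j$. I would then define matching colorings on the inputs and outputs of each $G^{(j)}$ via the equality-gadget wiring, invoke the existence guarantee of each edge selection gadget to obtain a square $q$-coloring of each $G^{(j)}$ realizing $(v_x^{(j)}, S^{(j)})$, and complete the coloring of the connecting equality gadgets via their existence guarantees.

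The only real obstacle is checking that these locally valid colorings glue into a global square $q$-coloring. Since $\VectorState(y)$ inherits Property~\ref{prop:only-colorless-vertices-adjacent-to-portals} from its subgadgets, any cross-gadget distance-two path passes through a colorless vertex or through the shared interface vertices, which are colored consistently by construction; thus no color conflicts can be introduced between the inner vertices of distinct subgadgets, and the proof concludes.
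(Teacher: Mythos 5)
Your proposal is correct and follows essentially the same route as the paper's proof: decompose $\VectorState(y)$ into the three edge selection gadgets, chain the output guarantees of the equality and edge selection gadgets for Condition~(1), sum/concatenate the per-gadget vector outputs over the disjoint color-class pairs for Condition~(2), and for the existence guarantee split $S$ into $S^{(j)}=(S_{2j-1},S_{2j})$, invoke the subgadgets' existence guarantees, and rule out cross-gadget conflicts via Property~\ref{prop:only-colorless-vertices-adjacent-to-portals}. No gaps worth flagging.
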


\begin{proof}
 We first show the output guarantee.
 Let $\chi$ be a square $q$-coloring of $\VectorState(y)$ such that $\chi|_{\Vin} \in \Cin$.
 
 Using the output guarantee of the equality gadgets and edge selection gadgets, we get that
 $\chi(r^{(j)}) = \chi(r') = \red$, $\chi(b^{(j)}) = \chi(b') = \blue$, $\chi(g^{(j)}) = \chi(g') = \green$, $\chi(s^{(j)}) =  \chi(s') = \chi(s)$,
 $\chi(I_1^{(j)}) = \chi(I_{2j-1}') = \chi(I_{2j-1})$ and $\chi(I_2^{(j)}) = \chi(I_{2j}') = \chi(I_{2j})$ for all $j \in [3]$.
 In particular, Condition \ref{item:vector-state-output-condition} is satisfied.
 
 Since $\Vx = \bigcup_{j \in [3]}\Vx^{(j)}$ it follows that vector output of the vector state gadget is $v_x = \sum_{j=1}^{3}v_x^{(j)}$.
 Using the output guarantee of the edge selection gadget, we conclude that
 $v_x = \sum_{j=1}^{3}v_x^{(j)} = (n^2,n^2,0,0,0,0) + (0,0,n^2,n^2,0,0) + (0,0,0,0,n^2,n^2) = (n^2,n^2,n^2,n^2,n^2,n^2)$ if $\chi(s) = \red$, and
 $v_x = \sum_{j=1}^{3}v_x^{(j)} = (n^2+y_1,n^2-y_1,0,0,0,0) + (0,0,n^2+y_2,n^2-y_2,0,0) + (0,0,0,0,n^2+y_3,n^2-y_3) = (n^2+y_1, n^2-y_1, n^2+y_2, n^2-y_2, n^2+y_3, n^2-y_3)$ if $\chi(s) = \blue$.
 
 So overall, $(\chi|_{\Vin}, \chi|_{\Vout}, p(\chi|_{\Vx})) \in \CRel$ as desired.
 
 \medskip
 
 For the existence guarantee suppose that $(\chiin, \chiout, v_x) \in \CRel$ and $S=(S_1,S_2,S_3,S_4,S_5,S_6)$ with $S_1 \subseteq \chiin(I_1),\dots,S_6 \subseteq \chiin(I_6)$ and $(|S_1|,|S_2|,|S_3|,|S_4|,|S_5|,|S_6|) = v_x$.
 We construct a square $q$-coloring $\chi$ of $\VectorState(y)$ by extending $\chiin$ and $\chiout$ as follows.

 For all $j \in [3]$, we color the inputs and outputs $\Vin^{(j)}, \Vout^{(j)}$ of $G^{(j)}$ in the natural way.
 Next, we use the existence guarantee of each of the equality gadgets contained in the vector state gadget to obtain a coloring of their inner vertices.
 All that remains is to find a coloring for each of the edge selection gadgets.
 To do this, we use the existence guarantees of the gadget $G^{(j)}$ for $j \in [3]$ with $S^{(j)} = (S_{2j-1},S_{2j})$ and use these colorings to complete $\chi$.
 It follows that $\chi(\Vx) \cap \chiin(I_j) = S_j$ for all $j \in [6]$.
 
 Observe that no color conflict can arise between inner vertices of different subgadgets due to Property \ref{prop:only-colorless-vertices-adjacent-to-portals}.
\end{proof}

\paragraph{The one-way switch gadget.}
We now build another low-level gadget that is needed to build the vector selection gadget.
The \emph{one-way switch gadget} receives and outputs logic colors and a control color and allows the control color to either stay constant or to switch from \blue\ to \red, but not the other way around.
The gadget is specified in Table~\ref{tbl:summary-one-way-switch}.

\begin{table}[h]
\centering
\begin{tabular}{ |p{2cm}|p{12cm}| }
  \hline \multicolumn{2}{| p{14cm} |}{One-Way Switch Gadget}

  \\ \hline

  parameters & none \\

  \hline

  $\Vin$ &
  \vspace{-0.2cm}
  \begin{itemize}[noitemsep, nolistsep]
   \item three ``logic'' input vertices $r,g,b$
   \item a ``control'' input vertex $s$
  \end{itemize}
  \vspace{-0.4cm}
  
  \\ \hline

  $\Vout$ &
  \vspace{-0.2cm}
  \begin{itemize}[noitemsep, nolistsep]
   \item three ``logic'' output vertices $r',g',b'$
   \item a ``control'' output vertex $s'$
  \end{itemize}
  \vspace{-0.4cm}
  
  \\ \hline

  $\Vx$ & $\emptyset$\\

  \hline

  $\Cin$ &
  \vspace{-0.2cm}
  \begin{itemize}[noitemsep, nolistsep]
   \item $\chiin(r) = \red, \chiin(g) = \green, \chiin(b) = \blue$
   \item $\chiin(s) \in \{\red, \blue\}$
  \end{itemize}
  \vspace{-0.4cm}

  \\ \hline

  $\CRel$ & We group the conditions of $\CRel$ into two parts:
  \begin{enumerate}[noitemsep, nolistsep, label = (\arabic*)]
   \item\label{item:ows-copy-condition}
    $\chiout(r') = \chiin(r)$,
    $\chiout(g') = \chiin(g)$ and
    $\chiout(b') = \chiin(b)$
   \item\label{item:ows-switch-condition} the control output $\chiout(s')$ satisfies
    \begin{itemize}[noitemsep, nolistsep]
     \item $\chiout(s') = \red$ if $\chiin(s) = \red$
     \item $\chiout(s') \in \{\red, \blue\}$ if $\chiin(s) = \blue$
    \end{itemize}
  \end{enumerate}
  \vspace{-0.4cm}

  \\ \hline
  
\end{tabular}
\caption{The specification table of the one-way-switch
gadget\label{tbl:summary-one-way-switch}}
\end{table}

We construct the one-way switch gadget $\OWSGadget$ as follows (see Figure~\ref{fig:one-way-switch}).
As usual, we connect the input vertices $r,g,b$ to the corresponding output vertices $r',g',b'$ via $1$-equality gadgets.
Next, we create four fresh vertices $t,u,v$ and $w$.
We restrict $t,u,v,w$ to only use subsets of the logic colors by connecting them via a $1$-equality gadget to the output of subset gadgets copying colors from the input logic colors.
Specifically, $t$ and $w$ are restricted to \red\ and \blue, $u$ is restricted to \green\ and \red\ and $v$ is restricted to \red, \blue\ or \green.

\begin{figure}
 \centering
 \includesvg[width=0.6\linewidth]{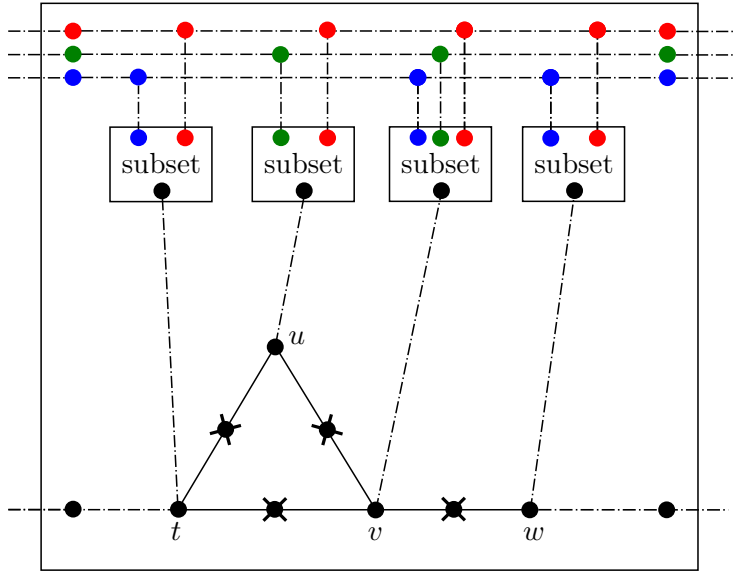}
 \caption{Illustration of a one-way switch gadget where crossed-out vertices are colorless.}
 \label{fig:one-way-switch}
\end{figure}

The vertex $t$ is connected to the input vertex $s$ via a $1$-equality gadget, while the control output vertex $s'$ copies the color of $v$ via a $1$-equality gadget.
We connect the vertices $t,u,v,w$ via subdivided edges (i.e., edges that have a colorless vertex in the middle) so that $t$, $u$ and $v$ form a triangle, and $v$ is connected to $w$.

As usual, observe that the one-way switch gadget satisfies Property \ref{prop:only-colorless-vertices-adjacent-to-portals} since the subset gadget satisfies Property \ref{prop:only-colorless-vertices-adjacent-to-portals}.

\begin{lemma}
 The gadget $\OWSGadget$ satisfies the specification table of the one-way switch gadget (Table~\ref{tbl:summary-one-way-switch}).
\end{lemma}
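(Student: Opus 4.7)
The plan is to prove both parts of the specification (output and existence guarantees) by a careful case analysis on the color $\chiin(s) \in \{\red, \blue\}$ of the control input, tracking how this propagates through the internal vertices $t, u, v, w$ via the distance-$2$ constraints induced by the subdivided edges.

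For the \emph{output guarantee}, let $\chi$ be a square $q$-coloring of $\OWSGadget$ with $\chi|_{\Vin} \in \Cin$. Condition \ref{item:ows-copy-condition} follows immediately from the output guarantees of the three $1$-equality gadgets that connect $r,g,b$ to $r',g',b'$. The output guarantees of the subset gadgets furthermore pin down the color ranges $\chi(t),\chi(w)\in\{\red,\blue\}$, $\chi(u)\in\{\green,\red\}$, $\chi(v)\in\{\red,\blue,\green\}$, while the $1$-equality gadgets enforce $\chi(t)=\chiin(s)$ and $\chi(s')=\chi(v)$. It remains to establish Condition \ref{item:ows-switch-condition}. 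If $\chiin(s)=\red$, then $\chi(t)=\red$; the triangle among $t,u,v$ (realized by subdivided edges, which are distance-$2$ paths and hence impose pairwise color distinctness in the square) forces $\chi(u)=\green$ and then leaves essentially one admissible color for $\chi(v)$, which together with the distance-$2$ constraint $v \ne w$ and the range of $w$ pins $\chi(s')=\chi(v)=\red$. If $\chiin(s)=\blue$, the same case analysis shows that $\chi(v)$ is free to take either value in $\{\red,\blue\}$ (depending on the choices for $u$ and $w$), so $\chiout(s')\in\{\red,\blue\}$ as required.

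For the \emph{existence guarantee}, given $(\chiin,\chiout,())\in\CRel$, we extend $\chiin\cup\chiout$ to a square $q$-coloring of $\OWSGadget$ by explicitly coloring $t,u,v,w$. Set $\chi(t):=\chiin(s)$ and $\chi(v):=\chiout(s')$; then pick $\chi(u)\in\{\green,\red\}$ and $\chi(w)\in\{\red,\blue\}$ distinct from their neighbors' colors in the triangle / edge structure. In the case $\chiin(s)=\red$ the spec forces $\chiout(s')=\red$, but since $v$ is only forbidden from equaling $t$ and $u$, the forced value $\chi(v)=\red$ arises from $\chi(t)=\red$ and $\chi(u)=\green$, giving a consistent assignment; in the case $\chiin(s)=\blue$ both choices of $\chiout(s')\in\{\red,\blue\}$ admit a suitable $\chi(u)$. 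Finally, we invoke the existence guarantees of the underlying $1$-equality and subset gadgets to extend $\chi$ to all their internal vertices. Property~\ref{prop:only-colorless-vertices-adjacent-to-portals} for the subgadgets then ensures no color conflict can arise between internal vertices of different subgadgets.

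The main obstacle is the case analysis establishing Condition \ref{item:ows-switch-condition}: one must verify that the combination of the color-range restrictions on $t,u,v,w$ and the distance-$2$ constraints among them is strong enough to force $\chi(v)=\red$ whenever $\chi(t)=\red$, yet weak enough to permit both colors of $\chi(v)$ when $\chi(t)=\blue$. Once this is done, the existence guarantee is routine, since the only nontrivial step is checking the consistency of a four-vertex coloring, and every other gadget invoked in the construction has an existence guarantee of its own.
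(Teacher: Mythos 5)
Your overall strategy (case analysis on $\chiin(s)$, propagating the color-range restrictions through $t,u,v,w$, then invoking the subgadgets' existence guarantees and Property~\ref{prop:only-colorless-vertices-adjacent-to-portals}) is the same as the paper's, but the color propagation in your key case is wrong. If $\chiin(s)=\red$, then $\chi(t)=\red$ and, as you say, $\chi(u)=\green$; but then $v$, being at distance $2$ from both $t$ and $u$ inside the triangle and restricted to $\{\red,\green,\blue\}$, is forced to $\blue$ --- it can never be $\red$, contrary to your claim that the constraints ``pin $\chi(s')=\chi(v)=\red$''. The correct chain continues one step further: $w\in\{\red,\blue\}$ is at distance $2$ from $v=\blue$, hence $w=\red$, and it is $w$ (not $v$) whose color the control output $s'$ copies; the paper's proof explicitly uses $\chi(s')=\chi(w)$. (The sentence in the construction saying $s'$ copies $v$ is a slip; with $s'$ tied to $v$ the gadget would not satisfy Table~\ref{tbl:summary-one-way-switch} at all: on a \red\ input the output would be forced to \blue, and on a \blue\ input $v$ ranges over $\{\red,\green\}$, never $\blue$, so Condition~\ref{item:ows-switch-condition} would fail in both cases.)

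The same confusion breaks your existence argument: you set $\chi(t):=\chiin(s)$ and $\chi(v):=\chiout(s')$, so in the case $\chiin(s)=\chiout(s')=\red$ you would color $t$ and $v$ both \red\ although they are at distance $2$, and your claim that ``the forced value $\chi(v)=\red$ arises from $\chi(t)=\red$ and $\chi(u)=\green$'' asserts exactly the assignment those constraints exclude. The repair is to assign the output color to $w$ (and hence to $s'$) instead: for $\chiin(s)=\red$ take $(t,u,v,w)=(\red,\green,\blue,\red)$; for $\chiin(s)=\blue$ with desired output \red\ take $(\blue,\green,\red,\red)$ is not valid either since $v=\red$ conflicts with $w=\red$ --- take $(\blue,\green,\red,\blue)$ for output \blue\ and $(\blue,\red,\green,\red)$ for output \red, as the paper does. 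With that correction the rest of your argument (extending via the equality/subset gadgets' existence guarantees) goes through.
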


\begin{proof}
 We first show the output guarantee.
 Let $\chi$ be a square $q$-coloring of $\OWSGadget$ such that $\chi|_{\Vin} \in \Cin$.
 
 Condition \ref{item:ows-copy-condition} of $\CRel$ is clearly satisfied due to the equality gadgets.
 Also, $\chi(s') = \chi(w) \in \{\red,\blue\}$ due to the subset constraint imposed on $w$.
 So if $\chi(s) = \blue$ we are done.
 
 Suppose $\chi(s) = \red$. Then $\chi(t) = \red$.
 Since $\dist(t,u) \leq 2$ and $u$ is restricted to \red\ or \green, we get that $\chi(u) = \green$.
 Also, $\dist(v,u) \leq 2$ and $\dist(v,t) \leq 2$ and $v$ is restricted to colors \red, \green\ or \blue.
 So $\chi(v) = \blue$.
 Finally, $\chi(w) \in \{\red,\blue\}$ and $\chi(w) \neq \chi(v)$ which implies that $\chi(s') = \chi(w) = \red$ as desired.
 
 \medskip
 
 For the existence guarantee suppose that $(\chiin, \chiout, ()) \in \CRel$.
 We construct the square $q$-coloring $\chi$ of $\OWSGadget$ as follows.
 
 First suppose that $\chiin(s) = \red$ which implies that $\chiout(s') = \red$.
 We set $\chi(s) = \chi(t) = \chi(w) = \chi(s') \coloneqq \red$, $\chi(u) \coloneqq \green$ and $\chi(u) \coloneqq \blue$.
 We use the existence guarantees of the subset gadgets to extend the coloring in the natural way.

 Next, suppose $\chiin(s) = \blue$.
 This means $\chiout(s') \in \{\red, \blue\}$.
 First assume $\chiout(s') = \red$.
 We set $\chi(u) = \chi(w) = \chi(s') \coloneqq \red$, $\chi(v) \coloneqq \green$ and $\chi(s) = \chi(t) \coloneqq \blue$.
 
 Otherwise, $\chiout(s') = \blue$.
 We set $\chi(u) \coloneqq \red$, $\chi(v) \coloneqq \green$ and $\chi(s) = \chi(t) = \chi(w) = \chi(s') \coloneqq \blue$.
 
 In both subcases, we again use the existence guarantees of the subset gadgets to extend the coloring in the natural way.
 Observe that no color conflict can arise between inner vertices of different subgadgets due to Property~\ref{prop:only-colorless-vertices-adjacent-to-portals}.
\end{proof}

\paragraph{The vector selection gadget.}
Finally, we are ready to construct the vector selection gadget.
Recall the definition of a vector-generated output and the specification of the vector selection gadget in Table~\ref{tbl:summary-vector-selection}.
Suppose $A = \{a_1,\dots,a_{n^4}\}$ is the input vector list.
Recall that $A$ is a node-representing vector list.
Suppose $D = \{z_1,z_2,z_3\}$ denotes the set of non-zero dimensions of $A$.

We construct the vector selection gadget $\VectorSelection(A)$ as follows (see also Figure~\ref{fig:vector-selection}).
The overall structure of the vector selection gadget consists of a chain of vector state gadgets with one-way switch gadgets in-between.
More precisely, we create $n^4$ vector state gadgets $G^{(\text{vst},1)},\dots,G^{(\text{vst},n^4)}$ (the parameters are specified later) and $n^4-1$ one-way switch gadgets $G^{(\text{ows},1)},\dots,G^{(\text{ows},n^4-1)}$.
As before, we identify the elements (e.g., sets and vertices) of $G^{(\text{vst},j)}$ and $G^{(\text{ows},j)}$ by superscripting them with $(\text{vst},j)$ and $(\text{ows},j)$, respectively.

\begin{figure}
 \centering
 \includesvg[width=.95\linewidth]{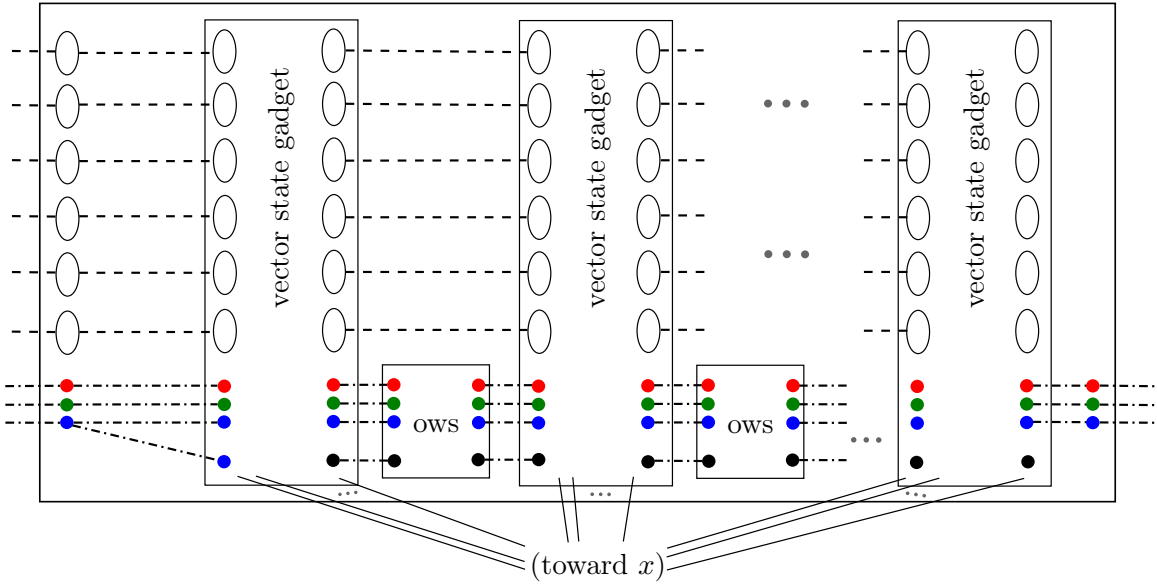}
 \caption{Illustration of the vector selection gadget.}
 \label{fig:vector-selection}
\end{figure}

Now, we connect the following sets and vertices using appropriate equality gadgets.
First, we connect the input vertex $b$ to $s^{(\text{vst},1)}$.
For all $i \in [6]$, we connect $I_i$ to $I_i^{(\text{vst},1)}$.
Also, we connect $r,g$ and $b$ to $r^{(\text{vst},1)},g^{(\text{vst},1)}$ and $b^{(\text{vst},1)}$, respectively.
For every $j \in [n^4-1]$, we connect $I_i^{(\text{vst},j)}$ to $I_i^{(\text{vst},j+1)}$ for all $i \in [6]$.
Furthermore, we connect $(r')^{(\text{vst},j)}, (g')^{(\text{vst},j)}, (b')^{(\text{vst},j)}$ and $(s')^{(\text{vst},i)}$ to $r^{(\text{ows},j)}, g^{(\text{ows},j)}, b^{(\text{ows},j)}$ and $s^{(\text{ows},j)}$, respectively.
Next, we connect $(r')^{(\text{ows},j)}, (g')^{(\text{ows},j)}, (b')^{(\text{ows},j)}$ and $(s')^{(\text{ows},j)}$ to $r^{(\text{vst},j+1)}, g^{(\text{vst},j+1)},b^{(\text{vst},j+1)}$ and $s^{(\text{vst},j+1)}$, respectively.
Finally, we connect $(r')^{(\text{vst},n^4)},(g')^{(\text{vst},n^4)}$ and $(b')^{(\text{vst},n^4)}$ to the logic color outputs $r',g',b'$, respectively.
The control output $(s')^{(\text{vst},n^4)}$ and the color class outputs $I_j^{(\text{vst},n^4)}$, $j \in [6]$, are left unconnected.

Note that the output guarantees of the one-way switch and vector state gadgets ensure that each one-way switch gadget outputs either \red\ or \blue\ on its control output,
and that the vector state gadgets output the same color on their control output as they receive on their control input.
For every square $q$-coloring $\chi$ of the gadget, let us call $\chi(s^{(\text{vst},j)})$ (for $j \in [n^4]$) the \emph{chain state at position $j$}.
Due to the behaviour of the one-way switch gadgets, we conclude that there exists some $t \in [n^4]$ such that for all $t' \leq t$ the chain state at position $t'$ is \blue, and for all $t' > t$ the chain state at position $t'$ is \red.
Observe that there are $n^4$ possible positions $t$ for this chain state switch from \blue\ to \red\ to occur in the chain.
These represent the $n^4$ possible choices for $a_t \in A$.
More precisely, if the chain state switch happens at position $t$, the vector selection gadget should output the vector
\[v_x = (\halftargetval+\NZ(a_t)[1], \halftargetval-\NZ(a_t)[1],\halftargetval+\NZ(a_t)[2], \halftargetval-\NZ(a_t)[2],\halftargetval+\NZ(a_t)[3], \halftargetval-\NZ(a_t)[3]),\]
i.e., the vector-generated output generated by $a_t$.

To complete the construction, we specify the parameters of the vector state gadgets.
Let
\[R_i \coloneqq \begin{cases}
                 [1,n^2]   &\text{if } z_i \in D^+ \\
                 [-n^2,-1] &\text{if } z_i \in D^-
                \end{cases}\]
for $i \in [3]$.
Recall that $\NZ(a_j) \in R_1 \times R_2 \times R_3$ for all $j \in [n^4]$.
We define the parameter $y^{(\text{vst},j)}$ of the vector state gadget $G^{(\text{vst},j)}$ as follows.
We set
\[y^{(\text{vst},j)} \coloneqq \NZ(a_1)\] and
\[y^{(\text{vst},j)} \coloneqq \NZ(a_j) - \NZ(a_{j-1})\]
for all $j \in \{2,\dots,n^4\}$.
Since $\NZ(a_j)[i] \in R_i$ for all $j \in [n^4]$ and $i \in [3]$, it follows that $y^{(\text{vst},j)} \in [-n^2, n^2]^3$ for all $j \in [n^4]$.
The reason for this choice of parameters is that $\sum_{j=1}^t y^{(\text{vst},j)} = \NZ(a_t)$ for all $t \in [n^4]$.

Finally, we set $\Vx \coloneqq \bigcup_{j \in [n^4]} \Vx^{(\text{vst},j)}$.
This completes the construction of $\VectorSelection(A)$.

As usual, observe that the vector selection gadget satisfies Property \ref{prop:only-colorless-vertices-adjacent-to-portals} since the subset gadget satisfies Property \ref{prop:only-colorless-vertices-adjacent-to-portals}.

\begin{lemma}
 The gadget $\VectorSelection(A)$ satisfies the specification table of the vector selection gadget (Table~\ref{tbl:summary-vector-selection}).
\end{lemma}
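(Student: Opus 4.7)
The plan is to verify both guarantees by exploiting the chain structure and the telescoping choice of the parameters $y^{(\text{vst},j)}$. For the output guarantee, fix any square $q$-coloring $\chi$ of $\VectorSelection(A)$ with $\chi|_{\Vin}\in\Cin$. First, I would propagate the equality gadgets and the output conditions of the vector state and one-way switch subgadgets to conclude that the logic colors \red, \green, \blue\ and the six color class colorings are carried unchanged from $\Vin^{(\text{vst},1)}$ through every $\Vin^{(\text{vst},j)}$ and $\Vout^{(\text{vst},j)}$ all the way to $\Vout$. This already establishes the first bullet of $\CRel$.

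Next, using the output guarantee of $\OWSGadget$ applied inductively to the chain $G^{(\text{ows},1)},\dots,G^{(\text{ows},n^4-1)}$, I would show that the chain state sequence $\chi(s^{(\text{vst},1)}),\dots,\chi(s^{(\text{vst},n^4)})$, together with $\chi(b)=\blue$ feeding into $s^{(\text{vst},1)}$, has the form $\blue,\dots,\blue,\red,\dots,\red$: there exists a unique $t\in\{0,1,\dots,n^4\}$ such that positions $\leq t$ are \blue\ and positions $>t$ are \red. (The case $t=0$ corresponds to all \red, which I will need to rule out or handle separately; since $s^{(\text{vst},1)}$ is set to $\chi(b)=\blue$ via an equality gadget, $t\geq 1$.) Invoking the vector condition of $\VectorState$ for each $G^{(\text{vst},j)}$, the contribution to $v_x$ from position $j\leq t$ is $(n^2+y^{(\text{vst},j)}_i,\,n^2-y^{(\text{vst},j)}_i)_{i=1,2,3}$ and from position $j>t$ is $(n^2,n^2,n^2,n^2,n^2,n^2)$. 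Summing and using the telescoping identity $\sum_{j=1}^t y^{(\text{vst},j)}=\NZ(a_t)$ together with $|\Vx^{(\text{vst},j)}|=6n^2$ and $\halftargetval=n^6=n^4\cdot n^2$, I obtain
\[
v_x=\bigl(\halftargetval+\NZ(a_t)[1],\,\halftargetval-\NZ(a_t)[1],\,\ldots,\,\halftargetval+\NZ(a_t)[3],\,\halftargetval-\NZ(a_t)[3]\bigr),
\]
which is exactly the vector-generated output generated by $a_t$.

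For the existence guarantee, given a target $(\chiin,\chiout,v_x)\in\CRel$ with $v_x$ generated by some $a_t\in A$ and a compatible $S=(S_1,\dots,S_6)$, I would reverse this analysis: set the chain state to switch from \blue\ to \red\ at position $t$, i.e., color $s^{(\text{vst},j)}$ with \blue\ for $j\leq t$ and with \red\ for $j>t$. Then the per-gadget target vector outputs $v_x^{(\text{vst},j)}$ are determined, and I need to partition the prescribed sets $S_i\subseteq\chiin(I_i)$ into parts $S_i^{(\text{vst},j)}$ of the correct sizes across the $n^4$ gadgets. Arbitrary partitions of $S_i$ into parts of the prescribed sizes exist because the sizes add up to $|S_i|=v_x[i]$ by construction. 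Finally I invoke the existence guarantees of each $\VectorState$, each $\OWSGadget$, and the connecting equality gadgets, and check using Property~\ref{prop:only-colorless-vertices-adjacent-to-portals} that no color conflicts span two different subgadgets.

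The main obstacle I expect is the bookkeeping of two items: (i) confirming that the one-way switch forces the chain state sequence to have exactly the monotone $\blue\!\to\!\red$ pattern (so no \red\ position ever reverts to \blue), which is purely an inductive application of Condition~\ref{item:ows-switch-condition} of Table~\ref{tbl:summary-one-way-switch}; and (ii) verifying the telescoping arithmetic cleanly, for which the identity $y^{(\text{vst},1)}+\cdots+y^{(\text{vst},t)}=\NZ(a_t)$ was built into the parameter choice precisely so that the final $v_x$ matches a vector-generated output. Everything else is a routine composition of the already-established guarantees of the subgadgets together with Property~\ref{prop:only-colorless-vertices-adjacent-to-portals}.
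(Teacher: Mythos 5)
Your proposal is correct and follows essentially the same route as the paper: propagate the subgadget output guarantees along the chain, use the one-way switches to force a monotone \blue-to-\red\ chain state with switch position $t$ (with $t\geq 1$ since $s^{(\text{vst},1)}$ copies the \blue\ input), apply the telescoping identity $\sum_{j\leq t} y^{(\text{vst},j)}=\NZ(a_t)$ to identify $v_x$ as the output generated by $a_t$, and for existence reverse the analysis by fixing the switch position, partitioning each $S_i$ into per-gadget pieces of the prescribed sizes, and invoking the subgadget existence guarantees together with Property~\ref{prop:only-colorless-vertices-adjacent-to-portals}. The paper merely packages the chain-state and color-class conditions into a notion of ``compliant'' colorings, which is a presentational difference only.
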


\begin{proof}
 Let $\Vin^{(\text{vst})} \coloneqq \bigcup_{j=1}^{n^4}\Vin^{(\text{vst},j)}$.
 We say that a coloring $\chiin^{\text{vst}}\colon \Vin^{(\text{vst})} \to [q]$ is \emph{compliant} if the following conditions are satisfied:
 \begin{enumerate}[label = (\Roman*)]
  \item\label{item:compliant-1} $\chiin^{(\text{vst})}(r^{(\text{vst},j)}) = \red$,
   $\chiin^{(\text{vst})}(b^{(\text{vst},j)}) = \blue$ and
   $\chiin^{(\text{vst})}(g^{(\text{vst},j)}) = \green$ for all $j \in [n^4]$,
  \item\label{item:compliant-2} $\chiin^{(\text{vst})}(I_i^{(\text{vst},j)}) = \chiin^{(\text{vst})}(I_i^{(\text{vst},j')})$ for all $j,j' \in [n^4]$ and $i \in [6]$, and
  \item\label{item:compliant-3} there is some $t \in [n^4]$ such that $\chiin^{(\text{vst})}(s^{(\text{vst},j)}) = \blue$ for all $j \leq t$ and
   $\chiin^{(\text{vst})}(s^{(\text{vst},t')}) = \red$ for all $j > t$.
 \end{enumerate}
 Furthermore, we say that it \emph{inherits} a coloring $\chiin\colon \Vin \to [q]$ with $\chiin \in \Cin$ if
 \[\chiin^{(\text{vst})}(I_i^{(\text{vst},1)}) = \chiin(I_i)\]
 for all $i \in [6]$.
 
 \medskip
 
 We first show the output guarantee.
 Let $\chi$ be a square $q$-coloring of $\VectorSelection(A)$ such that $\chi|_{\Vin} \in \Cin$.
 It is easy to see that $\chi(r') = \chi(r)$, $\chi(g') = \chi(g)$ and $\chi(b') = \chi(b)$.
 
 Let $v_x \coloneqq p(\chi_|{\Vx})$.
 We need to show that $v_x$ is a vector-generated output of $A$.
 We first argue that $\chi|_{\Vin^{(\text{vst})}}$ is compliant.
 Using the output guarantees of all involved subgadgets it is easy to see that Conditions \ref{item:compliant-1} and \ref{item:compliant-2} are satisfied.
 Also, Condition \ref{item:compliant-3} follows from the output guarantee of the one-way switch gadget.
 Moreover, it is also immediately clear that $\chi|_{\Vin^{(\text{vst})}}$ inherits $\chi|_{\Vin}$.
 
 Now let $v_x^{(\text{vst},j)}$ denote the vector output of the vector state gadget $G^{(\text{vst},j)}$.
 Since $V_x = \bigcup_{j \in [n^4]}V_x^{(\text{vst},j)}$, we conclude that
 \[v_x = \sum_{j=1}^{n^4}v_x^{(\text{vst},j)}.\]
 Let us write $y_j \coloneqq y^{(\text{vst},j)}$ for simplicity.
 Using the output guarantee of the vector state gadget and defining $t$ to be the position of the chain state switch, we get that
 \begin{align*}
  v_x &= \sum_{j=1}^{n^4}v_x^{(\text{vst},j)}\\
      &= \sum_{j=1}^{t}v_x^{(\text{vst},j)} + \sum_{j=t+1}^{n^4}v_x^{(\text{vst},j)}\\
      &= \sum_{j=1}^{t}(n^2+y_j[1],n^2-y_j[1],
                        n^2+y_j[2],n^2-y_j[2], 
                        n^2+y_j[3],n^2-y_j[3])\\
      & \;\;\;\;+\sum_{j=t+1}^{n^4}(n^2,n^2,n^2,n^2,n^2,n^2)\\
      &= \left(\sum_{j=1}^{t}y_j[1],
               -\sum_{j=1}^{t}y_j[1],
               \sum_{j=1}^{t}y_j[2],
               -\sum_{j=1}^{t}y_j[2],
               \sum_{j=1}^{t}y_j[3],
               -\sum_{j=1}^{t}y_j[3]\right)\\
      & \;\;\;\;+ t(n^2,n^2,n^2,n^2,n^2,n^2) + (n^4-t)(n^2,n^2,n^2,n^2,n^2,n^2)\\
      &= \left(\NZ(a_{t})[1],-\NZ(a_{t})[1],\NZ(a_{t})[2],-\NZ(a_{t})[2],\NZ(a_{t})[3],-\NZ(a_{t})[3]\right)\\
      & \;\;\;\;+ (n^6,n^6,n^6,n^6,n^6,n^6)
 \end{align*}
 So $v_x$ is vector-generated output of $A$ which is generated by $a_t$.
 Overall, this means that $(\chi|_{\Vin}, \chi|_{\Vout}, p(\chi|_{\Vx})) \in \CRel$.
 
 \medskip
 
 For the existence guarantee suppose that $(\chiin, \chiout, v_x) \in \CRel$ and $S=(S_1,\dots,S_6)$ with $S_1 \subseteq \chiin(I_1),\dots,S_6 \subseteq \chiin(I_6)$ and $(|S_1|,\dots,|S_6|) = v_x$.
 Assume that $v_x$ is generated by $a_t$.
 We construct a square $q$-coloring $\chi$ of $\VectorSelection(A)$ by extending $\chiin$ and $\chiout$ as follows.
 
 First, there is a unique coloring $\chiin^{(\text{vst})}$ that is compliant, has the chain state switch at position $t$ and inherits $\chiin$.
 We define $\chi|_{\Vin^{(\text{vst})}} \coloneqq \chiin^{(\text{vst})}$.
 We use the existence guarantees of all subgadgets except the vector state gadgets to extend the coloring $\chi$ constructed so far.
 It is easy to see that this is always possible.

 Now, consider the vector state gadget $G^{(\text{vst},j)}$.
 Let us denote by $v_x^{(\text{vst},j)}$ the vector output of $G^{(\text{vst},j)}$ which is uniquely determined by $\chi|_{\Vin^{(\text{vst})}}$.
 We wish to find colorings for the vector state gadgets resulting in this vector output.

 Towards this end, we arbitrarily choose sets of colors $S^{(j)}_i$ for all $j \in [n^4]$ and $i \in [6]$ such that
 \begin{itemize}
  \item $\left(|S^{(j)}_1|,\dots,|S^{(j)}_6|\right) = v_x^{(\text{vst},j)}$ for all $j \in n^4$, and
  \item $\bigcup_{j \in [n^4]} S^{(j)}_i = S_i$ for all $i \in [6]$.
 \end{itemize}
 Now we can use the existence guarantees of the gadget $G^{(\text{vst},j)}$ on the tuple $S^{(j)} = (S^{(j)}_1,\dots,S^{(j)}_6)$ to color the subgadget $G^{(\text{vst},j)}$.
 We have that $\chi(V_x) \cap \chiin(I_i) = \bigcup_{j\in[n^4]}\chi(V_x^{(j)}) \cap \chiin(I_i) = \bigcup_{j\in[n^4]}S^{(j)}_i = S_i$ as desired.

 As usual, observe that no color conflict can arise between inner vertices of different subgadgets due to Property~\ref{prop:only-colorless-vertices-adjacent-to-portals}.
\end{proof}

\subsubsection{Removing Colorless Vertices}
So far, the reduction relies on colorless vertices.
We argue how to modify the construction to replace all colorless vertices by normal vertices.
Suppose $G^*$ denotes the output graph of the reduction and let $V_{\text{colorless}}$ denote the set of colorless vertices in $G^*$.
Also let $n_{\text{colorless}} \coloneqq V_{\text{colorless}}$ denote the number of colorless vertices.
We set $q_{\text{colorless}} \coloneqq n_{\text{colorless}}$, i.e., the number of neutral colors equals the total number of colorless vertices.

Now, in order to replace all colorless vertices by normal vertices, we only need to modify the subset gadget. 
We reduce the number of complement vertices by two --
meaning that the total number of complement vertices is now $q-\alpha-2$ instead of $q-\alpha$ --
and add an edge between the two colorless vertices $a$ and $b$.

It is easy to see that the modified gadget without colorless vertices still enforces a subset constraint between the input and output colors.
Indeed, all complement vertices, as well as both previously colorless vertices, are now within distance $2$ of the $\alpha$ input vertices.
They are also pairwise within distance $2$ of each other.
Hence, these $q-\alpha-2+2$ vertices are colored using exactly the $q-\alpha$ colors which do not appear on the input vertices.
Similarly, the output vertices are within distance $2$ of all complement vertices and the two previously colorless vertices,
So they can not be colored using any of those colors.
Hence, their colors are a subset of the colors used for the input vertices (and all their colors are pairwise distinct).
So overall, the subset gadget still behaves in the same way.

Now, given a coloring $\chi$ of the constructed graph with colorless, we obtain a coloring $\chi'$ of the updated graph by coloring each vertex $v \in V_{\text{colorless}}$ using a distinct neutral color.
It is easy to see that $\chi'$ of the updated instance since colors of vertices in $V_{\text{colorless}}$ appear only once, and hence, they can not introduce any color conflicts.

In the other direction, it can be checked that for all gadgets, increasing the number of colors does not change the possible colorings (assuming appropriate subset gadgets are used) since all normal vertices are forced to be colored with non-neutral colors via appropriate subset gadgets.
Hence, the neutral colors can only appear on vertices from $V_{\text{colorless}}$ which implies a coloring $\chi'$ of the updated graph restricts to a coloring of the original graph $G^*$ with colorless vertices.

\subsubsection{Properties of the Reduction}

We complete the proof of Theorem \ref{thm:vector-sum-to-distance-two-coloring} by analysing the number of vertices and the treewidth of the constructed instance $G^*$.

\paragraph{Number of vertices and running time.}
We first bound the number of vertices of $G^*$.
The result depends on $q_{\text{colorless}}$, since each subset gadget has $q-\alpha-2$ complement vertices.
Recall that in our construction, $q_{\text{colorless}}$ is the number of colorless vertices in the construction.
Hence, by also bounding the number of colorless vertices, we can get a bound on the total number of vertices of the graph that does not depend on $q_{\text{colorless}}$.

Discounting its inputs and outputs, a subset gadget has $O(q)$ vertices, of which $2$ are colorless.
The color class copy gadget, which consists of the vertices $\Vin^{(\text{cpy})}$, $\Vout^{(\text{cpy})}$, the subset gadget connecting the two as well as the vertex set $S$,
has $O(m\cdot n^6) + O(m\cdot n^6) + O(q) + O(\log m) = O(m\cdot n^6 + q)$ vertices in total.
Also, $O(\log m)$ vertices are colorless.

Both socket gadgets consist of a constant number of subset gadgets where $\alpha + \beta$ is always bounded by $O(n^6)$.
Hence, the total number of vertices is $O(q+n^6)$, and the total number of colorless vertices is $O(1)$.

The edge selection gadget consists of $2n^2$ socket gadgets and $O(n^2)$ subset gadgets connecting them.
So the total number of vertices is $O(n^2(q+n^6))$, and the total number of colorless vertices is $O(n^2)$.

Similarly, the vector state gadget consists of three edge selection gadgets and a constant number of subset gadgets connecting them.
So as before, the total number of vertices is $O(n^2(q+n^6))$, and the total number of colorless vertices is $O(n^2)$.

The one-way switch gadget has a constant number of subset gadgets with $\alpha + \beta = O(1)$ and a constant number of other vertices.
So it has $O(q)$ vertices and $O(1)$ colorless vertices.

The vector selection gadget consists of $O(n^4)$ vector state gadgets and one-way switch gadgets, as well as $O(n^4)$ subset gadgets connecting them.
Hence, we have $O(n^4 \cdot n^2(q+n^6)) = O(n^6(q+n^6))$ vertices total, of which $O(n^4 \cdot n^2) = O(n^6)$ are colorless.

Finally, the graph $G^*$ consists of $k$ vector selection gadgets, one color class copy gadget, $O(k)$ subset gadgets connecting them, and a constant number of additional vertices.
So the total number of vertices of the reduction output is $O(k \cdot n^6(q+n^6) + m \cdot n^6 + q + k \cdot q) = O(kn^6(q+n^6)+mn^6)$, of which
$O(k \cdot n^6 + m\cdot n^6 + k) = O((k+m)n^6)$ are colorless.

Recall that $q = 2m\cdot \targetval + 3 + q_\text{colorless} = 2m \cdot \targetval + 3 + O((k+m)n^6) = O((k+m)n^6)$.
Hence, the total number of vertices is $O(kn^6((k+m)n^6+n^6)+mn^6) = O(k(k+m)n^{12}))$.

Certainly, the construction of the graph can be done in time polynomial in the number of vertices.

\paragraph{Treewidth.}
It remains to bound the treewidth of $G^*$.
We show that 
\[\tw(G^*) = 2r + O(1) = (1+\varepsilon)\log(m) + O(1)\]
where $r$ is the number defined in Equation \eqref{eq:def-r}.
Using Theorem \ref{thm:cops-and-robbers-characterization-of-treewidth}, it suffices to provide a winning strategy for $2r + O(1)$ cops in cops-and-robbers game on $G^*$.

First, we place $2r+4$ cops on the central vertex $x$, on $w_X$, on the $2r$ vertices of $S$ in $G^{(\text{cpy})}$, and on the two formerly colorless vertices in the $(2m\cdot \targetval)$-equality gadget connecting $\Vin^{(\text{cpy})}$ and $\Vout^{(\text{cpy})}$.
These cops never move throughout the play.
Also let $U$ denote the set of those vertices.
If the robber is on one of the complement vertices of the $(2m\cdot \targetval)$-equality gadget, we can catch them immediately with another cop.

We now show that if the robber is not on one of the complement vertices, we can catch them using only a constant number of additional cops.

Note that each $X_j^{(\text{cpy})}$ or $Y_j^{(\text{cpy})}$ is only connected to one vector selection gadget.
Hence, in the graph $G^* - U$, after removing the logic inputs $r^{(\text{sel},i)},g^{(\text{sel},i)},b^{(\text{sel},i)}$ as well as the logic outputs $(r')^{(\text{sel},i)}, (g')^{(\text{sel},i)}, (b')^{(\text{sel},i)}$,
the vector selection gadget $G^{(\text{sel},i)}$ along with the six vertex groups $X_j^{(\text{cpy})}$ or $Y_j^{(\text{cpy})}$ that $G^{(\text{vst},i)}$ is connected to via equality gadgets form a connected component.

Hence, following the chain of vector selection gadgets, we can use $6$ additional cops to trap the robber in one of the vector selection gadgets $G^{(\text{sel},i)}$ (or in an equality connecting the logic inputs and outputs in which case the cops easily win).
The $6$ cops blocking the logic inputs and outputs of the vector selection gadget $G^{(\text{sel},i)}$ remain in position, and we now catch the robber within this vector selection gadget or its connected $X_j^{(\text{cpy})}$ and $Y_j^{(\text{cpy})}$ using a constant number of additional cops.

The vector selection gadget has its $6$ color class inputs (connected to the corresponding six $X_j^{(\text{cpy})}$ and $Y_j^{(\text{cpy})}$) and its $3$ logic color inputs and outputs.
The main body of the gadget is made of the chain of vector state gadgets $G^{(\text{vst},i)}$ and one-way switch gadgets $G^{(\text{ows},i)}$.

We can use $2$ cops to disconnect the inputs and outputs of a subset gadget by placing them on both of the formerly colorless vertices (note that $1$ cop would suffice, but we use $2$ to make the argument below cleaner).
Hence, for any vector state gadget, we can block the equality gadgets that externally connect to its inputs and outputs using a constant number of cops.

We do this input-output blocking for the first vector selection gadget.
Specifically, this blocks the color class inputs $I_j^{(\text{sel},i)}$ from the rest of the vector selection gadget.
The robber is now either on one of the equality gadgets leading to an $X_j^{(\text{cpy})}$ or $Y_j^{(\text{cpy})}$ (which also includes the vertices of the $X_j^{(\text{cpy})}$ or $Y_j^{(\text{cpy})}$), or they are within the vector selection gadget itself.

Assume they are on one of the equality gadgets going to the $X_j^{(\text{cpy})}$ or $Y_j^{(\text{cpy})}$.
We can block the formerly colorless vertices using two cops, and then catch the robber -- who is either on one of the complement vertices or on vertices of the $X_j^{(\text{cpy})}$ or $Y_j^{(\text{cpy})}$ -- using a third cop.

Now assume the robber is within the vector selection gadget.
As before, we can go through the chain of vector state and one-way switch gadgets, always blocking off the inputs and outputs of consecutive vector state gadgets.
This way we trap the robber in one of the subgadgets.
If the robber is trapped in a subgadget, we proceed as before.
We keep the cops in place to keep the robber trapped throughout the remainder of the play, and we follow the same strategy for the next subgadget until we eventually trap the robber in one of the low-level socket gadgets or one-way switch gadgets.
At this point, it is easy to see that the robber can be caught using a constant number of additional cops. 

Overall, this strategy requires $2r + O(1)$ cops which implies the desired bound on the treewidth.
This concludes the proof of Theorem~\ref{thm:vector-sum-to-distance-two-coloring}.

\section{Algorithms for Planar Graphs}
\label{sec:planar-alg}
We now turn to designing algorithms for \SqCol\ on planar graphs.
Let us write \textsc{Planar} \SqCol\ (resp.\ \textsc{Planar} \qSqCol) to refer to the variant of \SqCol\ (resp.\ \qSqCol) where the input graph is planar.
The main result of this section is an algorithm that solves \textsc{Planar} \SqCol\ in subexponential time $2^{O(n^{2/3}\log n)}$.

The algorithm consists of two subroutines, one of which covers the case that the number $q$ of available colors is small, and the second subroutine is used for large numbers of colors.
The first subroutine relies on the fact that we can bound the treewidth of $G^2$ by $O(\sqrt{nq})$ at which point we can rely on standard dynamic programming algorithms for \textsc{$q$-Coloring}.
The second subroutine is more intricate and here, the idea is to construct a $(O(n/q),O(n/q),O(1))$-protrusion decomposition (see, e.g., \cite[Chapter 15]{FominLSZ19}) and then rely on dynamic programming ideas that are similar to those already used in Section \ref{sec:tw-alg}.
This results in an algorithm for \textsc{Planar} \SqCol\ running in time $n^{O(n/q)}$.

Let us start with the first subroutine.
Here, we use the fact that planar graphs of bounded diameter have bounded treewidth.
Let $G$ be a graph.
A \emph{spanning tree} of $G$ is a tree $T$ with vertex $V(G)$ and edge set $E(T) \subseteq E(G)$.
Here, we consider rooted spanning trees where an arbitrary vertex of $T$ is declared to be the root of $T$.
The \emph{height} of $T$ is the maximum distance between the root of $T$ and any other vertex of $T$.

\begin{lemma}[see, e.g., {\cite[Lemma 12.10]{FlumG06}}]
 \label{la:tw-planar-diameter}
 Let $G$ be a planar graph that has a spanning tree of height $\ell$.
 Then
 \[\tw(G) \leq 3\ell.\]
 Moreover, given a planar graph $G$ and a spanning tree of $G$ of height $\ell$, a tree decomposition of $G$ of width at most $3\ell$ can be computed in time $O(\ell\cdot n)$.
\end{lemma}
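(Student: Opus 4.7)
The plan is to prove the lemma by direct construction of a tree decomposition whose bags correspond to the faces of a triangulation of $G$ and whose underlying tree is the dual spanning tree formed by the cotree edges.

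First I would preprocess by triangulating $G$: add edges while preserving planarity so that every face (except possibly the outer one, which we also triangulate) is a triangle. Call the resulting graph $G'$. Since adding edges can only increase the treewidth, it suffices to bound $\tw(G')$, and since triangulation does not change the vertex set, $T$ remains a spanning tree of $G'$ of the same height $\ell$. Fix a planar embedding of $G'$. For every vertex $u \in V(G')$ let $P_u$ denote the unique path in $T$ from $u$ to the root; then $|V(P_u)| \le \ell+1$ and every $P_u$ contains the root $r$. For each triangular face $f$ with vertices $u,v,w$ define the bag $B_f := V(P_u) \cup V(P_v) \cup V(P_w)$. Since all three paths share $r$, we have $|B_f| \le 3(\ell+1) - 2 = 3\ell + 1$, giving the desired width bound of $3\ell$.

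For the tree structure, I would invoke the classical planar duality fact that, for any spanning tree $T$ of a planar graph $G'$, the dual edges of the cotree edges $E(G') \setminus E(T)$ form a spanning tree $T^*$ of the dual graph $(G')^{*}$. I take $T^*$, identifying each face with the corresponding dual vertex and assigning it the bag $B_f$, as the underlying tree of the decomposition. The first two tree decomposition axioms are immediate: every vertex of $G'$ is incident to some face and appears in its own tree path, and every edge of $G'$ lies on the boundary of some face whose bag then contains both endpoints.

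The main obstacle is verifying the connectivity axiom: for every $v \in V(G')$, the set $\CC_v := \{\, f : v \in B_f \,\}$ induces a connected subgraph of $T^*$. Observe that $v \in B_f$ if and only if some vertex of $f$ is a descendant of $v$ in $T$ (taking $v$ as its own descendant). Hence $\CC_v$ is precisely the set of faces incident to the subtree $T_v \subseteq T$ rooted at $v$ or to a cotree edge with an endpoint in $V(T_v)$. The plan is a topological argument: the drawing of $T_v$ in the plane, together with its incident cotree edges, carves out a closed connected region whose incident faces are exactly $\CC_v$. Any two faces in $\CC_v$ can therefore be joined in the dual by dual edges that stay inside this region; since all such dual edges correspond to cotree edges (dual edges of tree edges of $T_v$ lie strictly inside the region and so do not leave it), the join can be realized in $T^*$ while remaining inside $\CC_v$. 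This establishes the connectivity axiom and completes the correctness proof.

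Finally, for the algorithmic claim, a combinatorial planar embedding, a triangulation, and the planar dual can all be computed in linear time, the $\ell$-bounded tree paths $P_u$ can be precomputed once in $O(n)$ total time, and then each of the $O(n)$ bags is formed in $O(\ell)$ time by concatenating the three relevant paths; the tree $T^*$ is built directly from the dual adjacency. The total running time is thus $O(\ell \cdot n)$, as required.
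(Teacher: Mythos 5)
The paper does not actually prove this lemma (it only cites Flum--Grohe), and your construction is the standard one: triangulate, index bags by the triangular faces, take $B_f$ to be the union of the three root-to-vertex paths, and use the cotree dual $T^*$ as the decomposition tree. The width bound $3\ell$, the vertex- and edge-coverage axioms, and the $O(\ell\cdot n)$ running time are all fine as you state them.

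The gap is in the one step you yourself identify as the main obstacle, the connectedness axiom. Having observed that $\mathcal{C}_v$ is the set of faces meeting $V(T_v)$, you assert that any two such faces ``can be joined in the dual by dual edges that stay inside this region'' and that ``the join can be realized in $T^*$ while remaining inside $\mathcal{C}_v$'', justified only by the remark that duals of tree edges of $T_v$ lie strictly inside the region. That justification is a non sequitur: the issue is not where the duals of tree edges are drawn, but that they are not edges of $T^*$ at all. Two faces of $\mathcal{C}_v$ that are adjacent only across a tree edge (say both contain the edge from some $u\in V(T_v)$ to its child, or the edge from $v$ to its parent) must be reconnected through $T^*$, and the claim that such a reroute can be kept inside $\mathcal{C}_v$ is exactly what has to be proved; it cannot be done by purely local arguments around $T_v$. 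For instance, in $K_4$ with spanning tree the star centered at $r$, the three faces containing $r$ are pairwise non-adjacent in $T^*$ and are joined only through the face $abc$ --- so ``the faces touching a connected piece of $T$'' is in general \emph{not} $T^*$-connected; connectivity of $\mathcal{C}_v$ genuinely uses that it is closed under all faces meeting \emph{descendants} of $v$. A standard way to close the gap: cut the sphere open along $T$, so that all vertices lie on the boundary of a disk, the cotree edges become non-crossing chords whose face-adjacency is exactly $T^*$, and the boundary corners coming from $V(T_v)$ form a single contiguous arc (Euler-tour contiguity of a subtree). If a face $g$ on the $T^*$-path between $f,f'\in\mathcal{C}_v$ contained no vertex of $V(T_v)$, the two chords of $g$ through which the path enters and leaves $g$ would cut off disjoint boundary arcs containing the witnesses of $f$ and of $f'$, so the contiguous arc would have to pass through a chord endpoint, i.e.\ a vertex of $g$ --- a contradiction. (Equivalently one can argue with the fundamental cycles of the two cotree edges of $g$.) With such an argument supplied, your proof is complete; without it, the crucial axiom is asserted rather than established.
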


The next lemma is the key insight for the first subroutine.

\begin{lemma}
 \label{la:tw-planar-square}
 Let $G$ be a planar graph of maximum degree $\Delta$.
 Then
 \[\tw(G^2) = O\!\left(\sqrt{n\Delta}\right).\]
 Moreover, given a planar graph $G$, a tree decomposition of $G^2$ of width $O\!\left(\sqrt{n\Delta}\right)$ can be computed in polynomial time.
\end{lemma}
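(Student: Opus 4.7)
The plan is to construct a tree decomposition of $G^2$ of width $O(\sqrt{n\Delta})$ in polynomial time by combining BFS layering of $G$ with Lemma~\ref{la:tw-planar-diameter}. Let $k := \lceil \sqrt{n/\Delta}\, \rceil$. First I perform BFS on $G$ from an arbitrary root, obtaining layers $L_0, L_1, \ldots, L_D$. The key observation is that in $G^2$ two vertices can be adjacent only if their BFS indices differ by at most $2$, and any length-$2$ witness path in $G$ is confined to layers at distance at most one from both endpoints' layers.

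If $D \le 2k$, the BFS tree has height $O(k)$, so Lemma~\ref{la:tw-planar-diameter} yields $\tw(G) = O(k)$ together with an explicit tree decomposition $(T, \beta)$ of that width. I then ``inflate'' each bag by setting $\beta'(t) := N_G[\beta(t)]$. The subtree property in $(T, \beta')$ for each $v$ follows from Helly's property applied to the pairwise-intersecting subtrees $\{T_u : u \in N_G[v]\}$ of $T$ (each pair $T_u, T_{u'}$ intersects because the common neighbour $v$ lies in the bag of every node of their intersection), and each edge $uv \in E(G^2)$ witnessed by an intermediate vertex $w$ is covered by any bag containing $w$. Each inflated bag has size $O(|\beta(t)| \cdot \Delta) = O(k\Delta) = O(\sqrt{n\Delta})$, giving the bound.

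If $D > 2k$, I partition the BFS layers into ``thickened slabs'' of roughly $k$ consecutive layers each, extended by two layers on each side. The two-layer buffer guarantees $(G^2)[\tilde{S}_j] = (G[\tilde{S}_j])^2$, since any $G$-distance-$2$ path between vertices of $\tilde{S}_j$ stays inside $\tilde{S}_j$. Applying the shallow-case argument locally to each planar $G[\tilde{S}_j]$, which has BFS depth $O(k)$ from a super-root adjacent to its top layer, yields a tree decomposition of $(G^2)[\tilde{S}_j]$ of width $O(\sqrt{n\Delta})$. A pigeonhole argument over $4$-layer windows (whose total mass is $4n$, giving average size $4n/D < 2\sqrt{n\Delta}$) lets me place the slab boundaries so that each inter-slab overlap has only $O(\sqrt{n\Delta})$ vertices. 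I then chain the per-slab decompositions along these overlaps.

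The hard part will be this chaining in the deep case: I need each overlap vertex to span a connected subtree in the combined tree while keeping every bag of size $O(\sqrt{n\Delta})$. I plan to resolve this by preparing each slab's tree decomposition so that its root bag contains the entire overlap with the next slab; this is achievable by extending selected bags along a single root-to-leaf path inside the slab decomposition, at the cost of only a constant factor in the width, after which identifying the roots of adjacent slab trees yields a valid global tree decomposition of $G^2$.
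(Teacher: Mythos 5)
Your two main ingredients are exactly the paper's: slice $G$ along BFS layers into slabs of roughly $\sqrt{n/\Delta}$ layers, use Lemma~\ref{la:tw-planar-diameter} on each slab after attaching a single ``super-root'' for the part of $G$ above it, and turn a width-$w$ decomposition of a piece of $G$ into a decomposition of the corresponding piece of $G^2$ by replacing each bag $\beta(t)$ with $\bigcup_{w\in\beta(t)}N_G[w]$, paying a factor $\Delta+1$. Where you genuinely diverge is the gluing. The paper never chains local decompositions along small overlaps: it groups layers into pairs $D_i$, picks by averaging a residue class $j^*$ so that the union $L_{j^*}$ of every $M$-th pair has total size at most $\sqrt{n\Delta}$, and then simply adds the whole set $L_{j^*}$ to \emph{every} bag, hanging all the slab decompositions off one root with bag $L_{j^*}$. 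This buys them a trivial verification of edge coverage and connectivity and removes the part you yourself flag as hard. Your local-overlap route can be made to work and gives the same bound, but it is the more delicate of the two.

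Several of your justifications are wrong as written, though each is repairable. (i) The Helly argument for the inflated decomposition fails: for $u,u'\in N_G[v]$ the subtrees $T_u,T_{u'}$ need not intersect at all; the correct argument is that every $T_u$ meets $T_v$ (since $uv\in E(G)$ forces a common bag), so $\bigcup_{u\in N_G[v]}T_u$ is connected -- this is also how the paper argues, via Observation~\ref{obs:td-connected-subtree-from-connected-subgraph} applied to $N_G[v]$. (ii) The equality $(G^2)[\tilde S_j]=(G[\tilde S_j])^2$ is false for vertices in the outermost buffer layers (their distance-$2$ witnesses may leave $\tilde S_j$); what you actually need and what is true is only that every edge of $G^2$ lies in $(G[\tilde S_j])^2$ for the slab whose \emph{core} contains one endpoint. (iii) ``Identifying the roots of adjacent slab trees'' collapses, along the chain, all roots into a single node whose bag would have to contain the union of all overlaps, destroying the width bound; you must instead join consecutive roots by tree edges and make each root bag contain both of its overlaps (or, simpler, add the two overlaps of slab $j$ to every bag of its decomposition), and your pigeonhole must be done per block of $\approx k$ layers rather than via the global average $4n/D$, which by itself does not bound each chosen window. (iv) Planarity of the slab-plus-super-root graph should be justified as the paper does, by contracting the connected union of all layers above the slab (a minor of $G$), not by adding an apex to an arbitrary layer. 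With these repairs your argument goes through; the paper's single-separator trick is simply the cleaner way to avoid (iii) altogether.
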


\begin{proof}
 Since the treewidth of a graph $G$ equals the maximum treewidth of its connected components we may assume without loss of generality that $G$ is connected.
 Fix $r \in V(G)$ to be an arbitrary vertex of $G$.
 We define
 \[D_i \coloneqq \left\{v \in V(G) \;\middle|\; \left\lfloor\frac{\dist_G(v,r)}{2}\right\rfloor = i\right\}\]
 for all $i \in \ZZ$.
 We define 
 \[M \coloneqq \left\lceil\frac{\sqrt{n\Delta}}{\Delta}\right\rceil\]
 and let
 \[L_j \coloneqq \bigcup_{i \equiv j \mod M} D_i\]
 for all $j \in \{0,\dots,M-1\}$.
 Now let $j^* \in \{0,\dots,M-1\}$ such that $|L_{j^*}|$ is minimal.
 Clearly,
 \[|L_{j^*}| \leq \frac{n}{M} \leq \frac{n}{\frac{\sqrt{n\Delta}}{\Delta}} = \sqrt{n\Delta}.\]
 
 \begin{claim}
  Let $i \in \ZZ$ such that $i \equiv j^* \mod M$.
  Then
  \[\tw(G^2[D_{i+1} \cup \dots \cup D_{i + M - 1}]) \leq (\Delta + 1)(4 + 6(M+1)) - 1.\]
  Moreover, a tree decomposition of $G^2[D_{i+1} \cup \dots \cup D_{i + M - 1}]$ of width at most $(\Delta + 1)(4 + 6(M+1)) - 1$ can be computed in polynomial time.
 \end{claim}
 \begin{claimproof}
  For ease of notation let us define $C \coloneqq D_{i+1} \cup \dots \cup D_{i + M - 1}$.
  Consider the sets $R' \coloneqq D_0 \cup \dots \cup D_{i-1}$ and $C' \coloneqq D_i \cup \dots \cup D_{i + M}$.
  Note that $G[R']$ is connected (if $R' \neq \emptyset$).
  Now let $G'$ be the graph obtained from $G$ by deleting all vertices outside of $R' \cup C'$ and contracting $R'$ to a single vertex.
  Performing breath-first search on $G'$ starting at $R'$ results in a spanning tree of height at most $\ell \coloneqq 1 + 2(M+1)$.
  
  So $\tw(G') \leq 3\ell$ by Lemma \ref{la:tw-planar-diameter}.
  Let $(T,\beta')$ be a tree decomposition of $G'$ of width at most $3\ell$.
  We construct a tree decomposition $(T,\beta)$ of $G^2[C]$ as follows (observe that the tree $T$ remains unchanged).
  For each $t \in V(T)$ we define
  \[\beta(t) \coloneqq \left(\bigcup_{w \in \beta'(t) \cap C'} N_G[w]\right) \cap C.\]
  Clearly, $(T,\beta)$ can be computed in polynomial time using the algorithm from Lemma \ref{la:tw-planar-diameter}.
  
  We show that $(T,\beta)$ is indeed a tree decomposition of $G^2[C]$.
  First suppose that $uv \in E(G^2[C])$.
  Then there is some vertex $w \in V(G)$ such that $uw,wv \in E(G)$.
  Since $w \in N_G[C]$ and $N_G[C] \subseteq C'$ we conclude that $w \in C'$.
  In particular, there is some node $t \in V(T)$ such that $w \in \beta'(t)$.
  But then $u,v \in \beta(t)$ by definition.
  
  So let $v \in C$ and let $T_v \coloneqq \{t \in V(T) \mid v \in \beta(t)\}$.
  We have that
  \[T_v = \{t \in V(T) \mid N_G[v] \cap \beta'(t) \neq \emptyset\}.\]
  Since $N_G[v]$ induces a connected subgraph in $G'$ and $(T,\beta')$ is a tree decomposition, we conclude that $T_v$ induces a connected subtree of $T$ (see Observation \ref{obs:td-connected-subtree-from-connected-subgraph}).
  So overall, $(T,\beta)$ is a tree decomposition of $G^2[C]$.
  
  To complete the proof, we bound the width of $(T,\beta)$.
  Let $t \in V(T)$.
  We have that
  \[|\beta(t)| \leq (\Delta + 1) \cdot |\beta'(t)| \leq (\Delta + 1)(3\ell + 1) = (\Delta + 1)(4 + 6(M+1)).\qedhere\]
 \end{claimproof}

 Now, we construct a tree decomposition $(T,\beta)$ of $G$ as follows.
 Let $i_1,\dots,i_k$ be the list of all indices $i \in \ZZ$ such that $i \equiv j^* \mod M$ and $C_i \coloneqq D_{i+1} \cup \dots \cup D_{i + M - 1} \neq \emptyset$.
 For every $p \in [k]$ we construct a tree decomposition $(T_p,\beta_p)$ of $G^2[C_{i_p}]$ via the last claim.
 Moreover, let $r_p \in V(T_p)$ denote an arbitrary node which is designated as the root of $T_p$.
 We define $T$ to be the tree obtained from the disjoint union of all trees $T_p$ and a fresh root node $r$ that is connected to the root node $r_p$ for every $p \in [k]$.
 Formally,
 \[V(T) \coloneqq \{r\} \cup \bigcup_{p \in [k]} (\{p\} \times V(T_p))\]
 and
 \[E(T) \coloneqq \{r(p,r_p) \mid p \in [k]\} \cup \bigcup_{p \in [k]} \{(p,t)(p,t') \mid tt' \in E(T_p)\}.\]
 We define
 \[\beta(r) \coloneqq L_{j^*}\]
 and
 \[\beta(p,t) \coloneqq L_{j^*} \cup \beta_p(t)\]
 for all $p \in [k]$ and $t \in V(T_p)$.
 Clearly, $(T,\beta)$ can be computed in polynomial time.
 
 We show that $(T,\beta)$ is indeed a tree decomposition of $G^2$.
 Let $uv \in G^2$.
 We distinguish several cases.
 If $u,v \in L_{j^*}$ then $u,v \in \beta(r)$.
 Otherwise, by symmetry, we may assume that $u \in C_{i_p}$ for some $p \in [k]$.
 Since $\dist_G(u,v) \leq 2$ and $u \in D_{i_p+1} \cup \dots \cup D_{i_p + M - 1}$ it follows that $v \in D_{i_p} \cup \dots \cup D_{i_p + M} \subseteq C_{i_p} \cup L_{j^*}$.
 If $v \in C_{i_p}$ then there is some $t \in V(T_p)$ such that $u,v \in \beta_p(t) \subseteq \beta(p,t)$.
 Otherwise $v \in L_{j^*}$ and $u,v \in \beta(p,t)$ for any $t \in V(T_p)$ such that $u \in \beta_p(t)$.
 
 Also, it is easy to see that for every $v \in V(G)$ the set $T_v \coloneqq \{t \in V(T) \mid v \in \beta(t)\}$ induces a connected subtree of $T$.
 So overall, we conclude that $(T,\beta)$ is a tree decomposition of $G^2$.
 
 Finally, for $t \in V(T)$, we have that
 \[|\beta(t)| \leq |L_{j^*}| + (\Delta + 1)(4 + 6(M+1)) = O\!\left(\sqrt{n\Delta}\right).\qedhere\]
\end{proof}

Now, to obtain an algorithm for \qSqCol, we can combine the last lemma with well-known algorithms for \textsc{$q$-Coloring} on graphs of bounded treewidth.
The following theorem can for example be obtained from \cite[Theorem 7.9 \& 7.18]{CyganFKLMPPS15}

\begin{theorem}
 \label{thm:q-coloring-tw}
 For $q \geq 3$, there is an algorithm that solves \textsc{$q$-Coloring} in time $q^{O(\ttw)} \cdot n^{O(1)}$ on graphs of treewidth at most $\ttw$.
\end{theorem}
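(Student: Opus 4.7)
The plan is a standard dynamic programming algorithm over a nice tree decomposition of the input graph $G$, analogous in spirit to (but much simpler than) the algorithm developed in Section~\ref{sec:tw-alg}. First, using the algorithm of Bodlaender, I would compute a tree decomposition of width $O(\ttw)$ in time $2^{O(\ttw^{3}\log\ttw)}\cdot n$, and then convert it into a nice tree decomposition with $O(\ttw\cdot n)$ nodes of the same width (\cite[Lemma~7.4]{CyganFKLMPPS15}). Since $q\geq 3$ is fixed, both of these preprocessing steps fit within the claimed $q^{O(\ttw)}\cdot n^{O(1)}$ budget.

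The DP table would contain, for each node $t\in V(T)$ and each mapping $\chi\colon X_t\to[q]$, a Boolean value $D[t][\chi]$ that is true iff there is a proper $q$-coloring of $G[V_t]$ extending $\chi$. Unlike in the square coloring setting, here ordinary proper colorings have the convenient property that consistency on the bag $X_t$ fully captures everything needed from the descendants, so no extra ``interface'' information (such as neighborhood color sets) is required. The table has at most $q^{\ttw+1}$ entries per node.

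Next I would give the recurrences for the four node types of a nice decomposition. At a leaf, $X_t=\emptyset$, so the only entry is trivially true. At an introduce node adding $v$, set $D[t][\chi]=\top$ iff $D[t'][\chi|_{X_{t'}}]=\top$ and $\chi(v)\neq\chi(u)$ for every neighbor $u\in X_{t'}$ of $v$; correctness uses the fact that $v$ has no neighbors in $V_{t'}\setminus X_{t'}$, so extending the coloring by $\chi(v)$ creates no new conflicts. At a forget node removing $v$, set $D[t][\chi]=\bigvee_{c\in[q]} D[t'][\chi\cup\{v\mapsto c\}]$. At a join node with children $t_1,t_2$, set $D[t][\chi]=D[t_1][\chi]\wedge D[t_2][\chi]$; correctness here follows because $V_{t_1}\setminus X_t$ and $V_{t_2}\setminus X_t$ have no edges between them in $G$ (a consequence of the properties of tree decompositions), so any two extensions that agree on $X_t$ can be merged into a proper coloring of $G[V_t]$.

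Each recurrence can be evaluated in time $q^{O(\ttw)}\cdot\poly(\ttw)$ per node (introduce requires a single scan over neighbors in the bag; forget a sum over $q$ values; join a pointwise AND). Summing over $O(\ttw\cdot n)$ nodes yields the overall bound $q^{O(\ttw)}\cdot n^{O(1)}$. Finally, $G$ is $q$-colorable iff $D[r][\varepsilon]=\top$ for the root $r$ (whose bag is empty). I do not anticipate a serious obstacle: the only subtlety is ensuring the join step is actually correct for proper colorings, which hinges on the ``no edges across siblings outside the separator'' property of tree decompositions and is a routine verification.
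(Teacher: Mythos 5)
Your dynamic program itself is correct and is the completely standard argument; in fact the paper does not prove this statement at all but imports it from \cite[Theorems 7.9 and 7.18]{CyganFKLMPPS15}, and your table $D[t][\chi]$ with the leaf/introduce/forget/join recurrences (including the justification of the join step via the absence of edges between $V_{t_1}\setminus X_t$ and $V_{t_2}\setminus X_t$) is exactly the textbook DP behind that citation. The genuine problem is your preprocessing step. You compute an exact tree decomposition with Bodlaender's algorithm in time $2^{O(\ttw^{3}\log\ttw)}\cdot n$ and assert that this ``fits within the claimed budget since $q\geq 3$ is fixed''. That is false: for constant $q$ the budget $q^{O(\ttw)}\cdot n^{O(1)}$ is $2^{O(\ttw)}\cdot n^{O(1)}$, while $2^{O(\ttw^{3}\log\ttw)}$ is super-exponentially larger in $\ttw$ (already for $\ttw=\Theta(\log n)$ it is quasi-polynomial, and for $\ttw=\Theta(n^{1/3})$ it is exponential in $n\log n$). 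Moreover, in the way the theorem is used in this paper (Theorem \ref{thm:alg-planar-small-degree}), $q$ is part of the input and the DP is run on $G^2$ with $\tw(G^2)=O(\sqrt{n\Delta})$, which can be $\Theta(n)$, so the Bodlaender cost cannot be absorbed. (Contrast this with the paper's Theorem \ref{thm:tw-dp}, whose target bound $(q+1)^{2^{\ttw+4}}\cdot n^{O(1)}$ is doubly exponential in $\ttw$ and therefore does dominate $2^{O(\ttw^{3}\log\ttw)}\cdot n$; you cannot reuse that accounting here.)

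The repair is routine and is precisely what the second cited theorem supplies: instead of an exact decomposition, use the $2^{O(\ttw)}\cdot n$-time constant-factor approximation of treewidth \cite[Theorem 7.18]{CyganFKLMPPS15}, which either outputs a tree decomposition of width $O(\ttw)$ or certifies that the treewidth exceeds the current guess. Since $q\geq 3$, we have $2^{O(\ttw)}\leq q^{O(\ttw)}$, and running your DP on a decomposition of width $O(\ttw)$ rather than exactly $\ttw$ only changes the constant hidden in the exponent of $q^{O(\ttw)}$. With that single replacement, your recurrences and the count of $q^{\ttw+1}$ states per node over $O(\ttw\cdot n)$ nodes go through and yield the claimed $q^{O(\ttw)}\cdot n^{O(1)}$ bound.
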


\begin{theorem}
 \label{thm:alg-planar-small-degree}
 There is an algorithm that solves \qSqCol\ on planar graphs of maximum degree $\Delta$ in time $q^{O(\sqrt{n\Delta})}$.
\end{theorem}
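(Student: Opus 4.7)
\medskip
\noindent\textbf{Proof plan.} The statement follows by combining the two ingredients already established just before it, so the plan is short and essentially compositional. The key observation is that a square $q$-coloring of $G$ is by definition nothing but a proper $q$-coloring of the square graph $G^2$. Hence, the algorithm first reduces the problem of computing a square $q$-coloring of $G$ to the problem of computing a proper $q$-coloring of $G^2$, and then solves the latter via a tree decomposition of $G^2$ of small width.

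More concretely, I would proceed as follows. First, handle trivial corner cases (if $G$ has an isolated vertex, color it with any color; if $\Delta \geq q$ the instance is immediately a NO-instance since every vertex together with its neighbors needs distinct colors). Otherwise, in polynomial time explicitly construct the square graph $G^2$ on vertex set $V(G)$, where $uv \in E(G^2)$ iff $\dist_G(u,v) \leq 2$. Then invoke Lemma~\ref{la:tw-planar-square} on the planar graph $G$ of maximum degree $\Delta$ to obtain, in polynomial time, a tree decomposition $(T,\beta)$ of $G^2$ of width $\ttw = O(\sqrt{n\Delta})$. Finally, apply the standard dynamic programming algorithm for \textsc{$q$-Coloring} (Theorem~\ref{thm:q-coloring-tw}) to $G^2$ together with $(T,\beta)$, which decides the existence of a proper $q$-coloring of $G^2$, and hence of a square $q$-coloring of $G$, in time $q^{O(\ttw)} \cdot n^{O(1)} = q^{O(\sqrt{n\Delta})} \cdot n^{O(1)}$.

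It remains to absorb the polynomial factor into $q^{O(\sqrt{n\Delta})}$. Since we may assume $q \geq 3$ (smaller cases are trivial by a separate check) and $\Delta \geq 1$ (otherwise the graph is edgeless and square coloring with one color works), we have $q^{\sqrt{n\Delta}} \geq 2^{\sqrt{n}}$, which dominates any fixed polynomial in $n$; thus $q^{O(\sqrt{n\Delta})} \cdot n^{O(1)} = q^{O(\sqrt{n\Delta})}$, giving the claimed bound. There is no real obstacle here: the nontrivial work is entirely hidden inside Lemma~\ref{la:tw-planar-square} (and the standard bounded-treewidth $q$-coloring routine), and the reduction via $G^2$ is immediate from the definition of square coloring.
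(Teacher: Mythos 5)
Your proposal is correct and follows essentially the same route as the paper: compute $G^2$, apply Lemma~\ref{la:tw-planar-square} to get a tree decomposition of $G^2$ of width $O(\sqrt{n\Delta})$, and run the standard bounded-treewidth \textsc{$q$-Coloring} algorithm (Theorem~\ref{thm:q-coloring-tw}), absorbing the polynomial factor. The extra corner-case checks are harmless but not needed.
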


\begin{proof}
 Let $G$ denote the input graph.
 We compute the graph $G^2$ and apply the algorithm from Theorem \ref{thm:alg-planar-small-degree}.
 By Lemma \ref{la:tw-planar-square}, we have that $\tw(G^2) = O(\sqrt{n\Delta})$.
 So the algorithm from Theorem \ref{thm:alg-planar-small-degree} runs in time $q^{O(\sqrt{n\Delta})}$.
\end{proof}

\begin{corollary}[Lemma \ref{lem:alg-planar-small-degree-intro} restated]
 \label{cor:alg-planar-few-colors}
 There is an algorithm that solves \textsc{Planar} \qSqCol\ in time $q^{O(\sqrt{qn})}$.
\end{corollary}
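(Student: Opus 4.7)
The plan is to derive Corollary \ref{cor:alg-planar-few-colors} as an essentially immediate consequence of Theorem \ref{thm:alg-planar-small-degree} by first reducing to the case where the maximum degree $\Delta$ is bounded in terms of $q$. The key observation I would use is the elementary lower bound $\Delta + 1$ on the square-chromatic number: if $v$ is a vertex of degree $\Delta$, then $v$ together with its $\Delta$ neighbors forms a clique in $G^{2}$, so any square $q$-coloring requires $q \geq \Delta + 1$.

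Concretely, the algorithm I would present proceeds as follows. Given a planar instance $(G,q)$, first compute $\Delta(G)$ in linear time. If $\Delta(G) \geq q$, output NO, which is correct by the observation above. Otherwise, $\Delta(G) < q$, and we invoke the algorithm from Theorem \ref{thm:alg-planar-small-degree} on $(G,q)$. Its running time is bounded by
\[
q^{O(\sqrt{n\Delta})} \;\leq\; q^{O(\sqrt{nq})},
\]
which gives exactly the claimed bound.

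There is essentially no obstacle here, since the hard work has already been done: Lemma \ref{la:tw-planar-square} provides the treewidth bound $\tw(G^2) = O(\sqrt{n\Delta})$ on the square, and Theorem \ref{thm:q-coloring-tw} turns this into the $q^{O(\sqrt{n\Delta})}$ coloring algorithm underlying Theorem \ref{thm:alg-planar-small-degree}. The only conceptual point is to observe that we may assume $\Delta < q$ without loss of generality, which eliminates the unwanted parameter $\Delta$ from the running time and replaces it by $q$. Correctness of the preprocessing step is immediate from the clique structure of closed neighborhoods in $G^2$, and the running time bound follows by monotonicity of $\sqrt{n\cdot{}}$ in its argument.
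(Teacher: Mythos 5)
Your proposal is correct and matches the paper's own proof: the paper also checks whether $\Delta \geq q$ (in which case $G^2$ contains a clique of size at least $q+1$ and one returns NO) and otherwise runs the algorithm of Theorem~\ref{thm:alg-planar-small-degree}, yielding the bound $q^{O(\sqrt{qn})}$.
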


\begin{proof}
 Let $G$ denote the input graph.
 If $G$ has maximum degree at least $q$ then $G^2$ contains a clique of size at least $q+1$ and the we return NO.
 Otherwise, we run the algorithm from Theorem \ref{thm:alg-planar-small-degree}.
\end{proof}

This completes the first subroutine.
Next, we turn to the second subroutine which covers the case that the number $q$ of available colors is large.
The basic strategy for this case is as follows.
First, we apply some simple reduction rules to remove vertices whose second neighborhood is small (those vertices can always be colored in the end using a greedy strategy).
Afterwards, there need to be many vertices of large degree which allows us to identify a distance-$3$ dominating set $D$ of small size, i.e., every vertex of $G$ is at distance at most $3$ from some vertex in $D$.
Using known techniques \cite{BodlaenderFLPST16}, we use this dominating set to construct a $(O(n/q),O(n/q),O(1))$-protrusion decomposition.
Finally, we use dynamic programming approaches to design an algorithm for \qSqCol\ given the protrusion decomposition.

We start by describing a simple reduction rule.
We say that a graph $G$ is \emph{$q$-irreducible} if $V(G) = U \cup N_G(U)$ where $U \coloneqq \{u \in V(G) \mid |N_{G^2}[u]| > q\}$.
If $G$ is not $q$-irreducible then we can identify a strict induced subgraph of $G$ that is equivalent to $G$ with respect to the problem \qSqCol\ as the next lemma shows.

\begin{lemma}
 \label{la:reduce-graph}
 Let $G$ be a graph and let $q \geq 1$ be an integer.
 Let $U \coloneqq \{u \in V(G) \mid |N_{G^2}[u]| > q\}$ and $W \coloneqq U \cup N_G(U)$.
 Then $G$ has a square $q$-coloring if and only if $G[W]$ has a square $q$-coloring.
\end{lemma}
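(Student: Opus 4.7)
My plan is to prove the two directions of the biconditional separately. The forward direction is essentially a restriction argument: if $\chi$ is a square $q$-coloring of $G$ and $u,v \in W$ satisfy $\dist_{G[W]}(u,v) \leq 2$, then since $G[W]$ is an induced subgraph, any witnessing path in $G[W]$ is also a path in $G$, so $\dist_G(u,v) \leq \dist_{G[W]}(u,v) \leq 2$ and hence $\chi(u) \neq \chi(v)$; thus $\chi|_W$ works.

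For the backward direction, given a square $q$-coloring $\chi_W$ of $G[W]$, I will construct a square $q$-coloring $\chi$ of $G$ by first setting $\chi(u) \coloneqq \chi_W(u)$ for every $u \in U$ and then greedily coloring the remaining vertices $V(G) \setminus U$ in an arbitrary order. Note that the values of $\chi_W$ on $W \setminus U$ are discarded, since those vertices now lie in $V(G) \setminus U$ and are recolored during the greedy step. The greedy step is justified by the definition of $U$: for each $v \in V(G) \setminus U$ we have $|N_{G^2}[v]| \leq q$, so at most $q - 1$ already-colored vertices within distance $2$ of $v$ can forbid a color in $[q]$, and some color always remains available.

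The main obstacle, and the only real content of the proof, is verifying that the partial coloring $\chi|_U$ is already consistent with every distance-$\leq 2$ constraint inside $U$; once that is established, the greedy step handles every remaining pair automatically. The key structural observation I will invoke is the following: for any $u_1, u_2 \in U$ with $\dist_G(u_1, u_2) = 2$, every intermediate vertex $x$ on a length-$2$ path $u_1 - x - u_2$ in $G$ lies in $W$, because $x \in N_G(u_1) \subseteq N_G(U) \subseteq W$. Consequently, the path $u_1 - x - u_2$ is contained in $G[W]$, so $\dist_{G[W]}(u_1, u_2) \leq 2$ and hence $\chi_W(u_1) \neq \chi_W(u_2)$. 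Adjacent pairs in $U$ are distinguished trivially by $\chi_W$. This observation crucially uses that $W$ includes the neighborhood of $U$, not just $U$ itself.

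To assemble the correctness argument for the extended coloring $\chi$, I will take any distinct $u, v \in V(G)$ with $\dist_G(u,v) \leq 2$ and split into cases. If both lie in $U$, the intermediate-vertex observation gives $\chi(u) \neq \chi(v)$. Otherwise at least one of them lies in $V(G) \setminus U$; by symmetry assume $v$ was colored no earlier than $u$ in the greedy order, and then the greedy rule explicitly avoided $\chi(u)$ when choosing $\chi(v)$, since $u \in N_{G^2}[v] \setminus \{v\}$ was already colored by that point. All pairs are thus properly separated, giving a square $q$-coloring of $G$ and completing the proof.
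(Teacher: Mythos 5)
Your proof is correct and follows essentially the same route as the paper's: keep $\chi_W$ on $U$, observe that any path of length at most $2$ between vertices of $U$ stays inside $W$ (so $\chi_W$ already separates all such pairs), and greedily color $V(G)\setminus U$ using $|N_{G^2}[v]|\le q$. One trivial imprecision: $N_G(u_1)\subseteq W$ holds because $W = U\cup N_G(U)$, not because $N_G(u_1)\subseteq N_G(U)$ (with the paper's convention $N_G(U)$ excludes vertices of $U$), but this does not affect the argument.
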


\begin{proof}
 The forward direction is trivial since any square $q$-coloring of $G$ immediately restricts to a square $q$-coloring of $G[W]$.
 
 So assume that $G[W]$ has a square $q$-coloring $\chi'$, i.e., $\chi'$ is a $q$-coloring of $(G[W])^2$.
 We define a coloring $\chi\colon V(G) \rightarrow [q]$ as follows.
 First, we set $\chi(u) \coloneqq \chi'(u)$ for all $u \in U$.
 Since $N_G(U) \subseteq W$, every path of length at most $2$ between two vertices $u,u' \in U$ in $G$ also exists in $G[W]$.
 This implies that $\chi(u) \neq \chi(u')$ for all distinct $u,u' \in U$ such that $\dist_G(u,u') \leq 2$.
 Now, let $\{v_1,\dots,v_\ell\} = V(G) \setminus U$ and pick $i \in [\ell]$.
 Since $v_i \notin U$ we have that $|N_{G^2}[v_i]| \leq q$ by definition.
 This means there is some color $c_i \in [q]$ that is not used so far by any vertex in $N_{G^2}[v_i]$.
 We set $\chi(v_i) \coloneqq c_i$.
 Doing this for all $i \in [\ell]$, we obtain a square $q$-coloring $\chi$ of $G$.
\end{proof}

Now, as the next intermediate target, we construct a distance-$3$ dominating set of a planar, $q$-irreducible graph.

\begin{lemma} 
 \label{la:dist-2-domset-for-high-degree}
 Let $\ell \in \ZZp$ be a positive integer.
 Also let $G$ be a graph and suppose $U \subseteq V(G)$ such that $|N_{G^2}[u]| \geq \ell$ for all $u \in U$.
 Then there is a set $D \subseteq U$ such that
 \[|D| \leq \frac{2\cdot|E(G)|}{\ell}\]
 and for every $u \in U$ there is some $w \in D$ such that $\dist_G(w,u) \leq 2$.
 Moreover, given the graph $G$ and the set $U$, one can compute such a set $D$ in polynomial time.
\end{lemma}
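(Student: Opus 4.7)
The plan is to construct $D$ by a greedy selection and then bound its size by a charging argument on half-edges. Concretely, initialize $D = \emptyset$ and $R = U$; while $R \neq \emptyset$, pick any $u \in R$, add $u$ to $D$, and remove from $R$ every vertex $u' \in U$ with $\dist_G(u,u') \leq 2$. By construction every $u \in U$ is within distance $2$ in $G$ of some member of $D$, and any two distinct $w,w' \in D$ satisfy $\dist_G(w,w') > 2$ (otherwise $w'$ would have been removed from $R$ at the moment $w$ was selected). The procedure runs in polynomial time.

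To bound $|D|$, I would assign to each $w \in D$ the set of half-edges
\[H_w \coloneqq \{(v,e) \mid v \in N_G[w],\ v \in e \in E(G)\},\]
where a \emph{half-edge} is a pair consisting of an edge together with one of its endpoints. For distinct $w,w' \in D$ the sets $H_w$ and $H_{w'}$ are disjoint, since a common element would exhibit some $v \in N_G[w] \cap N_G[w']$ and thereby force $\dist_G(w,w') \leq 2$. As the total number of half-edges is exactly $2|E(G)|$, this immediately gives $\sum_{w \in D} |H_w| \leq 2|E(G)|$, so it suffices to prove the inequality $|H_w| \geq \ell$ for every $w \in D$.

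For this lower bound I would exhibit an injection $\phi : N_{G^2}[w] \to H_w$. The assumption $|N_{G^2}[w]| \geq \ell \geq 2$ implies that $w$ has at least one neighbor (the case $\ell \leq 1$ makes the claim trivial); fix any such neighbor $x_0$. Set $\phi(w) = (w, wx_0)$; for each $u \in N_G(w)$ set $\phi(u) = (u, uw)$; and for each $u$ at distance exactly $2$ from $w$, pick some common neighbor $v_u \in N_G(w) \cap N_G(u)$ and set $\phi(u) = (v_u, v_u u)$. A short case analysis shows that these half-edges are pairwise distinct: when $u$ lies at distance $1$ or $2$ from $w$ the second coordinate of $\phi(u)$ has $u$ as an endpoint and thus determines $u$, while $\phi(w)$ is distinguished by having its first coordinate equal to $w$, which is not a neighbor of itself. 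The main obstacle is precisely obtaining the sharp constant $2$ in $2|E(G)|/\ell$: charging by whole edges (instead of half-edges) would allow a single edge to be counted by two distinct members of $D$ and yield only the weaker bound $4|E(G)|/\ell$, so the half-edge formulation, combined with disjointness of the closed neighborhoods $N_G[w]$ for $w \in D$, is exactly what avoids this factor-of-$2$ loss.
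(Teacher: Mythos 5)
Your proof is correct and takes essentially the same route as the paper: the same greedy selection makes the chosen vertices pairwise at distance greater than $2$, so their closed neighborhoods $N_G[w]$ are disjoint, and charging each chosen $w$ the quantity $|H_w|=\sum_{v\in N_G[w]}\deg_G(v)\ge\ell$ against the total of $2|E(G)|$ half-edges is exactly the paper's estimate $s\cdot\ell\le\sum_{i}\sum_{v\in N_G[v_i]}\deg_G(v)\le 2|E(G)|$. Your explicit injection of $N_{G^2}[w]$ into $H_w$ merely spells out the inequality the paper states without proof (and, like the paper's chain of inequalities, it implicitly uses that $w$ has a neighbor, which is guaranteed since $\ell\ge 2$ in the only application).
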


\begin{proof}
 We compute the set $D$ using a greedy algorithm.
 We initialize $D \coloneqq \emptyset$ and, as long as $D$ does not dominate every vertex from $U$ in the graph $G^2$, we pick an arbitrary undominated vertex and add it to $D$.
 Clearly, this can be done in polynomial time.
 We need to prove the desired bound on the size of $D$.
 Suppose $D = \{v_1,\dots,v_s\}$ where we list vertices in the order in which they are added to $D$ by the greedy algorithm described above.
 Observe that $N_G[v_i] \cap N_G[v_j] = \emptyset$ for all $i < j \in [s]$ since otherwise $v_j$ would be dominated by $v_i$ and the greedy algorithm would not have added it to $D$.
 Then
 \[s\cdot\ell \leq \sum_{i = 1}^{s} \sum_{w \in N_G[v_i]} \deg_G(w) \leq \sum_{w \in V(G)} \deg_G(w) \leq 2\cdot |E(G)|.\]
 Rearranging the terms gives the desired bound.
\end{proof}

\begin{corollary}
 \label{cor:dist-3-domset}
 There is a polynomial-time algorithm that, given an integer $q \geq 1$ and a $q$-irreducible graph $G$, computes a set $D \subseteq V(G)$ such that
 \begin{enumerate}
  \item for every $v \in V(G)$ there is some $w \in D$ such that $\dist_{G}(v,w) \leq 3$, and
  \item $|D| \leq \frac{2\cdot|E(G)|}{q}$.
 \end{enumerate}
\end{corollary}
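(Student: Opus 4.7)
The plan is to derive Corollary~\ref{cor:dist-3-domset} as an essentially immediate consequence of Lemma~\ref{la:dist-2-domset-for-high-degree}, combined with the definition of $q$-irreducibility. First, I would set $U \coloneqq \{u \in V(G) \mid |N_{G^2}[u]| > q\}$, which is exactly the set appearing in the definition of $q$-irreducibility. By definition, every $u \in U$ satisfies $|N_{G^2}[u]| \geq q+1$, so Lemma~\ref{la:dist-2-domset-for-high-degree} applies with $\ell \coloneqq q+1$. Computing $U$ takes polynomial time (compute $G^2$ and inspect closed neighborhoods), so the hypothesis of the lemma is effective.

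Next, I would invoke Lemma~\ref{la:dist-2-domset-for-high-degree} to obtain in polynomial time a set $D \subseteq U$ such that every $u \in U$ has some $w \in D$ with $\dist_G(w,u) \leq 2$, and
\[|D| \;\leq\; \frac{2 \cdot |E(G)|}{q+1} \;\leq\; \frac{2 \cdot |E(G)|}{q},\]
which gives the second conclusion directly.

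It remains to verify the distance-$3$ domination property for all vertices of $G$, not just those in $U$. Here the $q$-irreducibility of $G$ is exactly what is needed: we have $V(G) = U \cup N_G(U)$, so for any $v \in V(G)$ either $v \in U$ (in which case some $w \in D$ satisfies $\dist_G(v,w) \leq 2 \leq 3$), or else $v \in N_G(U)$, meaning there exists $u \in U$ with $vu \in E(G)$. Picking $w \in D$ with $\dist_G(u,w) \leq 2$ yields $\dist_G(v,w) \leq 1 + 2 = 3$ by the triangle inequality, as required.

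There is essentially no obstacle here beyond bookkeeping: the corollary is a one-step consequence of the preceding lemma together with the definitional fact that in a $q$-irreducible graph every vertex lies within distance $1$ of the ``high-degree'' set $U$, which absorbs the extra $+1$ into the distance bound. The only minor point to be careful about is the off-by-one between ``$> q$'' in the definition of $U$ and the parameter $\ell$ passed to the lemma, which I would state explicitly to justify using $q$ rather than $q+1$ in the denominator of the size bound.
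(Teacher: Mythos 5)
Your proposal is correct and follows exactly the paper's argument: apply Lemma~\ref{la:dist-2-domset-for-high-degree} to the set $U$ from the definition of $q$-irreducibility to get a distance-$2$ dominating set for $U$ of the claimed size, then use $V(G) = U \cup N_G(U)$ to gain the extra unit of distance. Your explicit handling of the off-by-one ($\ell = q+1$ versus $q$) is a small touch of extra care that the paper glosses over, but the approach is the same.
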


\begin{proof}
 Let $U \coloneqq \{u \in V(G) \mid |N_{G^2}[u]| > q\}$.
 By Lemma \ref{la:dist-2-domset-for-high-degree} there is some set $D \subseteq U$ such that
 \[|D| \leq \frac{2\cdot|E(G)|}{q}\]
 and for every $u \in U$ there is some $w \in D$ such that $\dist_G(w,u) \leq 2$.
 Also, for every $v \in V(G)$ there is some $u \in U$ such that $\dist_G(v,u) \leq 1$.
 In combination, gives the desired properties of the set $D$.
 Finally, the set $U$ can clearly be computed in polynomial time and afterwards, $D$ is computed using the algorithm from Lemma \ref{la:dist-2-domset-for-high-degree}.
\end{proof}

Now, we are able to use the dominating set constructed in the last corollary to obtain a $(O(n/q), O(n/q), O(1))$-protrusion decomposition.
Let us start by formally defining protrusion decompositions.

\begin{definition}
 Let $G$ be a graph and let $\alpha,\delta,k \geq 2$ be integers.
 An \emph{$(\alpha,\delta,k)$-protrusion decomposition of $G$} is a rooted tree decomposition $(T,\beta)$ of $G$ such that
 \begin{enumerate}
  \item $|\beta(r)| \leq \alpha$ where $r$ is the root of $T$,
  \item $|\beta(t)| \leq k$ for every other node $t \in V(T) \setminus \{r\}$, and
  \item $\deg_T(r) \leq \delta$.
 \end{enumerate}
\end{definition}

\begin{lemma}[{\cite[Lemma 6.2]{BodlaenderFLPST16}}]
 \label{la:protrusion-decomposition}
 Let $r \geq 1$ be a fixed integer.
 Let $G$ be a planar graph and suppose that there is a set $D \subseteq V(G)$ such that for every $v \in V(G)$ there is some $w \in D$ such that $\dist_{G}(v,w) \leq r$.
 Then there is a $(c|D|,c|D|,c)$-protrusion decomposition of $G$ where $c$ is some constant that only depends on $r$.
 
 Moreover, given $G$ and $D$, a $(c|D|,c|D|,c)$-protrusion decomposition of $G$ can be computed in polynomial time. 
\end{lemma}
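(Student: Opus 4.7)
\medskip

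\noindent\textbf{Proof proposal.} The plan is to follow the standard protrusion-decomposition framework for planar graphs with a small distance-$r$ dominating set, in the spirit of the Bidimensionality programme. The key combinatorial fact that I will exploit is that, in a planar graph, a vertex set $D$ that $r$-dominates $V(G)$ induces a ``skeleton'' of linear size $O(|D|)$ whose removal breaks $G$ into pieces of bounded radius (and hence, by Lemma~\ref{la:tw-planar-diameter}, bounded treewidth depending only on $r$).

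First I would build a BFS forest $F$ rooted at $D$: attach every $v\in V(G)$ to some $w\in D$ with $\dist_G(v,w)\le r$ via a shortest $v$--$w$ path, and orient these paths to form a rooted forest $F$ of depth at most $r$ whose roots are exactly $D$. Contracting each tree $T_w$ (for $w\in D$) to a single vertex yields a planar minor $H$ of $G$ on $|D|$ vertices; by Euler's formula $|E(H)|\le 3|D|-6$. The edges of $H$ correspond to non-tree edges of $G$ that join different BFS trees, so the set $Z\subseteq V(G)$ of endpoints of such cross-edges has size at most $2|E(H)|=O(|D|)$. I will take as the ``skeleton'' the set $S \coloneqq D\cup Z$, which has $|S|=O(|D|)$; intuitively, $S$ contains the roots of the BFS forest together with all ``attachment points'' where different BFS trees interact.

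Next I would partition $V(G)\setminus S$ into the connected components of $G-S$ (call them the candidate protrusions $P_1,\dots,P_s$) and for each $P_i$ let $\partial P_i \coloneqq N_G(P_i)\subseteq S$ be its boundary. The critical structural claim is that $|\partial P_i|\le c$ for a constant $c=c(r)$, and $s\le c|D|$. Both follow from planarity: each $P_i$ is contained inside a face of the planar graph $G[S]$ (or bounded by a short cycle of $S$-vertices via the BFS-tree structure), and the depth-$r$ BFS structure together with the choice of $Z$ forces a bounded number of distinct cross-edge endpoints per face. Moreover $G[P_i\cup\partial P_i]$ has radius at most $r$ from the sub-forest of $F$ restricted to it (after suitably rerouting through $\partial P_i$), so it has treewidth at most $c=c(r)$ by Lemma~\ref{la:tw-planar-diameter}, and we can compute a width-$c$ tree decomposition of $G[P_i\cup\partial P_i]$ in polynomial time. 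Attaching each such decomposition as a child subtree of a root bag $\beta(\mathrm{root})\coloneqq S$ (and forcing $\partial P_i$ into the root bag of the $i$-th child, which keeps all bags of size at most $c+1$) yields a valid tree decomposition of $G$ that is a $(c|D|,c|D|,c)$-protrusion decomposition.

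The main obstacle is the planarity argument bounding $|\partial P_i|$ and $s$ by constants depending only on $r$; the naive partition by connected components of $G-S$ does not automatically give constant boundaries, and one has to enlarge $S$ slightly (e.g.\ by adding a few additional separator vertices along BFS paths inside each face of $H$, which still costs only $O(|D|)$ extra vertices because $H$ has $O(|D|)$ faces and each face contributes $O(r)$ BFS-tree vertices on its boundary). Handling the embedding-dependent case analysis carefully is the technical heart of the argument; once this is set up, polynomial-time computability is immediate since every subroutine invoked (BFS, planar embedding, tree decompositions of bounded-treewidth subgraphs via Lemma~\ref{la:tw-planar-diameter}) runs in polynomial time.
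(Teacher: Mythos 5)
There is a genuine gap here, and in fact two of your key counting claims are false as stated. First, the bound $|Z|\le 2|E(H)|=O(|D|)$ does not hold: when you contract each BFS tree $T_w$ to a single vertex, \emph{all} cross edges of $G$ between two trees $T_w,T_{w'}$ collapse to the single edge $ww'$ of the simple planar minor $H$. Euler's formula bounds the number of \emph{pairs} of trees joined by a cross edge, not the number of cross edges or their endpoints; two adjacent BFS trees can be joined by a long ``ladder'' of cross edges, so $Z$ (and hence your skeleton $S=D\cup Z$) can have size $\Omega(n)$ rather than $O(|D|)$. Second, even granting a skeleton of size $O(|D|)$, the two claims you call ``critical'' --- that every component $P_i$ of $G-S$ has $|\partial P_i|\le c(r)$ and that there are only $O(|D|)$ components --- are exactly the technical heart of any protrusion-decomposition construction, and neither follows from what you wrote. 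Planarity only bounds the number of components with at least three boundary vertices; there can be arbitrarily many components whose boundary is a fixed pair of skeleton vertices (e.g.\ many internally disjoint length-two paths between two vertices of $S$), so one must \emph{group} all components sharing the same small boundary into a single protrusion, which your construction does not do. Likewise, the proposed repair --- adding ``a few separator vertices per face of $H$ because each face contributes $O(r)$ BFS-tree vertices on its boundary'' --- miscounts for the same reason as the first error: after uncontraction a single face of $H$ can be incident to unboundedly many cross edges and tree vertices, so bounding each piece's boundary by a constant needs a real argument (in the literature this is done via a marking/LCA-closure argument on a bounded-width decomposition of the dominated regions, or the analysis of \cite{BodlaenderFLPST16}), not a per-face constant.

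Two further remarks. The bounded-treewidth claim for the pieces also needs care: a vertex of $P_i$ is within distance $r$ of $\partial P_i$ inside $G[P_i\cup\partial P_i]$ (the domination path hits $S$ at a boundary vertex), but this gives bounded \emph{distance to a constant-size set}, not bounded radius, so Lemma~\ref{la:tw-planar-diameter} does not apply verbatim and an extra step (e.g.\ a multi-source BFS argument yielding width $O(r+|\partial P_i|)$) is needed. Finally, note that the paper does not prove this lemma at all: it is imported from \cite[Lemma 6.2]{BodlaenderFLPST16} (see also \cite[Chapter 15]{FominLSZ19}), so your sketch is attempting to reconstruct a known but genuinely nontrivial proof, and in its current form the reconstruction does not go through.
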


We remark at this point that our definition of a protrusion decomposition, which follows \cite{FominLSZ19}, is slightly different from the definition from \cite{BodlaenderFLPST16}.
However, up to constant factors in the parameters, the precise definition does not matter for the existence of a protrusion decomposition (see also \cite[Chapter 15]{FominLSZ19}).
Let us also point out that \cite[Lemma 6.2]{BodlaenderFLPST16} does not directly state the existence of a polynomial-time algorithm that computes the protrusion decomposition.
However, the algorithm is imminent from the proof.
Alternatively, an algorithm can also be obtained via methods described in \cite[Chapter 15]{FominLSZ19}.

By combining Corollary \ref{cor:dist-3-domset} and Lemma \ref{la:protrusion-decomposition} we obtain the following result.

\begin{corollary}
 \label{cor:protrusion-decomposition}
 There is a polynomial-time algorithm that, given an integer $q \geq 1$ and a $q$-irreducible, planar graph $G$, computes a $(\frac{cn}{q},\frac{cn}{q},c)$-protrusion decomposition of $G$ for some absolute constant $c$.
\end{corollary}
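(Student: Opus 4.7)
The plan is to combine the two main building blocks already established, namely Corollary \ref{cor:dist-3-domset} and Lemma \ref{la:protrusion-decomposition}, with the standard edge bound for planar graphs. First, given the $q$-irreducible planar graph $G$, I would invoke Corollary \ref{cor:dist-3-domset} to compute in polynomial time a set $D \subseteq V(G)$ which is a distance-$3$ dominating set of $G$ (every vertex of $G$ is within distance $3$ of some vertex in $D$) and satisfies $|D| \leq 2|E(G)|/q$.

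Next, I would use the fact that a simple planar graph on $n$ vertices has at most $3n - 6$ edges. Combining this with the bound from the previous step yields
\[|D| \leq \frac{2|E(G)|}{q} \leq \frac{6n - 12}{q} \leq \frac{6n}{q}.\]
Hence $D$ is a distance-$3$ dominating set of $G$ of size at most $6n/q$.

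Finally, I would apply Lemma \ref{la:protrusion-decomposition} with parameter $r = 3$ to the pair $(G, D)$. This yields, in polynomial time, a $(c'|D|, c'|D|, c')$-protrusion decomposition of $G$, where $c'$ is an absolute constant depending only on $r = 3$. Substituting the bound $|D| \leq 6n/q$ gives a $(\tfrac{cn}{q}, \tfrac{cn}{q}, c)$-protrusion decomposition for the constant $c := \max\{6c', c'\} = 6c'$, which completes the construction. The overall running time is polynomial because each of the two invoked algorithms runs in polynomial time. There is no real obstacle here: the entire content is the chaining of the two previous results together with the elementary planar edge bound, so the proof should be quite short.
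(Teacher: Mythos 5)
Your proposal is correct and matches the paper's argument exactly: the paper obtains this corollary precisely by combining Corollary \ref{cor:dist-3-domset} with Lemma \ref{la:protrusion-decomposition} (applied with $r=3$), with the planar edge bound $|E(G)| \leq 3n-6$ used implicitly to turn $2|E(G)|/q$ into $O(n/q)$. Your write-up just makes that last step explicit, which is fine.
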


The next lemma implements the last step of the second subroutine.

\begin{lemma}
 \label{la:dp-protrusion-decomposition}
 There is an algorithm that, given a graph $G$, an integer $q$ and an $(\alpha,\delta,k)$-protrusion decomposition $(T,\beta)$ of $G$,
 decides whether $G$ has a square $q$-coloring in time $|V(T)| \cdot n^{O(\alpha + \delta \cdot 2^{k})}$.
\end{lemma}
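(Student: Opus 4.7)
The plan is to do dynamic programming guided by the protrusion decomposition, combining the Section~\ref{sec:tw-alg} DP for the small-width subtrees with an outer enumeration at the root. Let $r$ denote the root of $T$ with children $t_1, \dots, t_\delta$, and for each $i \in [\delta]$ let $V_i$ be the union of bags in the subtree rooted at $t_i$ and set $S_i := V_i \cap \beta(r)$ (so $|S_i| \leq k$). First, apply Theorem~\ref{thm:tw-dp} to each subtree rooted at $t_i$ (whose treewidth is at most $k-1$), augmented with forget operations that expose the types $(\chi_i, \xi_i, \rho_i)$ at the ``virtual bag'' $S_i$, precomputing the set of achievable triples. Since $|S_i| \leq k$, the number of possible types per subtree is $(q+1)^{O(2^k)} = n^{O(2^k)}$, and this preprocessing takes time $|V(T)| \cdot n^{O(2^k)}$.

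Next, enumerate each coloring $\chi \colon \beta(r) \to [q]$ of the root bag ($n^{O(\alpha)}$ options) and, for each, each combination of one achievable type $(\xi_i, \rho_i)$ per child ($n^{O(\delta \cdot 2^k)}$ combinations). For a fixed guess, first check local validity of $\chi$ on $G[\beta(r)]$ together with its interaction with each $\xi_i$ (analogously to Definition~\ref{def:local-validity}). Since a vertex in $V_i \setminus \beta(r)$ and a vertex in $V_j \setminus \beta(r)$ with $i \neq j$ can be at distance at most $2$ in $G$ only via a common neighbour in $S_i \cap S_j$, the remaining consistency question reduces to choosing, for each color $c \in [q]$, an $A_i(c) \subseteq S_i$ for every $i$ subject to: (i) if $c = \chi(u)$ for some $u \in S_i$, then $A_i(c) = \xi_i(u)$; (ii) $A_i(c) \cap A_j(c) = \emptyset$ for all $i \neq j$; and (iii) the count of $i$-free colors with $A_i = A$ equals $\rho_i(A)$ for every $(i, A)$.

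Condition (iii) is global, so it calls for another dynamic program. Set $\CZ := \{(i,A) : i \in [\delta],\ A \subseteq S_i\}$, so $|\CZ| \leq \delta \cdot 2^k$. The DP processes the $q$ colors one at a time and tracks, for each $(i,A) \in \CZ$, how many colors so far have been assigned $A_i = A$, giving a state space of size $(q+1)^{|\CZ|} = n^{O(\delta \cdot 2^k)}$. At each step, branch over those subsets $\CZ' \subseteq \CZ$ specifying a valid joint assignment $(A_i(c))_i$ for the current color, i.e.\ one that is internally consistent with (i) and (ii); there are at most $2^{|\CZ|}$ such branches. Output YES iff some choice of $\chi$, types, and DP trace ends with the state exactly equal to the demand vector $(\rho_i(A))_{(i,A) \in \CZ}$. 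Combining all three phases gives the stated running time $|V(T)| \cdot n^{O(\alpha + \delta \cdot 2^k)}$.

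The main technical obstacle is the asymmetry between colors already appearing on $\beta(r)$ --- where some $A_i(c)$'s are forced by $\xi_i$ while others are to be determined --- and colors absent from $\beta(r)$, where every $A_i(c)$ is free. The DP above treats both uniformly by, per color, reading off which $(i,A)$ slots are forced and then branching over only the consistent $\CZ'$. Correctness then reduces, using the existence guarantees of the Section~\ref{sec:tw-alg} DP, to showing that any consistent global assignment yields colorings $\chi_{V_i}$ of $G[V_i]$ that glue together into a valid square $q$-coloring of $G$, which follows from the cross-subtree distance analysis above.
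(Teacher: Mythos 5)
Your proposal follows essentially the same route as the paper: an outer enumeration of the root-bag coloring together with one Section~\ref{sec:tw-alg}-style type $(\chi_i,\xi_i,\rho_i)$ per child (checked/precomputed via Lemma~\ref{la:dp-table} on the subtrees), followed by an inner color-by-color dynamic program over demand vectors indexed by $\CZ$; the paper's table $\gamma(C,\varrho)$ over singletons and initial segments of $[q]$ is exactly your ``process the colors one at a time and track, for each $(i,A)$, how many ($i$-free) colors have been assigned to it'' DP, and the running-time accounting matches.

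There is, however, one concrete omission in your consistency conditions. Your (i)--(ii) only constrain the chosen sets $A_i(c)$ against each other and against the forced sets $\xi_j(\cdot)$, but they do not rule out the following conflict: an $i$-free color $c$ is assigned to a slot $(i,A)$ with some $u \in A$ having a neighbor $v \in \beta(r)$ (the edge $uv$ possibly covered only by the root bag) such that $\chi(v)=c$. Then a vertex of $V_i\setminus\beta(r)$ colored $c$ and adjacent to $u$ sits at distance $2$ from $v$ through $u$, so the glued coloring is not a square coloring. Your local-validity check only catches the non-free analogue (the counterpart of \ref{item:planar-local-2}), and (ii) does not help: $v$ need not lie in any $S_j$, and even when $v \in S_j$, the offending vertex $u$ need not belong to $\xi_j(v)$, so no forced slot intersects $A$. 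This is precisely condition \ref{item:dp-free-colors-5} in the paper's definition of $\gamma$. The fix is immediate within your framework: when branching over the joint assignment $\CZ'$ for the current color $c$, additionally discard any $\CZ'$ containing a pair $(i,A)$ with some $u\in A$ adjacent to a root-bag vertex $v$ with $\chi(v)=c$; this is a per-color, polynomial-time filter and does not change the stated running time.
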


\begin{proof}
 Let $X \coloneqq \beta(r)$ where $r$ denotes the root of $T$.
 Also, let $t_1,\dots,t_\delta$ denote the children of $r$ in $T$, and define $Y_i \coloneqq \beta(r) \cap \beta(t_i)$ for all $i \in [\delta]$.
 Moreover, let $V_i$ denote the set of vertices contained in bags below $t_i$ (including $t_i$ itself).
 Finally, we define
 \[\CY \coloneqq \bigcup_{i \in \delta} (\{i\} \times Y_i)\]
 and
 \[\CZ \coloneqq \{(i,A) \mid i \in [\delta], A \subseteq Y_i\}.\]
 Observe that $|\CY| \leq \delta \cdot k$ and $|\CZ| \leq \delta \cdot 2^k$.
 
 The algorithm iterates over all triples of functions $\chi\colon X \rightarrow [q]$, $\xi \colon \CY \rightarrow 2^X$ such that $\xi(i,v) \subseteq Y_i$ for all $i \in [\delta]$ and $v \in Y_i$, and $\rho\colon \CZ \to [q]_0$.
 Observe that the number of such triples is upper bounded by
 \begin{equation}
  \label{eq:number-triples}
  q^{|X|} \cdot (2^k)^{\delta k} \cdot (q+1)^{\delta2^k} = q^{O(\alpha + \delta \cdot 2^{k})}.
 \end{equation}
 
 So let us fix such a triple $(\chi,\xi,\rho)$.
 We say that $(\chi,\xi,\rho)$ is \emph{locally valid} if
 \begin{enumerate}[label = (PL.\arabic*)]
  \item\label{item:planar-local-1} $\chi(u) \neq \chi(v)$ for all distinct $u,v \in X$ such that $\dist_{G[X]}(u,v) \leq 2$,
  \item\label{item:planar-local-2} there are no vertices $u,w \in Y_i$ and $v \in X$ such that $uv \in E(G)$,  $u \in \xi(i,w)$ and $\chi(v) = \chi(w)$, and
  \item\label{item:planar-local-3} there are no distinct $i,i' \in [\delta]$ and $v \in Y_i$, $w \in Y_{i'}$ such that $\chi(v) = \chi(w)$ and $\xi(i,v) \cap \xi(i',w) \neq \emptyset$.
 \end{enumerate}
 Note that these conditions extend the conditions from Definition \ref{def:local-validity}.
 
 We also say that $(\chi,\xi,\rho)$ is \emph{extendable} if, for every $i \in [\delta]$, there exists a square $q$-coloring $\echi_i$ of $G[V_i]$ such that
 \begin{enumerate}[label = (PD.\arabic*)]
  \item\label{item:dp-chi-pl} $\chi(v) = \echi_i(v)$ for all $v \in Y_i$,
  \item\label{item:dp-xi-pl} $\xi(i,v) = A_i(\echi_i,\chi(v))$ where 
   \[A_i(\echi_i,c) \coloneqq \{v \in Y_i \mid \exists w \in V_i \setminus Y_i \colon \echi_i(w) = c \wedge vw \in E(G)\},\] and
  \item\label{item:dp-rho-pl} \[\rho(i,A) = |\{c \in [q] \mid A_i(\echi_i,c) = A\} \setminus \{\chi(v) \mid v \in Y_i\}|\] for all $A \in 2^{Y_i}$.
 \end{enumerate}
 Observe that these conditions precisely reflect the conditions from Definition \ref{def:dp-table} and hence, it can be checked in time $|V(T)| \cdot n^{O(2^k)}$ whether $(\chi,\xi,\rho)$ is extendable by Lemma \ref{la:dp-table}.
 If $(\chi,\xi,\rho)$ is not locally valid and extendable, then the algorithm rejects the current triple $(\chi,\xi,\rho)$ and moves to the next iteration.
 
 Otherwise, similar to join operation in the proof Lemma \ref{la:dp-table}, we need to verify that the partial square $q$-colorings are compatible to one another and can indeed be combined into a global square $q$-coloring.
 Let us say a color $c \in [q]$ \emph{$i$-free} (with respect to $(\chi,\xi,\rho)$) if $c \notin \{\chi(v) \mid v \in Y_i\}$.
 Here, the remaining task is to assign, to every $(i,A) \in \CZ$, a set of $\rho(i,A)$ of $i$-free colors in a consistent way.
 We check this by implementing another dynamic programming algorithm.
 
 For a set $C \subseteq [q]$ of colors and a function $\varrho\colon \CZ \rightarrow [q]_0$ we define $\gamma(C,\varrho)$ to be true if there is a mapping $\eta\colon \CZ \rightarrow 2^{C}$ satisfying the following conditions:
 \begin{enumerate}[label = (\roman*)]
  \item\label{item:dp-free-colors-1} $|\eta(i,A)| = \varrho(i,A)$,
  \item\label{item:dp-free-colors-2} $\eta(i,A) \cap \{\chi(v) \mid v \in Y_i\} = \emptyset$,
  \item\label{item:dp-free-colors-3} $\eta(i,A) \cap \eta(i,A') = \emptyset$ for all distinct $A,A' \subseteq Y_i$,
  \item\label{item:dp-free-colors-4} $\eta(i,A) \cap \eta(i',A') = \emptyset$ for all distinct $(i,A),(i',A') \in \CZ$ such that $A \cap A' = \emptyset$, and
  \item\label{item:dp-free-colors-5} there are no $(i,A) \in \CZ$, $u \in A$ and $v \in X$ such that $uv \in E(G)$ and $\chi(v) \in \eta(i,A)$.
  \item\label{item:dp-free-colors-6} there are no distinct $i,i' \in [\delta]$, $A \subseteq Y_i$, $u \in Y_i \cap Y_{i'}$, $v \in Y_{i'}$ such that $u \in \xi(i',v)$, $u \in A$ and $\chi(v) \in \eta(i,A)$.
 \end{enumerate}
 More intuitively, $\gamma(C,\varrho)$ is set to true if we can assign $\varrho(i,A)$ many free colors to every $(i,A) \in \CZ$ in a consistent way by only using colors from the set $C$.
 In particular, our goal is to determine whether $\gamma([q],\rho)$ is true.
 We achieve this by computing a suitable subset of the entries $\gamma(C,\varrho)$ using a dynamic programming approach.
 More precisely, let $\CC_{{\sf sing}} \coloneqq \{\{c\} \mid c \in [q]\}$ be the set of  all singleton subsets of $[q]$.
 Also, let $\CC_{{\sf seg}} \coloneqq \{[q'] \mid q' \in [q]\}$ be the set of all initial segments of $[q]$.
 We compute $\gamma(C,\varrho)$ for all $C \in \CC_{{\sf sing}} \cup \CC_{{\sf seg}}$ and all possible mappings $\varrho$ using a dynamic programming approach.
 The next claim shows how to compute the entries $\gamma(C,\varrho)$ for $C \in \CC_{{\sf sing}}$.
 
 \begin{claim}
  \label{claim:dp-base-case}
  There is an algorithm that, given $c \in [q]$ and $\varrho\colon \CZ \rightarrow [q]_0$, determines whether $\gamma(\{c\},\varrho)$ is true and runs in time polynomial in $n$ and $|\CZ|$.
 \end{claim}
 \begin{claimproof}
  Since $|\{c\}| = 1$ there is at most one candidate function $\eta\colon \CZ \rightarrow 2^{C}$ which satisfies Condition \ref{item:dp-free-colors-1} (i.e., $\eta(i,A) = \emptyset$ if $\varrho(i,A) = 0$, and $\eta(i,A) = \{c\}$ if $\varrho(i,A) = 1$).
  Now, it can be easily checked in time polynomial in $n$ and $|\CZ|$ whether all other Conditions \ref{item:dp-free-colors-2} - \ref{item:dp-free-colors-6} are satisfied.
 \end{claimproof}

 Now, we argue how to compute the remaining entries in a dynamic programming fashion.
 
 \begin{claim}
  \label{claim:dp-step}
  Let $C \subseteq [q]$ and $\varrho\colon \CZ \rightarrow [q]_0$ be some mapping.
  Also suppose $C = C_1 \uplus C_2$.
  Then $\gamma(C,\varrho)$ is true if and only if there are functions $\varrho_1,\varrho_2\colon \CZ \rightarrow [q]_0$ such that
  \begin{enumerate}[label = (\Alph*)]
   \item\label{item:dp-step-1} $\gamma(C_1,\varrho_1)$ and $\gamma(C_2,\varrho_2)$ are true, and
   \item\label{item:dp-step-2} $\varrho(i,A) = \varrho_1(i,A) + \varrho_2(i,A)$ for all $(i,A) \in \CZ$.
  \end{enumerate}
 \end{claim}
 \begin{claimproof}
  For the forward direction, first suppose that $\gamma(C,\varrho)$ is true, i.e., there is a function $\eta\colon \CZ \rightarrow 2^{C}$ satisfying Conditions \ref{item:dp-free-colors-1} - \ref{item:dp-free-colors-6}.
  For $j \in \{1,2\}$, we define $\eta_j\colon \CZ \rightarrow 2^{C_j}$ via $\eta_j(i,A) \coloneqq \eta(i,A) \cap C_j$.
  Moreover, we define $\varrho_j(i,A) \coloneqq |\eta_j(i,A)|$. We have
  \[\varrho(i,A) = |\eta(i,A)| = |\eta(i,A) \cap C_1| + |\eta(i,A) \cap C_2| = |\eta_1(i,A)| + |\eta_2(i,A)| = |\varrho_1(i,A)| + |\varrho_2(i,A)|\]
  which shows Property \ref{item:dp-step-2}.
  Also, it is easy to verify that Property \ref{item:dp-step-1} holds via the witnessing mappings $\eta_1$ and $\eta_2$.
  Indeed, for $j \in \{1,2\}$, the function $\eta_j$ satisfies Condition \ref{item:dp-free-colors-1} by definition of $\varrho_j$, and Conditions \ref{item:dp-free-colors-2} - \ref{item:dp-free-colors-6} for $\eta_j$ follow directly from the corresponding properties of $\eta$.
  
  For the backward direction let $\varrho_1,\varrho_2\colon \CZ \rightarrow [q]_0$ be functions satisfying \ref{item:dp-step-1} and \ref{item:dp-step-2}.
  By Property \ref{item:dp-step-1}, there are functions $\eta_j\colon \CZ \rightarrow 2^{C_j}$, $j \in \{1,2\}$, satisfying Conditions \ref{item:dp-free-colors-1} - \ref{item:dp-free-colors-6} with respect to $(C_j,\varrho_j)$.
  We define $\eta\colon \CZ \to 2^C$ via $\eta(i,A) \coloneqq \eta_1(i,A) \cup \eta_2(i,A)$.
  It is again easy to check that $\eta$ witnesses that $\gamma(C,\varrho)$ is true.
  Indeed, Condition \ref{item:dp-free-colors-1} follows directly from Property \ref{item:dp-step-2} and the definition of $\eta$.
  Also, as before, Conditions \ref{item:dp-free-colors-2} - \ref{item:dp-free-colors-6} for $\eta$ follow directly from the corresponding properties of $\eta_1$ and $\eta_2$.
 \end{claimproof}
 
 Now, we can iteratively compute all entries $\gamma(C,\varrho)$ where $C \in \CC_{{\sf sing}} \cup \CC_{{\sf seg}}$ as follows.
 First, we iterate over all $C \in \CC_{{\sf sing}}$ and all choices of $\varrho$ and compute $\gamma(C,\varrho)$ using Claim \ref{claim:dp-base-case}.
 Afterwards, we iterate over all values $q' \in \{2,\dots,q\}$ (starting at $q' = 2$) as well as over all $\varrho\colon \CZ \rightarrow [q]_0$.
 To determine whether $\gamma([q'],\varrho)$ is true, we set $C_1 \coloneqq \{1,\dots,q'-1\}$ and $C_2 \coloneqq \{q'\}$ and iterate over all function pairs $\varrho_1,\varrho_2\colon \CZ \rightarrow [q]_0$ such that $\varrho(i,A) = \varrho_1(i,A) + \varrho_2(i,A)$ for all $(i,A) \in \CZ$.
 If there is such a pair such that $\gamma(C_1,\varrho_1)$ and $\gamma(C_2,\varrho_2)$ are true, then we set $\gamma(C,\varrho)$ to true.
 This correctly computes $\gamma(C,\varrho)$ by Claim \ref{claim:dp-step}.
 
 Finally, the main algorithm returns YES (i.e., the algorithm concludes that $G$ has a square $q$-coloring) if $\gamma([q],\rho)$ is true.
 Otherwise, the algorithm moves to the next iteration of the outer loop (i.e., the algorithm chooses the next triple $(\chi,\xi,\rho)$).
 
 If the algorithm has checked all triples $(\chi,\xi,\rho)$ and never returned YES, then it returns NO.
 
 \begin{correctness}
  We argue that the algorithm described above correctly decides whether $G$ has a square $q$-coloring.
  If $G$ is not $q$-irreducible, this follows directly from Lemma \ref{la:reduce-graph}.
  So suppose that $G$ is $q$-irreducible.
  
  First suppose there is some square $q$-coloring $\echi$ of $G$.
  We define $\chi\coloneqq \echi|_X$ and $\echi_i\coloneqq \echi|_{V_i}$ for all $i \in [\delta]$.
  Moreover, we define
  \[\xi(i,v) = A_i(\echi_i,\chi(v))\]
  for all $(i,v) \in \CY$, and
  \[\rho(i,A) = |\{c \in [q] \mid A_i(\echi_i,c) = A\} \setminus \{\chi(v) \mid v \in Y_i\}|\]
  for all $(i,A) \in \CZ$.
  We claim the algorithm return YES in the iteration $(\chi,\xi,\rho)$, i.e., we need to show that $(\chi,\xi,\rho)$ is locally valid, extendable and $\gamma([q],\rho)$ is true.
  
  Condition \ref{item:planar-local-1} is clearly satisfied since $\echi$ is a square coloring of $G$.
  For Condition \ref{item:planar-local-2} suppose $uv \in E(G)$, $\chi(v) = \chi(w)$, and  $u \in \xi(i,w) = A_i(\echi_i,\chi(w))$.
  By definition, there is some $v' \in V_i \setminus Y_i$ such that $\chi(v') = \chi(w) = \chi(v)$ and $uv' \in E(G)$.
  Note that $v \neq v'$ since $v \in X$.
  This contradicts $\echi$ being a square coloring of $G$.
  Similarly, for Condition \ref{item:planar-local-3}, suppose there are distinct $i,i' \in [\delta]$ and $v \in Y_i$, $w \in Y_{i'}$ such that $\chi(v) = \chi(w)$ and $\xi(i,v) \cap \xi(i,w)$.
  Let $u \in \xi(i,v) \cap \xi(i,w)$.
  By definition, there are $v' \in V_i \setminus Y_i$ such that $\chi(v') = \chi(v)$ and $uv' \in E(G)$, and $w' \in V_{i'} \setminus Y_{i'}$ such that $\chi(w') = \chi(w)$ and $uv' \in E(G)$.
  Overall, we obtain distinct vertices $v',u,w' \in V(G)$ such that $uv',uw' \in E(G)$ and $\chi(v') = \chi(w')$.
  Again, this contradicts $\echi$ being a square coloring of $G$.
  So $(\chi,\xi,\rho)$ is locally valid.
  
  Also, $(\chi,\xi,\rho)$ is extendable since Conditions \ref{item:dp-chi-pl} - \ref{item:dp-rho-pl} are satisfied by definition.
  So it remains to argue that there is some witness $\eta\colon \CZ \rightarrow 2^{[q]}$ showing that $\gamma([q],\rho)$ is true.
  We define
  \[\eta(i,A) \coloneqq \{c \in [q] \mid A_i(\echi_i,c) = A\} \setminus \{\chi(v) \mid v \in Y_i\}.\]
  We have that Conditions \ref{item:dp-free-colors-1} - \ref{item:dp-free-colors-3} are trivially satisfied.
  Also, Conditions \ref{item:dp-free-colors-4} and \ref{item:dp-free-colors-6} can be verified using the same arguments as for Condition \ref{item:planar-local-3} (note that, using Condition \ref{item:dp-free-colors-3}, we only need to consider the case $i \neq i'$).
  Similarly, Condition \ref{item:dp-free-colors-5} follows by the same arguments as Condition \ref{item:planar-local-2}.
  
  Finally, observe that the algorithm correctly computes $\gamma([q],\rho)$ using Claims \ref{claim:dp-base-case} and \ref{claim:dp-step}.
  So overall, the algorithm returns YES.
  
  \medskip
  
  In the other direction, suppose that the algorithm outputs YES, i.e., there is a triple $(\chi,\xi,\rho)$ that is locally valid, extendable and $\gamma([q],\rho)$ is true.
  Since $(\chi,\xi,\rho)$ is extendable, for every $i \in [\delta]$, there is square $q$-coloring $\echi_i$ of $G[V_i]$ satisfying \ref{item:dp-chi-pl} - \ref{item:dp-rho-pl}.
  Also, $\gamma([q],\rho)$ is true meaning that there is a mapping $\eta\colon\CZ \to 2^{[q]}$ satisfying \ref{item:dp-free-colors-1} - \ref{item:dp-free-colors-4} (with respect to the tuple $([q],\rho)$).
  Since
  \[|\eta(i,A)| = \rho(i,A) = |\{c \in [q] \mid A_i(\echi_i,c) = A\} \setminus \{\chi(v) \mid v \in Y_i\}|\]
  by Conditions \ref{item:dp-rho-pl} and \ref{item:dp-free-colors-1}, we can rename the colors in the coloring $\echi_i$ (using Conditions \ref{item:dp-free-colors-2} and \ref{item:dp-free-colors-3}) so that
  \begin{equation}
   \label{eq:rename-colors-planar}
   \eta(i,A) = \{c \in [q] \mid A_i(\echi_i,c) = A\} \setminus \{\chi(v) \mid v \in Y_i\}
  \end{equation}
  for all $(i,A) \in \CZ$ while preserving Conditions \ref{item:dp-chi-pl} - \ref{item:dp-rho-pl} (since we only rename free colors for every $i \in [\delta]$).
  
  Now, we define the coloring $\echi\colon V(G) \to [q]$ via
  \[\echi(v) \coloneqq \begin{cases}
                        \chi(v) &\text{if } v \in X\\
                        \echi_i(v) &\text{if } v \in V_i \text{ for some } i \in [\delta]
                       \end{cases}\]
  Note that $\echi$ is well-defined by Condition \ref{item:dp-chi-pl}.
  We claim that $\echi$ is a square coloring of $G$.
  Let $v,w \in V(G)$ be distinct vertices such that $\dist(v,w) \leq 2$.
  We need to argue that $\echi(v) \neq \echi(w)$.
  
  First suppose that $v \in X$.
  If $vw \in E(G)$ then either $w \in X$ and $\echi(v) \neq \echi(w)$ by \ref{item:planar-local-1}, or $w \in V_i \setminus X$ for some $i \in [\delta]$ which implies that $v \in V_i$ and $\echi(v) \neq \echi(w)$ since $\echi_i$ is a square coloring of $G[V_i]$.
  So suppose that $\dist(v,w) = 2$ and let $u \in N_G(v) \cap N_G(w)$.
  If $u,w \in X$ then $\echi(v) \neq \echi(w)$ by \ref{item:planar-local-1}.
  If $w \in X$, but $u \in V_i \setminus X$ for some $i \in [\delta]$ then $v,w \in V_i$, and $\echi(v) \neq \echi(w)$ follows as before.
  So suppose that $w \notin X$, i.e., $w \in V_i \setminus X$ for some $i \in [\delta]$.
  This means that $u \in V_i$.
  If $v \in V_i$ then again $\echi(v) \neq \echi(w)$ follows as before.
  So suppose that $v \in X \setminus V_i$ which means that $u \in Y_i$.
  If $\echi(w)$ is not $i$-free then there is some $w' \in Y_i$ such that $\echi(w) = \echi(w') = \chi(w')$, and $u \in \xi(i,w')$ by \ref{item:dp-xi-pl}.
  So $\chi(w') \neq \chi(v)$ by \ref{item:planar-local-2} which implies that $\echi(v) \neq \echi(w)$.
  In the other subcase $\echi(w)$ is $i$-free which means that $\echi(w) \in \eta(i,A)$ where $A = A_i(\echi_i,\echi(w))$ by Equation \eqref{eq:rename-colors-planar}.
  Observe that $u \in A$ by definition.
  So $\echi(v) = \chi(v) \notin \eta(i,A)$ by Condition \ref{item:dp-free-colors-5}.
  It follows that $\echi(v) \neq \echi(w)$.
  This completes the case that $v \in X$.
  
  By symmetry, we also obtain that $\echi(v) \neq \echi(w)$ if $w \in X$.
  So suppose that $v,w \notin X$.
  So there are $i,i' \in [\delta]$ such that $v \in V_i \setminus Y_i$ and $w \in V_{i'} \setminus Y_{i'}$.
  If $i = i'$ then $\echi(v) \neq \echi(w)$ since $\echi_i$ is a square coloring of $G[V_i]$.
  So suppose that $i \neq i'$.
  Observe that $vw \notin E(G)$.
  Since $\dist(v,w) \leq 2$ there is some $u \in N_G(v) \cap N_G(w)$.
  We have that $u \in Y_i \cap Y_{i'}$.
  
  If $\chi(v)$ is $i$-free then $\echi(v) \in \eta(i,A_v)$ where $A_v = A_i(\echi_i,\echi(v))$ by Equation \eqref{eq:rename-colors-planar}, and $u \in A_v$.
  Similarly, if $\chi(w)$ is $i'$-free then $\echi(w) \in \eta(i',A_w)$ where $A_w = A_{i'}(\echi_{i'},\echi(v))$, and $u \in A_w$.
  
  So if $\chi(v)$ is $i$-free and $\chi(w)$ is $i'$-free, then $A_v \cap A_w \neq \emptyset$ which implies that $\eta(i,A_v) \cap \eta(i',A_w) = \emptyset$ by Condition \ref{item:dp-free-colors-4}.
  It follows that $\echi(v) \neq \echi(w)$.
  
  Next, suppose that $\chi(v)$ is $i$-free and $\chi(w)$ is not $i'$-free.
  Then there is some $w' \in Y_{i'}$ such that $\echi(w) = \echi(w')$.
  Also, $u \in \xi(i',w')$ by \ref{item:dp-xi-pl}.
  So $\chi(w') \notin \eta(i,A_v)$ by Condition \ref{item:dp-free-colors-6}.
  Since $\echi(v) \in \eta(i,A_v)$, we conclude that $\echi(v) \neq \echi(w)$.
  
  The case where $\chi(v)$ is not $i$-free and $\chi(w)$ is $i'$-free can be handled symmetrically.\
  
  So suppose $\chi(v)$ is not $i$-free and $\chi(w)$ is not $i'$-free.
  Then there is some $v' \in Y_{i}$ such that $\echi(v) = \echi(v')$, and there is some $w' \in Y_{i'}$ such that $\echi(w) = \echi(w')$.
  Also, $u \in \xi(i,v') \cap \xi(i',w')$ by \ref{item:dp-xi-pl}.
  So $\chi(v') \neq \chi(w')$ by \ref{item:planar-local-3}.
  It follows that $\echi(v) \neq \echi(w)$.
  
  So overall, $\echi$ is a square coloring of $G$ which completes the correctness proof of the algorithm.
 \end{correctness}
 
 \begin{runningtime}
  By Equation \ref{eq:number-triples}, the outer loop of the algorithm considers at most $q^{O(\alpha + \delta \cdot 2^{k})}$ many triples $(\chi,\xi,\rho)$.
  So let us fix such a triple.
  Clearly, it can be checked in polynomial time whether $(\chi,\xi,\rho)$ is locally valid.
  By turning the subtree rooted at $t_i$ into a nice tree decomposition of $G[V_i]$ for all $i \in [\delta]$, we can use Lemma \ref{la:dp-table} to decide whether $(\chi,\xi,\rho)$ is extendable in time $|V(T)| \cdot n^{O(2^k)}$.
  So it only remains to analyze the time required to compute the entries $\gamma(C,\varrho)$ for all $C \in \CC_{{\sf sing}} \cup \CC_{{\sf seg}}$ and all possible mappings $\varrho$.
  First observe that the number of such pairs $(C,\varrho)$ is bounded by
  \[|\CC_{{\sf sing}} \cup \CC_{{\sf seg}}| \cdot (q+1)^{|\CZ|} \leq 2q \cdot (q+1)^{\delta2^k} = q^{O(\delta \cdot 2^{k})}.\]
  If $|C| = 1$ then we can compute $\gamma(C,\varrho)$ in time polynomial in $n$ and $|\CZ| \leq \delta \cdot 2^{k}$ by Claim \ref{claim:dp-base-case}.
  Otherwise, we use Claim \ref{claim:dp-step} and computing an entry $\gamma(C,\varrho)$ takes time
  \[\left((q+1)^{\delta2^k}\right)^2(n + |\CZ|)^{O(1)}.\]
  So overall, computing all entries takes time
  \[q^{O(\delta \cdot 2^{k})} \cdot \left((q+1)^{\delta2^k}\right)^2(n + |\CZ|)^{O(1)} = q^{O(\alpha + \delta \cdot 2^{k})} \cdot (n + |\CZ|)^{O(1)}.\]
  So overall, the algorithm takes time
  \[q^{O(\alpha + \delta \cdot 2^{k})} \cdot \Big(|V(T)| \cdot n^{O(2^k)} + q^{O(\delta \cdot 2^{k})} \cdot (n + |\CZ|)^{O(1)}\Big) = |V(T)| \cdot n^{O(\alpha + \delta \cdot 2^{k})}.\]
  as desired.
 \end{runningtime}
\end{proof}

Now, we can combine all the parts to obtain the following theorem.

\begin{theorem}[Lemma \ref{lem:alg-planar-many-colors-intro} restated]
 \label{thm:alg-planar-many-colors}
 There is an algorithm that solves \textsc{Planar} \qSqCol\ in time $n^{O(n/q)}$.
\end{theorem}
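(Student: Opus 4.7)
The plan is to assemble the result from the three pieces already developed in this section. Given an input planar graph $G$ with $q$ colors, I would first use Lemma \ref{la:reduce-graph} to reduce to the $q$-irreducible case: compute $U = \{u \mid |N_{G^2}[u]| > q\}$ and $W = U \cup N_G(U)$, observe $G[W]$ is a planar induced subgraph, and replace $G$ by $G[W]$. By Lemma \ref{la:reduce-graph} this preserves equivalence with respect to having a square $q$-coloring, and the resulting graph is $q$-irreducible by construction.

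Next, I would apply Corollary \ref{cor:protrusion-decomposition} to this $q$-irreducible planar graph to obtain, in polynomial time, a $(\tfrac{cn}{q}, \tfrac{cn}{q}, c)$-protrusion decomposition $(T,\beta)$ for some absolute constant $c$. Since the decomposition has at most $\tfrac{cn}{q}+1$ children of the root (from the $\delta$ bound) and all non-root bags have size at most $c$, the total number of nodes $|V(T)|$ is polynomial in $n$.

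Finally, I would feed $(T,\beta)$ into the algorithm of Lemma \ref{la:dp-protrusion-decomposition} with $\alpha = \delta = cn/q$ and $k = c$. That lemma decides whether $G$ has a square $q$-coloring in time
\[
|V(T)| \cdot n^{O(\alpha + \delta \cdot 2^{k})} \;=\; n^{O(1)} \cdot n^{O\!\left(\tfrac{cn}{q} + \tfrac{cn}{q}\cdot 2^{c}\right)} \;=\; n^{O(n/q)},
\]
where we absorb the absolute constants $c$ and $2^c$ into the $O(\cdot)$. This yields the claimed $n^{O(n/q)}$ running time.

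There is really no obstacle here: every non-trivial ingredient (the reduction rule, the protrusion decomposition, and the dynamic program on a protrusion decomposition) has already been proven, and the proof of this theorem consists only of threading these three results together with the right parameter choices. The only thing to double-check is that the constants line up: we need $k$ to be a fixed absolute constant so that $2^k$ is also constant and the exponent collapses to $O(n/q)$, which is exactly what Corollary \ref{cor:protrusion-decomposition} guarantees.
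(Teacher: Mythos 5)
Your proposal follows the paper's proof essentially verbatim: reduce via Lemma \ref{la:reduce-graph}, build the protrusion decomposition via Corollary \ref{cor:protrusion-decomposition}, and run the dynamic program of Lemma \ref{la:dp-protrusion-decomposition}, absorbing the constants $c$ and $2^c$ into the exponent. The one slip is your claim that after a \emph{single} application of the reduction rule ``the resulting graph is $q$-irreducible by construction'': this need not hold, since deleting the vertices of $V(G)\setminus W$ can shrink the second neighborhoods of vertices that remain, so some $u\in W$ may now satisfy $|N_{(G[W])^2}[u]|\le q$ without lying in the closed neighborhood of the new high-degree set. The paper therefore applies the rule recursively (equivalently, iterates it to a fixed point), which terminates after polynomially many rounds and is correct at every step by Lemma \ref{la:reduce-graph}; with that small correction your argument matches the paper's proof and the running-time analysis goes through as you state.
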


\begin{proof}
 Let $G$ denote the input graph.
 The algorithm computes $U \coloneqq \{u \in V(G) \mid |N_{G^2}[u]| > q\}$ and $W \coloneqq U \cup N_G(U)$.
 If $W \subsetneq V(G)$ the algorithm deletes all vertices from $V(G) \setminus W$ and recursively decides whether $G[W]$ has a square $q$-coloring.
 Observe that this is correct by Lemma \ref{la:reduce-graph}.
 
 Otherwise, $G$ is $q$-irreducible.
 Using Corollary \ref{cor:protrusion-decomposition}, we compute a $(\alpha,\delta,c)$-protrusion decomposition $(T,\beta)$ of $G$ where $\alpha,\delta = O(\frac{n}{q})$ and $c$ is some absolute constant.
 Afterwards, we decide whether $G$ has a square $q$-coloring using Lemma \ref{la:dp-protrusion-decomposition}.
 
 For the running time, observe that the algorithm always arrives at a $q$-irreducible graph after polynomially many steps.
 Also, given a $q$-irreducible graph, the protrusion decomposition $(T,\beta)$ is computed in polynomial time by Corollary \ref{cor:protrusion-decomposition}.
 Finally, the application of the algorithm from Lemma \ref{la:dp-protrusion-decomposition} takes time $n^{O(n/q)}$.
\end{proof}

Combining Corollary \ref{cor:alg-planar-few-colors} and Theorem \ref{thm:alg-planar-many-colors}, we obtain a subexponential algorithm for \textsc{Planar} \SqCol.

\begin{corollary}[Theorem \ref{thm:planar-alg-intro} restated]
  There is an algorithm that solves \textsc{Planar} \SqCol\ in time $2^{O(n^{2/3} \log n)}$.
\end{corollary}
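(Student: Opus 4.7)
The plan is to prove the corollary by combining the two algorithms established earlier in this section via a simple case distinction on the number $q$ of colors, with threshold $q = n^{1/3}$. Since $q \leq n$ may be assumed (else no square coloring exists), this case split covers the entire input domain.

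First I would handle the case $q \leq n^{1/3}$ by invoking Corollary \ref{cor:alg-planar-few-colors}, which solves \textsc{Planar} \qSqCol{} in time $q^{O(\sqrt{qn})}$. Substituting $q \leq n^{1/3}$ gives a running time bounded by
\[
q^{O(\sqrt{qn})} \leq (n^{1/3})^{O(\sqrt{n^{1/3} \cdot n})} = (n^{1/3})^{O(n^{2/3})} = 2^{O(n^{2/3} \log n)},
\]
as required. Next, for the case $q \geq n^{1/3}$, I would apply Theorem \ref{thm:alg-planar-many-colors}, which solves the problem in time $n^{O(n/q)}$. Here
\[
n^{O(n/q)} \leq n^{O(n/n^{1/3})} = n^{O(n^{2/3})} = 2^{O(n^{2/3} \log n)},
\]
again matching the desired bound.

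The algorithm itself simply reads the input, compares $q$ to $n^{1/3}$, and dispatches to the appropriate subroutine; both subroutines have already been shown to be correct. There is essentially no obstacle here, since the trade-off was engineered with this split in mind: the two running times coincide (up to constants in the exponent) precisely at $q = n^{1/3}$, as already illustrated in Figure \ref{fig:runtime}. The total running time is thus $2^{O(n^{2/3} \log n)}$, completing the proof.
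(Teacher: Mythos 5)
Your proposal is correct and follows exactly the paper's own argument: split on the threshold $q = n^{1/3}$, invoke Corollary \ref{cor:alg-planar-few-colors} (time $q^{O(\sqrt{qn})}$) for small $q$ and Theorem \ref{thm:alg-planar-many-colors} (time $n^{O(n/q)}$) for large $q$, and verify that both bounds become $2^{O(n^{2/3}\log n)}$ at the threshold. The calculations are sound, so nothing further is needed.
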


\begin{proof}
 Let $G$ be the input graph and $q$ the number of colors.
 If
 \[q \leq n^{1/3}\]
 then we apply Corollary \ref{cor:alg-planar-few-colors} giving a running time of
 \[q^{O(\sqrt{qn})} = n^{O(\sqrt{n^{4/3}})} = 2^{O(n^{2/3} \log n)}.\]
 Otherwise, we apply Theorem \ref{thm:alg-planar-many-colors} resulting in a running time
 \[n^{O(\sqrt{n/q})} = n^{O(n^{2/3})} = 2^{O(n^{2/3} \log n)}.\qedhere\]
\end{proof}

\section{Lower Bounds for Planar Graphs}
\label{sec:planar-lower-bound}
In this section, we provide hardness results for \textsc{Planar} \SqCol.
More precisely, we prove the following theorem which implies Theorems \ref{thm:planar-lb-intro} and \ref{thm:planar-eth-lb-intro}.

\begin{theorem}
 \label{thm:planar-hardness-q}
 For every fixed $q \geq 4$, the problem \textsc{Planar} \qSqCol\ is \NP-hard. Moreover, assuming \ETH, it cannot be solved in time $2^{o(\sqrt{n})}$.
\end{theorem}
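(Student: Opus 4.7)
The plan is to reduce from \textsc{$3$-Coloring} restricted to planar graphs, which is \NP-hard \cite{Stockmeyer73} and, by the standard reduction and well-known ETH lower bounds for \textsc{$3$-SAT} on planar incidence graphs, does not admit a $2^{o(\sqrt{n})}$ time algorithm unless \ETH\ fails (see \cite[Theorem 14.9]{CyganFKLMPPS15}). Since we only need an $O(n)$-vertex output for each fixed $q$, the ETH lower bound transfers automatically.

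Let $G$ be a plane input graph. We may assume minimum degree at least $3$: vertices of degree $\leq 2$ can be handled by straightforward local reductions that preserve planarity and change the vertex count by $O(1)$. The key geometric observation is that for every plane graph of minimum degree $\geq 3$ one can draw a simple closed curve $\gamma$ in the plane that crosses every edge of $G$ exactly twice and avoids every vertex of $G$; concretely, $\gamma$ can be obtained by following, around each vertex $v$, a small arc that enters and leaves through each pair of consecutive incident edges in the rotation system. This gives $2|E(G)|$ ordered crossings along $\gamma$.

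For $q \geq 5$ I will construct the \SqCol\ instance as sketched in Figures \ref{fig:planar-hardness-cycle} and \ref{fig:planar-hardness-edge}. At the $i$-th crossing along $\gamma$ I create a \emph{candidate block} $C_i$ of three vertices, and between consecutive crossings I place an \emph{auxiliary block} $A_i$ of $q-3$ vertices; these are placed along $\gamma$ so that planarity is preserved. For each block I add edges so that $A_i \cup C_i$ and $C_i \cup A_{i+1}$ each become cliques in the square, which forces the following invariants in any square $q$-coloring: there is a fixed partition $[q] = \CC \uplus \CA$ with $|\CC|=3$, every $C_i$ uses exactly the colors of $\CC$, and every $A_i$ uses exactly the colors of $\CA$. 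Moreover, two candidate blocks $C_i,C_j$ attached to the two crossings of the same edge $uv$ of $G$ are linked through the edge gadget of Figure \ref{fig:planar-hardness-edge} in such a way that $u$ and $v$ receive distinct colors from $\CC$, while candidate blocks meeting at a vertex $v$ of $G$ are forced to agree on the color of $v$ (this is the role of the small gadget at each vertex: it identifies the three endpoint choices arising from the three edges at $v$ into a single color). Putting these together, the restriction of a square $q$-coloring to the vertices of $G$ is a proper $3$-coloring, and conversely any proper $3$-coloring of $G$ extends to a square $q$-coloring by coloring each $C_i$ according to the induced triple and the $A_i$'s uniformly by $\CA$.

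For $q=4$ the construction above breaks because the \red\ vertices in Figure \ref{fig:planar-hardness-edge} have degree $4$ independently of $q$. Here I would design a specialized edge gadget that only needs to propagate a single auxiliary color around $\gamma$; the same circle-crossing skeleton applies, but the $A_i$-blocks collapse to a single vertex and the local gadget only needs to transmit one color, which can be done while preserving planarity. The main obstacle in the whole proof is not the logical force of the gadgets but verifying planarity together with the distance-$2$ incidences simultaneously: every added edge must sit inside the strip swept out by $\gamma$ around the corresponding crossing, and one must carefully track which pairs of new vertices end up at distance at most $2$ in the resulting graph to avoid unintended conflicts (especially at vertex-gadgets where three edge-gadgets meet). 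Once these planarity and distance bookkeeping details are checked, correctness follows, the total number of vertices is $O(q|E(G)|) = O(n)$ for fixed $q$, and the claimed ETH lower bound of $2^{o(\sqrt{n})}$ transfers directly from planar \textsc{$3$-Coloring}.
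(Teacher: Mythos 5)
Your proposal follows essentially the same route as the paper: a reduction from planar \textsc{$3$-Coloring} using a closed curve that crosses every edge exactly twice, candidate blocks of size $3$ and auxiliary blocks of size $q-3$ placed along the curve whose square-cliques fix the candidate/auxiliary color partition, per-vertex gadgets enforcing consistency, and a separate single-auxiliary-color construction for $q=4$. The only caveat is that the parts you leave as ``bookkeeping'' (forcing equal colors at each vertex of $G$ and extending colorings back into these gadgets) are exactly where the paper invests its two-step machinery of equality edges plus degree-bounded equality gadgets, but filling those in yields the same proof.
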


Let us start with two basic remarks.
First observe that the bound on $q$ is optimal since \qSqCol\ is polynomial-time solvable (on general graphs) for all $q \leq 3$.
Indeed, if $q \leq 3$, then every graph $G$ of maximum degree at least $3$ is a trivial NO-instance since $G^2$ contains a clique of size at least $4$.
So it suffices to consider graphs of maximum degree at most $2$ for which there is an easy algorithm.

Also note that the theorem immediately implies the same hardness results for \SqCol.

\begin{corollary}
 \textsc{Planar} \SqCol\ is \NP-hard. Moreover, assuming \ETH, it cannot be solved in time $2^{o(\sqrt{n})}$.
\end{corollary}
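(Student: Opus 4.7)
The plan is to reduce from \textsc{$3$-Coloring} on planar graphs, whose \NP-hardness is due to Stockmeyer and whose $2^{o(\sqrt{n})}$ lower bound under \ETH\ is standard (see \cite[Theorem~14.9]{CyganFKLMPPS15}). Given a planar input $G$ with a fixed embedding, I will build a planar graph $G^\ast$ on $O(q \cdot |V(G)|)$ vertices such that $G$ is $3$-colorable if and only if $G^\ast$ admits a square $q$-coloring. Since $q$ is fixed, the blow-up is linear, so both the \NP-hardness and the $2^{o(\sqrt{n})}$ lower bound transfer directly. Without loss of generality I may assume $G$ has minimum degree $3$ (vertices of smaller degree can be pre-processed, e.g., by local padding, without affecting $3$-colorability or planarity).

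For $q \geq 5$, the construction partitions the color set $[q]$ into $3$ \emph{candidate} colors and $q-3$ \emph{auxiliary} colors, and then globally synchronizes this partition along a planar curve. The geometric key fact is that for any plane graph of minimum degree~$3$ there exists a simple closed curve $\Gamma$ that avoids all vertices and crosses every edge of $G$ exactly twice (Figure~\ref{fig:planar-hardness-cycle}). Walking along $\Gamma$, I alternately place $2|E(G)|$ candidate-triples $C_1, C_2, \dots$ (one at each crossing with $G$) and $2|E(G)|$ auxiliary-sets $A_1, A_2, \dots$ of size $q-3$ (one between each pair of consecutive crossings). Adding edges as in Figure~\ref{fig:planar-hardness-edge} makes each $A_i \cup C_i$ and each $C_i \cup A_{i+1}$ a clique in $G^{\ast\,2}$, which, together with a simple counting argument using $|A_i|+|C_i|=q$, forces all $A_i$ to share a common set of $q-3$ colors (the auxiliary ones) and each $C_i$ to receive precisely the complementary $3$ candidates. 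Furthermore, the way the two $C_i, C_j$ associated with a single edge $uv \in E(G)$ are attached to $u$ and $v$ (Figure~\ref{fig:planar-hardness-edge}) will force $u$ and $v$ to be colored by candidate colors and to receive \emph{distinct} candidates. Finally, at each vertex $v \in V(G)$ I insert a small local gadget guaranteeing that the sub-gadgets sitting on the various edges incident to $v$ do not interfere with each other (no spurious distance-$2$ conflicts across edges meeting at $v$).

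Correctness will be checked in two directions. Given a proper $3$-coloring $c$ of $G$, I color every $A_i$ with the $q-3$ fixed auxiliary colors and every $C_i$ with the three candidates, assigning $c(u)$ and $c(v)$ to the appropriate vertices of the $C_i$'s attached to the edge $uv$; by construction every square-coloring constraint is satisfied. Conversely, from a square $q$-coloring of $G^\ast$ the clique structure along $\Gamma$ makes the auxiliary/candidate split globally consistent; the edge-gadget then forces adjacent original vertices of $G$ to receive distinct candidate colors, producing a $3$-coloring of $G$. For $q=4$ the gadget of Figure~\ref{fig:planar-hardness-edge} breaks (the \red\ vertices would need degree~$4$ while there is only one auxiliary color to spend), so I design a specialized simpler chain that only has to propagate a single auxiliary color, preserving planarity by direct inspection.

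The \ETH\ bound follows because $|V(G^\ast)| = O(q \cdot |V(G)|) = O(|V(G)|)$ for fixed $q$, so any $2^{o(\sqrt{|V(G^\ast)|})}$ algorithm would yield a $2^{o(\sqrt{|V(G)|})}$ algorithm for planar $3$-coloring. The main obstacle I anticipate is the careful bookkeeping for the edge and vertex gadgets: one has to prove that for every edge $uv$ of $G$, the two $C_i$'s sitting on its two crossings with $\Gamma$ are forced to color $u$ and $v$ in matching positions within the candidate triple, and simultaneously that the local vertex gadget at $v$ cleanly separates the influence of the gadgets on the (up to $\deg_G(v)$) edges incident to $v$ without creating unintended distance-$2$ collisions between colors on different edge-gadgets. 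The $q=4$ sub-case will require a separate, tailored verification along the same lines.
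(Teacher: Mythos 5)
Your proposal is correct and takes essentially the same route as the paper: the paper obtains this corollary immediately from its fixed-$q$ hardness result (Theorem~\ref{thm:planar-hardness-q}), which is proved by precisely the reduction you sketch — a closed curve crossing every edge of the plane graph exactly twice, alternating candidate-triples and auxiliary-blocks that form cliques in the square to synchronize the color split, edge gadgets forcing adjacent original vertices onto distinct candidate colors, per-vertex consistency gadgets, a separate construction for $q=4$, and the $O(q\cdot|V(G)|)$ size bound that transfers both \NP-hardness and the $2^{o(\sqrt{n})}$ \ETH\ lower bound from \textsc{Planar $3$-Coloring}. The only organizational difference is that the paper first builds an intermediate instance with ``equality edges'' (Lemmas~\ref{la:lower-bound-4-colors-with-eq} and~\ref{la:lower-bound-many-colors-with-eq}) and then removes them with a small planar equality gadget (Lemma~\ref{la:lower-bound-planar}), which is exactly the bookkeeping you anticipate.
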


To prove Theorem \ref{thm:planar-hardness-q} we give a reduction from \textsc{Planar $3$-Coloring} problem to \textsc{Planar} \qSqCol\ for all $q \geq 4$ and then exploit known hardness results for \textsc{Planar $3$-Coloring}.

\begin{theorem}
 \label{thm:planar-3-col-hardness}
 \textsc{Planar $3$-Coloring} is \NP-hard. Moreover, assuming \ETH, it cannot be solved in time $2^{o(\sqrt{n})}$.
\end{theorem}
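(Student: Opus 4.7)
The plan is to reduce from \textsc{$3$-SAT} via the intermediate problem of \textsc{$3$-Coloring} on general graphs, following Stockmeyer's classical approach \cite{Stockmeyer73}. In the first step, I would use the standard textbook reduction from \textsc{$3$-SAT} to \textsc{$3$-Coloring}, which introduces three ``truth-value'' vertices $T, F, B$ forming a triangle, a variable gadget for each variable (a triangle containing two literal vertices and $B$, forcing the two literals to receive the colors of $T$ and $F$), and a clause gadget (the standard OR-gadget built from two triangles and a few extra vertices, which is satisfiable exactly when at least one literal receives $T$). This reduction produces a graph $G'$ with $O(n+m)$ vertices and edges, where $n,m$ are the number of variables and clauses of the input formula.

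The main step is converting $G'$ into a \emph{planar} instance. Here I would draw $G'$ in the plane (with an arbitrary embedding, allowing crossings) and replace every edge-crossing by a constant-size \emph{crossover gadget}, which is a small planar graph with four designated vertices $a, b, c, d$ on its outer face such that, in every proper $3$-coloring, $\chi(a)=\chi(c)$ and $\chi(b)=\chi(d)$, and conversely every assignment of colors to $a,b,c,d$ satisfying these two equalities extends to a proper $3$-coloring of the gadget. The well-known gadget based on a central vertex surrounded by three triangles, due to Garey--Johnson--Stockmeyer, has this property. Replacing each crossing of edges $uv$ and $xy$ by such a gadget (identifying $a,c$ along the split edge $uv$ and $b,d$ along $xy$, with subdivisions as needed) produces a planar graph $G''$ that is $3$-colorable if and only if $G'$ is. The correctness check amounts to verifying that the crossover gadget correctly ``transmits equality of colors'' through the crossing.

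For the running-time analysis, I would first recall that by the Sparsification Lemma of Impagliazzo--Paturi--Zane, \ETH\ implies that \textsc{$3$-SAT} cannot be solved in time $2^{o(n+m)}$. The reduction above yields a graph $G'$ with $N = O(n+m)$ vertices, and the subsequent planarization produces a graph $G''$ with $O(N^2)$ vertices, since any drawing of $G'$ has at most $\binom{|E(G')|}{2} = O(N^2)$ crossings and each crossing is replaced by a constant-size gadget. Hence if \textsc{Planar $3$-Coloring} admitted a $2^{o(\sqrt{|V|})}$-time algorithm, we would solve $G''$ in time $2^{o(\sqrt{N^2})} = 2^{o(N)} = 2^{o(n+m)}$, contradicting \ETH. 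The same chain shows \NP-hardness since the overall reduction runs in polynomial time.

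The main technical point is the crossover gadget: one must be careful that the gadget realizes exactly the relation $\{(\chi(a),\chi(b),\chi(c),\chi(d)) : \chi(a)=\chi(c),\ \chi(b)=\chi(d)\}$ and no other tuples, since any spurious forced equality would break the reduction. I would therefore verify the $3$-coloring behaviour of the gadget by an exhaustive case check on the $3^{2}=9$ assignments to $(a,b)$. Everything else in the argument is routine bookkeeping of sizes.
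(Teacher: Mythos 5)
Your proposal is correct and is essentially the argument the paper relies on: the paper does not prove this theorem itself but cites Stockmeyer for \NP-hardness and \cite[Theorem 14.9]{CyganFKLMPPS15} for the \ETH\ bound, and the proof behind those references is exactly your chain (sparsified \textsc{$3$-SAT} $\to$ \textsc{$3$-Coloring} with $O(n+m)$ vertices, planarization by the Garey--Johnson--Stockmeyer crossover gadget with quadratic blow-up, hence a $2^{o(\sqrt{N})}$ algorithm would contradict \ETH). The only point to state carefully is the standard wiring of the crossover gadgets along a crossed edge (identify corners along the chain and keep one genuine edge at the end so the original inequality constraint is enforced), which you gesture at and which is routine.
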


The \NP-hardness of \textsc{Planar $3$-Coloring} was first proved in \cite{Stockmeyer73}.
For the second part of theorem we refer to \cite[Theorem 14.9]{CyganFKLMPPS15}.

To describe the reduction, we split it into two steps.
We first consider a variant where we also allow special \emph{equality edges} where endpoints have to be assigned the same color.
Then, in a second step, we replace the equality edges by a simple gadget to complete the intended reduction.
We further split the first step depending on whether $q = 4$ or $q \geq 5$ as different constructions are needed for these two cases.

\subsection{Four Colors}

\begin{figure}
 \centering
 \begin{tikzpicture}
  \path[use as bounding box] (-11.4,-3.6) rectangle (3.6,3.6);
  
  \node[normalvertex,fill=blue!80] (1) at (-7,2) {};
  \node[normalvertex,fill=red!80] (2) at (-7,-2) {};
  \node[normalvertex,fill=red!80] (3) at (-11,2) {};
  \node[normalvertex,fill=Green!80] (4) at (-11,-2) {};
  
  \foreach \i/\j in {1/2,1/3,1/4,2/4,3/4}{
   \draw[thick] (\i) edge (\j);
  }
  
  \draw[white,fill=gray!80] (-5.5,0.2) -- (-4.9,0.2) -- (-4.9,0.4) -- (-4.5,0) -- (-4.9,-0.4) -- (-4.9,-0.2) -- (-5.5,-0.2) -- cycle;
 
  \node[normalvertex,fill=blue!80] (12) at ($(2,2)+(0:0.6)$) {};
  \node[normalvertex,fill=blue!80] (13) at ($(2,2)+(90:0.6)$) {};
  \node[normalvertex,fill=blue!80] (14) at ($(2,2)+(225:0.6)$) {};
  
  \node[normalvertex,fill=red!80] (21) at ($(2,-2)+(0:0.6)$) {};
  \node[normalvertex,fill=red!80] (24) at ($(2,-2)+(270:0.6)$) {};
  
  \node[normalvertex,fill=red!80] (31) at ($(-2,2)+(90:0.6)$) {};
  \node[normalvertex,fill=red!80] (34) at ($(-2,2)+(180:0.6)$) {};
  
  \node[normalvertex,fill=Green!80] (41) at ($(-2,-2)+(45:0.6)$) {};
  \node[normalvertex,fill=Green!80] (42) at ($(-2,-2)+(270:0.6)$) {};
  \node[normalvertex,fill=Green!80] (43) at ($(-2,-2)+(180:0.6)$) {};
  
  \node[normalvertex,fill=yellow!80] (f1) at (0,-2) {};
  \node[normalvertex,fill=yellow!80] (f2) at (0.4,-0.4) {};
  \node[normalvertex,fill=yellow!80] (f3) at (2,0) {};
  
  \node[normalvertex,fill=yellow!80] (g1) at (0,2) {};
  \node[normalvertex,fill=yellow!80] (g2) at (-0.4,0.4) {};
  \node[normalvertex,fill=yellow!80] (g3) at (-2,0) {};
  
  \node[normalvertex,fill=yellow!80] (h1) at (0,3.2) {};
  \node[normalvertex,fill=yellow!80] (h2) at (0,-3.2) {};
  \node[normalvertex,fill=yellow!80] (h3) at (3.2,0) {};
  \node[normalvertex,fill=yellow!80] (h4) at (-3.2,0) {};
  
  \scoped[on background layer]{
  \draw[color = gray!50, fill = gray!50, line width = 18pt, rounded corners] (12.center) -- (h3.center) -- (21.center) -- (f3.center) -- cycle;
  \draw[color = gray!50, fill = gray!50, line width = 18pt, rounded corners] (13.center) -- (h1.center) -- (31.center) -- (g1.center) -- cycle;
  \draw[color = gray!50, fill = gray!50, line width = 18pt, rounded corners] (14.center) -- (g2.center) -- (41.center) -- (f2.center) -- cycle;
  \draw[color = gray!50, fill = gray!50, line width = 18pt, rounded corners] (24.center) -- (h2.center) -- (42.center) -- (f1.center) -- cycle;
  \draw[color = gray!50, fill = gray!50, line width = 18pt, rounded corners] (34.center) -- (h4.center) -- (43.center) -- (g3.center) -- cycle;
  
  \draw[color = gray!80, line width = 8pt] (h3.center) --  (f3.center);
  \draw[color = gray!80, line width = 8pt] (h1.center) --  (g1.center);
  \draw[color = gray!80, line width = 8pt] (g2.center) --  (f2.center);
  \draw[color = gray!80, line width = 8pt] (h2.center) --  (f1.center);
  \draw[color = gray!80, line width = 8pt] (h4.center) --  (g3.center);
  }
  
  \foreach \i/\j in {12/13,12/14,13/14,21/24,31/34,41/42,41/43,42/43,f1/f2,f1/f3,f2/f3,g1/g2,g1/g3,g2/g3}{
   \draw[gray, dotted,line width = 2pt] (\i) edge (\j);
  }
  
  \draw[gray, dotted,line width = 2pt, bend left=90] (h1) edge (h3);
  \draw[gray, dotted,line width = 2pt, bend right=90] (h1) edge (h4);
  \draw[gray, dotted,line width = 2pt, bend right=90] (h2) edge (h3);
  \draw[gray, dotted,line width = 2pt, bend left=90] (h2) edge (h4);
  
 \end{tikzpicture}
 \caption{A graph $G$ is shown on the left and the graph constructed by Lemma \ref{la:lower-bound-4-colors-with-eq} is displayed on the right.
  All dashed edges represent equality edges contained in $\eqRel$.
  Each gray region is replaced by the gadget from Figure \ref{fig:cross-gadget-4-colors} which ensures that all colors of outgoing vertices are pairwise distinct except those connected by a thick edge which are forced to receive the same color.}
 \label{fig:planar-hardness-4-colors}
\end{figure}
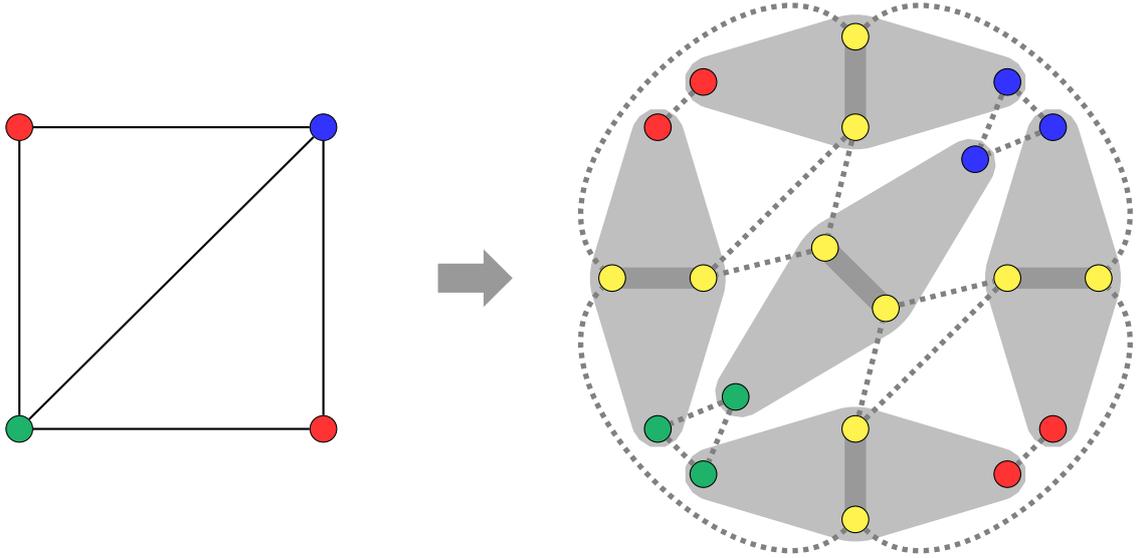

We start by implementing the first step for $q = 4$.
This is achieved by the next lemma.

\begin{lemma}
 \label{la:lower-bound-4-colors-with-eq}
 Let $G$ be a connected planar graph. Then there is a graph $H$ and a set $\eqRel \subseteq \binom{V(H)}{2}$ such that the following conditions are satisfied:
 \begin{enumerate}
  \item $|V(H)| = 14|E(G)|$,
  \item $H^{+\eqRel} \coloneqq (V(H),E(H) \cup \eqRel)$ is planar and has maximum degree $3$, and
  \item $G$ is $3$-colorable if and only if there is coloring $\chi\colon V(H) \rightarrow \{1,2,3,4\}$ such that
   \begin{enumerate}
    \item $\chi(u) = \chi(v)$ for all $uv \in \eqRel$, and
    \item $\chi(u) \neq \chi(v)$ for all distinct $u,v \in V(H)$ such that $\dist_H(u,v) \leq 2$.
   \end{enumerate}
 \end{enumerate}
 Moreover, given the graph $G$, the pair $(H,\eqRel)$ can be computed in polynomial time.
\end{lemma}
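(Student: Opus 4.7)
The plan is to implement the strategy sketched in the introduction for the four-color case, where we need to distribute only a single ``auxiliary'' color while preserving planarity and keeping the maximum degree at $3$. Fix a planar embedding of $G$ and consider the following two types of gadgets. For each vertex $v$ of $G$ of degree $d$, I introduce a \emph{vertex gadget}: a path $P_v$ on $d$ vertices, where consecutive pairs along the path are joined by equality edges in $\eqRel$. This forces all $d$ vertices of $P_v$ to share a common color; moreover, since each internal vertex of $P_v$ has only two equality edges, I can afford to give it one additional normal edge without exceeding degree $3$ in $H^{+\eqRel}$. Each vertex of $P_v$ is associated with one incidence of $v$ in some edge of $G$, in the cyclic order prescribed by the planar embedding.

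For each edge $e = uv$ of $G$ I introduce an \emph{edge gadget} on $12$ internal vertices having two \emph{ports}, one attached by a normal edge to the vertex of $P_u$ associated with $e$ and the other attached similarly to the vertex of $P_v$. Its internal structure (the content of Figure~\ref{fig:cross-gadget-4-colors}) must simultaneously: (i) force the two ports to receive distinct colors in every valid square $4$-coloring, (ii) pin one distinguished internal vertex to the auxiliary color and propagate this constraint consistently around each vertex of $G$, and (iii) embed into the planar picture with maximum degree $3$ including equality edges. Since $\sum_v \deg(v) = 2|E(G)|$, the total vertex count is $2|E(G)| + 12|E(G)| = 14|E(G)|$, which matches the bound stated in the lemma. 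Planarity of $H^{+\eqRel}$ follows from placing each $P_v$ inside a small disk around $v$ and each edge gadget inside a thin strip along the corresponding curve of $G$ in the plane.

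The correctness proof proceeds in two directions. For the forward direction, given a proper $3$-coloring $c\colon V(G) \to \{1,2,3\}$, I color every vertex of $P_v$ with $c(v)$; then, using that $c(u) \neq c(v)$ for every $uv \in E(G)$, I extend the coloring to each edge gadget by assigning the auxiliary color $4$ to the designated internal vertex and filling in the remaining internal vertices with colors from $\{1,2,3\}$ in a way dictated by the gadget. For the backward direction, given a coloring $\chi$ of $H$ that respects $\eqRel$ and is a square $4$-coloring, the equality edges imply that $\chi$ is constant on each $P_v$, giving a function $c\colon V(G) \to \{1,2,3,4\}$. A local case analysis of the edge gadget then shows both that the auxiliary color cannot appear on any $P_v$, so $c(v) \in \{1,2,3\}$, and that the two ports of each edge gadget carry different colors, so $c(u) \neq c(v)$ whenever $uv \in E(G)$.

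The main obstacle is the actual design of the edge gadget that realizes (i)--(iii) on exactly $12$ internal vertices. Enforcing distinctness of the two port colors at square-coloring distance is usually delegated to high-degree cliques in the square graph, but here every vertex including equality edges may have degree at most $3$, so the propagation of color constraints must instead occur along carefully chosen paths of length two whose interior vertices are locked down by the auxiliary color. Verifying that such a gadget exists -- and that adjacent gadgets sharing a vertex $v$ of $G$ interact consistently through the shared endpoints of $P_v$ -- is the crux; once the gadget is fixed, both directions of the equivalence reduce to short finite checks.
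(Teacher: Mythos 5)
Your overall scaffolding (one vertex per vertex--edge incidence tied together by equality edges, one gadget per edge, a count of $2|E(G)|+12|E(G)|=14|E(G)|$) matches the paper's construction in spirit, but the proposal has a genuine gap at exactly the point you flag as ``the crux'': the edge gadget is never constructed, and --- more importantly --- you never specify a mechanism that makes the auxiliary color \emph{globally} consistent. This is not a detail that can be deferred. If each edge gadget merely forbids ``some'' locally pinned color on its two ports, then in the backward direction different parts of the graph may exclude different colors, and the induced map $c\colon V(G)\to\{1,2,3,4\}$ is only a proper $4$-coloring of $G$; by the Four Color Theorem every planar $G$ admits one, so the reduction would map every instance to a YES-instance and correctness collapses. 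Your suggested fix, propagating the auxiliary color ``around each vertex of $G$'', does not obviously fit your own budget: each vertex of the path $P_v$ already spends its degree on two equality edges and the single normal edge to its edge gadget, and equality edges do not transmit distance-$2$ constraints (condition (b) is about $\dist_H$, which excludes $\eqRel$), so consecutive edge gadgets around $v$ are not within distance $2$ of one another and no local constraint links their auxiliary colors. Likewise, ``pin one distinguished internal vertex to the auxiliary color'' has no meaning before global agreement is established, since no color is designated auxiliary in advance.

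The paper resolves exactly this issue with an idea absent from your proposal: it introduces, in addition to the vertex--edge incidence vertices, one vertex $(f,e)$ per face--edge incidence, joined by equality cycles around each face, and designs the per-edge gadget (Figure~\ref{fig:cross-gadget-4-colors}, $10$ further vertices per edge, which together with the two face vertices accounts for your $12$) so that in every valid coloring the two incident faces' vertices $(f_1,e)$ and $(f_2,e)$ receive the same color while $(u,e)$, $(v,e)$ and $(f_1,e)$ receive pairwise distinct colors. Connectivity of the planar dual then forces a single color on all face vertices, and this is the global auxiliary color that every original vertex must avoid. To repair your proof you would need either to reproduce this face-based distribution (or some equivalent global propagation structure compatible with maximum degree $3$) and then exhibit a concrete gadget with the verification of both directions; as written, both the gadget and the global-consistency argument are missing, and the backward direction's claims (``the auxiliary color cannot appear on any $P_v$'', ``the two ports carry different colors'') rest on an object that has not been defined.
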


\begin{proof}
 A visualization of the construction is given in Figure \ref{fig:planar-hardness-4-colors}.
 Fix a planar embedding of $G$ and let $F$ denote the set of faces.
 We set
 \begin{align*}
  V(H) \coloneqq\;\;\; &\{(v,e) \mid v \in V(G), e \in E(G), v \in e\}\\
       \cup\;          &\{(f,e) \mid f \in F, e \in E(G), e \text{ is incident to } f\}\\
       \cup\;          &E(G) \times \{0,\dots,9\}.
 \end{align*}
 Since every edge is incident to two vertices as well as two faces we get that $|V(H)| = 14|E(G)|$.
 
 Intuitively, the idea is that, for every $v \in V(G)$, all vertices $(v,e)$ receive the same color which represents the color of $v$.
 Also, all vertices $(f,e)$ receive the same color which corresponds to the ``fourth'' color not used in the coloring of $G$ (in Figure \ref{fig:planar-hardness-4-colors}, all vertices $(f,e)$ are colored \yellow).
 Finally, the vertices $E(G) \times \{0,\dots,9\}$ are used to construct gadgets that are placed on every edge of $G$ and which (together with suitable equality edges) ensure that the remaining vertices can only be colored in the desired way.
 
 Let $v \in V(G)$ and let $e_1,\dots,e_d$ denote the incident edges of $v$ ordered cyclically according to the fixed embedding of $G$.
 We define
 \begin{equation}
  \label{eq:rel-vertices}
  \eqRel(v) \coloneqq \{(v,e_i)(v,e_{i+1}) \mid i \in [d]\} \cup \{(v,e_d)(v,e_1)\}.
 \end{equation}
 
 Next, let $f \in F$ be a face of $G$ and let $e_1,\dots,e_k$ denote the incident edges of $f$ ordered cyclically according to the fixed embedding of $G$.
 We define
 \begin{equation}
  \label{eq:rel-faces}
  \eqRel(f) \coloneqq \{(f,e_i)(f,e_{i+1}) \mid i \in [k]\} \cup \{(f,e_k)(f,e_1)\}.
 \end{equation}
 
 Finally, let $e \in E(G)$ and suppose that $e = uv$.
 Also let $f_1,f_2 \in F$ denote the two faces incident to $e$.
 We set
 \begin{equation}
  \label{eq:rel-edges}
  \eqRel(e) \coloneqq \{(e,0)(u,e),\;(e,0)(e,4),\;(e,1)(e,2),\;(e,1)(f_2,e),\;(e,3)(v,e),\;(e,7)(f_1,e)\}.
 \end{equation}
 We also set
 \begin{equation}
  M(e) \coloneqq \{01,23,29,37,45,56,68,69,78,89\}.
 \end{equation}
 A visualization is given in Figure \ref{fig:cross-gadget-4-colors}.
 Overall, we now define
 \[E(H) \coloneqq \bigcup_{e \in E(G)} \{(e,i)(e,j) \mid ij \in M(e)\}\]
 and
 \[\eqRel \coloneqq \bigcup_{v \in V(H)} \eqRel(v) \cup \bigcup_{f \in F} \eqRel(f) \cup \bigcup_{e \in E(G)} \eqRel(e).\]
 
 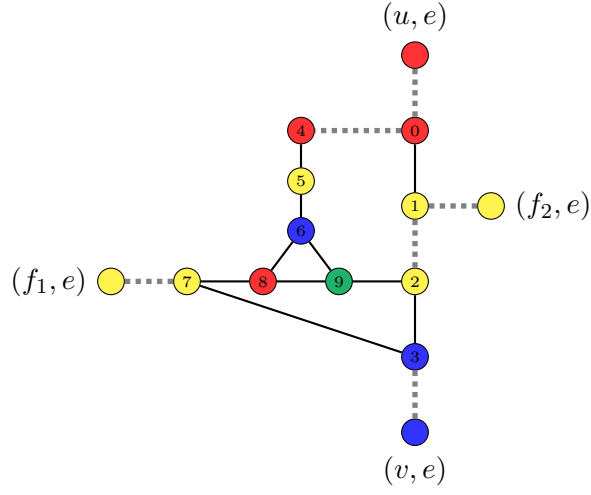
\begin{figure}
  \centering
  \begin{tikzpicture}
   \node[normalvertex,label=left:{$(f_1,e)$},fill=yellow!80] (f1) at (0,2) {};
   \node[normalvertex,fill=yellow!80] (7) at (1,2) {{\tiny $7$}};
   \node[normalvertex,fill=red!80] (8) at (2,2) {{\tiny $8$}};
   \node[normalvertex,fill=Green!80] (9) at (3,2) {{\tiny $9$}};
   \node[normalvertex,fill=yellow!80] (2) at (4,2) {{\tiny $2$}};
   \node[normalvertex,label=below:{$(v,e)$},fill=blue!80] (v) at (4,0) {};
   \node[normalvertex,fill=blue!80] (3) at (4,1) {{\tiny $3$}};
   \node[normalvertex,fill=yellow!80] (1) at (4,3) {{\tiny $1$}};
   \node[normalvertex,fill=red!80] (0) at (4,4) {{\tiny $0$}};
   \node[normalvertex,label=above:{$(u,e)$},fill=red!80] (u) at (4,5) {};
   \node[normalvertex,label=right:{$(f_2,e)$},fill=yellow!80] (f2) at (5,3) {};
   \node[normalvertex,fill=red!80] (4) at (2.5,4) {{\tiny $4$}};
   \node[normalvertex,fill=yellow!80] (5) at (2.5,3.33) {{\tiny $5$}};
   \node[normalvertex,fill=blue!80] (6) at (2.5,2.67) {{\tiny $6$}};
   
   \foreach \i/\j in {0/1,2/3,2/9,3/7,4/5,5/6,6/8,6/9,7/8,8/9}{
    \draw[thick] (\i) edge (\j);
   }
   \foreach \i/\j in {u/0,v/3,f1/7,f2/1,1/2,0/4}{
    \draw[gray, dotted,line width = 2pt] (\i) edge (\j);
   }
  \end{tikzpicture}
  \caption{The gadget used in Lemma \ref{la:lower-bound-4-colors-with-eq} where equality edges are represented by dashed edges.
   It has four outgoing vertices $(u,e)$, $(v,e)$, $(f_1,e)$ and $(f_2,e)$.
   The gadget enforces that $(f_1,e)$ and $(f_2,e)$ receive the same color, and $(u,e)$, $(v,e)$ and $(f_1,e)$ receive pairwise distinct colors.}
  \label{fig:cross-gadget-4-colors}
 \end{figure}
 Clearly, $H^{+\eqRel}$ is planar and has maximum degree $3$.
 Also, it is easy to see that the pair $(H,\eqRel)$ can be computed in polynomial time.
 
 Suppose that $G$ is $3$-colorable via a coloring $\mu\colon V(G) \rightarrow \{1,2,3\}$.
 We define a coloring $\chi\colon V(H) \rightarrow \{1,2,3,4\}$ as follows.
 Let $e \in E(G)$ and suppose that $e = uv$.
 Also let $f_1,f_2 \in F$ denote the two faces incident to $e$.
 Suppose that $a \coloneqq \mu(u)$, $b \coloneqq \mu(v)$ and $\{a,b,c\} = \{1,2,3\}$.
 We set
 \begin{itemize}
  \item $\chi(u,e) = \chi(e,0) = \chi(e,4) = \chi(e,8) \coloneqq a$,
  \item $\chi(v,e) = \chi(e,3) = \chi(e,6) \coloneqq b$,
  \item $\chi(e,9) \coloneqq c$, and
  \item $\chi(f_1,e) = \chi(f_2,e) = \chi(e,1) = \chi(e,2) = \chi(e,5) = \chi(e,7) \coloneqq 4$.
 \end{itemize}
 It is easy to verify that the coloring $\chi$ satisfies the desired properties (see also Figures \ref{fig:planar-hardness-4-colors} and \ref{fig:cross-gadget-4-colors}).
 
 In the other direction, suppose that $\chi \colon V(H) \rightarrow \{1,2,3,4\}$ is a coloring with the desired properties.
 
 \begin{claim}
  Let $e \in E(G)$ and let $f_1,f_2$ denote the two incident faces.
  Then $\chi(f_1,e) = \chi(f_2,e)$.
 \end{claim}
 \begin{claimproof}
  By Equation \eqref{eq:rel-edges} we get that that $\chi(f_1,e) = \chi(e,7)$ and $\chi(f_2,e) = \chi(e,1) = \chi(e,2)$.
  Also, $\dist_H((e,i),(e,j)) \leq 2$ for all $i,j \in \{6,7,8,9\}$.
  This means that
  \[\{\chi(e,6),\chi(e,7),\chi(e,8),\chi(e,9)\} = \{1,2,3,4\}.\]
  By the same argument
  \[\{\chi(e,2),\chi(e,6),\chi(e,8),\chi(e,9)\} = \{1,2,3,4\}.\]
  Combining both equations implies that $\chi(e,7) = \chi(e,2)$.
  Overall, we get that $\chi(f_1,e) = \chi(f_2,e)$.
 \end{claimproof}

 Combining the last claim and Equation \eqref{eq:rel-faces} we obtain that all vertices from set $\{(f,e) \mid f \in F, e \in E(G), e \text{ is incident to } f\}$ receive the same color under $\chi$.
 Without loss of generality suppose that $\chi(f,e) = 4$ for all $e \in E(G)$ and incident faces $f$.
 
 Using Equation \eqref{eq:rel-vertices}, for every $v \in V(G)$, there is some color $\mu(v) \in \{1,2,3,4\}$ such that $\chi(v,e) = \mu(v)$ for all incident edges $e \in E(G)$.
 Also $\chi(v,e) \neq 4$ since $\chi(e,1) = 4$ by Equation \eqref{eq:rel-edges}.
 So $\mu(v) \in \{1,2,3\}$ for every $v \in V(G)$.
 It remains to show that adjacent vertices receive different colors.
 So suppose that $e = uv \in E(G)$.
 We have that $\chi(e,3) = \mu(v)$ using Equation \eqref{eq:rel-edges}.
 Also, $\dist_H((e,i),(e,j)) \leq 2$ for all $i,j \in \{6,7,8,9\}$.
 This means that
 \[\{\chi(e,6),\chi(e,7),\chi(e,8),\chi(e,9)\} = \{1,2,3,4\}.\]
 Since $\dist_H((e,i),(e,3)) \leq 2$ for all $i \in \{7,8,9\}$, we conclude that $\chi(e,6) = \chi(e,3) = \mu(v)$.
 Finally $\mu(u) = \chi(e,4)$ using Equation \eqref{eq:rel-edges}.
 Since $\dist_H((e,3),(e,4)) \leq 2$ it follows that $\mu(u) \neq \mu(v)$.
\end{proof}

\subsection{More Than Four Colors}

Next, we implement the first step for $q \geq 5$, i.e., we prove a varaint of Lemma \ref{la:lower-bound-4-colors-with-eq} for $q \geq 5$.
Before getting to the actual reduction, let us briefly explain why a different construction is required.
Intuitively speaking, the main idea for all reductions is to enforce that the ``original'' vertices of the input graph can only be colored with one of three ``candidate'' colors.
Hence, the information about which three colors are those candidates needs to distributed over the entire graph.
Alternatively, it suffices to distribute the information which $q-3$ colors are the ``auxiliary'' colors.
For $q = 4$ this means that we only need to distribute one auxiliary color which turns out to be fairly easy.
However, for larger numbers of colors, we require a more intricate distribution strategy where the idea is to forward the set of ``auxiliary'' colors along a cycle.
In order to ensure that the set of ``auxiliary'' colors is available at every edge of the input graph, we start by showing the following lemma.
Intuitively speaking, given a planar graph $G$ together with an embedding of $G$, it constructs a cycle in the plane that crosses every edge of $G$ exactly twice (see also Figure \ref{fig:planar-tour-step-6}).

\begin{lemma}
 \label{la:edge-covering-tour-planar}
 Let $G$ be a connected planar graph of minimum degree $3$.
 Let $G'$ be the graph defined via $V(G') \coloneqq V(G) \cup \vec{E}(G)$, where $\vec{E}(G) \coloneqq \{(u,v) \mid uv \in E(G)\}$, and 
 \[E(G') \coloneqq \{u(u,v),\;(u,v)(v,u),\;(v,u)v \mid uv \in E(G)\}.\]
 Then there is a set $E^* \subseteq \binom{\vec{E}(G)}{2}$ such that
 \begin{enumerate}
  \item the graph $C = (\vec{E}(G),E^*)$ is a cycle, and
  \item the multigraph $G^+ = (V(G'),E(G') \cup E^*)$ is planar via an embedding where, for every $(u,v) \in \vec{E}(G)$, its four incident edges $e_1,\dots,e_4$, listed in the cyclic order of the embedding, are alternately contained in the sets $E(G')$ and $E^*$.
 \end{enumerate}
 Moreover, given the graph $G$, the set $E^*$ together with the desired embedding of $G^+$ can be computed in polynomial time.
\end{lemma}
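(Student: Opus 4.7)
The plan is to build $E^*$ in two stages. In the first stage, I produce an initial 2-regular candidate that already satisfies the alternation condition but decomposes into many cycles. For this I use the \emph{flower construction}: for each vertex $v \in V(G)$, listing its incident edges $vw_1,\ldots,vw_{\deg v}$ in the cyclic order given by the planar embedding, I include in $E^*$ the $\deg v$ edges $\{(v,w_i),(v,w_{i+1})\}$ (indices mod $\deg v$), drawn as short arcs hugging $v$ inside the face at the corresponding corner. The resulting 2-regular graph $(\vec{E}(G),E^*)$ decomposes into exactly $|V(G)|$ cycles, one per vertex of $G$, and the alternation condition is automatic at every half-edge $(u,v)$: its two $E^*$-arcs are flower edges at $u$ that lie in the two distinct faces on either side of edge $uv$, while the two $E(G')$-edges run along $uv$ toward $u$ and toward $(v,u)$.

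In the second stage, I merge the cycles into a single one via a sequence of local \emph{swaps}. A swap at an edge $e=uv$ performed inside one of the two faces $f$ adjacent to $e$ removes the two flower edges incident to $(u,v)$ and $(v,u)$ that lie inside $f$, and replaces them with $\{(u,v),(v,u)\}$ (a parallel edge drawn just inside $f$ along $e$) together with $\{(u,x),(v,y)\}$ (a chord across $f$, where $x$ and $y$ are the $f$-side cyclic neighbors of $v$ at $u$ and of $u$ at $v$, respectively). The two new edges are nested in the cyclic order of the four relevant positions on $\partial f$, so the resulting matching inside $f$ remains non-crossing; alternation is also preserved since only arcs inside $f$ are modified. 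Combinatorially, the swap merges the two cycles containing $(u,v)$ and $(v,u)$ whenever these are distinct. Maintaining an auxiliary graph $H$ whose vertices are the current cycles and whose edges are the edges of $G$ joining two distinct cycles, one checks that $H$ stays connected throughout because $G$ is connected; selecting a spanning tree of $H$ and performing one swap per tree edge therefore reduces the cycle count from $|V(G)|$ down to $1$ after exactly $|V(G)|-1$ swaps.

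The main obstacle is the interaction between swaps performed in a common face: two swaps at edges that are adjacent along $\partial f$ would share a slot and their chords would cross. I will handle this by carefully choosing, for each required swap, which of its two adjacent faces to use, and by processing the swaps in an order consistent with the cyclic structure of each $\partial f$, so that newer chords nest inside earlier ones rather than crossing them; the hypothesis $\delta(G)\geq 3$ rules out degenerate configurations (such as leaves or bridges terminating at degree-2 vertices) that would otherwise block these local constructions. Since each subroutine—flower initialization, individual swap, and spanning-tree computation on $H$—is polynomial-time and the planar embedding of $G^+$ is maintained explicitly throughout, this yields a polynomial-time algorithm that simultaneously produces $E^*$ and the required embedding.
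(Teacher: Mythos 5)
Your stage~1 is sound: the ``flower'' at each vertex is planar, automatically alternating, and yields a $2$-regular graph on $\vec{E}(G)$ whose cycles are the vertex flowers; and each $2$-swap along an edge of a spanning tree of $G$ does merge the two cycles containing $(u,v)$ and $(v,u)$, so $|V(G)|-1$ swaps suffice combinatorially. The genuine gap is exactly the point you defer in the last paragraph: you never establish that the swaps can actually be carried out \emph{while preserving the embedding}. Two problems arise when several tree edges interact inside one face $f$. First, as literally defined your swap removes ``the two flower edges incident to $(u,v)$ and $(v,u)$ that lie inside $f$''; if an earlier swap at an edge sharing a corner of $f$ (say at $vw$, with $uv,vw$ consecutive around $v$) has already deleted the flower arc at that corner and replaced it by a chord, the operation is no longer well defined, and the two chords would have to be rerouted to avoid crossing. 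Second, your proposed remedy --- choose for every tree edge one of its two faces and an ordering so that no two swaps in a common face share a corner and ``newer chords nest inside earlier ones'' --- is only asserted. That a corner-disjoint face assignment always exists is a nontrivial global constraint (a $2$-SAT-type condition, one clause per corner bounded by two tree edges), and you give no argument for it; nor do you define the rerouting that the ``nesting'' claim would require. Since maintaining planarity together with alternation while reaching a single cycle is the entire content of the lemma, this is a real hole, not a routine detail. (Also, the role you ascribe to $\delta(G)\geq 3$ is off: you need it so that every flower is a genuine cycle of length at least $3$ --- a degree-$2$ vertex would produce a $2$-cycle and a degree-$1$ vertex a loop --- not to exclude bridges.)

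For comparison, the paper proceeds differently: it takes the face-successor pairing on $E(G)$ (the medial-type graph, $4$-regular by $\delta\geq 3$), extracts an Euler tour, uncrosses it at every vertex using the embedding, splits each edge into its two half-edges to get a cycle on $\vec{E}(G)$, and then repairs the alternation condition by a local ``pull the cycle over the edge'' modification at each offending pair $(u,v),(v,u)$. Your route is attractive because alternation comes for free and only connectivity has to be fought for, but to complete it you would need either (a) a proof that the corner-disjoint face assignment for the tree edges always exists, or (b) an \emph{adaptive} swap: at tree edge $uv$ in a chosen face $f$, delete whatever current $E^*$-arcs leave $(u,v)$ and $(v,u)$ into $f$ (with endpoints $a,b$), add the parallel arc $\{(u,v),(v,u)\}$ just inside $f$ and a chord $\{a,b\}$ drawn along the union of the two deleted arcs and a strip parallel to $uv$, and verify that this preserves planarity, alternation, and the invariant that cycles correspond to components of the processed tree edges. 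Option (b) appears to work and would salvage your proof, but none of this argument is present in the proposal.
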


Before diving into the proof, let us briefly clarify the last part of lemma.
Consider the multigraph $G^+ = (V(G'),E(G') \cup E^*)$.
Every vertex $(u,v) \in \vec{E}(G)$ in this graph has four incident edges, two of which are contained in the set $E(G')$ and the other two are being contained in $E^*$.
The lemma guarantees that there is a planar embedding of the multigraph $G^+$ such that these two types of edges always alternate when looking at the cyclic order $e_1,\dots,e_4$ induced by the embedding (see also Figure \ref{fig:planar-tour-step-6}).

\begin{figure}
 \centering
 \begin{subfigure}[t]{0.32\linewidth}
  \begin{tikzpicture}[scale=0.8]
   \path[use as bounding box] (90:4) -- (210:4) -- (330:4) -- cycle;
   
   \node[normalvertex,fill=red!80] (1) at (0,0) {};
   \node[normalvertex,fill=red!80] (2) at (90:3) {};
   \node[normalvertex,fill=red!80] (3) at (210:3) {};
   \node[normalvertex,fill=red!80] (4) at (330:3) {};
   
   \foreach \i/\j in {1/2,1/3,1/4,2/3,2/4,3/4}{
    \draw[thick] (\i) edge (\j);
   }
   
   \node (12) at (90:1.5) {};
   \node (13) at (210:1.5) {};
   \node (14) at (330:1.5) {};
   \node (23) at (150:1.5) {};
   \node (24) at (30:1.5) {};
   \node (34) at (270:1.5) {};
   \draw[line width = 2pt,white] (23) ..controls (105:4.5) and (75:4.5).. node[shift=(90:0.2)] {{\small\color{white} $1$}} (24);
   \draw[line width = 2pt,white] (23) ..controls (195:4.5) and (225:4.5).. node[shift=(210:0.2)] {{\small\color{white} $3$}} (34);
   \draw[line width = 2pt,white] (24) ..controls (345:4.5) and (315:4.5).. node[shift=(330:0.2)] {{\small\color{white} $2$}} (34);
  \end{tikzpicture}
  \caption{The input graph $G$.}
  \label{fig:planar-tour-step-1}
 \end{subfigure}
 \hfill
 \begin{subfigure}[t]{0.32\linewidth}
  \begin{tikzpicture}[scale=0.8]
   \path[use as bounding box] (90:4) -- (210:4) -- (330:4) -- cycle;
   
   \node[normalvertex,fill=red!80] (1) at (0,0) {};
   \node[normalvertex,fill=red!80] (2) at (90:3) {};
   \node[normalvertex,fill=red!80] (3) at (210:3) {};
   \node[normalvertex,fill=red!80] (4) at (330:3) {};
   
   \foreach \i/\j in {1/2,1/3,1/4,2/3,2/4,3/4}{
    \draw[thick] (\i) edge (\j);
   }
   
   \node[normalvertex,fill=Green!80] (12) at (90:1.5) {};
   \node[normalvertex,fill=Green!80] (13) at (210:1.5) {};
   \node[normalvertex,fill=Green!80] (14) at (330:1.5) {};
   \node[normalvertex,fill=Green!80] (23) at (150:1.5) {};
   \node[normalvertex,fill=Green!80] (24) at (30:1.5) {};
   \node[normalvertex,fill=Green!80] (34) at (270:1.5) {};
   
   \draw[red,line width = 2pt] ($(210:1.5) + (45:0.3)$) -- ($(210:1.5) + (225:0.3)$);
   \draw[red,line width = 2pt] ($(210:1.5) + (135:0.3)$) -- ($(210:1.5) + (315:0.3)$);
   
   \draw[red,line width = 2pt] ($(150:1.5) + (45:0.3)$) -- ($(150:1.5) + (225:0.3)$);
   \draw[red,line width = 2pt] ($(150:1.5) + (135:0.3)$) -- ($(150:1.5) + (315:0.3)$);
   
   \draw[line width = 2pt,blue] (12) edge node[shift=(330:0.15)] {{\small $9$}} (13);
   \draw[line width = 2pt,blue] (12) edge node[shift=(120:0.15)] {{\small $4$}} (23);
   \draw[line width = 2pt,blue] (13) edge node[shift=(180:0.2)] {{\small $12$}} (23);
   \draw[line width = 2pt,blue] (12) edge node[shift=(210:0.2)] {{\small $10$}} (14);
   \draw[line width = 2pt,blue] (12) edge node[shift=(60:0.15)] {{\small $5$}} (24);
   \draw[line width = 2pt,blue] (14) edge node[shift=(0:0.15)] {{\small $6$}} (24);
   \draw[line width = 2pt,blue] (13) edge node[shift=(90:0.15)] {{\small $11$}} (14);
   \draw[line width = 2pt,blue] (13) edge node[shift=(240:0.15)] {{\small $8$}} (34);
   \draw[line width = 2pt,blue] (14) edge node[shift=(300:0.15)] {{\small $7$}} (34);
   \draw[line width = 2pt,blue] (23) ..controls (105:4.5) and (75:4.5).. node[shift=(90:0.2)] {{\small $1$}} (24);
   \draw[line width = 2pt,blue] (23) ..controls (195:4.5) and (225:4.5).. node[shift=(210:0.2)] {{\small $3$}} (34);
   \draw[line width = 2pt,blue] (24) ..controls (345:4.5) and (315:4.5).. node[shift=(330:0.2)] {{\small $2$}} (34);
   
  \end{tikzpicture}
  \caption{We obtain $G''$ by subdividing every edge. The set $\widetilde{E}$ is marked in blue where the numbers $1,\dots,12$ describe an Euler tour of $\widetilde{G}$. The Euler tour has two ``crossings'' at the marked vertices.}
  \label{fig:planar-tour-step-2}
 \end{subfigure}
 \hfill
 \begin{subfigure}[t]{0.32\linewidth}
  \begin{tikzpicture}[scale=0.8]
   \path[use as bounding box] (90:4) -- (210:4) -- (330:4) -- cycle;
   
   \node[normalvertex,fill=red!80] (1) at (0,0) {};
   \node[normalvertex,fill=red!80] (2) at (90:3) {};
   \node[normalvertex,fill=red!80] (3) at (210:3) {};
   \node[normalvertex,fill=red!80] (4) at (330:3) {};
   
   \foreach \i/\j in {1/2,1/3,1/4,2/3,2/4,3/4}{
    \draw[thick] (\i) edge (\j);
   }
   
   \node[normalvertex,fill=Green!80] (12) at (90:1.5) {};
   \node[normalvertex,fill=Green!80] (13) at (210:1.5) {};
   \node[normalvertex,fill=Green!80] (14) at (330:1.5) {};
   \node[normalvertex,fill=Green!80] (23) at (150:1.5) {};
   \node[normalvertex,fill=Green!80] (24) at (30:1.5) {};
   \node[normalvertex,fill=Green!80] (34) at (270:1.5) {};
   
   \draw[line width = 2pt,blue] (12) edge node[shift=(330:0.15)] {{\small $11$}} (13);
   \draw[line width = 2pt,blue] (12) edge node[shift=(120:0.15)] {{\small $4$}} (23);
   \draw[line width = 2pt,blue] (13) edge node[shift=(180:0.2)] {{\small $12$}} (23);
   \draw[line width = 2pt,blue] (12) edge node[shift=(210:0.2)] {{\small $10$}} (14);
   \draw[line width = 2pt,blue] (12) edge node[shift=(60:0.15)] {{\small $5$}} (24);
   \draw[line width = 2pt,blue] (14) edge node[shift=(0:0.15)] {{\small $6$}} (24);
   \draw[line width = 2pt,blue] (13) edge node[shift=(90:0.15)] {{\small $9$}} (14);
   \draw[line width = 2pt,blue] (13) edge node[shift=(240:0.15)] {{\small $8$}} (34);
   \draw[line width = 2pt,blue] (14) edge node[shift=(300:0.15)] {{\small $7$}} (34);
   \draw[line width = 2pt,blue] (23) ..controls (105:4.5) and (75:4.5).. node[shift=(90:0.2)] {{\small $3$}} (24);
   \draw[line width = 2pt,blue] (23) ..controls (195:4.5) and (225:4.5).. node[shift=(210:0.2)] {{\small $1$}} (34);
   \draw[line width = 2pt,blue] (24) ..controls (345:4.5) and (315:4.5).. node[shift=(330:0.2)] {{\small $2$}} (34);
   
  \end{tikzpicture}
  \caption{By reordering the edges on the Euler tour, we can always obtain an Euler tour without any ``crossings''.}
  \label{fig:planar-tour-step-3}
 \end{subfigure}
 
 \vspace{10pt}
 
 \begin{subfigure}[t]{0.32\linewidth}
  \begin{tikzpicture}[scale=0.8]
   \path[use as bounding box] (90:4) -- (210:4) -- (330:4) -- cycle;
   
   \node[normalvertex,fill=red!80] (1) at (0,0) {};
   \node[normalvertex,fill=red!80] (2) at (90:3) {};
   \node[normalvertex,fill=red!80] (3) at (210:3) {};
   \node[normalvertex,fill=red!80] (4) at (330:3) {};
   
   \foreach \i/\j in {1/2,1/3,1/4,2/3,2/4,3/4}{
    \draw[thick] (\i) edge (\j);
   }
   
   \node[normalvertex,fill=Green!80] (12) at (90:1) {};
   \node[normalvertex,fill=Green!80] (13) at (210:1) {};
   \node[normalvertex,fill=Green!80] (14) at (330:1) {};
   \node[normalvertex,fill=Green!80] (21) at (90:2) {};
   \node[normalvertex,fill=Green!80] (31) at (210:2) {};
   \node[normalvertex,fill=Green!80] (41) at (330:2) {};
   
   \node[normalvertex,fill=Green!80] (23) at ($(90:3) + (240:1.732)$) {};
   \node[normalvertex,fill=Green!80] (32) at ($(90:3) + (240:3.464)$) {};
   \node[normalvertex,fill=Green!80] (24) at ($(90:3) + (300:1.732)$) {};
   \node[normalvertex,fill=Green!80] (42) at ($(90:3) + (300:3.464)$) {};
   \node[normalvertex,fill=Green!80] (34) at ($(210:3) + (0:1.732)$) {};
   \node[normalvertex,fill=Green!80] (43) at ($(210:3) + (0:3.464)$) {};
   
   \foreach \v in {13,31,24,42,34,43}{
    \draw[red,line width = 2pt] ($(\v) + (75:0.3)$) -- ($(\v) + (285:0.3)$);
    \draw[red,line width = 2pt] ($(\v) + (105:0.3)$) -- ($(\v) + (255:0.3)$);
   }
   
   \foreach \i/\j in {23/21,21/42,42/41,41/34,34/31,31/14,14/12,12/13,13/32}{
    \draw[line width = 2pt,blue] (\i) edge (\j);
   }
   \draw[line width = 2pt,blue] (32) ..controls (195:4.5) and (225:4.5).. (43);
   \draw[line width = 2pt,blue] (43) ..controls (315:4.5) and (345:4.5).. (24);
   \draw[line width = 2pt,blue] (24) ..controls (75:4.5) and (105:4.5).. (23);
   
  \end{tikzpicture}
  \caption{By splitting every subdivision vertex (the green vertices), we obtain the graph $G'$ and a cycle $\widehat{E}$ (the blue edges) on $\vec{E}(G)$.
   For the marked vertices, the incident edges are not alternating between $\widehat{E}$ and $E(G')$.}
  \label{fig:planar-tour-step-4}
 \end{subfigure}
 \hfill
 \begin{subfigure}[t]{0.32\linewidth}
  \begin{tikzpicture}[scale=0.8]
   \path[use as bounding box] (90:4) -- (210:4) -- (330:4) -- cycle;
   
   \node[normalvertex,fill=red!80] (1) at (0,0) {};
   \node[normalvertex,fill=red!80] (2) at (90:3) {};
   \node[normalvertex,fill=red!80] (3) at (210:3) {};
   \node[normalvertex,fill=red!80] (4) at (330:3) {};
   
   \foreach \i/\j in {1/2,1/3,1/4,2/3,2/4,3/4}{
    \draw[thick] (\i) edge (\j);
   }
   
   \node[normalvertex,fill=Green!80] (12) at (90:1) {};
   \node[normalvertex,fill=Green!80] (13) at (210:1) {};
   \node[normalvertex,fill=Green!80] (14) at (330:1) {};
   \node[normalvertex,fill=Green!80] (21) at (90:2) {};
   \node[normalvertex,fill=Green!80] (31) at (210:2) {};
   \node[normalvertex,fill=Green!80] (41) at (330:2) {};
   
   \node[normalvertex,fill=Green!80] (23) at ($(90:3) + (240:1.732)$) {};
   \node[normalvertex,fill=Green!80] (32) at ($(90:3) + (240:3.464)$) {};
   \node[normalvertex,fill=Green!80] (24) at ($(90:3) + (300:1.732)$) {};
   \node[normalvertex,fill=Green!80] (42) at ($(90:3) + (300:3.464)$) {};
   \node[normalvertex,fill=Green!80] (34) at ($(210:3) + (0:1.732)$) {};
   \node[normalvertex,fill=Green!80] (43) at ($(210:3) + (0:3.464)$) {};
   
   \foreach \v in {13,31,34,43}{
    \draw[red,line width = 2pt] ($(\v) + (75:0.3)$) -- ($(\v) + (285:0.3)$);
    \draw[red,line width = 2pt] ($(\v) + (105:0.3)$) -- ($(\v) + (255:0.3)$);
   }
   
   \foreach \i/\j in {23/21,21/41,41/34,34/31,31/14,14/12,12/13,13/32}{
    \draw[line width = 2pt,blue] (\i) edge (\j);
   }
   \draw[line width = 2pt,blue] (32) ..controls (195:4.5) and (225:4.5).. (43);
   \draw[line width = 2pt,blue] (43) ..controls (315:4.5) and (345:4.5).. (42);
   \draw[line width = 2pt,blue] (24) ..controls (75:4.5) and (105:4.5).. (23);
   
   \draw[line width = 2pt,blue,bend left] (42) edge (24);
  \end{tikzpicture}
  \caption{By locally modifying the blue edges, the constructed cycle is fixed step by step to obtain the desired outcome.}
  \label{fig:planar-tour-step-5}
 \end{subfigure}
 \hfill
 \begin{subfigure}[t]{0.32\linewidth}
  \begin{tikzpicture}[scale=0.8]
   \path[use as bounding box] (90:4) -- (210:4) -- (330:4) -- cycle;
   
   \node[normalvertex,fill=red!80] (1) at (0,0) {};
   \node[normalvertex,fill=red!80] (2) at (90:3) {};
   \node[normalvertex,fill=red!80] (3) at (210:3) {};
   \node[normalvertex,fill=red!80] (4) at (330:3) {};
   
   \foreach \i/\j in {1/2,1/3,1/4,2/3,2/4,3/4}{
    \draw[thick] (\i) edge (\j);
   }
   
   \node[normalvertex,fill=Green!80] (12) at (90:1) {};
   \node[normalvertex,fill=Green!80] (13) at (210:1) {};
   \node[normalvertex,fill=Green!80] (14) at (330:1) {};
   \node[normalvertex,fill=Green!80] (21) at (90:2) {};
   \node[normalvertex,fill=Green!80] (31) at (210:2) {};
   \node[normalvertex,fill=Green!80] (41) at (330:2) {};
   
   \node[normalvertex,fill=Green!80] (23) at ($(90:3) + (240:1.732)$) {};
   \node[normalvertex,fill=Green!80] (32) at ($(90:3) + (240:3.464)$) {};
   \node[normalvertex,fill=Green!80] (24) at ($(90:3) + (300:1.732)$) {};
   \node[normalvertex,fill=Green!80] (42) at ($(90:3) + (300:3.464)$) {};
   \node[normalvertex,fill=Green!80] (34) at ($(210:3) + (0:1.732)$) {};
   \node[normalvertex,fill=Green!80] (43) at ($(210:3) + (0:3.464)$) {};
   
   \foreach \i/\j in {23/21,21/41,41/31,13/14,14/12,12/32}{
    \draw[line width = 2pt,blue] (\i) edge (\j);
   }
   \draw[line width = 2pt,blue] (32) ..controls (195:4.5) and (225:4.5).. (34);
   \draw[line width = 2pt,blue] (43) ..controls (315:4.5) and (345:4.5).. (42);
   \draw[line width = 2pt,blue] (24) ..controls (75:4.5) and (105:4.5).. (23);
   
   \draw[line width = 2pt,blue,bend left] (42) edge (24);
   \draw[line width = 2pt,blue,bend left] (34) edge (43);
   \draw[line width = 2pt,blue,bend left] (31) edge (13);
  \end{tikzpicture}
  \caption{Finally, we obtain the desired multigraph $G^+$. The edges from the set $E^*$ are blue.}
  \label{fig:planar-tour-step-6}
 \end{subfigure}
 
 \caption{Visualization of the steps involved in the proof of Lemma \ref{la:edge-covering-tour-planar}. Given a planar graph of minimum degree $3$, we eventually obtain a cycle (shown by the blue edges) that crosses every edge of the input graph exactly twice.}
 \label{fig:planar-tour}
\end{figure}
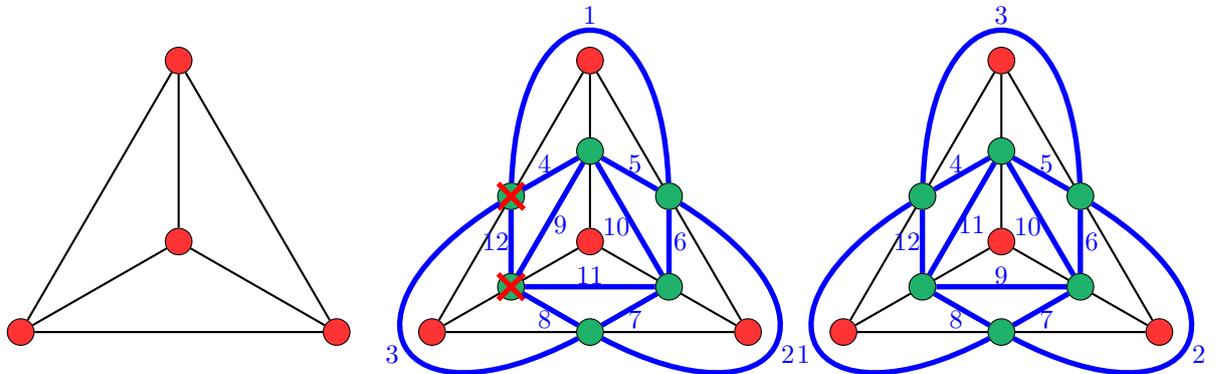

\begin{proof}
 Fix a planar embedding of $G$ and let $F$ denote the set of faces.
 Consider the graph $G''$ that is obtained from $G$ by subdividing every edge once, i.e., $V(G'') \coloneqq V(G) \cup E(G)$ and
 \[E(G'') \coloneqq \{ve \mid e \in E(G), v \in e\}.\]
 We define the set $\widetilde{E} \subseteq \binom{E(G)}{2}$ as follows.
 For every face $f \in F$ let $e_1,\dots,e_k$ denote the incident edges of $f$ ordered cyclically according to the fixed embedding of $G$.
 We add pairs $e_ie_{i+1}$, $i \in [k]$, and $e_1e_k$ to the set $\widetilde{E}$.
 Since $G$ has minimum degree $3$ the graph $\widetilde{G} = (E(G),\widetilde{E})$ is $4$-regular (see also Figure \ref{fig:planar-tour-step-2}).
 Hence, $\widetilde{G}$ has an Euler tour $\widetilde{e}_1,\dots,\widetilde{e}_m$ where $\widetilde{e}_i \in \widetilde{E}$ for every $i \in [m]$.
 
 We can naturally extend the embedding of $G''$ (obtained from the fixed embedding of $G$) to an embedding of $\widetilde{G}$.
 Now, consider again the Euler tour $\widetilde{e}_1,\dots,\widetilde{e}_m$.
 Since $\widetilde{G}$ is $4$-regular, each vertex $v \in V(\widetilde{G})$ is visited twice by the Euler,
 say via edges $\widetilde{e}_i,\widetilde{e}_{i+1}$ and $\widetilde{e}_j,\widetilde{e}_{j+1}$.
 We say that the Euler tour $\widetilde{e}_1,\dots,\widetilde{e}_m$ is \emph{crossing} at $v$ if $\widetilde{e}_i,\widetilde{e}_{i+1}$ are not adjacent in the cyclic order of edges incident to $v$ in $\widetilde{G}$ with respect to the fixed embedding of $\widetilde{G}$ (see Figure \ref{fig:planar-tour-step-2}).
 By reordering the edges of the Euler tour, we may assume without loss of generality that $\widetilde{e}_1,\dots,\widetilde{e}_m$ is not crossing at any vertex $v \in V(\widetilde{G})$ (see Figure \ref{fig:planar-tour-step-3}).
 
 Now, since each element of $uv \in E(G)$ corresponds to two elements $(u,v),(v,u) \in \vec{E}(G)$, we can transform the Euler tour $\widetilde{e}_1,\dots,\widetilde{e}_m$ into a cycle $\widehat{E} \subseteq \binom{\vec{E}(G)}{2}$ on $\vec{E}(G)$ in the natural way which provides a planar embedding of $\widehat{G} = (V(G'),E(G') \cup \widehat{E})$ (see Figure \ref{fig:planar-tour-step-4}).
 Note that every vertex $\vec{e} \in \vec{E}(G) \subseteq V(\widehat{G})$ has degree $4$ in $\widehat{G}$ with two incident edges coming from $E(G')$ and the other two coming from $\widehat{E}$.
 
 To complete the proof, it only remains to ensure that these edges appear alternately (see Figure \ref{fig:planar-tour-step-4}).
 This can be achieved as follows.
 Consider a pair $(u,v),(v,u) \in \vec{E}(G)$.
 Either edges from $E(G')$ and $\widehat{E}$ already appear alternately along the cyclic order associated with the embedding for both vertices $(u,v)$ and $(v,u)$, or this condition is violated for both of them.
 In the latter case, we locally modify the set $\widehat{E}$ as follows.
 First, we omit $(u,v)$ from the cycle and connect its two adjacent vertices via a new edge that is added to $\widehat{E}$.
 Afterwards, we ``pull the cycle defined by $\widehat{E}$ over the edge $uv \in E(G)$'', i.e., we replace the occurrence of $(v,u)$ by the pair $(u,v),(v,u)$.
 In particular, this adds $(u,v)(v,u)$ to the set $\widehat{E}$, creating a multiedge in the graph $\widehat{G}$.
 This multiedge can be placed on either side of the same edge $(u,v)(v,u) \in E(G')$ which means that there is always an embedding such that incident edges of $(u,v)$ alternately come from the sets $E(G')$ and $\widehat{E}$ as desired (see Figure \ref{fig:planar-tour-step-5}).
 
 We perform this modification for all pairs $(u,v),(v,u) \in \vec{E}(G)$ to obtain the desired final outcome $E^*$ together with an embedding of $G^+$ (see Figure \ref{fig:planar-tour-step-6}).
 
 We complete the proof by observing that all steps can be performed in polynomial time.
\end{proof}

\begin{lemma}
 \label{la:lower-bound-many-colors-with-eq}
 Let $q \geq 5$.
 Let $G$ be a connected planar graph of minimum degree $3$. Then there is a graph $H$ and a set $\eqRel \subseteq \binom{V(H)}{2}$ such that the following conditions are satisfied:
 \begin{enumerate}
  \item $|V(H)| = O(q \cdot |V(G)|)$,
  \item $|\eqRel| = O(|V(G)|)$,
  \item $H^{+\eqRel} \coloneqq (V(H),E(H) \cup \eqRel)$ is planar and has maximum degree $q-1$, and
  \item $G$ is $3$-colorable if and only if there is coloring $\chi\colon V(H) \rightarrow [q]$ such that
   \begin{enumerate}
    \item $\chi(u) = \chi(v)$ for all $uv \in \eqRel$, and
    \item $\chi(u) \neq \chi(v)$ for all distinct $u,v \in V(H)$ such that $\dist_H(u,v) \leq 2$.
   \end{enumerate}
 \end{enumerate}
 Moreover, given the graph $G$, the pair $(H,\eqRel)$ can be computed in polynomial time.
\end{lemma}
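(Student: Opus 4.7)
The plan is to realize the sketch from Figures \ref{fig:planar-hardness-cycle} and \ref{fig:planar-hardness-edge}: we distribute the information about which $q-3$ colors are ``auxiliary'' along a cycle drawn on top of the embedding of $G$, and use local gadgets at every edge crossing to force the endpoints of each edge of $G$ to receive distinct ``candidate'' colors. First, I would invoke Lemma \ref{la:edge-covering-tour-planar} on $G$ (which has minimum degree $3$ by assumption) to obtain the cycle $C$ on the vertex set $\vec{E}(G)$ together with a planar embedding of $G^+$ in which, at every vertex $(u,v) \in \vec{E}(G)$, the two $E(G')$-edges and the two $E^*$-edges alternate cyclically. This embedding is what enables all subsequent gadgets to be drawn without crossings.

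Let $\vec{e}_1,\ldots,\vec{e}_{2|E(G)|}$ be the cyclic ordering of $\vec{E}(G)$ along $C$. For each index $i$ I would introduce a set $C_i$ of three ``candidate'' vertices and a set $A_i$ of $q-3$ ``auxiliary'' vertices, placed along the cycle with the $A_i$'s sitting between $C_i$ and $C_{i+1}$. The edges inside $H$ follow Figure \ref{fig:planar-hardness-edge}: each $A_i \cup C_i$ and each $C_i \cup A_{i+1}$ is turned into a clique in $H^2$ by wiring their members through a constant-size local pattern. This forces $|\chi(A_i \cup C_i)| = q$ and $|\chi(C_i \cup A_{i+1})| = q$, so that $\chi(A_i) = \chi(A_{i+1})$ and $\chi(C_i)$ is the complementary set of three colors. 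Chasing this around the full cycle yields a globally fixed partition $\{$auxiliary, candidate$\}$ of the $q$ colors, used consistently at every $C_i$. To encode the edges of $G$, for each $\vec{e}_i = (u,v)$ I identify $C_i$ with a local copy of $u$ (via equality edges in $\eqRel$ tying its three vertices to three local witnesses of $u$); the opposite crossing of the same edge $uv$ is wired analogously to $v$, and the geometry of Figure \ref{fig:planar-hardness-edge} forces the candidate colors of $u$ and $v$ to differ. Finally, at every vertex $v \in V(G)$ I would place a constant-size vertex-gadget whose role is to merge, via equality edges, the local copies of $v$ coming from its various incident edges, while keeping the incident edge-gadgets non-interfering; alternation of the incident edges around $v$ in the embedding of $G^+$ makes this merging planar.

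The main obstacle I expect is the second of these two tasks: ensuring that the gadgets built on two edges incident to the same vertex of $G$ do not interfere through unintended distance-$2$ conflicts, while simultaneously preserving planarity and the maximum degree bound $q-1$ on $H^{+\eqRel}$. My approach is to route every high-degree clique through short paths of degree-two vertices (as implicitly done in Figure \ref{fig:planar-hardness-edge}) and to separate the incident edge-gadgets at $v$ by introducing a small buffer of additional candidate-restricted vertices at each vertex gadget. Correctness then has two directions. Given a $3$-coloring $\mu$ of $G$, I extend it to $H$ by designating three specific colors of $[q]$ as candidate, coloring each local copy of $v$ with $\mu(v)$, and coloring each $A_i$ with the auxiliary set; the existence guarantees from the clique wiring let us fill in the remaining vertices. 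Conversely, a valid coloring $\chi$ of $H$ satisfying the equality and square-distance constraints forces $\chi(A_i)$ to be a common set $A$ of size $q-3$, forces each $C_i$ to use the complementary three colors, and therefore assigns to every $v \in V(G)$ a single candidate color that differs from the candidate color of every neighbor by the edge-gadget.

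For the quantitative bounds: each of the $2|E(G)|$ crossings contributes $O(q)$ vertices and $O(1)$ equality edges, each vertex of $G$ contributes $O(q)$ vertices and $O(1)$ equality edges via its vertex-gadget, and $|E(G)| = O(|V(G)|)$ because $G$ is planar, giving $|V(H)| = O(q|V(G)|)$ and $|\eqRel| = O(|V(G)|)$. Planarity of $H^{+\eqRel}$ is inherited from the embedding of $G^+$ provided by Lemma \ref{la:edge-covering-tour-planar}, since every newly added edge and equality edge lives inside a constant-sized region attached either to a crossing or to a vertex of $G$. The maximum degree stays at most $q-1$ because every vertex of the clique-wiring patterns is designed to have degree at most $q-1$ in $H^{+\eqRel}$, in analogy with the explicit construction in Figure \ref{fig:planar-hardness-edge}. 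The entire construction is clearly computable in polynomial time, completing the lemma.
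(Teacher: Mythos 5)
Your high-level plan is the same as the paper's: invoke Lemma \ref{la:edge-covering-tour-planar} to get the cycle crossing every edge twice, place a candidate triple at each crossing and an auxiliary block of $q-3$ vertices between consecutive crossings, make each adjacent pair (triple, block) a clique in the square so the candidate/auxiliary partition propagates around the cycle, use the six-vertex path across each edge of $G$ to force the two endpoint representatives to get distinct candidate colors, and merge the per-edge copies of each vertex $u$ via equality edges in the cyclic order around $u$. Up to that point your sketch matches the paper's construction (with $A=\vec{E}(G)\times\{1,2,3\}$ and $B=E^*\times\{4,\dots,q\}$).

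The genuine gap is in the one step where you go beyond the figure, namely how the crossing triples are attached to the vertices of $G$. You propose to ``identify $C_i$ with a local copy of $u$'' by equality edges \emph{tying all three vertices of $C_i$ to three local witnesses of $u$}, and to let the vertex gadget at $u$ merge these local copies coming from the different incident edges. This over-constrains the coloring and breaks the forward direction of the reduction. Indeed, the square constraints along the six-vertex path for an edge $uv$ force the triple at the crossing near $v$ to be the exact reverse of the triple at the crossing near $u$ (this is the mechanism behind ``colored the same way from top to bottom''). Consequently, the color of $v$'s representative equals the \emph{innermost} color of $u$'s triple. If all triples at crossings incident to $u$ are forced, via common witnesses, to coincide componentwise, then every neighbor of $u$ is forced to receive that same single color; already for a triangle in $G$ (whose two other vertices are adjacent) no valid coloring of $H$ exists even though $G$ is $3$-colorable. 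The correct construction ties only the \emph{single} outermost vertex of each incident triple (the representative of $u$) to a merged copy of $u$ — in the paper this is done by adding one fresh vertex per directed edge, one equality edge from it to $((u,v),1)$, and a cycle of equality edges through these fresh vertices in the embedding's cyclic order around $u$ — leaving the other two triple entries free, since they must encode the third color and the neighbor's color, which vary from edge to edge. Relatedly, your ``buffer of additional candidate-restricted vertices'' at each vertex gadget is dangerous: any merging done with real edges of $H$ risks putting two copies that must be \emph{equal} within distance $2$ of each other, which is a contradiction; the merging must be done purely with $\eqRel$-edges (which do not affect distances in $H$), as in the paper. (A minor further slip: a vertex of degree $d$ needs $\Theta(d)$, not $O(1)$, equality edges to merge its copies, though the total is still $O(|E(G)|)=O(|V(G)|)$, so the claimed bound on $|\eqRel|$ survives.)
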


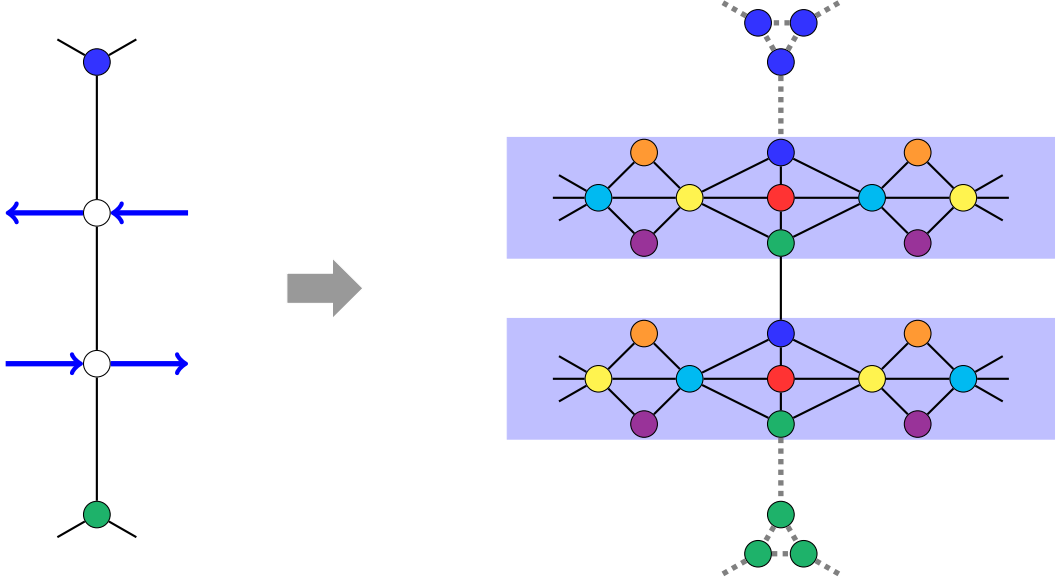
\begin{figure}
 \centering
 \begin{tikzpicture}
  
  \node[normalvertex,fill=blue!80] (a1) at (-8,6) {};
  \node[normalvertex] (a2) at (-8,4) {};
  \node[normalvertex] (a3) at (-8,2) {};
  \node[normalvertex,fill=Green!80] (a4) at (-8,0) {};

  \foreach \i/\j in {1/2,2/3,3/4}{
   \draw[thick] (a\i) edge (a\j);
  }
  \draw[thick] (a1) edge ($(a1.center)+(30:0.6)$);
  \draw[thick] (a1) edge ($(a1.center)+(150:0.6)$);
  \draw[<-,line width = 2pt,blue] (a2) edge ($(a2.center)+(1.2,0)$);
  \draw[->,line width = 2pt,blue] (a2) edge ($(a2.center)+(-1.2,0)$);
  \draw[->,line width = 2pt,blue] (a3) edge ($(a3.center)+(1.2,0)$);
  \draw[<-,line width = 2pt,blue] (a3) edge ($(a3.center)+(-1.2,0)$);
  \draw[thick] (a4) edge ($(a4.center)+(210:0.6)$);
  \draw[thick] (a4) edge ($(a4.center)+(330:0.6)$);
  
  \draw[white,fill=gray!80] (-5.5,3.2) -- (-4.9,3.2) -- (-4.9,3.4) -- (-4.5,3) -- (-4.9,2.6) -- (-4.9,2.8) -- (-5.5,2.8) -- cycle;

  \draw[blue!25,fill=blue!25] (-2.6,3.4) rectangle (4.6,5);
  \draw[blue!25,fill=blue!25] (-2.6,1) rectangle (4.6,2.6);
  
  \node[normalvertex,fill=blue!80] (12) at ($(1,6)+(0:0)$) {};
  \node[normalvertex,fill=blue!80] (13) at ($(1,6)+(60:0.6)$) {};
  \node[normalvertex,fill=blue!80] (14) at ($(1,6)+(120:0.6)$) {};
  
  \node[normalvertex,fill=blue!80] (22) at ($(1,4)+(0,0.8)$) {};
  \node[normalvertex,fill=red!80] (23) at ($(1,4)+(0,0.2)$) {};
  \node[normalvertex,fill=Green!80] (24) at ($(1,4)+(0,-0.4)$) {};
  
  \node[normalvertex,fill=blue!80] (32) at ($(1,2)+(0,0.4)$) {};
  \node[normalvertex,fill=red!80] (33) at ($(1,2)+(0,-0.2)$) {};
  \node[normalvertex,fill=Green!80] (34) at ($(1,2)+(0,-0.8)$) {};
  
  \node[normalvertex,fill=Green!80] (41) at ($(1,0)+(0:0)$) {};
  \node[normalvertex,fill=Green!80] (42) at ($(1,0)+(240:0.6)$) {};
  \node[normalvertex,fill=Green!80] (43) at ($(1,0)+(300:0.6)$) {};
  
  \foreach[count=\i] \x/\y in {-0.8/4.2,2.8/4.2}{
   \node[normalvertex,fill=yellow!80] (e\i1) at ($(\x,\y)+(0.6,0)$) {};
   \node[normalvertex,fill=cyan!80] (e\i2) at ($(\x,\y)+(-0.6,0)$) {};
   \node[normalvertex,fill=orange!80] (e\i3) at ($(\x,\y)+(0,0.6)$) {};
   \node[normalvertex,fill=violet!80] (e\i4) at ($(\x,\y)+(0,-0.6)$) {};
   \foreach \v/\w in {1/2,1/3,1/4,2/3,2/4}{
    \draw[thick] (e\i\v) edge (e\i\w);
   }
  }
  \foreach \i/\x/\y in {3/-0.8/1.8,4/2.8/1.8}{
   \node[normalvertex,fill=cyan!80] (e\i1) at ($(\x,\y)+(0.6,0)$) {};
   \node[normalvertex,fill=yellow!80] (e\i2) at ($(\x,\y)+(-0.6,0)$) {};
   \node[normalvertex,fill=orange!80] (e\i3) at ($(\x,\y)+(0,0.6)$) {};
   \node[normalvertex,fill=violet!80] (e\i4) at ($(\x,\y)+(0,-0.6)$) {};
   \foreach \v/\w in {1/2,1/3,1/4,2/3,2/4}{
    \draw[thick] (e\i\v) edge (e\i\w);
   }
  }
  
  \draw[thick] (e12) edge ($(e12.center)+(150:0.6)$);
  \draw[thick] (e12) edge ($(e12.center)+(180:0.6)$);
  \draw[thick] (e12) edge ($(e12.center)+(210:0.6)$);
  \draw[thick] (e32) edge ($(e32.center)+(150:0.6)$);
  \draw[thick] (e32) edge ($(e32.center)+(180:0.6)$);
  \draw[thick] (e32) edge ($(e32.center)+(210:0.6)$);
  \draw[thick] (e21) edge ($(e21.center)+(30:0.6)$);
  \draw[thick] (e21) edge ($(e21.center)+(0:0.6)$);
  \draw[thick] (e21) edge ($(e21.center)+(-30:0.6)$);
  \draw[thick] (e41) edge ($(e41.center)+(30:0.6)$);
  \draw[thick] (e41) edge ($(e41.center)+(0:0.6)$);
  \draw[thick] (e41) edge ($(e41.center)+(-30:0.6)$);
  
  \foreach \i/\j in {22/23,23/24,24/32,32/33,33/34,e11/22,e11/23,e11/24,e22/22,e22/23,e22/24,e31/32,e31/33,e31/34,e42/32,e42/33,e42/34}{
   \draw[thick] (\i) edge (\j);
  }
  \foreach \i/\j in {12/13,12/14,13/14,12/22,34/41,41/42,41/43,42/43}{
   \draw[gray, dotted,line width = 2pt] (\i) edge (\j);
  }
  
  \draw[gray, dotted,line width = 2pt] (13) edge ($(13.center)+(30:0.6)$);
  \draw[gray, dotted,line width = 2pt] (14) edge ($(14.center)+(150:0.6)$);
  \draw[gray, dotted,line width = 2pt] (42) edge ($(42.center)+(210:0.6)$);
  \draw[gray, dotted,line width = 2pt] (43) edge ($(43.center)+(330:0.6)$);
 \end{tikzpicture}
 \caption{Visualization of the construction from Lemma \ref{la:lower-bound-many-colors-with-eq} on a single edge of $G$. The directed cycle given by $E^*$ is shown in blue and crosses the edge exactly twice.}
 \label{fig:planar-hardness-many-colors}
\end{figure}

\begin{proof}
 Let $G'$ be the graph defined via $V(G') \coloneqq V(G) \cup \vec{E}(G)$, where $\vec{E}(G) \coloneqq \{(u,v) \mid uv \in E(G)\}$, and 
 \[E(G') \coloneqq \{u(u,v),\;(u,v)(v,u),\;(v,u)v \mid uv \in E(G)\}.\]
 By Lemma \ref{la:edge-covering-tour-planar} there is a set $E^* \subseteq \binom{\vec{E}(G)}{2}$ such that
 \begin{enumerate}
  \item the graph $C = (\vec{E}(G),E^*)$ is a cycle, and
  \item the multigraph $G^+ = (V(G'),E(G') \cup E^*)$ is planar via an embedding where, for every $(u,v) \in \vec{E}(G)$, its four incident edges $e_1,\dots,e_4$, listed in the cyclic order of the embedding, are alternately contained in the sets $E(G')$ and $E^*$.
 \end{enumerate}
 Actually, for the remainder of the proof, it is more convenient to assume that $C = (\vec{E}(G),E^*)$ is a directed cycle, i.e., let us arbitrarily direct an edge of $E^*$ and then direct all other edges accordingly following the cycle defined by $E^*$.
 
 We construct the graph $H$ in two steps (see also Figure \ref{fig:planar-hardness-many-colors}).
 First, we define
 \[A \coloneqq \vec{E}(G) \times \{1,2,3\}\]
 and 
 \[B \coloneqq E^* \times \{4,\dots,q\}.\]
 Also, we define
 \begin{align*}
  E_{A,B} \coloneqq\;\;\; &\{(\vec{e},1)(\vec{e},2),(\vec{e},2)(\vec{e},3) \mid \vec{e} \in \vec{E}(G)\}\\
                   \cup\; &\{((u,v),3)((v,u),3) \mid uv \in E(G)\}\\
                   \cup\; &\{(\vec{e}_1,i)((\vec{e}_1,\vec{e}_2),4), (\vec{e}_2,i)((\vec{e}_1,\vec{e}_2),5)\mid (\vec{e}_1,\vec{e}_2) \in E^*, i \in \{1,2,3\}\}\\
                   \cup\; &\{((\vec{e}_1,\vec{e}_2),4)((\vec{e}_1,\vec{e}_2),5)\mid ((\vec{e}_1,\vec{e}_2)) \in E^*\}\\
                   \cup\; &\{((\vec{e}_1,\vec{e}_2),4)((\vec{e}_1,\vec{e}_2),i), ((\vec{e}_1,\vec{e}_2),5)((\vec{e}_1,\vec{e}_2),i) \mid (\vec{e}_1,\vec{e}_2) \in E^*, i \in \{6,\dots,q\}\}
 \end{align*}
 Now consider the graph $H' = (A \cup B,E_{A,B})$.
 
 \begin{claim}
  \label{claim:color-transport-along-cycle}
  Let $\chi'$ be a $q$-coloring of $(H')^2$.
  Then
  \[\{\chi'(\vec{e}_1,i) \mid i \in \{1,2,3\}\} = \{\chi'(\vec{e}_2,i) \mid i \in \{1,2,3\}\}\]
  for all $\vec{e}_1,\vec{e}_2 \in \vec{E}(G)$.
  Moreover,
  \[\chi'((u,v),1) \neq \chi'((v,u),1)\]
  for all $uv \in E(G)$.
 \end{claim}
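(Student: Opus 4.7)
The plan is to prove the two parts of the claim separately, with the first part exploiting cliques in $(H')^2$ that propagate along the cycle $C$, and the second part doing a small local analysis around the edge $((u,v),3)((v,u),3)$.

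For the first part, I would fix an edge $(\vec{e}_1,\vec{e}_2) \in E^*$ and consider the set
\[W_1 \coloneqq \{(\vec{e}_1,i) \mid i \in \{1,2,3\}\} \cup \{((\vec{e}_1,\vec{e}_2),j) \mid j \in \{4,\dots,q\}\},\]
and the analogous set $W_2$ for $\vec{e}_2$. Each has size $3+(q-3)=q$. The first key step is to verify that $W_1$ (and symmetrically $W_2$) is a clique in $(H')^2$: the three vertices $(\vec{e}_1,i)$ lie on a path of length two in $H'$; the vertices $((\vec{e}_1,\vec{e}_2),j)$ for $j\geq 6$ are adjacent to both $((\vec{e}_1,\vec{e}_2),4)$ and $((\vec{e}_1,\vec{e}_2),5)$, and these two are themselves adjacent; and each $(\vec{e}_1,i)$ is adjacent to $((\vec{e}_1,\vec{e}_2),4)$, hence at distance at most two from every remaining vertex of $W_1$. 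Since $\chi'$ is a $q$-coloring of $(H')^2$, we must have $\chi'(W_1)=\chi'(W_2)=[q]$, and therefore both $\{\chi'(\vec{e}_1,i) \mid i\in\{1,2,3\}\}$ and $\{\chi'(\vec{e}_2,i) \mid i\in\{1,2,3\}\}$ equal $[q]\setminus \chi'(\{((\vec{e}_1,\vec{e}_2),j) \mid j\in\{4,\dots,q\}\})$, so they coincide. As $C=(\vec{E}(G),E^*)$ is a cycle and in particular connected, transitivity along the cycle yields the first assertion for all pairs in $\vec{E}(G)$.

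For the second part, fix $uv \in E(G)$ and write $a_i \coloneqq \chi'((u,v),i)$ and $b_i \coloneqq \chi'((v,u),i)$ for $i\in\{1,2,3\}$. From the first part, $\{a_1,a_2,a_3\}=\{b_1,b_2,b_3\}$, and each triple is injective since $\{(\vec{e},1),(\vec{e},2),(\vec{e},3)\}$ is a clique in $(H')^2$ for every $\vec{e}$. Now I would use the edge $((u,v),3)((v,u),3)$ together with the path $((u,v),2)-((u,v),3)-((v,u),3)-((v,u),2)$ to extract the distance constraints $a_3\neq b_3$, $a_2\neq b_3$, and $a_3\neq b_2$. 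Combined with the fact that $b_3$ lies in $\{a_1,a_2,a_3\}$, this forces $b_3=a_1$; then $b_2\in\{a_1,a_2,a_3\}\setminus\{a_3,b_3\}$ forces $b_2=a_2$, and hence $b_1=a_3$. In particular $\chi'((v,u),1)=a_3\neq a_1=\chi'((u,v),1)$.

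I do not anticipate any serious obstacle: both parts reduce to direct clique identification in $(H')^2$ and elementary set reasoning. The only point that needs care is checking the distance-two relations inside the gadget around $((\vec{e}_1,\vec{e}_2),4)$ and $((\vec{e}_1,\vec{e}_2),5)$ so that $W_1$ and $W_2$ really are cliques of size $q$; once this is in place, the rest is immediate.
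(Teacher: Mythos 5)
Your proposal is correct and follows essentially the same argument as the paper: the size-$q$ cliques $(\{\vec{e}_j\}\times\{1,2,3\})\cup(\{(\vec{e}_1,\vec{e}_2)\}\times\{4,\dots,q\})$ in $(H')^2$ force the color triples to coincide across each edge of $E^*$ and hence along the whole cycle, and the second statement follows from the distance constraints around the edge $((u,v),3)((v,u),3)$ combined with the equality of the two triples. Your version merely spells out the full alignment $b_3=a_1$, $b_2=a_2$, $b_1=a_3$, which is a slightly more detailed form of the same local analysis.
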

 \begin{claimproof}
  Suppose that $e^* = (\vec{e}_1,\vec{e}_2) \in E^*$.
  Then the set
  \[(\{\vec{e}_1\} \times \{1,2,3\}) \cup \{e^*\} \times \{4,\dots,q\}\]
  forms a clique in $(H')^2$ and all $q$ colors need to be used under $\chi'$.
  The same statement holds for the 
  \[(\{\vec{e}_2\} \times \{1,2,3\}) \cup \{e^*\} \times \{4,\dots,q\}.\]
  It follows that
  \[\{\chi'(\vec{e}_1,i) \mid i \in \{1,2,3\}\} = \{\chi'(\vec{e}_2,i) \mid i \in \{1,2,3\}\}\]
  Since $C = (\vec{E}(G),E^*)$ forms a cycle the first statement follows.
  
  For the second statement suppose that $uv \in E(G)$.
  Then $\chi'((u,v),3) \neq \chi'((v,u),3)$ and $\chi'((u,v),2) \neq \chi'((v,u),3)$.
  In combination with the first statement it follows that
  \[\chi'((u,v),3) = \chi'((v,u),1)\]
  and hence,
  \[\chi'((u,v),1) \neq \chi'((v,u),1).\]
 \end{claimproof}
 
 We say that a $q$-coloring $\chi'$ of $(H')^2$ is \emph{good} if $\chi'((u,v),1) = \chi'((u,w),1)$ for all $uv,uw \in E(G)$.
 
 \begin{claim}
  \label{claim:good-coloring-to-3-coloring}
  Let $\chi'$ be a good $q$-coloring of $(H')^2$.
  Then $G$ is $3$-colorable.
 \end{claim}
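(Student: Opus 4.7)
The plan is to use the goodness hypothesis to define a candidate $3$-coloring $\mu$ of $G$, and then invoke Claim~\ref{claim:color-transport-along-cycle} to verify both that $\mu$ actually uses at most three colors and that it is proper. Concretely, for each $u \in V(G)$, goodness says that $\chi'((u,v),1)$ takes the same value for every choice of edge $uv \in E(G)$; since $G$ has minimum degree $3 \ge 1$, this common value is well-defined, so I would set $\mu(u) := \chi'((u,v),1)$ for an arbitrary incident edge $uv$.

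Next, I would argue that $\mu$ takes values in a fixed $3$-element color set. By the first part of Claim~\ref{claim:color-transport-along-cycle}, the set $S := \{\chi'(\vec{e},i) \mid i \in \{1,2,3\}\}$ is independent of $\vec{e} \in \vec{E}(G)$, and since the three vertices $(\vec{e},1),(\vec{e},2),(\vec{e},3)$ form a path of length $2$ in $H'$ (hence a triangle in $(H')^2$), we have $|S| = 3$. By construction $\mu(u) \in S$ for every $u \in V(G)$, so $\mu$ is a coloring with at most three colors.

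Finally, to see $\mu$ is a proper coloring of $G$, let $uv \in E(G)$. Then
\[
\mu(u) = \chi'((u,v),1) \quad \text{and} \quad \mu(v) = \chi'((v,u),1),
\]
and the second part of Claim~\ref{claim:color-transport-along-cycle} gives $\chi'((u,v),1) \neq \chi'((v,u),1)$, so $\mu(u) \neq \mu(v)$. Thus $\mu$ is a proper $3$-coloring of $G$, as required. There is no real obstacle here: the entire content of the claim is packaged in Claim~\ref{claim:color-transport-along-cycle}, and the only thing to verify carefully is that goodness makes the definition of $\mu(u)$ unambiguous (which is immediate once at least one edge is incident to $u$, guaranteed by the minimum-degree-$3$ assumption on $G$).
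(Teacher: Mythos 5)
Your proposal is correct and follows essentially the same route as the paper: define $\mu(u) := \chi'((u,v),1)$ (well-defined by goodness) and use Claim~\ref{claim:color-transport-along-cycle} to get properness; your extra remark that all values lie in a common $3$-element color set $S$ just makes explicit the relabeling the paper leaves implicit when writing $\mu\colon V(G)\to\{1,2,3\}$.
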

 \begin{claimproof}
  Let $\mu\colon V(G) \rightarrow \{1,2,3\}$ be define in such a way that $\mu(u) = \chi'((u,v),1)$ for all $uv \in E(G)$.
  Observe that this is well-defined since $\chi'$ is good.
  Now, $\mu(u) \neq \mu(v)$ for all $uv \in E(G)$ by Claim \ref{claim:color-transport-along-cycle}.
 \end{claimproof}

 \begin{claim}
  \label{claim:3-coloring-to-good-coloring}
  Let $\mu\colon V(G) \rightarrow \{1,2,3\}$ be a $3$-coloring of $G$.
  Then there is a $q$-coloring $\chi'$ of $(H')^2$ such that
  \[\chi'((u,v),1) = \mu(u)\]
  for all $(u,v) \in \vec{E}(G)$.
 \end{claim}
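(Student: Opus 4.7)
\begin{claimproof}[Proof plan]
The plan is to construct the coloring $\chi'$ explicitly and then verify the distance-$2$ coloring condition via a short case analysis. The construction uses the colors $\{1,2,3\}$ on the ``$A$-part'' (driven by $\mu$) and the remaining colors $\{4,\dots,q\}$ on the ``$B$-part'' (assigned in a canonical way), so that no $A$-vs-$B$ conflict can occur.

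More precisely, I will define, for every $(u,v)\in\vec{E}(G)$,
\[\chi'((u,v),1)\coloneqq\mu(u),\qquad \chi'((u,v),3)\coloneqq\mu(v),\qquad \chi'((u,v),2)\coloneqq c_{uv},\]
where $c_{uv}$ is the unique element of $\{1,2,3\}\setminus\{\mu(u),\mu(v)\}$, which exists because $uv\in E(G)$ forces $\mu(u)\ne\mu(v)$. For every directed edge $(\vec{e}_1,\vec{e}_2)\in E^*$ I set $\chi'((\vec{e}_1,\vec{e}_2),j)\coloneqq j$ for every $j\in\{4,\dots,q\}$. The required property $\chi'((u,v),1)=\mu(u)$ holds by construction, so only the square-coloring condition needs to be checked.

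For the verification, first observe that by the construction $\chi'(A)\subseteq\{1,2,3\}$ and $\chi'(B)\subseteq\{4,\dots,q\}$, hence no distance-$2$ conflict can arise between an $A$-vertex and a $B$-vertex. It then suffices to inspect, for each vertex $v$ of $H'$, the distance-$2$ neighbourhood restricted to the same part. For an $A$-vertex $(\vec{e},i)$ the distance-$2$ neighbourhood inside $A$ consists of the other two vertices on the same $\vec{e}$ together with $((v,u),3)$ and $((v,u),2)$ in the case $\vec{e}=(u,v)$ and $i=3$; checking that the triples $\{\chi'((u,v),1),\chi'((u,v),2),\chi'((u,v),3)\}=\{1,2,3\}$ are permutations of $\{1,2,3\}$ on each side and using $\chi'((u,v),3)=\mu(v)=\chi'((v,u),1)\ne\mu(u)=\chi'((v,u),3)$ rules out all these conflicts. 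For a $B$-vertex $((\vec{e}_1,\vec{e}_2),j)$ the distance-$2$ neighbourhood inside $B$ is the clique $\{((\vec{e}_1,\vec{e}_2),j'):j'\in\{4,\dots,q\}\}$ together with the two ``adjacent'' $B$-vertices $((\vec{e}_0,\vec{e}_1),5)$ and $((\vec{e}_2,\vec{e}_3),4)$ reachable through $(\vec{e}_1,i)$ and $(\vec{e}_2,i)$ respectively; since colours $4$ and $5$ are used only on the ``$4$'' and ``$5$'' slots of every edge of $E^*$, these cross-edge pairs never collide.

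The main obstacle is being careful about the cross-edge interactions along the directed cycle $C$: an $A$-vertex $(\vec{e}_1,i)$ lies at distance $2$ from both $((\vec{e}_0,\vec{e}_1),5)$ and $((\vec{e}_1,\vec{e}_2),4)$ for the in- and out-neighbours of $\vec{e}_1$ in $C$, and one has to confirm that the rigid assignment $j\mapsto j$ on the $B$-vertices does not accidentally introduce two identical colours among distance-$2$ $B$-vertices coming from consecutive edges of $E^*$. This is exactly where the split $\{4\}$ vs.\ $\{5\}$ (for the tail vs.\ head slots) pays off: colour $4$ is always on the tail side and colour $5$ on the head side, so the two potential conflict partners receive the distinct colours $5$ and $4$. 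Once this is verified, all remaining pairs are handled by the two items above, completing the proof of the claim.
\end{claimproof}
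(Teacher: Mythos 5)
Your construction is exactly the one the paper uses (color the $A$-triples by $\mu(u)$, the third color, $\mu(v)$, and assign slot $j$ the color $j$ on the $B$-vertices), and your case analysis correctly fills in the verification that the paper dismisses as "easily verified"; in particular you identify the only nontrivial cross-edge conflicts (slot-$5$/slot-$4$ pairs meeting at an $A$-vertex, and the slot-$3$ pair joining $(u,v)$ and $(v,u)$) and resolve them correctly. Apart from minor imprecision in describing the distance-$2$ neighborhoods (e.g.\ $((u,v),2)$ is also at distance $2$ from $((v,u),3)$, which is handled by the symmetric case you do check), the proof is correct and follows the paper's approach.
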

 \begin{claimproof}
  We define $\chi'(e^*,i) \coloneqq i$ for all $(e^*,i) \in B$. Let $uv \in E(G)$.
  Then $\mu(u) \neq \mu(v)$.
  Let $c \in \{1,2,3\}$ be the unique third color that is distinct from $\mu(u)$ and $\mu(v)$.
  We set
  \begin{itemize}
   \item $\chi'((u,v),1) \coloneqq \mu(u)$,
   \item $\chi'((u,v),2) \coloneqq c$, and
   \item $\chi'((u,v),3) \coloneqq \mu(v)$.
  \end{itemize}
  It can be easily verified that $\chi'$ is a $q$-coloring of $(H')^2$.
 \end{claimproof}
 
 Observe that the coloring defined in the last claim is a good $q$-coloring of $(H')^2$.
 Hence, $G$ is $3$-colorable if and only if $(H')^2$ has a good $q$-coloring.
 So to complete the proof, we use equality edges to ensure that every valid coloring of $H'$ is indeed good.
 We define the graph $H$ via
 \[V(H) \coloneqq V(H') \cup \vec{E}(G)\]
 and
 \[E(H) \coloneqq E(H').\]
 For $u \in V(G)$ let $v_1,\dots,v_d$ denote its neighbors ordered cyclically according to the embedding of $G$ (which is inherited from the embedding of $G^+$ in the natural way).
 We define
 \[\eqRel(u) \coloneqq \{(u,v_i)(u,v_{i+1}) \mid i \in [d-1]\} \cup \{(u,v_1)(u,v_d)\}\]
 and
 \[\eqRel \coloneqq \{(u,v)((u,v),1) \mid (u,v) \in \vec{E}(G)\} \cup \bigcup_{u \in V(G)} \eqRel(u).\]
 This completes the construction.
 We have
 \[|V(H)| = |\vec{E}(G)| + |A| + |B| = |\vec{E}(G)| + 3|\vec{E}(G)| + (q-3)|\vec{E}(G)| = (q+1)|\vec{E}(G)| = O(q|V(G)|)\]
 since $G$ is planar.
 Similarly,
 \[|\eqRel| = 2|\vec{E}(G)| = O(|V(G)|).\]
 We have $\deg_{H^{+\eqRel}}(\vec{e}) = 3$ and $\deg_{H^{+\eqRel}}(\vec{e},i) = 4$ for all $i \in \{1,2,3\}$ and $\vec{e} \in \vec{E}(G)$, and $\deg_{H^{+\eqRel}}(e^*,i) = q-1$ for all $e^* \in E^*$ and $i \in \{4,\dots,q\}$.
 Since $q \geq 5$ it follows that $H^{+\eqRel}$ has maximum degree $q-1$.
 Also, $H^{+\eqRel}$ is planar using the properties guaranteed by Lemma \ref{la:edge-covering-tour-planar}.
 Finally, the last condition follows from Claim \ref{claim:3-coloring-to-good-coloring} and \ref{claim:good-coloring-to-3-coloring}.
 Observe that the equality edges ensure that any valid coloring of $(H')^2$ has to be good.
 
 Finally, it is easy to see that $(H,\eqRel)$ can be computed in polynomial time using that the set $E^*$ can be computed in polynomial time by Lemma \ref{la:edge-covering-tour-planar}.
\end{proof}

\subsection{Removing Equality Edges}

Having completed the first step of the reduction chain, it only remains to replace the equality edges with suitable gadgets.
Here, we can use ideas that already appeared in Section \ref{sec:tw-lower-bound}.
However, let us point out that we can not use exactly the same gadgets as in Section \ref{sec:tw-lower-bound} since we are restricted in the number of colors that are available.
Indeed, suppose $uv \in \eqRel$ is an equality edge.
The vertex $u$ may be adjacent (via normal edges) to $q-2$ further vertices and we need to ensure that the equality gadget does not interfere with coloring those vertices.
In Section \ref{sec:tw-lower-bound}, we resolved this issue by introducing additional ``neutral'' colors which is not possible for the present reduction.
Instead, we use a slightly more complicated gadget (see Figure \ref{fig:eq-gadget}) which, in essence, chains together two equality gadgets (using the construction from Section \ref{sec:tw-lower-bound}), to ensure that every valid coloring of the original vertices can be extended to the gadget vertices.

\begin{lemma}
 \label{la:lower-bound-planar}
 Let $q \geq 4$.
 There is a polynomial-time algorithm that, given a connected planar graph $G$, constructs a planar graph $H$ such that
 \begin{enumerate}
  \item $|V(H)| = O(q \cdot |V(G)|)$, and
  \item $G$ is $3$-colorable if and only if $H^2$ is $q$-colorable.
 \end{enumerate}
\end{lemma}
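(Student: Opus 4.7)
The plan is to assemble the reduction from the two preceding lemmas together with a straightforward equality gadget. First, split on $q$: for $q = 4$ apply Lemma \ref{la:lower-bound-4-colors-with-eq} directly to obtain a pair $(H',\eqRel)$; for $q \geq 5$ apply Lemma \ref{la:lower-bound-many-colors-with-eq}, after first preprocessing $G$ so that its minimum degree is at least $3$ (which is needed for the tour construction of Lemma \ref{la:edge-covering-tour-planar}). The preprocessing is easy: if $\Delta(G) \leq 2$ then $G$ is a disjoint union of paths and cycles, so $3$-colorability can be decided in polynomial time and we output a trivial YES- or NO-instance of \qSqCol; otherwise we attach a small planar $3$-colorable decoration at each low-degree vertex (for instance, hanging a copy of $K_4$ minus an edge through a single additional edge), which preserves both planarity and $3$-colorability while raising the degree to at least $3$. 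In both regimes we arrive at a planar graph $H'$ with $|V(H')| = O(q\,|V(G)|)$ and a set $\eqRel$ of ``equality edges'' such that $H'^{+\eqRel}$ is planar of maximum degree at most $q-1$, and such that $G$ is $3$-colorable iff there is a mapping $\chi\colon V(H')\to[q]$ that is a proper square $q$-coloring of $H'$ and additionally satisfies $\chi(u)=\chi(v)$ for every $uv\in\eqRel$.

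Next, I replace each equality edge $uv\in\eqRel$ by a planar \emph{equality gadget} that locally enforces $\chi(u)=\chi(v)$ without introducing new conflicts for the other neighbors of $u$ and $v$ in $H'$. The basic building block is a variant of the subset gadget from Section~\ref{sec:tw-lower-bound} with $\alpha=\beta=1$ and colorless vertices removed via the trick at the end of Section~\ref{sec:tw-lower-bound}: introduce two ``junction'' vertices $a,b$ with edges $ua$, $vb$, $ab$, together with $q-3$ ``complement'' vertices $c_1,\dots,c_{q-3}$ each adjacent to both $a$ and $b$. In the square of this gadget the set $\{u,a,b,c_1,\dots,c_{q-3}\}$ is a $q$-clique (every pair lies at distance at most $2$ via $a$ or $b$), forcing those $q$ vertices to use all $q$ colors; since $v$ is at distance $2$ from each of $a,b,c_1,\dots,c_{q-3}$ but at distance $3$ from $u$, the color of $v$ must be the unique color missing from $\{\chi(a),\chi(b),\chi(c_i)\}=[q]\setminus\{\chi(u)\}$, i.e.\ $\chi(v)=\chi(u)$. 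Crucially $q\geq 4$ ensures $q-3\geq 1$ so the construction is nonempty and planar.

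The main obstacle I foresee is the \emph{extension} direction, not the forcing direction. The forcing argument above is robust, but the coloring of the junctions $a,b$ is heavily constrained by the external neighbors of $u$ and $v$ in $H'$, which can themselves have degree close to $q-1$; a naive single copy of the gadget need not admit an extension of every valid $(u,v)$-coloring. To get around this I chain two copies of the basic block via a fresh relay vertex $w$ that has no neighbors outside the gadget: the first copy enforces $\chi(w)=\chi(u)$, the second enforces $\chi(v)=\chi(w)$. Because $w$ is private to the gadget, its color can be chosen freely (in particular, different from both $\chi(u)$'s external-conflict color and $\chi(v)$'s external-conflict color), which leaves enough room to color the two sets of junction/complement vertices consistently with the external coloring of $H'$. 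This is precisely the role played by the chaining mentioned just before the statement, and yields the full equivalence together with an $O(q)$-vertex gadget.

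Finally I assemble $H$: start with $H'$ and, for each $uv\in\eqRel$, delete the equality edge and paste in a copy of the chained gadget, drawn inside a thin strip along the former equality edge in the fixed planar embedding of $H'^{+\eqRel}$, so that planarity is preserved. The total size is $|V(H)| = |V(H')| + O(q)\cdot|\eqRel| = O(q\,|V(G)|)$ by the size bounds in Lemmas \ref{la:lower-bound-4-colors-with-eq} and \ref{la:lower-bound-many-colors-with-eq}. Correctness follows by combining the equivalences supplied by those lemmas with the forward and extension properties of the equality gadget established above, and the entire construction runs in polynomial time since each of the steps—preprocessing, invoking the base lemma, and pasting in $|\eqRel|$ gadgets of constant (in $n$) size—is polynomial. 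Combined with Theorem~\ref{thm:planar-3-col-hardness}, this yields both the \NP-hardness and the $2^{o(\sqrt{n})}$ lower bound of Theorem~\ref{thm:planar-hardness-q} via $|V(H)|=O(q\,|V(G)|)=O(|V(G)|)$ for fixed $q$.
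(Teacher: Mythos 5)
Your overall route is the same as the paper's: obtain $(H',\eqRel)$ from Lemma \ref{la:lower-bound-4-colors-with-eq} (for $q=4$) or Lemma \ref{la:lower-bound-many-colors-with-eq} (for $q\geq 5$), and then replace each pair in $\eqRel$ by a planar $O(q)$-vertex equality gadget whose forcing direction comes from a $q$-clique in the square and whose extendability relies on the maximum degree $q-1$ of $H'^{+\eqRel}$; your chained gadget with a private relay vertex $w$ plays exactly the role of the paper's chained gadget (Figure \ref{fig:eq-gadget}), and the size and planarity bookkeeping is the same. However, your minimum-degree-$3$ preprocessing fails as stated: hanging a copy of $K_4$ minus an edge by a single new edge raises the degree of the original low-degree vertex, but the decoration itself still contains a vertex of degree $2$ (both endpoints of the missing edge have degree $2$ in $K_4$ minus an edge, and at most one of them receives the attachment edge), so the graph you would feed to Lemma \ref{la:lower-bound-many-colors-with-eq} does not have minimum degree $3$, which that lemma genuinely needs (it is what makes the face-adjacency graph in Lemma \ref{la:edge-covering-tour-planar} $4$-regular). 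The paper avoids this by iteratively deleting vertices of degree at most $2$, which preserves $3$-colorability in both directions and leaves a trivial YES-instance if the graph is $2$-degenerate; alternatively, your decoration can be repaired by attaching both degree-$2$ vertices of the copy to the low-degree vertex.

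The second soft spot is your justification of the extension direction for the chained gadget. The relay $w$ is \emph{not} free: the first block forces $\chi(w)=\chi(u)$, so ``choosing $w$'s color freely'' cannot be the source of slack. What actually makes the chaining work is that it separates the two externally constrained attachment vertices (the junction adjacent to $u$ and the junction adjacent to $v$) to distance greater than $2$ in $H$. Each of these two vertices has at most $q-1$ forbidden colors, namely those appearing on the closed $H$-neighborhood of $u$ (resp.\ $v$), because $\deg_{H}(u)=\deg_{H'^{+\eqRel}}(u)\leq q-1$; hence each has an admissible color, and gadgets sharing an endpoint can be processed one after another. All remaining gadget vertices have no constraints outside the gadget, and the single cross-block constraint (the two neighbors of $w$ lie at distance $2$ and must differ) is satisfiable since each still has $q-2\geq 2$ admissible colors; one should also record the immediate fact that the gadget creates no new paths of length at most $2$ between vertices of $H'$, so the forcing argument really yields the backward implication. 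With these two repairs your construction is correct and equivalent to the paper's.
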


\begin{proof}
 By repeatedly removing vertices of degree at most $2$ we may assume without loss of generality that $G$ has minimum degree $3$ (if $G$ is $2$-degenerate we return a trivial YES-instance).
 
 By Lemmas \ref{la:lower-bound-4-colors-with-eq} and \ref{la:lower-bound-many-colors-with-eq} there is a graph $H'$ and a set $\eqRel \subseteq \binom{V(H')}{2}$ such that
 \begin{enumerate}
  \item $|V(H')| = O(q \cdot |V(G)|)$,
  \item $|\eqRel| = O(|V(G)|)$,
  \item $(H')^{+\eqRel} \coloneqq (V(H'),E(H') \cup \eqRel)$ is planar and has maximum degree $q-1$, and
  \item $G$ is $3$-colorable if and only if there is coloring $\chi'\colon V(H') \rightarrow \{1,\dots,q\}$ such that
   \begin{enumerate}
    \item $\chi'(u) = \chi'(v)$ for all $uv \in \eqRel$, and
    \item $\chi'(u) \neq \chi'(v)$ for all distinct $u,v \in V(H')$ such that $\dist_{H'}(u,v) \leq 2$.
   \end{enumerate}
 \end{enumerate}
 
 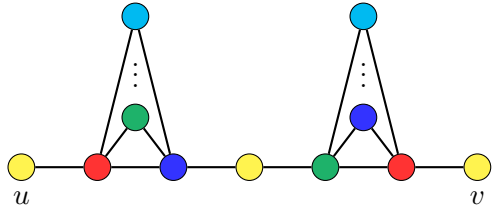
\begin{figure}
  \centering
  \begin{tikzpicture}
   \node[normalvertex,label=below:{$u$},fill=yellow!80] (1) at (0,0) {};
   \node[normalvertex,fill=red!80] (2) at (1,0) {};
   \node[normalvertex,fill=blue!80] (3) at (2,0) {};
   \node[normalvertex,fill=yellow!80] (4) at (3,0) {};
   \node[normalvertex,fill=Green!80] (5) at (4,0) {};
   \node[normalvertex,fill=red!80] (6) at (5,0) {};
   \node[normalvertex,label=below:{$v$},fill=yellow!80] (7) at (6,0) {};
   \node[normalvertex,fill=Green!80] (8) at (1.5,0.66) {};
   \node[normalvertex,fill=blue!80] (9) at (4.5,0.66) {};
   
   \node[normalvertex,fill=cyan!80] (10) at (1.5,2) {};
   \node[normalvertex,fill=cyan!80] (11) at (4.5,2) {};
   
   \node at (1.5,1.33) {$\vdots$};
   \node at (4.5,1.33) {$\vdots$};
   
   \foreach \i/\j in {1/2,2/3,3/4,4/5,5/6,6/7,2/8,3/8,5/9,6/9,2/10,3/10,5/11,6/11}{
    \draw[thick] (\i) edge (\j);
   }
  \end{tikzpicture}
  \caption{The equality gadget ensures that $u$ and $v$ need to be assigned the same color in any square $q$-coloring. Moreover, as long as $u$ (resp.\ $v$) has at most $q-2$ further outside neighbors, every valid coloring of the outside vertices can be extended into the gadget.}
  \label{fig:eq-gadget}
 \end{figure}
 
 We obtain $H$ from $H'$ by inserting an equality gadget between all pairs $uv \in \eqRel$ as follows (see also Figure \ref{fig:eq-gadget}).
 Formally, we define
 \[V(H) \coloneqq V(H') \cup (\eqRel \times \{1,\dots,2q-1\})\]
 and
 \[E(H) \coloneqq E(H') \cup \{u(uv,1),\; v(uv,q+2) \mid uv \in \eqRel\} \cup \{(uv,i)(uv,j) \mid ij \in M, uv \in \eqRel\}\]
 where 
 \begin{align*}
  M =\;\;\; &\{\{1,q-1\},\{q-1,q\},\{q,q+1\},\{q+1,2q-1\}\}\\
     \cup\; &\{\{i,1\},\{i,q-1\},\{q+i,q+1\},\{q+i,2q-1\} \mid i \in \{2,\dots,q-2\}\}
 \end{align*}
 Clearly, $H$ can be computed in polynomial time and $H$ is planar since $(H')^{+\eqRel}$ is planar.
 Also,
 \begin{align*}
  |V(H)| = |V(H')| + (2q-1) \cdot |\eqRel| = O(q \cdot |V(G)|).
 \end{align*}
 So it remains to prove that $G$ is $3$-colorable if and only if $H^2$ is $q$-colorable.
 We start with two basic observations.
 
 \begin{claim}
  \label{claim:distances-preserved}
  Let $u,v \in V(H')$.
  Then $\dist_{H'}(u,v) \leq 2$ if and only if $\dist_{H}(u,v) \leq 2$.
 \end{claim}
 \begin{claimproof}
  This is immediately clear by construction since the inserted gadgets do not allow for any shortcuts for distances at most $2$.
 \end{claimproof}
 
 \begin{claim}
  \label{claim:eq-gadget}
  Let $\chi$ be a $q$-coloring of $H^2$ and let $uv \in \eqRel$.
  Then $\chi(u) = \chi(v)$.
 \end{claim}
 \begin{claimproof}
  Looking at the inserted gadget it is easy to see that $\chi(u) = \chi(uv,q) = \chi(v)$.
 \end{claimproof}

 Now, first suppose that $H^2$ is $q$-colorable via a coloring $\chi\colon V(H) \rightarrow \{1,\dots,q\}$.
 Let $\chi' \coloneqq \chi|_{V(H')}$ be the restriction to $V(H')$.
 Then $\chi'(u) = \chi'(v)$ for all $uv \in \eqRel$ by Claim \ref{claim:eq-gadget}.
 Also, for $u,v \in V(H')$ such that $\dist_{H'}(u,v) \leq 2$, we get that $\dist_{H}(u,v) \leq 2$ by Claim \ref{claim:distances-preserved} which implies that $\chi'(u) \neq \chi'(v)$.
 So $G$ is $3$-colorable by Lemmas \ref{la:lower-bound-4-colors-with-eq} and \ref{la:lower-bound-many-colors-with-eq}.
 
 In the other direction, suppose that $G$ is $3$-colorable.
 By Lemmas \ref{la:lower-bound-4-colors-with-eq} and \ref{la:lower-bound-many-colors-with-eq} there is a coloring $\chi'\colon V(H') \rightarrow \{1,\dots,q\}$ such that
 \begin{enumerate}[label=(\alph*)]
  \item $\chi'(u) = \chi'(v)$ for all $uv \in \eqRel$, and
  \item $\chi'(u) \neq \chi'(v)$ for all distinct $u,v \in V(H')$ such that $\dist_{H'}(u,v) \leq 2$.
 \end{enumerate}
 We claim that we can extend $\chi'$ to a $q$-coloring $\chi$ of $H^2$.
 Formally, we set $\chi(v) \coloneqq \chi'(v)$ for all $v \in V(H')$.
 Then $\chi(u) \neq \chi(v)$ for all $u,v \in V(H')$ such that $\dist_H(u,v) \leq 2$ by Claim \ref{claim:distances-preserved}.
 Now consider some $uv \in \eqRel$.
 Let $a \coloneqq \chi'(u) = \chi'(v)$.
 We can color the vertices of the inserted gadget as follows.
 First, we set $\chi(uv,q) \coloneqq a$.
 Since $H'$ has maximum degree $q-1$ we get that $|N_H[u]| \leq q$.
 This means there is some color $b \in \{1,\dots,q\}$ that is not used so far in $N_H[u]$ (we have $(uv,1) \in N_H[u]$ and this vertex is not colored yet).
 We set $\chi(uv,1) \coloneqq b$.
 Similarly, there is some color $c \in \{1,\dots,q\}$ that is not used so far in $N_H[v]$, and we set $\chi(uv,2q-1) \coloneqq c$.
 The remaining vertices of the gadget are not connected to vertices outside the gadget in $H^2$ and we can easily complete the coloring within the gadget.
 Indeed, there are $q-2$ colors remaining for each of the two sets $\{(uv,i) \mid i \in \{2,\dots,q-1\}\}$ and $\{(uv,i) \mid i \in \{q+1,\dots,2q-2\}\}$ and we only need to ensure that $\chi(uv,q-1) \neq \chi(uv,q+1)$.
 It is easy to see that this is always possible.
\end{proof}

Finally, Theorem \ref{thm:planar-hardness-q} directly follows from Lemma \ref{la:lower-bound-planar} and Theorem \ref{thm:planar-3-col-hardness}.

\section{Conclusion}
\label{sec:conclusion}
We investigated the computational complexity of the \SqCol\ problem on bounded-treewidth and planar graphs.

For bounded-treewidth graphs we presented an algorithm with running time $n^{2^{\ttw + 4}+O(1)}$ where $\ttw$ denotes the treewidth of the input graph.
We also argued that the somewhat unusual exponent $2^{\tw}$ in the running time is essentially optimal by showing that there is no algorithm with running time $f(\tw)n^{(2-\epsilon)^{\tw}}$ for any $\epsilon > 0$ unless \ETH\ fails.
 
For planar graphs, we showed that \SqCol\ can be solved in subexponential time $2^{O(n^{2/3}\log n)}$.
Also, by providing a reduction from \textsc{$3$-Coloring}, we argued that the problem cannot be solved in time $2^{o(\sqrt{n})}$ assuming \ETH. 
However, it remains an open problem to determine the exact complexity of \SqCol\ on planar graphs (assuming \ETH).

Another open question is whether the subexponential algorithm for planar graphs can be generalized to arbitrary $H$-minor-free graphs (for a fixed graph $H$).

Finally, we can also ask for the complexity of determining whether a graph admits a distance-$d$ coloring (i.e., vertices at distance at most $d$ need to receive distinct colors) using $q$ colors for any fixed $d \geq 3$.
For graphs of bounded treewidth a polynomial-time algorithm was presented by Zhou, Kanari, and Nishizeki~\cite{ZhouKN00}.  
What is the precise complexity of computing distance-$d$ colorings on planar (and more generally $H$-minor-free) graphs for some fixed $d \geq 3$?

\bibliographystyle{plainurl}
\small
\bibliography{references}

\end{document}